\documentclass[journal]{IEEEtran}
\usepackage{amsmath,amsfonts,amssymb,amsthm,mathtools}

\usepackage{algorithm}
\usepackage{algpseudocode}
\usepackage{array}
\usepackage[caption=false,font=normalsize,labelfont=sf,textfont=sf]{subfig}
\usepackage{textcomp}
\usepackage{stfloats}
\usepackage{url}
\usepackage{multirow}
\usepackage{verbatim}
\usepackage{graphicx}
\usepackage{cite}
\usepackage{geometry}
\usepackage{float}
\usepackage{enumitem}
\usepackage{tabularx, booktabs, ragged2e}
\newcolumntype{L}{>{\RaggedRight\arraybackslash}X}
\usepackage{placeins}
\usepackage{makecell}
\usepackage{etoolbox} 
\newcommand{\Dd}{\text{Dd}}
\newcommand{\Atom}{\text{Atom}}

\newcommand{\Cn}{\text{Cn}}
\newcommand{\enc}{\text{enc}}

\newcommand{\E}{\mathbb{E}}

\newcommand{\timeindex}{\mathsf{time}}

\newcommand{\syn}{\equiv_{\mathcal L_{\mathrm{sem}}}}

\DeclareMathOperator{\supp}{supp}

\newtheorem{axiom}{Axiom}[section]
\newtheorem{assumption}{Assumption}[section]
\newtheorem{theorem}{Theorem}[section]
\newtheorem{lemma}{Lemma}[section]
\newtheorem{proposition}{Proposition}[section]
\newtheorem{corollary}{Corollary}[section]
\newtheorem{definition}{Definition}[section]
\newtheorem{remark}{Remark}[section]
\newtheorem{example}{Example}[section]

\numberwithin{equation}{section}

\makeatletter
\renewenvironment{proof}[1][\proofname]{\par
  \pushQED{\qed}%
  \normalfont \topsep6\p@\@plus6\p@\relax
  \trivlist
  \item[\hskip\labelsep
        \itshape
    #1\@addpunct{.}]\ignorespaces
}{%
  \popQED\endtrivlist\@endpefalse
}
\makeatother

\begin{document}

\title{Semantic Channel Theory: Deductive Compression and Structural Fidelity for Multi-Agent Communication}

\author{Jianfeng~Xu$^{1}$%
\thanks{$^{1}$Koguan School of Law, China Institute for Smart Justice, School of Computer Science, Shanghai Jiao Tong University, Shanghai 200030, China. Email: xujf@sjtu.edu.cn}%
}

\maketitle

\begin{abstract}
Shannon's information theory deliberately excludes message semantics.
This paper develops a rigorous framework for semantic communication that
integrates formal proof systems with Shannon-theoretic tools.
We introduce an axiomatic information model comprising
$\mathcal L_{\mathrm{sem}}$-definable state sets linked by computable
enabling maps, and define the semantic channel as a composition of Markov
kernels whose supports respect the enabling structure.
A fixed proof system induces an irredundant semantic core and a
derivation-depth stratification, enabling four distortion measures of
increasing semantic depth: Hamming, closure, depth, and a parameterized
composite.
Six families of computable semantic channel invariants are defined and
their inter-relationships established, including a data processing bound,
a semantic Fano bound, and an ideal-channel collapse theorem.
The central quantitative result is a deductive compression gain: under
closure-based fidelity, the minimum blocklength is determined by the
irredundant core size rather than the full knowledge-base size.
We instantiate the framework for heterogeneous multi-agent communication,
introducing an overlap decomposition that yields necessary and sufficient
conditions for closure-reliable communication.
A semantic bottleneck phenomenon is identified in broadcast settings:
vocabulary mismatch imposes irreducible fidelity limitations even over
noiseless carriers.
All results are verified on an explicit Datalog instance.
\end{abstract}

\begin{IEEEkeywords}
Semantic communication, deductive compression, channel coding theorem,
semantic channel capacity, irredundant core, multi-agent communication,
knowledge-base heterogeneity, closure fidelity, semantic bottleneck,
rate-distortion theory
\end{IEEEkeywords}

\section{Introduction}
\label{sec:introduction}

\IEEEPARstart{C}{ommunication} is, in Shannon's celebrated
formulation, the problem of reproducing at one point a message
selected at another~\cite{shannon1948mathematical}.
Shannon's mathematical theory deliberately sets aside the
\emph{meaning} of messages: ``the semantic aspects of
communication are irrelevant to the engineering problem''
\cite{shannon1948mathematical}.
This separation has been extraordinarily productive, yielding
the fundamental limits of data compression, channel coding, and
network information
theory~\cite{cover2006elements,csiszar2011information}.
Yet already in the companion essay by
Weaver~\cite{weaver2017recent}, a three-level hierarchy was
envisaged: the \emph{technical} problem of accurate symbol
transmission (Level~A), the \emph{semantic} problem of conveying
intended meaning (Level~B), and the \emph{effectiveness} problem
of achieving the desired effect at the receiver (Level~C).
Shannon's theory addresses Level~A with unmatched precision; the
present paper develops a rigorous framework for Level~B that
is mathematically compatible with---and strictly
generalizes---Level~A.

The need for such a framework has intensified with the rise of
knowledge-intensive communication systems.
In multi-agent coordination, retrieval-augmented generation (RAG)
pipelines, and federated knowledge-base synchronization, the
communicating parties exchange not raw symbols but structured
\emph{semantic states}---ground facts, rules, or queries drawn
from a shared or partially overlapping logical vocabulary.
A symbol-level error that leaves the deductive content unchanged
is harmless, while a symbol-level ``success'' that corrupts a
single irredundant axiom may destroy an entire branch of
derivable knowledge.
Classical information theory, which treats source and channel
alphabets as unstructured label sets, cannot make this
distinction.
A principled theory of semantic communication must therefore
integrate \emph{logical structure}---deductive closure,
irredundant bases, derivation depth---with
\emph{information-theoretic structure}---entropy, mutual
information, channel capacity, and coding theorems.

\subsection*{Related Work and Gap Identification}

Semantic information has a long intellectual history, from the
possible-worlds measure of Carnap and
Bar-Hillel~\cite{carnap1952outline} and Floridi's theory of
strongly semantic information~\cite{floridi2004outline},
through Kolchinsky and Wolpert's viability-based
formulation~\cite{kolchinsky2018semantic}, to the recent
deep-learning-driven systems exemplified by
DeepSC~\cite{xie2021deep} and surveyed
in~\cite{qin2021semantic,gunduz2022beyond,luo2022semantic}.
A parallel line of work has established a mathematical theory
of semantic communication based on synonymous
mappings~\cite{niu2024mathematical,zhang2024modern}, deriving
semantic entropy, capacity, and rate-distortion functions, with
companion coding algorithms for many-to-one
sources~\cite{ma2025theory}, semantic
rate-distortion computation~\cite{han2025extended}, and
semantic arithmetic coding~\cite{liang2025semantic}.
On the multi-agent side, recent contributions address
heterogeneous semantics via Bayesian contextual
reasoning~\cite{seo2023bayesian}, logic-driven resilience
frameworks~\cite{alshammari2026logic}, and goal-oriented
metric unification~\cite{li2024toward,wu2024toward}.
The logical substrate for structured communication draws on
descriptive complexity~\cite{immerman1999descriptive},
Datalog fixpoint
theory~\cite{ceri1989you,abiteboul1995foundations}, and
knowledge compilation~\cite{dantsin2001complexity}.

These diverse lines of research each illuminate important
facets of semantic communication, and together they suggest a
natural next step: \emph{integrating formal proof systems
(deductive closure, irredundant cores, derivation depth) with
Shannon-theoretic tools (entropy, mutual information, channel
capacity, coding theorems) to characterize the fundamental
limits of semantic communication under knowledge-base
heterogeneity}.
The synonymous-mapping
approach~\cite{niu2024mathematical} elegantly extends Shannon
theory through probabilistic equivalence classes and opens the
door to incorporating logical inference structure; the
deep-learning
approaches~\cite{xie2021deep,qin2021semantic} demonstrate
impressive practical performance and motivate the development
of formal guarantees on deductive fidelity; the multi-agent
frameworks~\cite{seo2023bayesian,alshammari2026logic} advance
coordination and resilience analysis and invite companion
coding-theoretic results; and the logical-complexity
literature~\cite{immerman1999descriptive,marx2013tractable}
provides the mathematical substrate on which communication-rate
and channel-capacity analyses can be built.
A detailed discussion of each line of work and its relationship
to the present framework is provided in
Section~\ref{sec:related-works}.

\subsection*{Motivation and Approach}

The gap identified above motivates the central question of this
paper:

\begin{quote}
\itshape
Can one develop a unified framework that combines formal
logic (proof systems, deductive closure, irredundant bases) with
Shannon-theoretic tools (entropy, mutual information, channel
capacity, coding theorems) to characterize the fundamental limits
of semantic communication, and can this framework handle the
vocabulary heterogeneity that arises naturally in multi-agent
settings?
\end{quote}

Our approach to this question originates from an
\emph{ontological} perspective on information~\cite{xu2014objective,xu2024research,xu2025general,qiu2025research}.
Rather than treating information as a probability distribution on
an abstract alphabet (the Shannon view) or as a set of excluded
possible worlds (the Carnap--Bar-Hillel view), we model
information as a collection of \emph{states} of objects in
time---formalized as \(\mathcal L_{\mathrm{sem}}\)-definable
binary relations in a finite many-sorted logical
structure---together with a \emph{computable enabling mechanism}
that links semantic states to physical carrier states.
This ontological information model, developed axiomatically in
Section~\ref{sec:model}, provides the deterministic and logical
substrate on which probabilistic structure is erected.
The key insight is that the proof system shared by communicating
agents induces a natural stratification of the semantic state
space into an \emph{irredundant core}
\(\Atom(S_O)\)---the smallest set from which all stored knowledge
can be re-derived---and a collection of \emph{stored shortcuts}
\(J=S_O\setminus\Atom(S_O)\) that are deductively redundant.
This stratification has direct coding-theoretic consequences:
under a \emph{closure-based} fidelity criterion (which deems a
reconstruction acceptable if it preserves the deductive closure
of the original knowledge base), only the \(|\Atom(S_O)|\)
irredundant core elements need to be transmitted reliably; the
remaining \(|S_O|-|\Atom(S_O)|\) states can be recovered by the
receiver's inference engine at zero additional channel cost.

Building on this foundation, we introduce the \emph{semantic
channel} as a composition of three enabling kernels---encoding,
carrier transmission, and decoding---each constrained by the
enabling structure of an underlying information model.
We then develop a suite of \emph{semantic channel
invariants}---computable scalar quantities that characterize the
channel's information throughput, structural fidelity, and
noise-pair structure---and derive \emph{semantic coding theorems}
that quantify the achievable rates and minimum blocklengths under
semantic reliability criteria.
Finally, we instantiate the framework in a
\emph{heterogeneous multi-agent} setting where the sender and
each receiver maintain different knowledge bases, and derive
results---including a ``semantic bottleneck'' phenomenon in
broadcast channels and a vocabulary-independent deductive
compression ratio---that have no counterpart in classical
Shannon theory.

\subsection*{Main Contributions}

The main contributions of this paper are as follows.

\begin{enumerate}[label=\textup{C\arabic*.},leftmargin=*]
  \item \emph{Axiomatic information model
        \textup{(Section~\ref{sec:model})}:}
        We introduce a many-sorted logical framework comprising
        \(\mathcal L_{\mathrm{sem}}\)-definable state sets, a
        computable enabling mechanism linking semantic and carrier
        states, composable information models, and an ideal-information
        notion based on order-preserving semantic synonymy~\(\syn\).
        A fixed proof system~\(\mathsf{PS}\) induces an
        irredundant semantic core~\(\Atom(S_O)\) and a
        derivation-depth stratification~\(\Dd(\cdot\mid B)\),
        both of which are shown to be computable invariants.
        A noisy-information extension via set perturbations
        \((S_O^-,S_O^+)\) enables the analysis of non-ideal
        carrier realizations.

  \item \emph{Semantic channel and enabling kernels
        \textup{(Section~\ref{sec:channel})}:}
        We define the semantic channel as a composition of Markov
        kernels whose supports are constrained by the enabling
        maps of three composable information models (encoding,
        carrier, decoding).
        The end-to-end kernel is shown to be an enabling kernel
        for the composite model, ensuring that structural
        constraints propagate through the probabilistic layer.

  \item \emph{Semantic distortion measures
        \textup{(Section~\ref{subsec:distortion})}:}
        We introduce four distortion functions of increasing
        semantic depth: Hamming (symbol-level), closure (deductive
        content), depth (inferential complexity), and a
        parameterized composite.
        Under closure distortion, errors on redundant states incur
        zero penalty---a property absent from classical distortion
        theory.

  \item \emph{Semantic channel invariants
        \textup{(Section~\ref{subsec:invariants})}:}
        We define six families of computable invariants---source-side
        structural, set-level fidelity, noise-pair probabilistic,
        structural quality, receiver-side comparison, and
        information-theoretic---and establish their inter-relationships,
        including a data processing bound
        \(C_{\mathrm{sem}}(W)\le C(W)\), a semantic Fano bound
        connecting the probabilistic core preservation index to
        mutual information, and an ideal-channel collapse theorem.

  \item \emph{Deductive compression gain
        \textup{(Section~\ref{subsec:coding})}:}
        We derive semantic converse and achievability bounds
        showing that the minimum blocklength for closure-reliable
        communication is determined by \(\log|\Atom(S_O)|\) rather
        than \(\log|S_O|\), yielding a deductive compression ratio
        \(\log|\Atom(S_O)|/\log|S_O|\) that measures the
        channel-use savings afforded by receiver-side inference.
        A semantic rate--distortion function
        \(R_{\mathrm{sem}}(D;d_{\Cn})\) is shown to satisfy
        \(R_{\mathrm{sem}}(0)\le\log|\Atom(S_O)|\).

  \item \emph{Heterogeneous multi-agent communication
        \textup{(Section~\ref{sec:application})}:}
        We formalize knowledge-base heterogeneity via a pairwise
        \emph{overlap decomposition} and derive
        necessary and sufficient conditions for closure-reliable
        communication in terms of core coverage and surplus
        derivability.
        The deductive compression ratio is shown to be
        invariant under vocabulary heterogeneity.
        In the broadcast setting, we prove that the minimum
        blocklength is independent of the number of receivers
        and identify a \emph{semantic bottleneck}
        phenomenon---an irreducible fidelity limitation due to
        vocabulary mismatch that persists even over noiseless
        carriers---that has no counterpart in classical broadcast
        theory.
        All results are verified on an explicit Datalog
        knowledge-base instance with full numerical computation of
        every invariant.
\end{enumerate}

\subsection*{Paper Organization}

The remainder of this paper is organized as follows.
Section~\ref{sec:model} establishes the axiomatic foundations:
the logical language, information models, synonymous state sets,
irredundant cores, derivation depth, and noisy information.
Section~\ref{sec:channel} builds the probabilistic layer:
enabling kernels, the semantic channel, distortion measures,
channel invariants, and preliminary coding theorems.
Section~\ref{sec:application} instantiates the framework for
heterogeneous multi-agent communication, derives the main
analytical results, and presents the numerical example.
Section~\ref{sec:conclusion} discusses the significance of the
results, identifies limitations, and outlines directions for
future research.

\textit{Notation.}
Sets are denoted by uppercase letters, elements by lowercase.
\(\mathbb{S}_O\) is the ambient semantic universe;
\(S_O,S_C\) are semantic and carrier state sets;
\(\Cn(\cdot)\) is the deductive closure operator;
\(\Atom(S_O)\) is the irredundant core;
\(\Dd(s\mid B)\) is the derivation depth;
\(\syn\) denotes semantic synonymy;
\(\kappa:X\rightsquigarrow Y\) denotes a Markov kernel;
\(H(\cdot)\), \(I(\cdot;\cdot)\) are Shannon entropy and mutual
information (base~2, in bits);
\(d_H\), \(d_{\Cn}\), \(d_{\Dd}\), \(d_{\mathrm{sem}}\) are
the four distortion functions.
Detailed notation is introduced where needed; a summary for the
multi-agent application appears in Table~\ref{tab:notation-iv}.


\section{Related Works}
\label{sec:related-works}

\noindent\textbf{Semantic Information Theory.}
The formal study of semantic information dates back to
Carnap and Bar-Hillel~\cite{carnap1952outline}, who proposed
measuring the information content of a proposition by the set
of possible worlds it excludes.
Floridi~\cite{floridi2004outline} further developed a theory
of strongly semantic information, defining information as
well-formed, meaningful, and truthful data.
These pioneering contributions established the philosophical
and logical foundations for semantic information and opened up
the question of how to connect such measures to operational
coding-theoretic frameworks in the tradition of
Shannon~\cite{shannon1948mathematical}.
Recently, Niu and
Zhang~\cite{niu2024mathematical,zhang2024modern} established a
systematic mathematical theory of semantic communication,
introducing \emph{synonymous mapping} as a core concept and
deriving semantic entropy, semantic channel capacity, and
rate-distortion functions.
A key result of their framework is that the semantic channel
capacity can exceed the Shannon capacity, \(C_s\ge C\), by
exploiting an average synonymous length~\(S\) that allows
multiple distinguishable source sequences to be treated as
equivalent when they express the same meaning.
Their framework elegantly extends Shannon theory by partitioning
the source alphabet into synonymous equivalence classes, and
leaves room for further exploration of how \emph{logical
inference structure}---proof systems, deductive closure, and
irredundant cores---can be integrated into the
information-theoretic framework.
Our work pursues this complementary direction: the compression
gain in our framework arises from the receiver's ability to
\emph{re-derive} redundant states via logical inference, which
complements the source-side synonymous-collapsing mechanism
of~\cite{niu2024mathematical,zhang2024modern}.

\noindent\textbf{Semantic Coding Theorems and Algorithms.}
Building on semantic information theory, several recent works
have derived coding theorems and practical algorithms.
Ma et~al.~\cite{ma2025theory} proved a semantic channel coding
theorem for many-to-one sources, where multiple source
sequences express the same meaning, using a generalized Fano's
inequality.
Han et~al.~\cite{han2025extended} proposed an extended
Blahut--Arimoto (EBA) algorithm to compute the semantic
rate-distortion function under synonymous mappings, and further
integrated it with simulated annealing to find optimal
synonymous mappings.
Liang et~al.~\cite{liang2025semantic} introduced semantic
arithmetic coding (SAC), achieving higher compression efficiency
than classical arithmetic coding by performing encoding over
synonymous sets.
These works provide valuable coding-theoretic results and
computational methods for achieving semantic compression and
transmission limits within the synonymous-mapping paradigm, and
invite further investigation into alternative compression
mechanisms grounded in logical structure.
Our work complements these contributions by approaching the
fundamental limits from a \emph{logical-structural} perspective:
the minimum blocklength for closure-reliable communication is
determined by \(\log|\Atom(S_O)|\) rather than \(\log|S_O|\)
(Theorem~\ref{thm:achievability} and
Corollary~\ref{cor:min-blocklength}), where the compression
mechanism is receiver-side deductive inference rather than
source-side synonymous collapsing.

\noindent\textbf{Heterogeneous and Multi-Agent Semantic
Communication.}
The challenge of heterogeneous knowledge bases in multi-agent
systems has gained increasing attention.
Seo et~al.~\cite{seo2023bayesian} addressed the inverse
contextual reasoning problem, where an agent must infer the
communication context of others from noisy observations, using
Bayesian inference and Markov Chain Monte Carlo methods.
Alshammari and Bennis~\cite{alshammari2026logic} proposed a
logic-driven semantic communication framework for resilient
multi-agent systems, formalizing epistemic and action resilience
using temporal epistemic logic and Kripke structures.
These works advance multi-agent semantic communication from
algorithmic and logical-resilience perspectives, respectively,
and open the door to further study of the coding-theoretic
limits of communication under vocabulary mismatch.
Our work addresses this aspect by introducing an \emph{overlap
decomposition}
(Definition~\ref{def:overlap-decomposition}) that characterizes
vocabulary mismatch via core loss~\((A_-^{ij})\) and
non-derivable surplus~\((S_{+,n}^{ij})\), and by deriving
minimum-blocklength bounds and identifying a \emph{semantic
bottleneck} phenomenon
(Proposition~\ref{prop:broadcast-bottleneck}) that complement
the algorithmic and resilience-oriented analyses
of~\cite{seo2023bayesian,alshammari2026logic}.

\noindent\textbf{Goal-Oriented and Task-Oriented
Communications.}
Goal-oriented communication has emerged as a prominent
paradigm, where the effectiveness of communication is measured
by task performance rather than bit-level
accuracy~\cite{strinati20216g,kountouris2021semantics}.
Li et~al.~\cite{li2024toward} proposed a goal-oriented tensor
framework that unifies various significance metrics such as Age
of Information, Value of Information, and Age of Incorrect
Information.
Wu et~al.~\cite{wu2024toward} introduced information-theoretic
metrics including semantic entropy, semantic distortion, and
semantic communication rate, and provided guidelines for
designing interpretable semantic communication systems.
These works establish a rich landscape of task-specific fidelity
metrics and motivate further exploration of how such metrics
relate to the logical structure of the communicated content.
Our distortion measures
(Definitions~\ref{def:hamming-distortion}--\ref{def:depth-distortion})
offer one such connection: the composite semantic distortion
\(d_{\mathrm{sem}}=\alpha\,d_H+\beta\,d_{\Cn}+\gamma\,d_{\Dd}\)
subsumes Hamming distortion (classical), closure distortion
(deductive content), and depth distortion (inferential
complexity) as special cases, and the semantic Fano bound
(Theorem~\ref{thm:semantic-fano}) links these proof-system-aware
distortion measures to mutual information, offering a
perspective that complements the metric-design viewpoint
of~\cite{li2024toward,wu2024toward}.

\noindent\textbf{Logical and Database Foundations.}
The logical substrate of our framework draws on descriptive
complexity and database theory.
Immerman~\cite{immerman1999descriptive} established the
correspondence between \(\mathrm{FO(LFP)}\) and polynomial
time on ordered finite structures, which we use to ensure
computational tractability of all semantic operations
(Assumption~\ref{assump:ordered-structures}).
Marx~\cite{marx2013tractable} introduced submodular width as a
complexity measure for conjunctive query evaluation, and
Abo~Khamis and Chen~\cite{abokhamis2025jaguar} recently
proposed the Jaguar algorithm achieving
\(O(N^{\mathrm{subw}(Q)+\epsilon})\)-time query evaluation.
Mu~\cite{mu2024identifying} studied the identification of
formula roles in inconsistency, introducing causal and
informational responsibility measures that parallel, at a
methodological level, our distinction between irredundant core
elements and redundant stored shortcuts.
These results from logic and database theory provide the
technical foundations on which our framework builds, and they
suggest further opportunities for cross-fertilization with
communication theory.
In our framework, the irredundant core~\(\Atom(S_O)\) serves as
the minimal generating set for a deductive closure
(Definition~\ref{def:atom-so},
Proposition~\ref{prop:atom-core-correct}), and its cardinality
directly determines the minimum number of channel uses under
closure reliability
(Corollary~\ref{cor:min-blocklength})---a connection between
proof-system structure and communication fundamental limits that,
to the best of our knowledge, has not been previously
established.

\noindent\textbf{Summary of Positioning.}
The works reviewed above have made substantial contributions to
semantic information measures, coding algorithms, multi-agent
coordination, goal-oriented metrics, and logical complexity,
collectively advancing the field toward a mature theory of
semantic communication.
Our work aims to build upon these contributions by integrating
\emph{formal proof systems} with \emph{Shannon-theoretic tools},
with the goal of deriving \emph{computable semantic channel
invariants} and \emph{deductive compression gains} applicable
to heterogeneous multi-agent settings.
We view our framework as complementary to the existing body of
work: the synonymous-mapping
approach~\cite{niu2024mathematical} captures source-side
semantic equivalence, the deep-learning
approaches~\cite{xie2021deep,qin2021semantic} provide practical
system implementations, the multi-agent
frameworks~\cite{seo2023bayesian,alshammari2026logic} address
coordination and resilience, and the
logical-complexity
literature~\cite{immerman1999descriptive,marx2013tractable}
supplies the mathematical substrate.
Our contribution is to connect these threads through a unified
coding-theoretic framework in which the proof-system structure
of the communicated knowledge base enters directly into the
channel capacity, distortion, and blocklength expressions.

\section{System Model and Axiomatic Foundations}
\label{sec:model}

This section lays the formal foundations on which the semantic channel theory 
and the associated information metrics are built.
The framework is organized in five layers:
(i)~a fixed many-sorted logical language~\(\mathcal L\), a designated semantic
sublanguage~\(\mathcal L_{\mathrm{sem}}\), and an effective proof system;
(ii)~a two-domain \emph{information model} with time-indexed semantic and carrier state
sets \((S_O,S_C)\) linked by a computable enabling mechanism;
(iii)~an \emph{ideal-information} notion characterized by order-preserving semantic synonymy
\(S_O\syn S_C\);
(iv)~an \emph{irredundant semantic core}~\(\Atom(S_O)\) and a base-relative
\emph{derivation-depth} stratification \(\Dd(\cdot\mid B)\), both induced by the immediate
consequence operator~\(T_{\mathsf{PS}}\); and
(v)~a \emph{noisy-information} framework that extends the ideal theory to non-ideal carrier
realizations via set perturbations of~\(S_O\).
Together these components supply the deterministic and logical substrate to which
probabilistic structure---Markov kernels, source distributions, semantic distortion
measures---is adjoined in Section~\ref{sec:channel} to define the \emph{semantic channel}
and its invariants.

\subsection{Underlying Logical System \(\mathcal{L}\) and Expressible State Sets}
\label{subsec:logic}

Throughout we fix a many-sorted logical language \(\mathcal{L}\), taken to be
first-order logic with least fixed-point operators \(\mathrm{FO(LFP)}\)
\cite{immerman1999descriptive}, extended with multiple sorts including at least
\(\mathsf{Obj}\) (objects/entities), \(\mathsf{Time}\) (time points), and
\(\mathsf{Carrier}\) (physical carriers).

\begin{assumption}[Finite ordered structures (standing convention)]
\label{assump:ordered-structures}
We restrict attention to \emph{finite} \(\mathcal{L}\)-structures.
Moreover, whenever descriptive-complexity claims are invoked, we work over \emph{ordered} finite
structures, i.e., structures equipped with a built-in linear order (e.g., a distinguished binary
relation symbol \(<\) interpreting a total order on the active domain(s)).
This convention is used solely to justify the standard correspondence that \(\mathrm{FO(LFP)}\)
captures polynomial time on ordered finite structures
\cite{immerman1982relational,vardi1982complexity}.
\end{assumption}

\begin{remark}[Ordered representations vs.\ semantic content]
\label{rem:ordered-vs-semantic-language}
The built-in linear order on finite structures is used only as a \emph{representation artifact}
to support descriptive-complexity statements (e.g., that \(\mathrm{FO(LFP)}\) captures PTIME on ordered
finite structures). It is treated as an element of \(\Sigma_{\mathrm{rep}}\) and is therefore excluded from \(\mathcal L_{\mathrm{sem}}\)
(Assumption~\ref{assump:semantic-sublanguage}).
\end{remark}

\begin{assumption}[Designated semantic sublanguage via a semantic signature]
\label{assump:semantic-sublanguage}
Fix the non-logical signature \(\Sigma\) of the ambient language \(\mathcal L\).
We assume a fixed partition
\[
\Sigma \;=\; \Sigma_{\mathrm{sem}} \,\dot\cup\, \Sigma_{\mathrm{rep}},
\]
where \(\Sigma_{\mathrm{rep}}\) contains auxiliary representation/bookkeeping symbols
(in particular, the built-in order symbol \(<\) whenever present).

\smallskip
Define the designated semantic sublanguage \(\mathcal L_{\mathrm{sem}}\) to be the restriction of
\(\mathcal L\) to the semantic signature:
\[
\mathcal L_{\mathrm{sem}} \;:=\; \mathcal L\!\upharpoonright_{\Sigma_{\mathrm{sem}}},
\]

\noindent For example, if \(\mathcal L=\mathrm{FO(LFP)}[\Sigma]\), then
\(\mathcal L_{\mathrm{sem}}=\mathrm{FO(LFP)}[\Sigma_{\mathrm{sem}}]\).

\smallskip
\noindent\textbf{Disclosure convention.}
Every non-logical symbol used in formal definitions is understood to belong to exactly one of
\(\Sigma_{\mathrm{sem}}\) or \(\Sigma_{\mathrm{rep}}\); symbols are semantic by default unless explicitly
declared representation-only.
All notions of semantic indistinguishability and definable recoding in
Section~\ref{subsec:ideal} are taken relative to \(\mathcal L_{\mathrm{sem}}\).
\end{assumption}

Since we work with finite \(\mathcal L\)-structures, model checking
\((\mathfrak R\models\varphi)\) is decidable:
given a finite encoding of \(\mathfrak R\) and a sentence \(\varphi\in\mathcal L\), one can
decide whether \(\mathfrak R\models\varphi\).
(For \(\mathcal L=\mathrm{FO(LFP)}\) on ordered finite structures, the \emph{data complexity}---i.e.,
the complexity measured in the size of \(\mathfrak R\) for a fixed formula---is polynomial
\cite{immerman1982relational,vardi1982complexity};
the \emph{combined complexity} in both \(|\mathfrak R|\) and \(|\varphi|\) is higher.
Only effective decidability, not a specific complexity bound, is used in the sequel.)
Assumption~\ref{assump:ordered-structures} is used only when invoking descriptive-complexity correspondences
(e.g., \(\mathrm{FO(LFP)}\) capturing PTIME on \emph{ordered} finite structures).
This semantic decidability is used only for the evaluation of \(\mathcal L\)-descriptions on finite structures,
and should be kept conceptually separate from syntactic derivability in the proof system fixed below.

\begin{remark}[Definability and effective enumerability on finite structures]
\label{rem:definable-enumerable}
Since the ambient structure \(\mathfrak R\) is finite
(Assumption~\ref{assump:ordered-structures}), the extension of any
\(\mathcal L\)-definable (or \(\mathcal L_{\mathrm{sem}}\)-definable) relation in
\(\mathfrak R\) can be computed by exhaustive model checking over the finite domain.
In particular, for any \(\mathcal L_{\mathrm{sem}}\)-definable state set \(S\) in
\(\mathfrak R\), the set of pairs \((x,t)\) satisfying \(S(x,t)\) is effectively enumerable.
This bridge between definability and computability is invoked without further comment
whenever a definable relation is treated as computable data.
\end{remark}

\begin{definition}[Expressible object--time state sets]
\label{def:expressible-states}
Let \(\mathfrak R\) be a finite \(\mathcal L\)-structure and let \(\mathcal L'\subseteq \mathcal L\) be a fixed sublanguage.

Assume that \(x\) (resp.\ \(t\)) ranges over a fixed object sort (resp.\ time sort) of \(\mathcal L\).
We allow \(X\) and \(T\) to be \emph{definable subdomains} of these sorts: namely, \(X\) and \(T\) are specified by unary
\(\mathcal L\)-formulas \(\delta_X(x)\) and \(\delta_T(t)\).
(When \(X\) or \(T\) is the full carrier of its sort, the corresponding \(\delta\)-formula can be taken as \(\top\).)

\smallskip
\noindent\textbf{Language convention for subdomains.}
The subdomain formulas \(\delta_X\) and \(\delta_T\) are permitted to belong to the full ambient language~\(\mathcal L\),
not only to the sublanguage~\(\mathcal L'\). Thus the domains \(X\) and \(T\) are fixed as part of the ambient structure,
while the \(\mathcal L'\)-definability requirement below applies only to the state relation~\(S\) itself.

\smallskip
An \emph{(object--time) state domain} over \((X,T)\) is an \emph{\(\mathcal L'\)-definable} binary relation \(S(x,t)\) such that
\[
\mathfrak R \models \forall x\,\forall t\ \bigl(S(x,t)\rightarrow (\delta_X(x)\wedge \delta_T(t))\bigr).
\]
We read \(S(x,t)\) as ``\(x\) is in a (well-formed) state at time \(t\)''.

\smallskip
A set of states \(A\subseteq X\times T\) is \(\mathcal L'\)-\emph{expressible} (over \(S\))
if there exists an \(\mathcal L'\)-formula \(\varphi(x,t)\) such that for all \(x,t\),
\[
(x,t)\in A \quad\Longleftrightarrow\quad \mathfrak R \models S(x,t)\wedge \varphi(x,t).
\]

\smallskip
\noindent\textbf{Convention (state notation as definitional abbreviation).}
Whenever convenient, we write a state variable \(s\in S\) as a meta-level abbreviation for a pair \((x,t)\)
with \(S(x,t)\). Under this convention,
\[
\timeindex(s)=t,
\qquad
\text{and (when needed) } \mathsf{obj}(s)=x,
\]
and quantification over states abbreviates quantification over \((x,t)\) guarded by \(S(x,t)\).

\smallskip
When \(\mathcal L'=\mathcal L_{\mathrm{sem}}\), we also say ``\(\mathcal L_{\mathrm{sem}}\)-definable/expressible''
in this sense.
\end{definition}

\begin{assumption}[Fixed effective proof system for the inference fragment]
\label{assump:proof-system}
We fix a syntactic \emph{inference fragment} \(\mathcal L_{\mathrm{kb}}\subseteq \mathcal L_{\mathrm{sem}}\)
(typically a decidable rule/KB language such as Datalog or a Horn fragment)
and an effective proof system \(\mathsf{PS}\) over the syntax of \(\mathcal L_{\mathrm{kb}}\) such that
proof checking is decidable.

For any finite \(\Gamma\subseteq \mathcal L_{\mathrm{kb}}\) and any \(\varphi\in \mathcal L_{\mathrm{kb}}\), write
\(\Gamma \vdash_{\mathrm{kb}} \varphi\) to denote derivability in \(\mathsf{PS}\), and define the deductive closure operator
\[
\Cn(\Gamma)\;:=\;\{\varphi\in \mathcal L_{\mathrm{kb}}:\ \Gamma\vdash_{\mathrm{kb}}\varphi\}.
\]

\noindent\textbf{Basic closure properties.}
The operator \(\Cn\) satisfies three standing properties used throughout without further comment:
\begin{enumerate}[label=\textup{(Cn\arabic*)}]
  \item \emph{Reflexivity (inclusion of premises):}
        \(\Gamma\subseteq\Cn(\Gamma)\) for every \(\Gamma\subseteq\mathcal L_{\mathrm{kb}}\).
  \item \emph{Monotonicity (weakening):}
        if \(\Gamma\subseteq\Gamma'\), then \(\Cn(\Gamma)\subseteq\Cn(\Gamma')\).
  \item \emph{Idempotence (cut):}
        \(\Cn\bigl(\Cn(\Gamma)\bigr)=\Cn(\Gamma)\) for every \(\Gamma\subseteq\mathcal L_{\mathrm{kb}}\).
        Equivalently: if \(\Delta\subseteq\Cn(\Gamma)\), then \(\Cn(\Delta)\subseteq\Cn(\Gamma)\).
\end{enumerate}
Properties \textup{(Cn1)} and \textup{(Cn2)} follow from the standard weakening rule; \textup{(Cn3)}
follows from the transitivity (cut) property of derivability: if every premise in a derivation of
\(\varphi\) from \(\Delta\) is itself derivable from \(\Gamma\), then \(\varphi\) is derivable
from~\(\Gamma\).

\noindent\textbf{State--formula convention.}
When elements of a state set \(S_O\) are used as premises for \(\Cn(\cdot)\)
(Section~\ref{subsec:atomic-derivation}), each state \((o,\tau)\in S_O\) is identified with the
corresponding ground atom \(S_O(o,\tau)\in\mathcal L_{\mathrm{kb}}\).
\end{assumption}

\subsection{Computable Information Axioms and Information Model}
\label{subsec:info-instance}

\textit{Basic objects (preview).}
We fix semantic/carrier domains with time-indexed state sets; the precise assumptions are stated in
Axioms~\ref{ax:time-domains}--\ref{ax:enabling-mapping}.

\begin{definition}[State spaces as object--time relations (notation)]
\label{def:binary-attribute}
Information will be modeled using domains \(O,C\) with time domains \(T_O,T_C\) and state sets
\(S_O\subseteq O\times T_O\) and \(S_C\subseteq C\times T_C\).
We write \(s_o=(o,\tau)\in S_O\) and \(s_c=(c,\theta)\in S_C\), where the second coordinate is the time index.
\end{definition}

\begin{axiom}[Time domains, indexing, and precedence]
\label{ax:time-domains}
Fix a finite many-sorted \(\mathcal L\)-structure \(\mathfrak R\).
Let \(O,C,T_O,T_C\) be \(\mathcal L\)-definable domains in \(\mathfrak R\), and let \(S_O\subseteq O\times T_O\) and \(S_C\subseteq C\times T_C\) be
\(\mathcal L_{\mathrm{sem}}\)-definable state sets (i.e., definable by \(\mathcal L_{\mathrm{sem}}\)-formulas)
over the \(\mathcal L\)-definable domains \(O,T_O\) and \(C,T_C\) in the sense of
Definition~\ref{def:expressible-states} (with \(\mathcal L'=\mathcal L_{\mathrm{sem}}\)).

\smallskip
Time indexing is the second coordinate. For \(s_o=(o,\tau)\in S_O\) and \(s_c=(c,\theta)\in S_C\), define
\[
\timeindex_O(s_o):=\tau,
\qquad
\timeindex_C(s_c):=\theta,
\]
as definitional abbreviations.

\smallskip
There exist \(\mathcal L_{\mathrm{sem}}\)-definable relations \(\prec_O\subseteq T_O\times T_O\),
\(\prec_C\subseteq T_C\times T_C\), and \(\prec_{OC}\subseteq T_O\times T_C\) such that:
\begin{enumerate}[label=\textup{(T\arabic*)}]
  \item \((T_O,\prec_O)\) and \((T_C,\prec_C)\) are linearly ordered time domains.
  \item \textbf{Cross-domain compatibility:} \(\prec_{OC}\) is monotone with respect to \(\prec_O\) and \(\prec_C\), i.e.,
  for all \(\tau,\tau'\in T_O\) and \(\theta,\theta'\in T_C\),
  \[
    (\tau'\prec_O \tau \ \wedge\ \tau\prec_{OC}\theta)\ \Rightarrow\ \tau'\prec_{OC}\theta,
   \]
   and
   \[
    (\tau\prec_{OC}\theta \ \wedge\ \theta\prec_C\theta')\ \Rightarrow\ \tau\prec_{OC}\theta'.
  \]
  \item \textbf{Nontriviality (mild totality):} for every \(\tau\in T_O\) there exists \(\theta\in T_C\) such that
  \(\tau\prec_{OC}\theta\).
\end{enumerate}
\end{axiom}

\begin{axiom}[State representation]
\label{ax:state-representation}
There exist injective encodings
\[
\enc_O:S_O\to\{0,1\}^*,
\qquad
\enc_C:S_C\to\{0,1\}^*
\]
such that the following predicates are decidable from the codes:
\begin{enumerate}[label=\textup{(R\arabic*)}]
  \item \emph{Semantic-time precedence:} given \(\enc_O(s_o)\) and \(\enc_O(s_o')\), decide whether
        \(\timeindex_O(s_o)\prec_O \timeindex_O(s_o')\).
  \item \emph{Carrier-time precedence:} given \(\enc_C(s_c)\) and \(\enc_C(s_c')\), decide whether
        \(\timeindex_C(s_c)\prec_C \timeindex_C(s_c')\).
  \item \emph{Cross-domain precedence:} given \(\enc_O(s_o)\) and \(\enc_C(s_c)\), decide whether
        \(\timeindex_O(s_o)\prec_{OC}\timeindex_C(s_c)\).
  \item \emph{Membership (optional):} given \(w\in\{0,1\}^*\), decide whether \(w\in \enc_O(S_O)\) (resp.\ \(w\in \enc_C(S_C)\)).
\end{enumerate}
\end{axiom}

\begin{axiom}[Enabling mapping (realization mechanism)]
\label{ax:enabling-mapping}
There exists a relation
\[
R_{\mathcal E}\ \subseteq\ O\times T_O\times C\times T_C,
\]
and the induced set-valued map \(\mathcal E:S_O\Rightarrow S_C\), defined for \(s_o=(o,\tau)\in S_O\) by
\[
\mathcal E(s_o)
\;:=\;
\{\, (c,\theta)\in S_C:\ (o,\tau,c,\theta)\in R_{\mathcal E}\,\},
\]
such that:
\begin{enumerate}[label=\textup{(E\arabic*)}]
  \item \emph{Totality on semantic states:}
        for every \(s_o\in S_O\), \(\mathcal E(s_o)\neq\varnothing\).
  \item \emph{Coverage of the carrier state space:}
        for every \(s_c\in S_C\), there exists \(s_o\in S_O\) with \(s_c\in \mathcal E(s_o)\).
        Equivalently, \(\bigcup_{s_o\in S_O}\mathcal E(s_o)=S_C\).
  \item \emph{Computable enabling selector:}
        there exists a (partial) computable function \(e:\{0,1\}^*\to\{0,1\}^*\) that is total on \(\enc_O(S_O)\),
        such that for every \(s_o\in S_O\), \(e(\enc_O(s_o))\) halts and equals \(\enc_C(s_c)\) for some \(s_c\in\mathcal E(s_o)\).
  \item \emph{Precedence compatibility:}
        if \((o,\tau)\in S_O\), \((c,\theta)\in S_C\), and \((o,\tau,c,\theta)\in R_{\mathcal E}\),
        then \(\tau \prec_{OC} \theta\).
\end{enumerate}
\end{axiom}

\begin{remark}[Notes on the enabling mapping]
\label{rem:enabling-notes}
\textup{(i)}~Unless stated otherwise, the relation \(R_{\mathcal E}\) is treated as meta-level instance data and is
\emph{not} assumed to be \(\mathcal L\)-definable or \(\mathcal L_{\mathrm{sem}}\)-definable.
When a definable channel law is needed, one may add the extra assumption that \(R_{\mathcal E}\) is
\(\mathcal L_{\mathrm{sem}}\)-definable.
\textup{(ii)}~Clause \textup{(E2)} does not claim that \(S_C\) contains all physically possible carrier states;
\(S_C\) comprises the admissible semantic-bearing carrier states, and pure noise states can be modeled
outside~\(S_C\).
\textup{(iii)}~Clause \textup{(E3)} requires only a computable \emph{selector} that produces one admissible realization
\(s_c\in\mathcal E(s_o)\) per~\(s_o\); it does not imply that the full relation \(R_{\mathcal E}\) or the full
fiber \(\mathcal E(s_o)\) is decidable or effectively enumerable.
\end{remark}

\begin{definition}[Information model]
\label{def:info-instance}
An \emph{information model} is the tuple
\[
  \mathcal I
  \;=\;
  \langle
    O,\; T_O,\; S_O,\; C,\; T_C,\; S_C,\; R_{\mathcal E}
  \rangle,
\]
together with the temporal structure (orders and cross-domain precedence) postulated in
Axiom~\ref{ax:time-domains}.
\end{definition}

\begin{remark}[Background data suppressed from the instance notation]
\label{rem:instance-temporal-data}
For brevity, we suppress from the tuple notation of \(\mathcal I\) both
(i) the temporal relations \(\prec_O,\prec_C,\prec_{OC}\) specified by Axiom~\ref{ax:time-domains}, and
(ii) the representation/encoding maps \(\enc_O,\enc_C\) postulated in Axiom~\ref{ax:state-representation}.
All of these are treated as background instance data fixed throughout the paper.
\end{remark}

\begin{definition}[Composition of information models]
\label{def:model-composition}
Let \(\mathcal I_1\) and \(\mathcal I_2\) be information models
(Definition~\ref{def:info-instance}) with induced enabling maps
\(\mathcal E_1:S_O^{(1)}\Rightarrow S_C^{(1)}\) and
\(\mathcal E_2:S_O^{(2)}\Rightarrow S_C^{(2)}\), respectively.
The pair \((\mathcal I_1,\mathcal I_2)\) is \emph{composable} if
\(S_C^{(1)}=S_O^{(2)}\) as state sets (with compatible encodings).

\smallskip
For a composable pair, the \emph{composite information model}
\(\mathcal I_2\circ\mathcal I_1\) is the information model whose semantic state set is
\(S_O^{(1)}\), whose carrier state set is \(S_C^{(2)}\), and whose
\emph{composite enabling map}
\(\mathcal E_{2\circ 1}:S_O^{(1)}\Rightarrow S_C^{(2)}\) is defined by
\begin{equation}\label{eq:composite-enabling}
  \mathcal E_{2\circ 1}(s_o)
  \;:=\;
  \bigcup_{s_c\,\in\,\mathcal E_1(s_o)}\mathcal E_2(s_c),
  \qquad s_o\in S_O^{(1)}.
\end{equation}
The remaining instance data (domains, time indices, cross-domain precedence) of the
composite model are inherited from the constituents in the natural way and are suppressed
from notation following the convention of Remark~\ref{rem:instance-temporal-data}.
\end{definition}

\begin{proposition}[Composition preserves enabling axioms]
\label{prop:composition-axioms}
Let \((\mathcal I_1,\mathcal I_2)\) be a composable pair of information models, each
satisfying Axiom~\ref{ax:enabling-mapping}.
Then the composite enabling map
\(\mathcal E_{2\circ 1}\) of \(\mathcal I_2\circ\mathcal I_1\) satisfies:
\begin{enumerate}[label=\textup{(\roman*)}]
  \item \emph{Totality \textup{(E1)}:}
        \(\mathcal E_{2\circ 1}(s_o)\neq\varnothing\) for every
        \(s_o\in S_O^{(1)}\).
  \item \emph{Coverage \textup{(E2)}:}
        \(\displaystyle\bigcup_{s_o\in S_O^{(1)}}\mathcal E_{2\circ 1}(s_o)=S_C^{(2)}\).
  \item \emph{Computable selector \textup{(E3)}:}
        if \(e_1\) and \(e_2\) are computable enabling selectors for \(\mathcal I_1\)
        and \(\mathcal I_2\) respectively, then \(e_{2\circ 1}:=e_2\circ e_1\) is a
        computable enabling selector for \(\mathcal I_2\circ\mathcal I_1\).
\end{enumerate}
\end{proposition}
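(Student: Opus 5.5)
The proof is a direct verification of each clause, unfolding the composite enabling map \eqref{eq:composite-enabling} and using the composability identity \(S_C^{(1)}=S_O^{(2)}\) together with the constituents' clauses (E1)--(E3) from Axiom~\ref{ax:enabling-mapping}.

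\emph{Totality.} Fix \(s_o\in S_O^{(1)}\). By (E1) for \(\mathcal I_1\) there is some \(s_c\in\mathcal E_1(s_o)\subseteq S_C^{(1)}\). By composability \(s_c\in S_O^{(2)}\), so (E1) for \(\mathcal I_2\) gives \(\mathcal E_2(s_c)\neq\varnothing\). Since \(\mathcal E_2(s_c)\) is one of the sets in the union defining \(\mathcal E_{2\circ1}(s_o)\), that union is nonempty.

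\emph{Coverage.} For ``\(\subseteq\)'', every \(\mathcal E_2(s_c)\subseteq S_C^{(2)}\) by definition of \(\mathcal E_2\). For ``\(\supseteq\)'', take \(s_c'\in S_C^{(2)}\). By (E2) for \(\mathcal I_2\) there is \(s_c\in S_O^{(2)}=S_C^{(1)}\) with \(s_c'\in\mathcal E_2(s_c)\). By (E2) for \(\mathcal I_1\) there is \(s_o\in S_O^{(1)}\) with \(s_c\in\mathcal E_1(s_o)\). Hence \(s_c'\in\mathcal E_{2\circ1}(s_o)\).

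\emph{Computable selector.} This is the step that needs care, because \(e_1\) and \(e_2\) are partial functions and the encodings must match. Let \(s_o\in S_O^{(1)}\). By (E3) for \(\mathcal I_1\), \(e_1(\enc_O^{(1)}(s_o))\) halts and equals \(\enc_C^{(1)}(s_c)\) for some \(s_c\in\mathcal E_1(s_o)\). The ``compatible encodings'' clause of Definition~\ref{def:model-composition} gives \(\enc_C^{(1)}=\enc_O^{(2)}\) on the common state set. So this output is \(\enc_O^{(2)}(s_c)\), which lies in the domain where \(e_2\) is total. By (E3) for \(\mathcal I_2\), \(e_2\) then halts and returns \(\enc_C^{(2)}(s_c')\) for some \(s_c'\in\mathcal E_2(s_c)\subseteq\mathcal E_{2\circ1}(s_o)\).

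Therefore \(e_2\circ e_1\) is total on \(\enc_O^{(1)}(S_O^{(1)})\) and selects an admissible realization. It is computable because a composition of partial computable functions is partial computable: run \(e_1\), then feed its output to \(e_2\).
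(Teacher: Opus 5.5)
Your proof is correct and follows the paper's argument clause by clause. Totality and coverage are obtained by chaining (E1), respectively (E2), of the two constituents through the identification \(S_C^{(1)}=S_O^{(2)}\), and the selector by composing \(e_2\circ e_1\) and noting that its output lies in \(\mathcal E_2(s_c)\subseteq\mathcal E_{2\circ 1}(s_o)\). You are in fact slightly more careful than the paper, which leaves implicit both the trivial inclusion \(\subseteq\) in (E2) and the use of the compatible-encodings clause \(\enc_C^{(1)}=\enc_O^{(2)}\) that guarantees \(e_2\) is total on \(e_1\)'s outputs.
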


\begin{proof}
\textup{(i)}\
For any \(s_o\in S_O^{(1)}\), totality of \(\mathcal E_1\) gives some
\(s_c\in\mathcal E_1(s_o)\).
Since \(s_c\in S_C^{(1)}=S_O^{(2)}\), totality of
\(\mathcal E_2\) gives \(\mathcal E_2(s_c)\neq\varnothing\), so
\(\mathcal E_{2\circ 1}(s_o)\supseteq\mathcal E_2(s_c)\neq\varnothing\).

\smallskip\noindent
\textup{(ii)}\
Take any \(s'\in S_C^{(2)}\).
Coverage of \(\mathcal E_2\) yields
\(s_c\in S_O^{(2)}=S_C^{(1)}\) with \(s'\in\mathcal E_2(s_c)\).
Coverage of \(\mathcal E_1\) then yields \(s_o\in S_O^{(1)}\) with
\(s_c\in\mathcal E_1(s_o)\), whence
\(s'\in\mathcal E_{2\circ 1}(s_o)\).

\smallskip\noindent
\textup{(iii)}\
The selector \(e_1\) is total on codes of \(S_O^{(1)}\) and outputs codes of
elements in \(\mathcal E_1(s_o)\subseteq S_C^{(1)}=S_O^{(2)}\);
\(e_2\) is total on codes of \(S_O^{(2)}\).
Hence \(e_{2\circ 1}=e_2\circ e_1\) is total on codes of \(S_O^{(1)}\),
computable as a composition of computable functions, and satisfies
\(e_{2\circ 1}(s_o)\in\mathcal E_2(e_1(s_o))\subseteq\mathcal E_{2\circ 1}(s_o)\).
\end{proof}

\begin{remark}[Associativity and iterated composition]
\label{rem:composition-assoc}
Composition of enabling maps is associative: for a composable triple
\((\mathcal I_1,\mathcal I_2,\mathcal I_3)\),
\begin{align*}
  \mathcal E_{3\circ(2\circ 1)}(s_o) 
  \;=\;
  &\bigcup_{s\,\in\,\mathcal E_{2\circ 1}(s_o)}\mathcal E_3(s) \\
  \;=\;
  &\bigcup_{s_c\,\in\,\mathcal E_1(s_o)}\; 
  \bigcup_{s\,\in\,\mathcal E_2(s_c)}\mathcal E_3(s) \\
  \;=\;
  &\mathcal E_{(3\circ 2)\circ 1}(s_o),
\end{align*}
so \((\mathcal I_3\circ\mathcal I_2)\circ\mathcal I_1
     =\mathcal I_3\circ(\mathcal I_2\circ\mathcal I_1)\).
Consequently, an iterated composition
\(\mathcal I_n\circ\cdots\circ\mathcal I_1\) is unambiguously defined whenever
consecutive pairs are composable.
This allows any end-to-end information pipeline to be modeled as a single
composite information model at the coarsest granularity, or decomposed into
sub-models at any desired level of refinement; the two viewpoints yield the same
enabling structure.
\end{remark}

\subsection{Synonymous State Sets and Ideal Information}
\label{subsec:ideal}

This subsection formalizes when two state sets carry the same \emph{semantic content}
under an \emph{ideal} (lossless) notion of synonymy. In the ideal setting adopted here,
temporal precedence is part of the semantic structure; hence any synonymy must preserve it.

\textit{Semantic sublanguage.}
All definability requirements below are taken relative to the designated semantic sublanguage
\(\mathcal L_{\mathrm{sem}}\subseteq\mathcal L\) from Assumption~\ref{assump:semantic-sublanguage}.

\textit{Induced precedence on states.}
Let \(S\subseteq X\times T\) be a state set and let \(\prec\) be a linear order on \(T\).
For \(s=(x,t)\) and \(s'=(x',t')\) in \(S\), define the induced strict precedence relation
\[
s\prec_S s' \quad:\Longleftrightarrow\quad t\prec t'.
\]

\begin{definition}[\(\mathcal L_{\mathrm{sem}}\)-definable coding isomorphism between state sets]
\label{def:definable-coding}
Let \(S_1\subseteq X_1\times T_1\) and \(S_2\subseteq X_2\times T_2\) be \(\mathcal L_{\mathrm{sem}}\)-definable
state sets in the ambient structure \(\mathfrak R\).

A relation \(G_{12}\subseteq S_1\times S_2\) is an \emph{\(\mathcal L_{\mathrm{sem}}\)-definable coding isomorphism graph}
from \(S_1\) to \(S_2\) if there exists an \(\mathcal L_{\mathrm{sem}}\)-formula \(\psi(x_1,t_1,x_2,t_2)\) such that
for all \(x_1,t_1,x_2,t_2\),
\begin{align*}
G_{12}\bigl((x_1,t_1),(x_2,t_2)\bigr)
&\Longleftrightarrow S_1(x_1,t_1)\wedge S_2(x_2,t_2) \nonumber \\
&\qquad \wedge \psi(x_1,t_1,x_2,t_2),
\end{align*}
and \(G_{12}\) is a bijection between \(S_1\) and \(S_2\) in the sense of \textup{(B1)}--\textup{(B2)} below:
\begin{enumerate}[label=\textup{(B\arabic*)}]
  \item for every \(s_1\in S_1\) there exists a unique \(s_2\in S_2\) with \(G_{12}(s_1,s_2)\);
  \item for every \(s_2\in S_2\) there exists a unique \(s_1\in S_1\) with \(G_{12}(s_1,s_2)\).
\end{enumerate}
The uniqueness in \textup{(B1)}--\textup{(B2)} induces mutually inverse maps
\(\tau_{12}:S_1\to S_2\) and \(\tau_{21}:S_2\to S_1\).
\end{definition}

\begin{definition}[Synonymous state sets (ideal, order-preserving)]
\label{def:synonymous-states}
Let \(S_1\subseteq X_1\times T_1\) and \(S_2\subseteq X_2\times T_2\) be
\(\mathcal L_{\mathrm{sem}}\)-definable state sets in \(\mathfrak R\), equipped with
linear orders \(\prec_1\) on \(T_1\) and \(\prec_2\) on \(T_2\).
We say \(S_1\) and \(S_2\) are \emph{synonymous} (in the ideal, order-sensitive sense), written
\[
S_1 \syn S_2,
\]
if there exists an \(\mathcal L_{\mathrm{sem}}\)-definable coding isomorphism graph \(G_{12}\subseteq S_1\times S_2\)
(Definition~\ref{def:definable-coding}) inducing a bijection \(\tau_{12}:S_1\to S_2\) such that \(\tau_{12}\) is a \emph{time-order isomorphism} for the induced precedence on states, i.e.,
\[
s\prec_{S_1} s' \ \Longleftrightarrow\ \tau_{12}(s)\prec_{S_2}\tau_{12}(s')\quad (\forall s,s'\in S_1),
\]
where \(\prec_{S_i}\) is induced from \(\prec_i\) via \( (x,t)\prec_{S_i}(x',t')\Longleftrightarrow t\prec_i t'\).
\end{definition}

\begin{remark}[Nonlinearity, time-slice constraints, and role of \(\mathcal L_{\mathrm{sem}}\)]
\label{rem:syn-notes}
The induced relation \(\prec_S\) compares states only by their time indices and is generally not a linear order
when multiple states share the same time point.
The preservation/reflection condition in Definition~\ref{def:synonymous-states} implies that
for each time point \(t\in T_1\), every state \((x,t)\in S_1\) is mapped by \(\tau_{12}\)
to a state whose time index is the same value \(t'\in T_2\)
(since states at the same time are pairwise \(\prec_{S_1}\)-incomparable, and \(\tau_{12}\)
preserves and reflects the induced precedence).
Hence \(\tau_{12}\) restricts to a bijection between the time-slices
\(\{x:(x,t)\in S_1\}\) and \(\{y:(y,t')\in S_2\}\), and corresponding time-slices must have
equal cardinality.
Restricting coding/decoding definability to \(\mathcal L_{\mathrm{sem}}\) prevents auxiliary representation predicates
(e.g., a built-in order used only for encoding) from trivializing equivalence or enabling arbitrary matchings
between equipotent finite sets.
\end{remark}

\begin{proposition}[\(\syn\) is an equivalence relation]
\label{prop:syn-equiv}
The relation \(\syn\) on \(\mathcal L_{\mathrm{sem}}\)-definable state sets is reflexive, symmetric, and transitive.
\end{proposition}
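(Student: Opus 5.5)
The plan is to verify the three properties directly from Definitions~\ref{def:definable-coding} and~\ref{def:synonymous-states}. For each property we exhibit an explicit \(\mathcal L_{\mathrm{sem}}\)-formula defining the required coding isomorphism graph. We then check that the induced bijection is a time-order isomorphism, i.e.\ that it preserves and reflects the induced precedence \(\prec_{S_i}\). Throughout we rely on \(\mathcal L_{\mathrm{sem}}=\mathrm{FO(LFP)}[\Sigma_{\mathrm{sem}}]\) being closed under conjunction, existential quantification, renaming and permutation of free variables, and equality. Equality is a logical symbol, so it is available in \(\mathcal L_{\mathrm{sem}}\) without touching \(\Sigma_{\mathrm{rep}}\).

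\emph{Reflexivity.} Take \(\psi(x_1,t_1,x_2,t_2):=(x_1=x_2\wedge t_1=t_2)\). Then \(G_{11}\) is the diagonal of \(S_1\), which clearly satisfies \textup{(B1)}--\textup{(B2)}. The induced map \(\tau_{11}\) is the identity, and the identity trivially preserves and reflects \(\prec_{S_1}\). The defining formula is \(S_1(x_1,t_1)\wedge S_1(x_2,t_2)\wedge\psi\), which lies in \(\mathcal L_{\mathrm{sem}}\) because \(S_1\) is \(\mathcal L_{\mathrm{sem}}\)-definable.

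\emph{Symmetry.} Given \(\psi\) witnessing \(S_1\syn S_2\), set \(\psi'(x_2,t_2,x_1,t_1):=\psi(x_1,t_1,x_2,t_2)\); this is just a reordering of free variables. The resulting graph \(G_{21}\) is the converse relation of \(G_{12}\). Conditions \textup{(B1)} and \textup{(B2)} swap roles, and the induced map is \(\tau_{21}=\tau_{12}^{-1}\). Because the order condition for \(\tau_{12}\) is a biconditional, substituting \(s=\tau_{21}(u)\) and \(s'=\tau_{21}(u')\) gives \(u\prec_{S_2}u'\Leftrightarrow\tau_{21}(u)\prec_{S_1}\tau_{21}(u')\).

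\emph{Transitivity.} This is the only step with real content. Suppose \(\psi_{12}\) and \(\psi_{23}\) witness \(S_1\syn S_2\) and \(S_2\syn S_3\). Define
\[
\psi_{13}(x_1,t_1,x_3,t_3):=\exists x_2\,\exists t_2\,\bigl(S_2(x_2,t_2)\wedge\psi_{12}(x_1,t_1,x_2,t_2)\wedge\psi_{23}(x_2,t_2,x_3,t_3)\bigr).
\]
The intermediate variables \(x_2,t_2\) must range over the sorts of \(X_2,T_2\). In the many-sorted setting this is handled by sorted quantification, and the guard \(S_2\) confines the witnesses to \(S_2\). Guarding by \(S_2\) also matters for a second reason: \(\psi_{12}\) alone need not be bijective off \(S_1\times S_2\), so only the guarded conjunction can be used.

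We then show that the guarded graph \(G_{13}\) is exactly the graph of \(\tau_{23}\circ\tau_{12}\). For \(s_1\in S_1\) and \(s_3\in S_3\), the pair \(G_{13}(s_1,s_3)\) holds iff some \(s_2\in S_2\) has \(G_{12}(s_1,s_2)\) and \(G_{23}(s_2,s_3)\). By the uniqueness clauses, that \(s_2\) must be \(\tau_{12}(s_1)\), and then \(s_3=\tau_{23}(\tau_{12}(s_1))\). This yields \textup{(B1)}; \textup{(B2)} follows symmetrically using \(\tau_{21}\) and \(\tau_{32}\). Finally, order isomorphism passes through the composition, since chaining the two biconditionals gives
\[
s\prec_{S_1}s'\Leftrightarrow\tau_{12}(s)\prec_{S_2}\tau_{12}(s')\Leftrightarrow\tau_{23}\tau_{12}(s)\prec_{S_3}\tau_{23}\tau_{12}(s').
\]
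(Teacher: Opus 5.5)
Your proposal is correct and follows the same route as the paper. You use the identity (equality) graph for reflexivity, the converse graph for symmetry, and the \(S_2\)-guarded existential composition \(\exists x_2\,\exists t_2\,[S_2(x_2,t_2)\wedge\cdots]\) for transitivity, with (B1)--(B2) and the order condition carried through the composition. You simply write out the uniqueness and order-isomorphism checks that the paper states in a single line.
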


\begin{proof}
Reflexivity is witnessed by the \(\mathcal L_{\mathrm{sem}}\)-definable identity graph.
Symmetry follows by swapping coordinates in the witnessing graph.
For transitivity, let \(G_{12}\subseteq S_1\times S_2\) and \(G_{23}\subseteq S_2\times S_3\) be the two witnessing
graphs. Define the composed graph by
\(G_{13}(s_1,s_3):\Leftrightarrow \exists\, s_2\,[G_{12}(s_1,s_2)\wedge G_{23}(s_2,s_3)]\),
where, following the state notation convention of Definition~\ref{def:expressible-states},
the existential quantifier \(\exists\,s_2\) abbreviates \(\exists\,x_2\,\exists\,t_2\,[S_2(x_2,t_2)\wedge\cdots]\).
This is \(\mathcal L_{\mathrm{sem}}\)-definable since \(\mathcal L_{\mathrm{sem}}\) is closed under
first-order connectives and quantification over its own sorts.
The uniqueness clauses \textup{(B1)}--\textup{(B2)} of each component graph ensure that \(G_{13}\) again
satisfies \textup{(B1)}--\textup{(B2)}, so the induced map is a bijection that preserves and reflects
the induced time precedence by composition of the two order isomorphisms.
\end{proof}

\begin{definition}[Ideal information]
\label{def:ideal-info}
Let \(\mathcal I=\langle O,T_O,S_O,C,T_C,S_C,R_{\mathcal E}\rangle\) be an information model.
We call \(\mathcal I\) \emph{ideal (with respect to \(\mathcal L_{\mathrm{sem}}\))} if there exists an
\(\mathcal L_{\mathrm{sem}}\)-definable coding isomorphism graph \(G_{OC}\subseteq S_O\times S_C\)
witnessing \(S_O \syn S_C\), with induced bijection \(\tau_{OC}:S_O\to S_C\), such that the enabling map is
singleton-valued and agrees with \(\tau_{OC}\): for every \(s_o\in S_O\),
\[
\mathcal E(s_o) = \{\tau_{OC}(s_o)\}.
\]
\end{definition}

\begin{remark}[Cross-sort definability, witness dependence, and canonical representative]
\label{rem:ideal-notes}
The requirement \(S_O \syn S_C\) is nontrivial in a many-sorted setting: the witnessing
\(\mathcal L_{\mathrm{sem}}\)-formula must relate variables of the object sort and the carrier sort.
If \(\Sigma_{\mathrm{sem}}\) contains no relations or functions that can link these sorts
(directly or indirectly), then no definable coding isomorphism exists and the instance cannot be ideal.
Moreover, the witnessing isomorphism \(\tau_{OC}\) need not be unique; hence the pointwise identification
between \(S_O\) and \(S_C\) depends on the chosen witness.

\smallskip
\noindent\textbf{Convention (fixing a witness and canonical representative).}
Whenever we restrict attention to an ideal instance, we fix one witnessing graph \(G_{OC}\) and its induced bijection
\(\tau_{OC}:S_O\to S_C\) (with inverse \(\tau_{CO}\)) once and for all, identify \(S_O\) and \(S_C\) via this choice,
and take \(S_O\) as the canonical semantic representative of~\(\mathcal I\).
With a mild abuse of notation, we may refer to the ``semantics of \(\mathcal I\)'' simply as~\(S_O\).

\smallskip
When \(\mathcal I\) is not ideal, one can still analyze its semantic structure
through a carrier-equivalent semantic component; this is the subject of the noisy information
framework developed in Section~\ref{subsec:noisy}.
\end{remark}

\subsection{Irredundant Cores and Derivation Depth}
\label{subsec:atomic-derivation}

This subsection introduces two concepts unified by the immediate consequence operator
\(T_{\mathsf{PS}}\) of the fixed proof system:
(i) an \emph{irredundant semantic core} \(\Atom(S_O)\), defined via the deductive closure
\(\Cn=T_{\mathsf{PS}}^\omega\), that separates intrinsic vs.\ operational inference;
(ii) a base-relative \emph{derivation depth} \(\Dd(\cdot\mid B)\), defined as the stratum index
of the \(T_{\mathsf{PS}}\)-iteration from~\(B\), that measures the minimum number of
single-step inferences needed to reach a given formula.

\smallskip
\noindent\textbf{Ambient semantic state space.}
We fix a countable set \(\mathbb{S}_O\supseteq S_O\) of potential semantic states.
Under the State--formula convention of Assumption~\ref{assump:proof-system}, each element of
\(\mathbb{S}_O\) is identified with a ground atom of \(\mathcal L_{\mathrm{kb}}\)
(extending the identification of \(S_O\)-elements with atoms of the form \(S_O(o,\tau)\)).
In particular, the derivability relation \(\vdash_{\mathrm{kb}}\) and the operator
\(T_{\mathsf{PS}}\) introduced below act on (subsets of) \(\mathbb{S}_O\) via this
identification.

The set \(\mathbb{S}_O\) is equipped with an injective encoding
\(\enc_O:\mathbb S_O\to\{0,1\}^*\) extending that of \(S_O\)
(Axiom~\ref{ax:state-representation}) and an extension of \(\timeindex_O\) to \(\mathbb{S}_O\),
such that membership in \(\mathbb{S}_O\) is decidable from the codes
(i.e., given \(w\in\{0,1\}^*\), one can decide whether \(w\in\enc_O(\mathbb S_O)\)).
Further standing properties of \(\mathbb{S}_O\)
(closure under definable recodings) are specified in Assumption~\ref{assump:semantic-universe}.

\smallskip
\textit{Available premise bases.}
A finite set \(B\subseteq \mathbb S_O\) is called an \emph{available premise base} if its elements are treated as
depth-\(0\) premises when measuring derivation depth from \(B\).

\begin{assumption}[Finite and effectively listable knowledge bases]
\label{assump:finite-so}
The knowledge bases \(S_O\) considered in this paper are finite and effectively listable under a fixed canonical order.
\end{assumption}

\begin{assumption}[Effective redundancy test (core extractability)]
\label{assump:core-extractable}
For the knowledge bases \(S_O\) considered, the predicate
\(
s\in \Cn(\Gamma)
\)
is decidable whenever \(\Gamma\subseteq S_O\) is finite and \(s\in S_O\).
(Equivalently: redundancy of a stored formula relative to the remaining stored formulas is decidable.)
\end{assumption}

\begin{remark}[When Assumption~\ref{assump:core-extractable} is reasonable]
\label{rem:core-extractable-context}
Assumption~\ref{assump:core-extractable} is strong for unrestricted logics, but it is satisfied in many
practically relevant decidable fragments used for knowledge bases and rule
systems~\cite{ceri1989you,abiteboul1995foundations,
dantsin2001complexity} (e.g., Datalog/Horn-style
settings and bounded-domain theories), where entailment and redundancy checking are decidable and often tractable;
see, e.g.,~\cite{abiteboul1995foundations,dantsin2001complexity}.
\end{remark}

\begin{definition}[Semantic atomic core (canonical irredundant generating set)]
\label{def:atom-so}
Let \(S_O\) be a finite knowledge base.
Define \(\Atom(S_O)\) to be the output of the following fixed deterministic \emph{irredundantization} procedure:

\begin{enumerate}[label=\textup{(\roman*)}]
  \item Initialize \(A\gets S_O\).
  \item Scan elements of \(S_O\) in the fixed canonical order; for each \(s\in S_O\), if
  \(s\in \Cn(A\setminus\{s\})\), then set \(A\gets A\setminus\{s\}\).
  \item Output \(A\) and denote it by \(\Atom(S_O)\).
\end{enumerate}
\end{definition}

\begin{proposition}[Core correctness: equivalence, irredundancy, and derivability]
\label{prop:atom-core-correct}
Under Assumptions~\ref{assump:finite-so} and~\ref{assump:core-extractable}, for every finite knowledge base \(S_O\),
the set \(A:=\Atom(S_O)\) satisfies:
\begin{enumerate}[label=\textup{(\roman*)}]
  \item \emph{Closure equivalence:} \(\Cn(A)=\Cn(S_O)\).
  \item \emph{Irredundancy:} for every \(a\in A\), \(a\notin \Cn(A\setminus\{a\})\).
  \item \emph{Canonicality:} \(\Atom(S_O)\) is uniquely determined by \(S_O\) and the fixed canonical order.
  \item \emph{Stored derivability:} \(S_O\subseteq \Cn(\Atom(S_O))\).
\end{enumerate}
\end{proposition}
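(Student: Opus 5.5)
The plan is to argue by induction on the scan in Definition~\ref{def:atom-so}, tracking a loop invariant. Enumerate $S_O=\{s_1,\dots,s_m\}$ in the fixed canonical order (Assumption~\ref{assump:finite-so}), and let $A_0=S_O$ and $A_k$ be the value of $A$ after processing $s_k$, so $A=A_m$. Assumption~\ref{assump:core-extractable} ensures each test $s_k\in\Cn(A_{k-1}\setminus\{s_k\})$ is decidable, since $A_{k-1}\setminus\{s_k\}\subseteq S_O$ is finite. Hence the procedure is a well-defined deterministic computation whose output depends only on $S_O$ and the order. This already gives (iii).

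For (i), I will prove the invariant $\Cn(A_k)=\Cn(S_O)$ for all $k$. The base case $k=0$ is trivial. If $A_k=A_{k-1}$ there is nothing to show. Otherwise $A_k=A_{k-1}\setminus\{s_k\}$ with $s_k\in\Cn(A_k)$. Then (Cn1) gives $A_{k-1}=A_k\cup\{s_k\}\subseteq\Cn(A_k)$, and the cut form of (Cn3) yields $\Cn(A_{k-1})\subseteq\Cn(A_k)$. The reverse inclusion follows from (Cn2), because $A_k\subseteq A_{k-1}$. Item (iv) then follows immediately: $S_O\subseteq\Cn(S_O)=\Cn(A)$ by (Cn1) and (i).

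For (ii), the key point is that an element can only become redundant when elements are removed, never the reverse. Take $a\in A$; then $a=s_k$ for some $k$, and $s_k$ was retained at step $k$, so $s_k\notin\Cn(A_{k-1}\setminus\{s_k\})$. Since the $A_j$ decrease, $A\setminus\{a\}\subseteq A_{k-1}\setminus\{s_k\}$. By (Cn2), $\Cn(A\setminus\{a\})\subseteq\Cn(A_{k-1}\setminus\{s_k\})$, which does not contain $a$.

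The main subtlety is this last monotonicity step. Irredundancy has to be checked against the final set, not against the set present at the time of the test, and one must notice that later deletions can only shrink the premise set. This is exactly why monotonicity, rather than some nonmonotone property, suffices. I would also note that (i) uses cut in the form ``$\Delta\subseteq\Cn(\Gamma)$ implies $\Cn(\Delta)\subseteq\Cn(\Gamma)$'', which is stated in (Cn3).
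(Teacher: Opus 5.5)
Your proposal is correct and follows essentially the same route as the paper: each removal step preserves the closure via (Cn1)--(Cn3) for (i), monotonicity against the shrinking premise set gives (ii), determinism gives (iii), and reflexivity plus (i) gives (iv). Your explicit indexing $A_0\supseteq A_1\supseteq\cdots\supseteq A_m$, together with the remark that Assumption~\ref{assump:core-extractable} makes each scan test decidable, only makes the paper's induction slightly more explicit.
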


\begin{proof}
We first verify that each removal step preserves the deductive closure.
Suppose at some stage the current set is~\(A\) and some \(s\in A\) satisfies
\(s\in\Cn(A\setminus\{s\})\); we claim \(\Cn(A)=\Cn(A\setminus\{s\})\).
Since \(A\setminus\{s\}\subseteq A\), monotonicity~\textup{(Cn2)} gives
\(\Cn(A\setminus\{s\})\subseteq\Cn(A)\).
For the reverse inclusion: reflexivity~\textup{(Cn1)} gives
\(A\setminus\{s\}\subseteq\Cn(A\setminus\{s\})\), and by hypothesis
\(s\in\Cn(A\setminus\{s\})\), so \(A\subseteq\Cn(A\setminus\{s\})\).
By monotonicity, \(\Cn(A)\subseteq\Cn\bigl(\Cn(A\setminus\{s\})\bigr)\), and by
idempotence~\textup{(Cn3)},
\(\Cn\bigl(\Cn(A\setminus\{s\})\bigr)=\Cn(A\setminus\{s\})\).
Hence \(\Cn(A)\subseteq\Cn(A\setminus\{s\})\).

By induction over the finitely many removal steps in the scan of
Definition~\ref{def:atom-so}, the final set \(A\) satisfies
\(\Cn(A)=\Cn(S_O)\), proving~\textup{(i)}.

For irredundancy~\textup{(ii)}: any \(a\) retained in~\(A\) was \emph{not} removable at its
scan stage, i.e., \(a\notin\Cn(A'\setminus\{a\})\) for the set~\(A'\) at that stage.
Since subsequent removals only shrink~\(A'\) to the final~\(A\subseteq A'\),
monotonicity~\textup{(Cn2)} gives
\(\Cn(A\setminus\{a\})\subseteq\Cn(A'\setminus\{a\})\), so
\(a\notin\Cn(A\setminus\{a\})\).

Canonicality~\textup{(iii)} follows from determinism of the procedure and the fixed
canonical order.

For~\textup{(iv)}, reflexivity~\textup{(Cn1)} gives
\(S_O\subseteq\Cn(S_O)=\Cn(A)\) by~\textup{(i)}.
\end{proof}

\begin{definition}[Intrinsic vs.\ operational premise bases]
\label{def:intrinsic-operational-bases}
For a knowledge base \(S_O\), define:
\begin{align*}
 & A := \Atom(S_O) \quad\text{(core premises)},\\
 & J := S_O\setminus A \quad\text{(stored shortcuts)}.
\end{align*}
We call \(A\) the \emph{intrinsic} premise base and \(A\cup J=S_O\) the \emph{operational} premise base.
\end{definition}

\textit{Derivation depth from the immediate consequence operator.}
We derive the derivation depth directly from
the proof system \(\mathsf{PS}\) (Assumption~\ref{assump:proof-system}) via its
\emph{immediate consequence operator}, which simultaneously characterizes \(\Cn\) and stratifies
the derivation process.

\begin{definition}[Immediate consequence operator]
\label{def:T-operator}
Write \(\Gamma\vdash_{\mathrm{kb}}^{1}s\) to denote that \(s\) is derivable from premises in
\(\Gamma\) by \emph{exactly one inference step} (i.e., one rule application) of~\(\mathsf{PS}\).
For any set \(\Gamma\subseteq \mathbb{S}_O\), define
\[
  T_{\mathsf{PS}}(\Gamma)
  \;:=\;
  \Gamma \;\cup\;
  \bigl\{\, s\in \mathbb{S}_O :\,
     \Gamma\vdash_{\mathrm{kb}}^{1} s \,\bigr\}.
\]
Define the iteration \(T^0_{\mathsf{PS}}(\Gamma):=\Gamma\) and
\(T^{n+1}_{\mathsf{PS}}(\Gamma):=T_{\mathsf{PS}}\bigl(T^n_{\mathsf{PS}}(\Gamma)\bigr)\) for \(n\ge 0\).
\end{definition}

\begin{axiom}[Properties of the immediate consequence operator]
\label{ax:T-operator}
The operator \(T_{\mathsf{PS}}\) satisfies:
\begin{enumerate}[label=\textup{(IC\arabic*)}]
  \item \emph{Monotonicity:}
        \(\Gamma\subseteq\Gamma'\) implies \(T_{\mathsf{PS}}(\Gamma)\subseteq T_{\mathsf{PS}}(\Gamma')\).
  \item \emph{Computability:}
        for any finite \(\Gamma\subseteq\mathbb{S}_O\) (given by the codes of its elements),
        the set \(T_{\mathsf{PS}}(\Gamma)\) is finite and effectively computable.
  \item \emph{Closure characterization:}
        \(\Cn(\Gamma)=\bigcup_{n\ge 0}T^n_{\mathsf{PS}}(\Gamma)\) for every finite
        \(\Gamma\subseteq\mathbb{S}_O\).
  \item \emph{Finite stabilization:}
        for every finite \(\Gamma\subseteq\mathbb{S}_O\), the ascending chain
        \(\bigl(T^n_{\mathsf{PS}}(\Gamma)\bigr)_{n\ge 0}\) stabilizes in finitely many steps;
        i.e., there exists \(N<\infty\) such that
        \(T^{N+1}_{\mathsf{PS}}(\Gamma)=T^N_{\mathsf{PS}}(\Gamma)=\Cn(\Gamma)\).
\end{enumerate}
\end{axiom}

\begin{remark}[When Axiom~\ref{ax:T-operator} holds; derived properties]
\label{rem:T-operator-context}
In Datalog and Horn-clause settings over finite domains, the immediate consequence operator
\(T_P\) satisfies all four properties:
monotonicity is a standard property of \(T_P\);
computability follows from the decidability of single-step rule application;
the closure characterization is the classical van Emden--Kowalski fixpoint
theorem~\cite{abiteboul1995foundations}; and
finite stabilization follows from finiteness of the Herbrand base.
More broadly, Axiom~\ref{ax:T-operator} holds for any effective proof system whose
single-step relation is decidable and whose ground term universe is finite.

\smallskip
Two useful consequences:
\textup{(a)}~\textup{(IC2)}+\textup{(IC4)} imply that \(\Cn(\Gamma)\) is \emph{finite}
for every finite~\(\Gamma\);
\textup{(b)}~\textup{(IC2)}+\textup{(IC3)}+\textup{(IC4)} together imply
Assumption~\textup{\ref{assump:core-extractable}} as a special case (deciding
\(s\in\Cn(\Gamma)\) by iterating \(T_{\mathsf{PS}}\) from~\(\Gamma\) until stabilization).
We state Assumption~\textup{\ref{assump:core-extractable}} separately to clarify the minimal
requirements for core extraction independently of the \(T_{\mathsf{PS}}\)-axiomatics.
\end{remark}

\begin{definition}[Base-relative derivation depth (stratum)]
\label{def:derivation-depth}
Let \(B\subseteq\mathbb{S}_O\) be a finite set of \emph{available premises}.
The \emph{derivation depth} (or \emph{stratum index}) of \(s\in\mathbb{S}_O\) from \(B\) is
\begin{equation}\label{eq:Dd-stratum}
  \Dd(s\mid B)
  \;:=\;
  \min\bigl\{\,n\ge 0 : s\in T^n_{\mathsf{PS}}(B)\,\bigr\},
\end{equation}
with the convention \(\Dd(s\mid B):=\infty\) if \(s\notin\Cn(B)\).
\end{definition}

\begin{remark}[Recursive characterization]
\label{rem:Dd-recursive}
The stratum definition~\eqref{eq:Dd-stratum} admits an equivalent recursive form:
\(\Dd(s\mid B)=0\) if \(s\in B\), and otherwise
\begin{align*}
\Dd(s \mid B) = 1 + \min \Bigl\{ & \max_{s'\in P} \Dd(s'\mid B) : \nonumber \\
                                 & P \subseteq \Cn(B) \text{ finite}, \; P \vdash_{\mathrm{kb}}^{1} s \Bigr\}
\end{align*}
where the minimum is over all finite premise sets \(P\) from which \(s\) is derivable
in one step of~\(\mathsf{PS}\).
This makes explicit that \(\Dd(s\mid B)\) equals the depth of the \emph{shallowest} proof tree
of \(s\) from~\(B\).
\end{remark}

\begin{lemma}[Properties of derivation depth]
\label{lem:depth-properties}
Under Axiom~\ref{ax:T-operator}, for any finite premise base \(B\subseteq\mathbb{S}_O\):
\begin{enumerate}[label=\textup{(\roman*)}]
  \item For every \(s\in\Cn(B)\), \(\Dd(s\mid B)\) is a unique, finite, computable non-negative integer.
  \item \(\Dd(s\mid B)=0\) if and only if \(s\in B\).
  \item \emph{Monotonicity in the base:} if \(B\subseteq B'\subseteq\mathbb{S}_O\)
        with \(B'\) finite and \(s\in\Cn(B)\), then \(\Dd(s\mid B')\le\Dd(s\mid B)\).
\end{enumerate}
\end{lemma}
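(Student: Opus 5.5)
The plan is to argue each item directly from Definition~\ref{def:derivation-depth} and the axioms (IC1)--(IC4). Throughout, we use that the iterates form an ascending chain: \(T^n_{\mathsf{PS}}(B)\subseteq T^{n+1}_{\mathsf{PS}}(B)\). This holds because \(T_{\mathsf{PS}}(\Gamma)\supseteq\Gamma\) by Definition~\ref{def:T-operator}, so inflationarity, not monotonicity, is what gives the chain.

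For \textup{(i)}, take \(s\in\Cn(B)\). By (IC3), \(s\in T^n_{\mathsf{PS}}(B)\) for some \(n\), so the set \(\{n\ge 0: s\in T^n_{\mathsf{PS}}(B)\}\) is a nonempty subset of \(\mathbb N\). Its minimum therefore exists, is unique, and is a finite non-negative integer. For computability, we compute \(T^0_{\mathsf{PS}}(B)=B,\ T^1_{\mathsf{PS}}(B),\dots\) in turn. Each iterate is finite and effectively computable from the previous one by (IC2) and induction, and membership of \(s\) in a finite explicitly listed set is decidable via the injective codes. We output the first \(n\) with \(s\in T^n_{\mathsf{PS}}(B)\). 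The search halts because such an \(n\) exists. Alternatively, (IC4) gives an a priori bound \(N\), and the search can also be used to decide whether \(s\in\Cn(B)\) by detecting stabilization \(T^{n+1}=T^n\). Uniqueness needs no further argument, since a minimum of a nonempty set of naturals is a single number.

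For \textup{(ii)}, \(\Dd(s\mid B)=0\) iff \(s\in T^0_{\mathsf{PS}}(B)=B\); this is immediate. For the converse direction, note that \(s\in B\) gives \(s\in\Cn(B)\) by (Cn1), so the depth is not the \(\infty\) convention and equals \(0\).

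For \textup{(iii)}, the key step is to show by induction on \(n\) that \(T^n_{\mathsf{PS}}(B)\subseteq T^n_{\mathsf{PS}}(B')\) for all \(n\). The base case \(n=0\) is the hypothesis \(B\subseteq B'\). The inductive step applies (IC1) to \(T^n_{\mathsf{PS}}(B)\subseteq T^n_{\mathsf{PS}}(B')\). Now let \(n_0=\Dd(s\mid B)\), which is finite by \textup{(i)}. Then \(s\in T^{n_0}_{\mathsf{PS}}(B)\subseteq T^{n_0}_{\mathsf{PS}}(B')\), so \(s\in\Cn(B')\) and \(\Dd(s\mid B')\le n_0\) by minimality. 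The only point needing care is confirming that \(B'\) finite keeps (IC3) applicable, so that the depth from \(B'\) is the finite stratum index and not \(\infty\). I do not expect a genuine obstacle. The most delicate step is the computability claim in \textup{(i)}, and specifically making explicit that each \(T^n_{\mathsf{PS}}(B)\) is handed to the next iteration as a finite list of codes.
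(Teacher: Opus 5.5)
Your proposal is correct and follows essentially the same argument as the paper: nonemptiness of \(\{n\ge 0: s\in T^n_{\mathsf{PS}}(B)\}\) plus well-ordering and a halting search for \textup{(i)}, the identity \(T^0_{\mathsf{PS}}(B)=B\) for \textup{(ii)}, and the inductive inclusion \(T^n_{\mathsf{PS}}(B)\subseteq T^n_{\mathsf{PS}}(B')\) from \textup{(IC1)} followed by minimality for \textup{(iii)}. The differences are cosmetic: you cite \textup{(IC3)} where the paper cites \textup{(IC4)} for nonemptiness, and you make the \(\infty\) convention explicit in \textup{(ii)} and \textup{(iii)}, which the paper leaves implicit.
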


\begin{proof}
\textup{(i)}\
By \textup{(IC4)}, the iteration \(\bigl(T^n_{\mathsf{PS}}(B)\bigr)_{n\ge 0}\) stabilizes at
\(\Cn(B)\) in finitely many steps.
Since \(s\in\Cn(B)\), the set \(\{n:s\in T^n_{\mathsf{PS}}(B)\}\) is nonempty;
its minimum is therefore a unique finite non-negative integer.
Computability: iterate \(T_{\mathsf{PS}}\) from \(B\) (computable by \textup{(IC2)})
and check membership at each stage until \(s\) appears.

\smallskip\noindent
\textup{(ii)}\
\(\Dd(s\mid B)=0\) iff \(s\in T^0_{\mathsf{PS}}(B)=B\).

\smallskip\noindent
\textup{(iii)}\
Monotonicity of \(T_{\mathsf{PS}}\) \textup{(IC1)} yields, by induction on \(n\),
\(T^n_{\mathsf{PS}}(B)\subseteq T^n_{\mathsf{PS}}(B')\) for all \(n\ge 0\).
Hence
\(\min\{n:s\in T^n_{\mathsf{PS}}(B')\}\le\min\{n:s\in T^n_{\mathsf{PS}}(B)\}\),
i.e., \(\Dd(s\mid B')\le\Dd(s\mid B)\).
\end{proof}

\begin{definition}[Intrinsic and operational derivation depths]
\label{def:int-op-depth}
Let \(S_O\) be a knowledge base with core \(A:=\Atom(S_O)\).
For any \(q\in\Cn(A)=\Cn(S_O)\), define
\[
  n_{\mathrm{int}}(q) := \Dd(q \mid A),
  \qquad
  n_{\mathrm{op}}(q) := \Dd(q \mid S_O).
\]
\end{definition}

\begin{definition}[Semantic atomicity measure]
\label{def:atomicity-measure}
The \emph{semantic atomicity} of an information model \(\mathcal I\) with semantic space \(S_O\) is
\[
  \mathsf{A}(\mathcal I)\;:=\;\lvert \Atom(S_O)\rvert.
\]
\end{definition}

\begin{definition}[Maximum intrinsic derivation depth]
\label{def:max-depth}
The \emph{maximum intrinsic derivation depth} of an information model \(\mathcal I\) with semantic space \(S_O\) is
\[
  \mathsf{D_d}(\mathcal I)
  \;:=\;
  \max_{q\,\in\, S_O}\;\Dd\bigl(q \mid \Atom(S_O)\bigr),
\]
with the convention \(\max\varnothing := 0\) when \(S_O=\varnothing\).
\end{definition}

\begin{theorem}[Computable semantic invariants of an ideal information model]
\label{thm:semantic-invariants}
Let \(\mathcal I\) be an ideal information model
\textup{(Definition~\ref{def:ideal-info})} with canonical semantic representative \(S_O\).
Under Assumptions~\textup{\ref{assump:finite-so}}--\textup{\ref{assump:core-extractable}}
and Axiom~\textup{\ref{ax:T-operator}},
the following hold.
\begin{enumerate}[label=\textup{(\roman*)}]
  \item \emph{Well-definedness and uniqueness.}
        The semantic atomicity \(\mathsf{A}(\mathcal I)=\lvert\Atom(S_O)\rvert\) and the maximum intrinsic derivation depth
        \(\mathsf{D_d}(\mathcal I)=\max_{q\in S_O}\Dd(q\mid\Atom(S_O))\)
        are uniquely determined by \(S_O\) and the fixed canonical order.
  \item \emph{Finiteness and computability.}
        Both \(\mathsf{A}(\mathcal I)\) and \(\mathsf{D_d}(\mathcal I)\) are finite non-negative integers,
        and are computable from \(S_O\).
  \item \emph{Monotonicity of derivation depth.}
        For every \(q\in S_O\),
        \[
          n_{\mathrm{op}}(q)\;\le\; n_{\mathrm{int}}(q)\;\le\;\mathsf{D_d}(\mathcal I).
        \]
  \item \emph{Characterization of zero depth.}
        \(\mathsf{D_d}(\mathcal I)=0\) if and only if \(\Atom(S_O)=S_O\)
        \textup{(}i.e., \(S_O\) is already irredundant\textup{)}.
\end{enumerate}
\end{theorem}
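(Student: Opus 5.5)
The proof assembles earlier results item by item; no new machinery is needed.

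For (i), invoke Proposition~\ref{prop:atom-core-correct}(iii): $\Atom(S_O)$ is the output of a deterministic scan over the fixed canonical order, so it is uniquely determined by $S_O$, and hence so is its cardinality $\mathsf A(\mathcal I)$. Given that $A:=\Atom(S_O)$ is unique, each value $\Dd(q\mid A)$ is uniquely determined as the minimum of a set of naturals (Definition~\ref{def:derivation-depth}). The maximum of these values over the finite set $S_O$ is therefore also unique.

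For (ii), $\mathsf A(\mathcal I)\le|S_O|<\infty$ by Assumption~\ref{assump:finite-so}. Computability follows because the procedure of Definition~\ref{def:atom-so} consists of finitely many steps: $S_O$ is effectively listable, and each step calls a decidable membership test $s\in\Cn(A\setminus\{s\})$ by Assumption~\ref{assump:core-extractable}.

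For $\mathsf D_d$, the key observation is Proposition~\ref{prop:atom-core-correct}(iv): $S_O\subseteq\Cn(A)$. Consequently every $q\in S_O$ lies in $\Cn(A)$, and Lemma~\ref{lem:depth-properties}(i) applied with $B=A$ (finite, since $A\subseteq S_O$) gives that $\Dd(q\mid A)$ is a finite computable integer. A maximum of finitely many such integers is finite and computable, with the $\max\varnothing=0$ convention covering $S_O=\varnothing$. This is the only step with real content: without (iv), the value $\infty$ could appear.

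For (iii), apply Lemma~\ref{lem:depth-properties}(iii) with $B=A\subseteq B'=S_O$ and $q\in\Cn(A)$. This yields $n_{\mathrm{op}}(q)=\Dd(q\mid S_O)\le\Dd(q\mid A)=n_{\mathrm{int}}(q)$. The second inequality $n_{\mathrm{int}}(q)\le\mathsf D_d(\mathcal I)$ is immediate from the definition of the maximum, since $q\in S_O$.

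For (iv), argue via Lemma~\ref{lem:depth-properties}(ii). We have $\mathsf D_d=0$ iff $\Dd(q\mid A)=0$ for all $q\in S_O$, and this holds iff $S_O\subseteq A$. Since $A\subseteq S_O$ always holds (the procedure only removes elements), this is equivalent to $A=S_O$. The empty case is consistent, since $\Atom(\varnothing)=\varnothing$.

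I expect no serious obstacle. The points that need care are checking that $A$ is finite so that Lemma~\ref{lem:depth-properties} applies, and using $S_O\subseteq\Cn(A)$ to rule out infinite depths.
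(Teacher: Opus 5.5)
Your proposal is correct and follows the paper's proof essentially step for step, using Proposition~\ref{prop:atom-core-correct}(iii)--(iv) and Lemma~\ref{lem:depth-properties}(i)--(iii) in the same roles. The only difference is cosmetic: you invoke $S_O\subseteq\Cn(\Atom(S_O))$ in part (ii) to rule out infinite depths, whereas the paper already uses it in part (i).
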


\begin{proof}
\textup{(i)}\
\(\Atom(S_O)\) is uniquely determined by Proposition~\ref{prop:atom-core-correct}\textup{(iii)};
hence \(\mathsf{A}(\mathcal I)\) is unique.
By Proposition~\ref{prop:atom-core-correct}\textup{(iv)},
every \(q\in S_O\) lies in \(\Cn(\Atom(S_O))\),
so the stratum \(\Dd(q\mid\Atom(S_O))\) is unique by Lemma~\ref{lem:depth-properties}\textup{(i)},
and \(\mathsf{D_d}(\mathcal I)\) is unique.

\smallskip\noindent
\textup{(ii)}\
\(\Atom(S_O)\) is computable by the deterministic procedure of Definition~\ref{def:atom-so}
under Assumptions~\ref{assump:finite-so}--\ref{assump:core-extractable},
so \(\mathsf{A}(\mathcal I)=|\Atom(S_O)|\) is computable.
By Lemma~\ref{lem:depth-properties}\textup{(i)},
each \(\Dd(q\mid\Atom(S_O))\) is finite and computable.
Since \(S_O\) is finite and effectively listable (Assumption~\ref{assump:finite-so}),
the maximum \(\mathsf{D_d}(\mathcal I)\) is computable by enumeration.

\smallskip\noindent
\textup{(iii)}\
Since \(\Atom(S_O)\subseteq S_O\), Lemma~\ref{lem:depth-properties}\textup{(iii)} gives
\(\Dd(q\mid S_O)\le\Dd(q\mid\Atom(S_O))\) for every
\(q\in\Cn(\Atom(S_O))\supseteq S_O\)
(the inclusion by Proposition~\ref{prop:atom-core-correct}\textup{(iv)}).
Restricting to \(q\in S_O\) yields \(n_{\mathrm{op}}(q)\le n_{\mathrm{int}}(q)\).
The bound \(n_{\mathrm{int}}(q)\le\mathsf{D_d}(\mathcal I)\) is immediate from the definition
of the maximum.

\smallskip\noindent
\textup{(iv)}\
If \(\Atom(S_O)=S_O\), then every \(q\in S_O\) satisfies \(q\in\Atom(S_O)=T^0_{\mathsf{PS}}(\Atom(S_O))\),
so \(\Dd(q\mid\Atom(S_O))=0\) and \(\mathsf{D_d}(\mathcal I)=0\).
Conversely, if \(\mathsf{D_d}(\mathcal I)=0\), then \(\Dd(q\mid\Atom(S_O))=0\) for all
\(q\in S_O\), which by Lemma~\ref{lem:depth-properties}\textup{(ii)} implies \(q\in\Atom(S_O)\);
thus \(S_O\subseteq\Atom(S_O)\), and since \(\Atom(S_O)\subseteq S_O\) by construction,
\(\Atom(S_O)=S_O\).
\end{proof}

\begin{remark}[Interpretation and connection to logical depth]
\label{rem:invariants-interpretation}
The intrinsic depth \(n_{\mathrm{int}}\) captures inferential content relative to the irredundant core,
whereas the operational depth \(n_{\mathrm{op}}\) captures actual online cost under stored shortcuts.
Since \(\Dd(q\mid B)\) is the depth of the shallowest proof tree from \(B\),
it is closely related to Bennett's logical depth~\cite{bennett1988logical};
later sections connect intrinsic/operational derivation depth to conditional logical depth
under explicit simulation assumptions.

\smallskip
The irredundant core \(\Atom(S_O)\) is also related to the notion of a
\emph{prime implicate basis} or \emph{irredundant base} in knowledge
compilation~\cite{dantsin2001complexity} and to the concept of a
\emph{minimal generating set} for a deductive closure in logic
programming~\cite{abiteboul1995foundations,ceri1989you}.
The key difference in our setting is that \(\Atom(S_O)\) is defined
relative to a fixed proof system~\(\mathsf{PS}\) and a fixed canonical
order, yielding a unique and computable representative; moreover, the
derivation-depth stratification~\(\Dd(\cdot\mid\Atom(S_O))\) provides
a \emph{quantitative} measure of inferential complexity that goes
beyond the binary distinction between base and derived facts.
\end{remark}

\begin{remark}[Application to time-free knowledge bases]
\label{rem:time-free-kb}
In many knowledge-base applications (including the Datalog examples of
Section~\ref{sec:application}), the time coordinate plays no active role:
all states reside at a single (implicit) time point, and the time domain
\(T_O\) degenerates to a singleton.
In this case the object--time pair \((o,\tau)\) reduces to the object~\(o\)
alone, the induced precedence~\(\prec_S\) is trivial (no distinct time points
to compare), and the synonymy condition of
Definition~\ref{def:synonymous-states} reduces to the existence of an
\(\mathcal L_{\mathrm{sem}}\)-definable bijection between state sets.
All definitions and results of Sections~\ref{sec:model}--\ref{sec:channel}
remain valid in this degenerate setting; the time-indexed formulation is
retained for generality.
\end{remark}

\subsection{Noisy Information}
\label{subsec:noisy}

Sections~\ref{subsec:info-instance}--\ref{subsec:atomic-derivation} developed the theory of
information models and their semantic invariants under the assumption that the model is
\emph{ideal}, i.e., \(S_O\syn S_C\).
In practice, a given information model need not be ideal: the carrier may realize states that
were not semantically intended, or fail to realize some intended states.
This subsection introduces a framework for analyzing such non-ideal models by constructing,
for each information model whose carrier admits a semantic representation, a noisy information
object whose semantic component is synonymous with the carrier space.
The semantic component is a set perturbation of the original \(S_O\);
the perturbation pair \((S_O^-,S_O^+)\) quantifies the gap between intended and
carrier-equivalent semantics, and the semantic invariants of the perturbed component serve as
computable proxies for the effective semantic structure of the original model as seen through
its carrier realization.

\begin{assumption}[Common semantic universe for set operations]
\label{assump:semantic-universe}
The ambient set \(\mathbb{S}_O\supseteq S_O\) introduced in
Section~\ref{subsec:atomic-derivation} is assumed to satisfy the following additional
closure property: if \(S\subseteq\mathbb{S}_O\) is an
\(\mathcal L_{\mathrm{sem}}\)-definable state set in \(\mathfrak R\) and
\(S'\) is an \(\mathcal L_{\mathrm{sem}}\)-definable state set over subdomains of the
object and time sorts with \(S'\syn S\) witnessed by an \(\mathcal L_{\mathrm{sem}}\)-definable
coding isomorphism graph (Definition~\ref{def:definable-coding}),
then \(S'\subseteq\mathbb{S}_O\).
(Since elements of \(\mathbb{S}_O\) are object-sort/time-sort pairs, the closure condition
is relevant only when \(S'\) is over the same sorts; for state sets over other sorts
the condition is vacuously satisfied.)

Moreover, \(\mathbb{S}_O\) is effectively representable: the injective encoding \(\enc_O\) and
time-index map \(\timeindex_O\) from Section~\ref{subsec:atomic-derivation} extend to all of
\(\mathbb{S}_O\), and membership in \(\mathbb{S}_O\) remains decidable from the codes.
\end{assumption}

\begin{definition}[Noisy semantic base (set perturbation)]
\label{def:noisy}
Let \(S_O\subseteq \mathbb S_O\) be the intended semantic state set.
A \emph{noisy semantic base} associated with \(S_O\) is any set of the form
\[
  \tilde S_O := (S_O \setminus S_O^-) \cup S_O^+,
\]
where \(S_O^- \subseteq S_O\) (lost states) and
\(S_O^+ \subseteq \mathbb S_O\setminus S_O\) (spurious states).
We call \(S_O^-\) and \(S_O^+\) the \emph{noise pair} and say the perturbation is
\emph{trivial} when \(S_O^-=S_O^+=\varnothing\), i.e., \(\tilde S_O=S_O\).
\end{definition}

\begin{assumption}[Carrier representability]
\label{assump:carrier-rep}
For the information model \(\mathcal I\) under consideration, there exists an
\(\mathcal L_{\mathrm{sem}}\)-definable state set
\(S'_O\subseteq\mathbb{S}_O\) such that \(S'_O\syn S_C\)
(Definition~\ref{def:synonymous-states}).
\end{assumption}

\begin{remark}[When carrier representability holds]
\label{rem:carrier-rep-context}
Assumption~\ref{assump:carrier-rep} requires a cross-sort
\(\mathcal L_{\mathrm{sem}}\)-definable bijection between some
\(\mathcal L_{\mathrm{sem}}\)-definable subset of \(\mathbb{S}_O\)
and \(S_C\); by Remark~\ref{rem:ideal-notes}, this is nontrivial in a many-sorted setting.
The assumption is naturally satisfied whenever the semantic signature
\(\Sigma_{\mathrm{sem}}\) contains at least one relation or function symbol that links the
object sort and the carrier sort (directly or through intermediate sorts), and the ambient
structure \(\mathfrak R\) interprets this symbol so as to induce a bijection between the
time-slices of a suitable subset of \(\mathbb{S}_O\) and those of~\(S_C\).
In typical semantic communication scenarios, such a link is built into the model
(e.g., an encoding/decoding relation between messages and signals).
When \(\mathcal I\) is already ideal, the assumption holds with \(S'_O=S_O\) and the
perturbation is trivial.
\end{remark}

\begin{definition}[Noisy information (carrier-equivalent representation)]
\label{def:noisy-info}
Let \(\mathcal I=\langle O,T_O,S_O,C,T_C,S_C,R_{\mathcal E}\rangle\) be an information model
satisfying Assumption~\ref{assump:carrier-rep}, and let
\(\tilde S_O\subseteq\mathbb{S}_O\) be an \(\mathcal L_{\mathrm{sem}}\)-definable state set
with \(\tilde S_O\syn S_C\),
witnessed by an \(\mathcal L_{\mathrm{sem}}\)-definable coding isomorphism graph
\(G\subseteq \tilde S_O\times S_C\) inducing a bijection
\(\tau_{OC}:\tilde S_O\to S_C\) (with inverse \(\tau_{CO}\)).

The \emph{noisy information} associated with \(\mathcal I\) via \((\tilde S_O,G)\) is the
tuple
\[
  \tilde{\mathcal I}
  \;=\;
  \bigl\langle\,
    \tilde S_O,\;
    S_C,\;
    G,\;
    \tau_{OC},\;
    \tau_{CO}
  \,\bigr\rangle,
\]
together with the noise pair \((S_O^-,\,S_O^+)\) defined by
\[
  S_O^-:=S_O\setminus\tilde S_O
  \qquad\text{(intended but unrealized)},
\]
and
\[
  S_O^+:=\tilde S_O\setminus S_O
  \qquad\text{(unintended but realized)},
\]
so that \(\tilde S_O=(S_O\setminus S_O^-)\cup S_O^+\) is a noisy semantic base of \(S_O\)
in the sense of Definition~\ref{def:noisy}.
\end{definition}

\begin{remark}[Connection to ideal information and computability of semantic invariants]
\label{rem:noisy-ideal-connection}
By construction, \(\tilde S_O\syn S_C\) and the singleton map
\(\tilde{\mathcal E}(\tilde s_o):=\{\tau_{OC}(\tilde s_o)\}\) is a bijective enabling
assignment that agrees with the witnessing isomorphism, reproducing the algebraic
content of Definition~\ref{def:ideal-info} (synonymy plus singleton enabling map).
Whether the induced enabling relation additionally satisfies the precedence compatibility
clause~\textup{(E4)} of Axiom~\textup{\ref{ax:enabling-mapping}} depends on the interaction
between the witnessing graph~\(G\) and the cross-domain precedence~\(\prec_{OC}\);
this may be imposed as an extra condition on the construction when the full information model
axiomatics are needed.

Independently of the full model structure, the semantic invariants
\(\mathsf{A}\) and \(\mathsf{D_d}\)
(Definitions~\ref{def:atomicity-measure}--\ref{def:max-depth}) depend only on the semantic
space and the proof system.
The hypotheses used in the proof of Theorem~\ref{thm:semantic-invariants} hold
for~\(\tilde S_O\):
\textup{(a)}~\(\tilde S_O\) is finite, being in bijection with the finite set~\(S_C\);
\textup{(b)}~\(\tilde S_O\) is effectively listable under the canonical order inherited from
\(\enc_O\), since \(S_C\) is finite and \(\tau_{CO}\) is computable
(induced by an \(\mathcal L_{\mathrm{sem}}\)-definable graph in the finite
structure~\(\mathfrak R\));
\textup{(c)}~the effective redundancy test holds, since
\(\tilde S_O\subseteq\mathbb{S}_O\) is finite and
Axiom~\textup{\ref{ax:T-operator}} allows deciding \(s\in\Cn(\Gamma)\) by iterating
\(T_{\mathsf{PS}}\) until stabilization
\textup{(Remark~\ref{rem:T-operator-context}(b))}.

We therefore define
\(\mathsf{A}(\tilde{\mathcal I}):=|\Atom(\tilde S_O)|\) and
\(\mathsf{D_d}(\tilde{\mathcal I}):=\max_{q\in\tilde S_O}\Dd(q\mid\Atom(\tilde S_O))\);
by the same reasoning as in Theorem~\ref{thm:semantic-invariants}
(applied to \(\tilde S_O\) in place of~\(S_O\)), both are well-defined, finite,
and computable.
\end{remark}

\begin{proposition}[Existence of carrier-equivalent noisy information]
\label{prop:noisy-exist}
Under Assumption~\ref{assump:carrier-rep}, for every information model
\(\mathcal I=\langle O,T_O,S_O,C,T_C,S_C,R_{\mathcal E}\rangle\),
there exists a noisy information \(\tilde{\mathcal I}\)
\textup{(Definition~\ref{def:noisy-info})} such that:
\begin{enumerate}[label=\textup{(\roman*)}]
  \item \(\tilde S_O\syn S_C\), i.e., the semantic component of \(\tilde{\mathcal I}\)
        is synonymous with the carrier space.
  \item \(\tilde{\mathcal I}\) shares the carrier space \(S_C\) with \(\mathcal I\).
  \item If \(\mathcal I\) is already ideal, the perturbation can be chosen to be trivial:
        \(\tilde S_O=S_O\), \(S_O^-=S_O^+=\varnothing\).
\end{enumerate}
\end{proposition}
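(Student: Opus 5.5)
The proof is a direct construction from Assumption~\ref{assump:carrier-rep}, with a case split for part~(iii). Assumption~\ref{assump:carrier-rep} supplies an $\mathcal L_{\mathrm{sem}}$-definable state set $S'_O\subseteq\mathbb S_O$ with $S'_O\syn S_C$. By Definition~\ref{def:synonymous-states}, this synonymy is witnessed by an $\mathcal L_{\mathrm{sem}}$-definable coding isomorphism graph $G\subseteq S'_O\times S_C$. That graph induces mutually inverse bijections $\tau_{OC}:S'_O\to S_C$ and $\tau_{CO}$, which are time-order isomorphisms by Definition~\ref{def:definable-coding}.

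\emph{General case.} Set $\tilde S_O:=S'_O$ and form the tuple $\tilde{\mathcal I}=\langle\tilde S_O,S_C,G,\tau_{OC},\tau_{CO}\rangle$. Define the noise pair by $S_O^-:=S_O\setminus\tilde S_O$ and $S_O^+:=\tilde S_O\setminus S_O$. We then verify that this is a noisy semantic base in the sense of Definition~\ref{def:noisy}.
\begin{itemize}
\item $S_O^-\subseteq S_O$ holds trivially.
\item $S_O^+\subseteq\mathbb S_O\setminus S_O$ holds because $\tilde S_O\subseteq\mathbb S_O$.
\item The identity $(S_O\setminus S_O^-)\cup S_O^+=(S_O\cap\tilde S_O)\cup(\tilde S_O\setminus S_O)=\tilde S_O$ is elementary set algebra.
\end{itemize}
Hence $\tilde{\mathcal I}$ meets every clause of Definition~\ref{def:noisy-info}. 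Part~(i) holds by choice of $S'_O$. Part~(ii) holds because the carrier component of $\tilde{\mathcal I}$ is literally $S_C$.

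\emph{Ideal case (part (iii)).} If $\mathcal I$ is ideal, Definition~\ref{def:ideal-info} provides a witnessing graph $G_{OC}\subseteq S_O\times S_C$ for $S_O\syn S_C$. Here we do not use an arbitrary $S'_O$ from the assumption; we take $\tilde S_O:=S_O$ and $G:=G_{OC}$ instead. This choice is legitimate only if $S_O$ satisfies the requirements imposed on $\tilde S_O$: it must be $\mathcal L_{\mathrm{sem}}$-definable, which Axiom~\ref{ax:time-domains} guarantees, and it must lie in $\mathbb S_O$, which holds by the standing convention $\mathbb S_O\supseteq S_O$. With this choice the set differences give $S_O^-=S_O^+=\varnothing$, and parts~(i)--(ii) hold by the same reasoning as in the general case.

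The main point needing care is that the statement asks for existence: a single $\tilde{\mathcal I}$ satisfying (i), (ii), and, when applicable, (iii). The construction therefore branches on whether $\mathcal I$ is ideal, instead of fixing one $S'_O$ for all cases. Otherwise everything is bookkeeping against the definitions, and no properties of the proof system or of $\Cn$ are needed.
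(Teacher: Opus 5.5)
Your proof is correct and follows the paper's argument. In general you take $\tilde S_O:=S'_O$ from Assumption~\ref{assump:carrier-rep} with its witnessing graph; in the ideal case you take $\tilde S_O:=S_O$ with the ideal witness $G_{OC}$. You also make explicit checks the paper leaves implicit: that $S_O^+\subseteq\mathbb S_O\setminus S_O$, and that $S_O$ is $\mathcal L_{\mathrm{sem}}$-definable and contained in $\mathbb S_O$. One trivial citation slip: the time-order isomorphism property of $\tau_{OC}$ comes from Definition~\ref{def:synonymous-states}, not Definition~\ref{def:definable-coding}.
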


\begin{proof}
By Assumption~\ref{assump:carrier-rep}, there exists an
\(\mathcal L_{\mathrm{sem}}\)-definable
\(S'_O\subseteq\mathbb{S}_O\) with \(S'_O\syn S_C\).
Set \(\tilde S_O:=S'_O\) and let \(G\) be a witnessing
\(\mathcal L_{\mathrm{sem}}\)-definable coding isomorphism graph.
Then \((S_O^-,S_O^+)=(S_O\setminus S'_O,\;S'_O\setminus S_O)\), and
\(\tilde S_O=(S_O\setminus S_O^-)\cup S_O^+=S'_O\), confirming
\textup{(i)} and~\textup{(ii)}.
For \textup{(iii)}: if \(\mathcal I\) is ideal, then \(S_O\syn S_C\) by
Definition~\ref{def:ideal-info}; choosing \(S'_O:=S_O\) yields
\(S_O^-=S_O^+=\varnothing\).
\end{proof}

\begin{remark}[Non-uniqueness and the role of the noise pair]
\label{rem:noise-interpretation}
The noisy information \(\tilde{\mathcal I}\) depends on the choice of \(\tilde S_O\)
(and its witnessing graph \(G\)); it is not in general unique.
Different choices yield different noise pairs, reflecting different ways of ``reading''
the carrier semantically.
Regardless of the choice, the noise pair \((S_O^-,S_O^+)\) provides a structured
decomposition of the discrepancy between the intended semantics \(S_O\) and the
carrier-equivalent semantics \(\tilde S_O\):
\(S_O^-\) collects semantic states that the carrier fails to realize, and
\(S_O^+\) collects carrier-realized states absent from the original semantic intent.
The semantic invariants \(\mathsf{A}(\tilde{\mathcal I})\) and
\(\mathsf{D_d}(\tilde{\mathcal I})\) then serve as computable measures of the
\emph{effective} semantic structure of \(\mathcal I\) as mediated by its carrier.
\end{remark}

\begin{definition}[Closure fidelity]
\label{def:closure-fidelity}
For finite sets \(S,\hat S\subseteq\mathbb{S}_O\), the \emph{closure fidelity} is
the Jaccard index of their deductive closures~\cite{lipkus1999proof}:
\begin{equation}\label{eq:closure-fidelity}
  \mathsf{F}_{\Cn}(S,\hat S)
  \;:=\;
  \frac{\lvert\Cn(S)\cap\Cn(\hat S)\rvert}
       {\lvert\Cn(S)\cup\Cn(\hat S)\rvert},
\end{equation}
with \(0/0:=1\).
We have \(\mathsf{F}_{\Cn}\in[0,1]\) with
\(\mathsf{F}_{\Cn}(S,\hat S)=1\) if and only if
\(\Cn(S)=\Cn(\hat S)\).
\end{definition}

\begin{definition}[Atom-core preservation ratio]
\label{def:atom-preservation}
For a knowledge base \(S_O\) with core \(A=\Atom(S_O)\) and any
\(\hat S\subseteq\mathbb{S}_O\), define
\begin{equation}\label{eq:rho-atom}
  \rho_{\Atom}(S_O,\hat S)
  \;:=\;
  \frac{\lvert A\cap\hat S\rvert}{\lvert A\rvert},
\end{equation}
with the convention \(0/0:=1\) when \(A=\varnothing\).
This measures the fraction of irredundant core elements of \(S_O\) that are
present in~\(\hat S\).
\end{definition}

\begin{proposition}[Noise pair, core preservation, and closure fidelity]
\label{prop:noise-fidelity}
Let \(\tilde S_O=(S_O\setminus S_O^-)\cup S_O^+\) be a noisy semantic base of
\(S_O\) \textup{(Definition~\ref{def:noisy})} and let \(A=\Atom(S_O)\).
\begin{enumerate}[label=\textup{(\roman*)}]
  \item \emph{Core preservation:}
        \(\rho_{\Atom}(S_O,\tilde S_O)=1\) if and only if
        \(A\cap S_O^-=\varnothing\), i.e., no core element is lost.
  \item \emph{Closure inclusion under full core preservation:}
        if \(A\cap S_O^-=\varnothing\), then
        \(\Cn(S_O)\subseteq\Cn(\tilde S_O)\).
  \item \emph{Closure equality:}
        if \(A\cap S_O^-=\varnothing\) and
        \(S_O^+\subseteq\Cn(S_O)\)
        \textup{(}all spurious states are derivable from the original\textup{)},
        then \(\Cn(S_O)=\Cn(\tilde S_O)\) and
        \(\mathsf{F}_{\Cn}(S_O,\tilde S_O)=1\).
  \item \emph{Trivial noise:}
        \(S_O^-=S_O^+=\varnothing\) implies
        \(\rho_{\Atom}=1\) and \(\mathsf{F}_{\Cn}=1\).
\end{enumerate}
\end{proposition}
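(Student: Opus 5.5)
The plan is to reduce each item to elementary set algebra on the noise pair, together with the closure properties (Cn1)--(Cn3) and Proposition~\ref{prop:atom-core-correct}. Throughout, write \(A=\Atom(S_O)\). Recall that \(S_O^-\subseteq S_O\) and that \(S_O^+\cap S_O=\varnothing\).

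For (i), I will first establish the identity \(A\cap\tilde S_O = A\setminus S_O^-\). Since \(A\subseteq S_O\), we have \(A\cap S_O^+=\varnothing\), so
\[
A\cap\tilde S_O=\bigl(A\cap(S_O\setminus S_O^-)\bigr)\cup(A\cap S_O^+)=A\setminus S_O^-.
\]
Hence \(\rho_{\Atom}(S_O,\tilde S_O)=|A\setminus S_O^-|/|A|\). This ratio equals \(1\) exactly when \(A\setminus S_O^-=A\), i.e.\ when \(A\cap S_O^-=\varnothing\). The case \(A=\varnothing\) needs a separate remark. There the convention \(0/0:=1\) gives \(\rho_{\Atom}=1\), and the condition \(A\cap S_O^-=\varnothing\) holds vacuously, so the equivalence still stands.

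For (ii), assume \(A\cap S_O^-=\varnothing\). Then \(A\subseteq S_O\setminus S_O^-\subseteq\tilde S_O\). By monotonicity (Cn2), \(\Cn(A)\subseteq\Cn(\tilde S_O)\). Proposition~\ref{prop:atom-core-correct}(i) gives \(\Cn(S_O)=\Cn(A)\), which yields the claimed inclusion.

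For (iii), the reverse inclusion is needed. I will show \(\tilde S_O\subseteq\Cn(S_O)\): the part \(S_O\setminus S_O^-\) lies in \(S_O\subseteq\Cn(S_O)\) by (Cn1), and \(S_O^+\subseteq\Cn(S_O)\) holds by hypothesis. Applying the idempotence form of (Cn3) then gives \(\Cn(\tilde S_O)\subseteq\Cn(S_O)\). Combined with (ii), this yields equality, and Definition~\ref{def:closure-fidelity} then gives \(\mathsf F_{\Cn}=1\).

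For (iv), trivial noise means \(\tilde S_O=S_O\). Item (i) then applies with \(S_O^-=\varnothing\), and item (iii) applies with \(S_O^+=\varnothing\subseteq\Cn(S_O)\); alternatively, both conclusions can be read off directly. None of these steps is a real obstacle. The only points needing care are the \(0/0\) conventions in (i) and (iv), and the fact that \(S_O^+\) is disjoint from \(S_O\), so it cannot contribute core elements. One should also note that finiteness of the closures, which makes \(\mathsf F_{\Cn}\) well defined, follows from Remark~\ref{rem:T-operator-context}(a).
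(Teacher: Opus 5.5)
Your proposal is correct and follows the paper's proof essentially step for step. The steps are: the identity \(A\cap\tilde S_O=A\setminus S_O^-\) for (i); monotonicity together with \(\Cn(A)=\Cn(S_O)\) for (ii); showing \(\tilde S_O\subseteq\Cn(S_O)\) and applying idempotence for (iii); and \(\tilde S_O=S_O\) for (iv). Your remarks on the \(A=\varnothing\) convention and on finiteness of the closures are harmless refinements that the paper leaves implicit.
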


\begin{proof}
\textup{(i)}\
\(A\cap\tilde S_O=A\cap\bigl[(S_O\setminus S_O^-)\cup S_O^+\bigr]
=(A\setminus S_O^-)\cup(A\cap S_O^+)\).
Since \(A\subseteq S_O\) and \(S_O^+\subseteq\mathbb S_O\setminus S_O\),
\(A\cap S_O^+=\varnothing\).
Hence \(A\cap\tilde S_O=A\setminus S_O^-\), so
\(\rho_{\Atom}=|A\setminus S_O^-|/|A|=1\) iff \(A\cap S_O^-=\varnothing\).

\smallskip\noindent
\textup{(ii)}\
If \(A\cap S_O^-=\varnothing\), then \(A\subseteq S_O\setminus S_O^-\subseteq\tilde S_O\).
By monotonicity~\textup{(Cn2)},
\(\Cn(A)\subseteq\Cn(\tilde S_O)\).
Since \(\Cn(A)=\Cn(S_O)\)
\textup{(Proposition~\ref{prop:atom-core-correct}(i))},
\(\Cn(S_O)\subseteq\Cn(\tilde S_O)\).

\smallskip\noindent
\textup{(iii)}\
By~\textup{(ii)}, \(\Cn(S_O)\subseteq\Cn(\tilde S_O)\).
For the reverse:
\(S_O\setminus S_O^-\subseteq S_O\subseteq\Cn(S_O)\) by
reflexivity~\textup{(Cn1)}, and
\(S_O^+\subseteq\Cn(S_O)\) by hypothesis.
Hence \(\tilde S_O\subseteq\Cn(S_O)\).
Monotonicity gives \(\Cn(\tilde S_O)\subseteq\Cn(\Cn(S_O))=\Cn(S_O)\) by
idempotence~\textup{(Cn3)}.
Therefore \(\Cn(\tilde S_O)=\Cn(S_O)\) and
\(\mathsf{F}_{\Cn}(S_O,\tilde S_O)=1\).

\smallskip\noindent
\textup{(iv)}\
Immediate from \(\tilde S_O=S_O\).
\end{proof}

\section{Semantic Channel}
\label{sec:channel}

Section~\ref{sec:model} established the \emph{structural framework} of information
models---state spaces, enabling maps, deductive closure, irredundant cores, derivation
depth, and noisy perturbations---without invoking any probabilistic concepts.
The enabling map \(\mathcal E\) captures which carrier realizations are
\emph{admissible} for each semantic state, but assigns no likelihood to them; the noise
pair \((S_O^-,S_O^+)\) measures set-theoretic discrepancy between intended and realized
semantics without modeling its stochastic origin.
Importantly, Section~\ref{sec:model} does not \emph{exclude} randomness: the state
spaces and enabling maps are defined in full generality as (possibly random) mathematical
objects.
What Section~\ref{sec:model} does not provide is the \emph{probabilistic apparatus}
needed to reason about the likelihood of specific carrier realizations, the statistical
behavior of end-to-end transmission, or the fundamental limits of semantic
communication~\cite{cover2006elements,polyanskiy2025information}.

This section furnishes precisely that apparatus.
The central construction is the \emph{semantic channel}: a stochastic map---realized as
a composition of Markov kernels---that transforms semantic source states into
reconstructed semantic states through a carrier medium subject to noise.
The construction is governed by the following
\emph{design principle}:

\medskip
\begin{quote}
\itshape
Every Markov kernel in the semantic channel framework is an
\emph{enabling kernel} for some information model---the probabilistic refinement of
a set-valued enabling map whose support is constrained by the enabling structure of
that model.
\end{quote}

\medskip\noindent
Under this principle, the three stages of semantic communication---encoding,
carrier transmission, and decoding---are modeled as three composable information models
(Definition~\ref{def:model-composition}), each equipped with an enabling kernel
(Definition~\ref{def:enabling-kernel}).
The end-to-end semantic channel kernel arises as the composition of these three
enabling kernels and is itself an enabling kernel for the composite information model
(Proposition~\ref{prop:enabling-compose}).
This ensures that the structural constraints established in
Section~\ref{sec:model}---in particular, the enabling support inclusion and the
noise framework of Section~\ref{subsec:noisy}---are inherited by and preserved through
the probabilistic layer.

The section proceeds in five stages.
Section~\ref{subsec:prob-structure} endows the finite state spaces of
Section~\ref{sec:model} with probability distributions, introduces the Markov kernel
as the fundamental probabilistic building block, and defines the
\emph{enabling kernel}---the universal interface between structural enabling maps and
stochastic transitions.
Section~\ref{subsec:channel-def} applies this machinery to define the encoding kernel,
the carrier channel kernel, the decoding kernel, and their composition into an
end-to-end \emph{semantic channel kernel}, each arising from a dedicated information
model.
Section~\ref{subsec:distortion} introduces \emph{semantic distortion} measures that
go beyond symbol-level error rates by exploiting the deductive closure and
derivation-depth structure inherited from Section~\ref{sec:model}.
Section~\ref{subsec:invariants} defines the principal invariants of a semantic
channel---semantic mutual information, semantic channel capacity, and structural
fidelity indices---and relates them to the classical Shannon capacity of the underlying
carrier channel.
Section~\ref{subsec:coding} formulates a semantic channel coding problem and states
preliminary achievability and converse bounds.

Throughout, we fix the information model
\(\mathcal I=\langle O,T_O,S_O,C,T_C,S_C,R_{\mathcal E}\rangle\)
from Section~\ref{sec:model} and adopt all standing conventions and axioms
established there.

\subsection{Probabilistic Structure on State Spaces}
\label{subsec:prob-structure}

Because all state spaces in our framework are finite
(Assumptions~\ref{assump:ordered-structures} and~\ref{assump:finite-so}), the required
measure-theoretic apparatus reduces to discrete probability distributions and
stochastic matrices.
This subsection introduces the key primitives: probability distributions, Markov
kernels, and---most importantly---the \emph{enabling kernel}, which provides the
universal probabilistic interface to the structural enabling maps of
Section~\ref{sec:model}.

\smallskip
\noindent\textbf{Probabilistic notation convention.}
Throughout Section~\ref{sec:channel}, random elements drawn from a distribution~\(P\)
on a state space~\(S\) are denoted by the corresponding sans-serif letter:
uppercase for random variables (e.g., \(\mathsf{S}_o\) for a random semantic state
drawn from~\(P_O\)) and lowercase for their realizations
(e.g., \(s_o\in S_O\)).
Shannon entropy, conditional entropy, and mutual information are denoted
\(H(\cdot)\), \(H(\cdot\mid\cdot)\), and \(I(\cdot\,;\cdot)\) respectively, with
subscripts indicating the underlying distribution when
ambiguous~\cite{cover2006elements}.
Expectations are denoted~\(\E[\cdot]\).
These standard information-theoretic quantities are instantiated on the semantic and
carrier state spaces; their interaction with the proof-system structure of
Section~\ref{sec:model} is the subject of
Sections~\ref{subsec:distortion}--\ref{subsec:invariants}.

\begin{definition}[Probability distribution on a finite set]
\label{def:prob-dist}
Let \(S\) be a nonempty finite set.
A \emph{probability distribution} (or \emph{probability mass function}) on~\(S\) is a
function \(P:S\to[0,1]\) satisfying
\[
  \sum_{s\in S}P(s)=1.
\]
The \emph{support} of~\(P\) is \(\supp(P):=\{s\in S:P(s)>0\}\).
We write \(\Delta(S)\) for the set of all probability distributions on~\(S\)
(the \emph{probability simplex} over~\(S\)), and set
\(\Delta(\varnothing):=\varnothing\) by convention.
\end{definition}

\begin{definition}[Semantic source]
\label{def:semantic-source}
A \emph{semantic source} for an information model~\(\mathcal I\) is a pair
\((\mathcal I,\,P_O)\) where \(P_O\in\Delta(S_O)\).
We refer to~\(P_O\) as the \emph{source distribution} and, when \(\mathcal I\) is
clear from context, abbreviate the semantic source to~\((S_O,P_O)\).

A semantic source is \emph{full-support} if \(\supp(P_O)=S_O\), and \emph{uniform} if
\(P_O(s_o)=1/|S_O|\) for every \(s_o\in S_O\).
\end{definition}

\begin{remark}[A semantic source is not merely a random variable]
\label{rem:source-not-rv}
In Shannon's framework~\cite{shannon1948mathematical,cover2006elements},
a discrete memoryless source is fully specified by a probability distribution on a
finite alphabet.
A semantic source carries additional structure: the underlying set~\(S_O\) is endowed
with a deductive closure operator~\(\Cn\), an irredundant core~\(\Atom(S_O)\), and a
derivation-depth stratification~\(\Dd\) (Section~\ref{subsec:atomic-derivation}).
These inferential structures are invisible to classical information-theoretic
measures (entropy, mutual information) but will enter the semantic distortion and
capacity definitions of Sections~\ref{subsec:distortion}--\ref{subsec:invariants}.
\end{remark}

\begin{definition}[Markov kernel between finite state spaces]
\label{def:markov-kernel}
Let \(X\) and \(Y\) be nonempty finite sets.
A \emph{Markov kernel} (or \emph{transition kernel}) from~\(X\) to~\(Y\) is a function
\[
  \kappa\;:\;X\times Y\;\longrightarrow\;[0,1]
\]
such that for every \(x\in X\),
\[
  \sum_{y\in Y}\kappa(y\mid x)=1,
\]
i.e., \(\kappa(\cdot\mid x)\in\Delta(Y)\) for each~\(x\).
We write \(\kappa:X\rightsquigarrow Y\) and, equivalently, view~\(\kappa\) as the
\(|X|\times|Y|\) row-stochastic matrix
\([\kappa(y\mid x)]_{x\in X,\,y\in Y}\).
A kernel is \emph{deterministic} if for every \(x\in X\) there exists a unique
\(y_x\in Y\) with \(\kappa(y_x\mid x)=1\); in this case \(\kappa\) is identified
with the function \(f:X\to Y\) given by \(f(x):=y_x\).
\end{definition}

\begin{definition}[Kernel composition and identity kernel]
\label{def:kernel-composition}
Let \(\kappa_1:X\rightsquigarrow Y\) and \(\kappa_2:Y\rightsquigarrow Z\) be Markov
kernels.
Their \emph{composition} \(\kappa_2\circ\kappa_1:X\rightsquigarrow Z\) is defined by
\begin{equation}\label{eq:kernel-composition}
  (\kappa_2\circ\kappa_1)(z\mid x)
  \;:=\;
  \sum_{y\in Y}\kappa_1(y\mid x)\;\kappa_2(z\mid y),
  \qquad x\in X,\; z\in Z.
\end{equation}
The \emph{identity kernel} on~\(X\) is the deterministic kernel
\(\mathrm{id}_X:X\rightsquigarrow X\) with
\(\mathrm{id}_X(x'\mid x):=\mathbf{1}[x'=x]\).
\end{definition}

\begin{proposition}[Closure under composition]
\label{prop:kernel-composition}
The composition \(\kappa_2\circ\kappa_1\) is a Markov kernel from~\(X\) to~\(Z\).
Kernel composition is associative, and the identity kernel is a two-sided identity:
\(\mathrm{id}_Y\circ\kappa_1=\kappa_1\) and \(\kappa_2\circ\mathrm{id}_Y=\kappa_2\).
\end{proposition}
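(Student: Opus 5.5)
The plan is a direct verification from Definition~\ref{def:kernel-composition}. Since all sets are finite, every sum below is a finite sum of nonnegative reals, so no convergence or interchange issue arises. The three claims (Markov property, associativity, identity laws) are checked separately, in that order.

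\emph{Markov property.} Fix \(x\in X\). Each term \(\kappa_1(y\mid x)\kappa_2(z\mid y)\) lies in \([0,1]\), so \((\kappa_2\circ\kappa_1)(z\mid x)\ge 0\). Swapping the two finite sums gives
\(\sum_{z}(\kappa_2\circ\kappa_1)(z\mid x)=\sum_{y}\kappa_1(y\mid x)\sum_{z}\kappa_2(z\mid y)=\sum_y\kappa_1(y\mid x)=1\).
Nonnegativity together with this normalization also yields the upper bound \(\le 1\). Hence \((\kappa_2\circ\kappa_1)(\cdot\mid x)\in\Delta(Z)\).

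\emph{Associativity.} Take a third kernel \(\kappa_3:Z\rightsquigarrow W\). Expand both \((\kappa_3\circ\kappa_2)\circ\kappa_1\) and \(\kappa_3\circ(\kappa_2\circ\kappa_1)\) at \((w\mid x)\). Each becomes the double sum
\(\sum_{y,z}\kappa_1(y\mid x)\kappa_2(z\mid y)\kappa_3(w\mid z)\),
using distributivity and exchange of finite sums. Equivalently, this is associativity of multiplication of the row-stochastic matrices of Definition~\ref{def:markov-kernel}.

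\emph{Identity laws.} In \((\mathrm{id}_Y\circ\kappa_1)(y'\mid x)=\sum_y\kappa_1(y\mid x)\mathbf 1[y'=y]\), the indicator collapses the sum to \(\kappa_1(y'\mid x)\). The case \(\kappa_2\circ\mathrm{id}_Y\) is symmetric. There is no real obstacle here; the only point needing care is the sum interchange, which is justified by finiteness (or by Tonelli for nonnegative terms).
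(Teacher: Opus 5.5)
Your proposal is correct and follows essentially the same route as the paper: normalization by interchanging the two finite sums, with associativity and the identity laws obtained by rearranging finite sums (equivalently, associativity of stochastic-matrix multiplication and the identity matrix). The only difference is that you write out the associativity expansion and the indicator collapse, which the paper leaves implicit.
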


\begin{proof}
For each \(x\in X\),
\(\sum_{z}(\kappa_2\circ\kappa_1)(z\mid x)
=\sum_{z}\sum_{y}\kappa_1(y\mid x)\,\kappa_2(z\mid y)
=\sum_{y}\kappa_1(y\mid x)\bigl(\sum_{z}\kappa_2(z\mid y)\bigr)
=\sum_{y}\kappa_1(y\mid x)=1\),
and each term is non-negative, so \(\kappa_2\circ\kappa_1\) is a Markov kernel.
Associativity and the identity laws follow from rearranging finite sums
(equivalently, from associativity of matrix multiplication and the property of the
identity matrix).
\end{proof}

\begin{definition}[Enabling kernel for an information model]
\label{def:enabling-kernel}
Let \(\mathcal I\) be an information model with semantic state set \(S_O\),
carrier state set \(S_C\), and enabling map
\(\mathcal E:S_O\Rightarrow S_C\)
(Axiom~\ref{ax:enabling-mapping}).
An \emph{enabling kernel} for~\(\mathcal I\) is a Markov kernel
\(\kappa:S_O\rightsquigarrow S_C\)
satisfying the \emph{enabling support constraint}: for every \(s_o\in S_O\),
\begin{equation}\label{eq:enabling-support}
  \supp\bigl(\kappa(\cdot\mid s_o)\bigr)
  \;\subseteq\;
  \mathcal E(s_o).
\end{equation}
Denote the set of all enabling kernels for \(\mathcal I\) by
\(\mathcal K(\mathcal I)\).
\end{definition}

\begin{proposition}[Every information model admits an enabling kernel]
\label{prop:enabling-existence}
For any information model \(\mathcal I\) satisfying
Axiom~\textup{\ref{ax:enabling-mapping}},
\(\mathcal K(\mathcal I)\neq\varnothing\).
In particular, the computable enabling selector~\(e\) of
Axiom~\textup{\ref{ax:enabling-mapping}(E3)} induces a deterministic enabling kernel
\(\kappa_e\in\mathcal K(\mathcal I)\) via
\(\kappa_e(s_c\mid s_o):=\mathbf{1}[s_c=e(s_o)]\).
\end{proposition}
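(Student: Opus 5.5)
The plan is to prove the second claim directly, since it implies the first. Let \(e\) be the computable enabling selector of Axiom~\ref{ax:enabling-mapping}\textup{(E3)}. By the convention of that axiom, \(e(\enc_O(s_o))\) halts for every \(s_o\in S_O\) and equals \(\enc_C(s_c)\) for some \(s_c\in\mathcal E(s_o)\). Because \(\enc_C\) is injective (Axiom~\ref{ax:state-representation}), this \(s_c\) is uniquely determined. We may therefore write \(e(s_o)\in S_C\) for the decoded output, which gives a well-defined function \(S_O\to S_C\).

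Next, define \(\kappa_e(s_c\mid s_o):=\mathbf 1[s_c=e(s_o)]\) for \(s_o\in S_O\) and \(s_c\in S_C\). Both spaces are nonempty and finite: \(S_O\) is finite by Assumption~\ref{assump:finite-so}, and \(S_C\) is finite because the structure is finite. The case \(S_O=\varnothing\) is degenerate; there the empty kernel vacuously satisfies the support constraint. Each row \(\kappa_e(\cdot\mid s_o)\) has exactly one entry equal to \(1\), at \(e(s_o)\), and all other entries equal to \(0\). So every row lies in \(\Delta(S_C)\), and \(\kappa_e\) is a Markov kernel in the sense of Definition~\ref{def:markov-kernel}. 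It is deterministic, identified with the function \(s_o\mapsto e(s_o)\).

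It remains to verify the enabling support constraint \eqref{eq:enabling-support}. For each \(s_o\) we have \(\supp(\kappa_e(\cdot\mid s_o))=\{e(s_o)\}\), and \(e(s_o)\in\mathcal E(s_o)\) by (E3). Hence \(\kappa_e\in\mathcal K(\mathcal I)\), and nonemptiness of \(\mathcal K(\mathcal I)\) follows.

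No step is a genuine obstacle. The only point needing care is the passage from the code-level selector \(e:\{0,1\}^*\to\{0,1\}^*\) to a state-level map \(S_O\to S_C\). I will handle it explicitly through injectivity of \(\enc_C\), so that the indicator \(\mathbf 1[s_c=e(s_o)]\) is meaningful. It is also worth remarking that totality (E1) alone would give existence non-constructively, by choosing any element of each nonempty \(\mathcal E(s_o)\). Using (E3) yields the sharper, computable witness that the statement asserts.
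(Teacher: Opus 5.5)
Your proof is correct and follows essentially the same route as the paper: use (E3) to get \(e(s_o)\in\mathcal E(s_o)\), observe that the indicator kernel has unit row sums, and note that its support \(\{e(s_o)\}\) lies in \(\mathcal E(s_o)\). Your use of the injectivity of \(\enc_C\) to turn the code-level selector into a map \(S_O\to S_C\) makes explicit a step the paper leaves implicit, but it does not change the argument.
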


\begin{proof}
By \textup{(E3)}, \(e(s_o)\in\mathcal E(s_o)\) for every \(s_o\in S_O\).
Hence \(\supp(\kappa_e(\cdot\mid s_o))=\{e(s_o)\}\subseteq\mathcal E(s_o)\),
and \(\kappa_e\) is a Markov kernel since
\(\sum_{s_c}\kappa_e(s_c\mid s_o)=1\) for each~\(s_o\).
\end{proof}

\begin{proposition}[Enabling kernels compose under model composition]
\label{prop:enabling-compose}
Let \((\mathcal I_1,\mathcal I_2)\) be a composable pair of information models
\textup{(Definition~\ref{def:model-composition})}.
If \(\kappa_1\in\mathcal K(\mathcal I_1)\) and
\(\kappa_2\in\mathcal K(\mathcal I_2)\), then
\[
  \kappa_2\circ\kappa_1
  \;\in\;
  \mathcal K(\mathcal I_2\circ\mathcal I_1).
\]
\end{proposition}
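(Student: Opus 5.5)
The plan is to check the two requirements of Definition~\ref{def:enabling-kernel} directly for \(\kappa:=\kappa_2\circ\kappa_1\).

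\textbf{Step 1: \(\kappa\) is a Markov kernel of the right type.} Since the pair is composable, \(S_C^{(1)}=S_O^{(2)}\). Hence \(\kappa_1:S_O^{(1)}\rightsquigarrow S_C^{(1)}\) and \(\kappa_2:S_O^{(2)}\rightsquigarrow S_C^{(2)}\) are composable kernels. Proposition~\ref{prop:kernel-composition} then shows that \(\kappa\) is a Markov kernel from \(S_O^{(1)}\) to \(S_C^{(2)}\). These are exactly the semantic and carrier state sets of \(\mathcal I_2\circ\mathcal I_1\) (Definition~\ref{def:model-composition}).

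\textbf{Step 2: the support inclusion \eqref{eq:enabling-support}.} Fix \(s_o\in S_O^{(1)}\) and take \(z\in S_C^{(2)}\) with \(\kappa(z\mid s_o)>0\). By \eqref{eq:kernel-composition},
\[
\kappa(z\mid s_o)=\sum_{y\in S_C^{(1)}}\kappa_1(y\mid s_o)\,\kappa_2(z\mid y).
\]
This is a finite sum of non-negative terms, so some \(y\) has \(\kappa_1(y\mid s_o)>0\) and \(\kappa_2(z\mid y)>0\). The support constraint for \(\kappa_1\) gives \(y\in\mathcal E_1(s_o)\), and the constraint for \(\kappa_2\) gives \(z\in\mathcal E_2(y)\). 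Therefore
\[
z\in\bigcup_{y\in\mathcal E_1(s_o)}\mathcal E_2(y)=\mathcal E_{2\circ1}(s_o)
\]
by \eqref{eq:composite-enabling}, which is the required inclusion.

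\textbf{Main obstacle.} There is no substantive difficulty; the one delicate point is the positivity argument in Step~2. A positive sum of non-negative terms must contain a positive summand, and that summand is what lets us pass from the support of the composite to a witness \(y\) in the intermediate space. I would also note that the argument uses only the support constraints and the definition of \(\mathcal E_{2\circ1}\), not (E1)--(E4). Axiom~\ref{ax:enabling-mapping} for the composite is needed only to call it an information model, and Proposition~\ref{prop:composition-axioms} already supplies that.
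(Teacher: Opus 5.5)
Your proof is correct and follows the paper's argument essentially verbatim. Like the paper, you get the Markov property from Proposition~\ref{prop:kernel-composition}, extract a positive summand to obtain an intermediate witness in \(\mathcal E_1(s_o)\), and conclude via \eqref{eq:composite-enabling}. (A minor aside: Proposition~\ref{prop:composition-axioms} establishes only (E1)--(E3) for the composite, not (E4), but as you note this plays no role in the support argument.)
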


\begin{proof}
By Proposition~\ref{prop:kernel-composition},
\(\kappa_2\circ\kappa_1:S_O^{(1)}\rightsquigarrow S_C^{(2)}\) is a Markov kernel.
It remains to verify the enabling support constraint.
Let \(s_o\in S_O^{(1)}\) and suppose
\((\kappa_2\circ\kappa_1)(s'\mid s_o)>0\) for some \(s'\in S_C^{(2)}\).
By~\eqref{eq:kernel-composition},
\[
  \sum_{s_c\in S_C^{(1)}}\kappa_1(s_c\mid s_o)\,\kappa_2(s'\mid s_c)>0,
\]
so there exists \(s_c\in S_C^{(1)}\) with \(\kappa_1(s_c\mid s_o)>0\) and
\(\kappa_2(s'\mid s_c)>0\).
By the enabling support constraint of~\(\kappa_1\),
\(s_c\in\mathcal E_1(s_o)\);
by that of~\(\kappa_2\), \(s'\in\mathcal E_2(s_c)\).
Hence
\(
  s'\in\mathcal E_2(s_c)
  \subseteq
  \bigcup_{u\in\mathcal E_1(s_o)}\mathcal E_2(u)
  =\mathcal E_{2\circ 1}(s_o),
\)
confirming
\(\supp\bigl((\kappa_2\circ\kappa_1)(\cdot\mid s_o)\bigr)
  \subseteq\mathcal E_{2\circ 1}(s_o)\).
\end{proof}

\begin{remark}[Design principle realized]
\label{rem:design-principle}
Proposition~\ref{prop:enabling-compose} is the probabilistic counterpart of
Proposition~\ref{prop:composition-axioms}: just as composite enabling maps
preserve the structural axioms of Section~\ref{sec:model}, composite enabling kernels
preserve the enabling support constraint.
This guarantees that the end-to-end semantic channel kernel constructed in
Section~\ref{subsec:channel-def} is itself an enabling kernel for the composite
information model, fulfilling the design principle stated in the preamble of this
section.
\end{remark}

\subsection{Semantic Channel: Definition via Markov Kernels}
\label{subsec:channel-def}

This subsection assembles the end-to-end \emph{semantic channel} from three
information models---encoding, carrier transmission, and decoding---each equipped with
an enabling kernel.
The received carrier space \(\hat S_C\) and the reconstructed semantic space
\(\hat S_O\) are \emph{not} postulated independently; they arise as carrier state sets
of the channel and decoding information models, respectively.

\subsubsection*{Stage 1: Encoding}

\begin{definition}[Encoding kernel]
\label{def:encoding-kernel}
Let \(\mathcal I=\langle O,T_O,S_O,C,T_C,S_C,R_{\mathcal E}\rangle\) be the
information model fixed in Section~\ref{sec:model}, with enabling map
\(\mathcal E:S_O\Rightarrow S_C\).
An \emph{encoding kernel} is an enabling kernel for~\(\mathcal I\)
\textup{(Definition~\ref{def:enabling-kernel})}, i.e., a Markov kernel
\(\kappa_{\enc}:S_O\rightsquigarrow S_C\) satisfying
\(\kappa_{\enc}\in\mathcal K(\mathcal I)\).
\end{definition}

\begin{remark}[Deterministic encoding as a degenerate kernel]
\label{rem:det-encoding}
The computable enabling selector~\(e\) of
Axiom~\ref{ax:enabling-mapping}\textup{(E3)} induces a deterministic encoding kernel
\(\kappa_e(s_c\mid s_o):=\mathbf{1}[s_c=e(s_o)]\),
which belongs to \(\mathcal K(\mathcal I)\) by
Proposition~\ref{prop:enabling-existence}.
Thus every information model admits at least one encoding kernel.
Conversely, any \(\kappa_{\enc}\in\mathcal K(\mathcal I)\)
can be interpreted as a \emph{randomized selection} from the set of admissible carrier
realizations~\(\mathcal E(s_o)\); randomized encoding may strictly outperform
deterministic encoding in the presence of channel noise, paralleling the classical
situation in Shannon theory~\cite{cover2006elements}.
\end{remark}

\begin{proposition}[Induced joint and marginal distributions]
\label{prop:induced-dist}
Let \((S_O,P_O)\) be a semantic source and
\(\kappa_{\enc}\in\mathcal K(\mathcal I)\) an encoding kernel.
Define:
\begin{enumerate}[label=\textup{(\roman*)}]
  \item the \emph{joint distribution} on \(S_O\times S_C\) by
        \(P_{OC}(s_o,s_c):=P_O(s_o)\,\kappa_{\enc}(s_c\mid s_o)\);
  \item the \emph{induced carrier distribution} by
        \(P_C(s_c):=\sum_{s_o\in S_O}P_O(s_o)\,\kappa_{\enc}(s_c\mid s_o)\).
\end{enumerate}
Then \(P_{OC}\in\Delta(S_O\times S_C)\), \(P_C\in\Delta(S_C)\),
\(P_C\) is the \(S_C\)-marginal of~\(P_{OC}\), and
\begin{equation}\label{eq:support-joint}
  \supp(P_{OC})
  \;\subseteq\;
  \bigl\{(s_o,s_c)\in S_O\times S_C : s_c\in\mathcal E(s_o)\bigr\}.
\end{equation}
Moreover, if \(P_O\) is full-support and \(\kappa_{\enc}\) is deterministic
with selector~\(e\), then \(\supp(P_C)=e(S_O)\).
\end{proposition}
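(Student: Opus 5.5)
The proof will be a direct verification from the definitions, taking the four claims in order.

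First, I show that \(P_{OC}\) is a probability distribution on the finite set \(S_O\times S_C\). Each value \(P_O(s_o)\,\kappa_{\enc}(s_c\mid s_o)\) is a product of two numbers in \([0,1]\), so it lies in \([0,1]\). For the total mass I sum over \(s_c\) first and use the row-stochastic property of Definition~\ref{def:markov-kernel}:
\[
\sum_{s_o}\sum_{s_c}P_O(s_o)\,\kappa_{\enc}(s_c\mid s_o)=\sum_{s_o}P_O(s_o)=1 .
\]
This is the same finite-sum interchange used in Proposition~\ref{prop:kernel-composition}.

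Second, the definition of \(P_C\) is literally \(\sum_{s_o}P_{OC}(s_o,s_c)\). So \(P_C\) is the \(S_C\)-marginal of \(P_{OC}\), and \(P_C\in\Delta(S_C)\) follows because the marginal of a distribution is a distribution. One could equally view \(P_C\) as the composition of the constant kernel \(P_O\) with \(\kappa_{\enc}\) and cite Proposition~\ref{prop:kernel-composition}.

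Third, for the support inclusion~\eqref{eq:support-joint}, suppose \(P_{OC}(s_o,s_c)>0\). Then both factors are positive, and in particular \(\kappa_{\enc}(s_c\mid s_o)>0\). That means \(s_c\in\supp(\kappa_{\enc}(\cdot\mid s_o))\subseteq\mathcal E(s_o)\) by the enabling support constraint~\eqref{eq:enabling-support}, since \(\kappa_{\enc}\in\mathcal K(\mathcal I)\).

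Finally, suppose \(\kappa_{\enc}=\kappa_e\) is deterministic with selector \(e\). Then
\[
P_C(s_c)=\sum_{s_o:\,e(s_o)=s_c}P_O(s_o).
\]
This sum is zero if \(s_c\notin e(S_O)\). If \(s_c=e(s_o)\) for some \(s_o\), the sum is at least \(P_O(s_o)\), which is positive by full support. Hence \(\supp(P_C)=e(S_O)\).

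There is one notational point to settle. The selector \(e\) of (E3) acts on codes, so \(e(s_o)\) is shorthand for \(\enc_C^{-1}(e(\enc_O(s_o)))\); injectivity of \(\enc_C\) (Axiom~\ref{ax:state-representation}) makes this well defined, and I will state it up front. Apart from this identification, which is the only mild obstacle, every step is routine.
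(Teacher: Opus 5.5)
Your proof is correct and follows the paper's argument essentially step by step: nonnegativity plus normalization by summing each kernel row to one, reading $P_C$ off as the $S_C$-marginal, the support inclusion via the enabling support constraint, and the fiber-sum formula with full support for the deterministic case. Your remark that $e(s_o)$ abbreviates $\enc_C^{-1}(e(\enc_O(s_o)))$, well defined by injectivity of $\enc_C$, makes explicit an identification the paper uses without comment; it is a harmless clarification.
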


\begin{proof}
Non-negativity is immediate.
Normalization:
\(\sum_{s_o,s_c}P_{OC}(s_o,s_c)
=\sum_{s_o}P_O(s_o)\sum_{s_c}\kappa_{\enc}(s_c\mid s_o)
=\sum_{s_o}P_O(s_o)=1\).
Hence \(P_{OC}\in\Delta(S_O\times S_C)\) and \(P_C\) is its \(S_C\)-marginal, so
\(P_C\in\Delta(S_C)\).
The support inclusion~\eqref{eq:support-joint} follows from the enabling support
constraint~\eqref{eq:enabling-support}:
if \(P_{OC}(s_o,s_c)>0\), then \(P_O(s_o)>0\) and
\(\kappa_{\enc}(s_c\mid s_o)>0\), hence
\(s_c\in\supp(\kappa_{\enc}(\cdot\mid s_o))\subseteq\mathcal E(s_o)\).
The final claim follows because under deterministic encoding,
\(P_C(s_c)=\sum_{s_o:e(s_o)=s_c}P_O(s_o)\), which is positive if and only if
\(s_c\in e(S_O)\) (since \(P_O\) is full-support).
\end{proof}

\begin{remark}[Pushforward under synonymy]
\label{rem:pushforward-syn}
If \(S_1\syn S_2\) via \(\tau_{12}\)
(Definition~\ref{def:synonymous-states}) and \(P_1\in\Delta(S_1)\), the
\emph{pushforward distribution} \((\tau_{12})_\# P_1\in\Delta(S_2)\) is defined by
\(
  \bigl[(\tau_{12})_\# P_1\bigr](s_2)
  :=
  P_1\bigl(\tau_{12}^{-1}(s_2)\bigr)
\)
for \(s_2\in S_2\).
Since \(\tau_{12}\) is a bijection that preserves and reflects the induced time
precedence, the pushforward preserves any temporal correlation structure of the
source.
When \(\mathcal I\) is ideal and \(S_O\syn S_C\) via \(\tau_{OC}\)
(Definition~\ref{def:ideal-info}), the deterministic encoding kernel
\(\kappa_e\) with \(e=\tau_{OC}\) yields
\(P_C=(\tau_{OC})_\# P_O\).
\end{remark}

\subsubsection*{Stage 2: Carrier Channel}

\begin{definition}[Carrier channel information model]
\label{def:carrier-channel-model}
A \emph{carrier channel information model} (composable with \(\mathcal I\)) is an
information model
\[
  \mathcal I_{\mathrm{ch}}
  \;=\;
  \bigl\langle\,
    C,\; T_C,\; S_C,\;
    \hat C,\; T_{\hat C},\; \hat S_C,\;
    R_{\mathcal E}^{\mathrm{ch}}
  \,\bigr\rangle,
\]
satisfying Axiom~\textup{\ref{ax:enabling-mapping}}, whose semantic state set is
\(S_C\) (the carrier state set of~\(\mathcal I\)) and whose carrier state set is a
nonempty finite set~\(\hat S_C\) called the \emph{received carrier state space},
equipped with an injective encoding
\(\enc_{\hat C}:\hat S_C\to\{0,1\}^*\).

\smallskip
The induced enabling map
\(\mathcal E_{\mathrm{ch}}:S_C\Rightarrow\hat S_C\) specifies, for each transmitted
carrier state, the set of physically possible received states.
The \emph{carrier channel kernel} is any enabling kernel for~\(\mathcal I_{\mathrm{ch}}\):
\[
  W\;\in\;\mathcal K(\mathcal I_{\mathrm{ch}}),
  \qquad
  W:S_C\rightsquigarrow\hat S_C.
\]
\end{definition}

\begin{remark}[Common special cases of the carrier channel]
\label{rem:carrier-channel-cases}
\textup{(i)}~In many applications the physical channel output alphabet coincides with
its input alphabet; setting \(\hat S_C:=S_C\) and
\(\mathcal E_{\mathrm{ch}}(s_c):=S_C\) for all \(s_c\) recovers the standard
unconstrained discrete memoryless channel (DMC) of
Shannon theory~\cite{shannon1948mathematical}.

\smallskip\noindent
\textup{(ii)}~When the carrier channel is \emph{noiseless},
\(\hat S_C=S_C\) and \(\mathcal E_{\mathrm{ch}}(s_c)=\{s_c\}\) for every
\(s_c\in S_C\); the unique enabling kernel is the identity kernel
\(W=\mathrm{id}_{S_C}\).

\smallskip\noindent
\textup{(iii)}~In the terminology of Section~\ref{sec:model}, the roles of
``semantic space'' and ``carrier space'' are relative to each information model.
In \(\mathcal I_{\mathrm{ch}}\), the transmitted carrier states \(S_C\) play the
structural role of \(S_O\), and the received carrier states \(\hat S_C\) play the
role of~\(S_C\).
This relabeling is purely notational; it reflects the fact that each link in a
communication chain is itself an instance of the general information model framework.
\end{remark}

\subsubsection*{Stage 3: Decoding}

\begin{definition}[Decoding information model]
\label{def:decoding-model}
A \emph{decoding information model} (composable with \(\mathcal I_{\mathrm{ch}}\)) is
an information model
\[
  \mathcal I_{\mathrm{dec}}
  \;=\;
  \bigl\langle\,
    \hat C,\; T_{\hat C},\; \hat S_C,\;
    \hat O,\; T_{\hat O},\; \hat S_O,\;
    R_{\mathcal E}^{\mathrm{dec}}
  \,\bigr\rangle,
\]
satisfying Axiom~\textup{\ref{ax:enabling-mapping}}, whose semantic state set is
\(\hat S_C\) (the carrier state set of~\(\mathcal I_{\mathrm{ch}}\)) and whose carrier
state set
\(\hat S_O\subseteq\mathbb{S}_O\) is called the
\emph{reconstructed semantic state space}, inheriting the encoding~\(\enc_O\), the
time-index map~\(\timeindex_O\), and the proof system
\((\mathsf{PS},\,T_{\mathsf{PS}},\,\Cn)\) from
Sections~\textup{\ref{subsec:logic}}--\textup{\ref{subsec:atomic-derivation}}.

\smallskip
The induced enabling map
\(\mathcal E_{\mathrm{dec}}:\hat S_C\Rightarrow\hat S_O\) specifies, for each received
carrier state, the set of admissible decoded semantic states.
The \emph{decoding kernel} is any enabling kernel for~\(\mathcal I_{\mathrm{dec}}\):
\[
  D\;\in\;\mathcal K(\mathcal I_{\mathrm{dec}}),
  \qquad
  D:\hat S_C\rightsquigarrow\hat S_O.
\]
\end{definition}

\begin{remark}[Inherited proof-system structure and heterogeneous receivers]
\label{rem:decoded-space-structure}
\textup{(i)}~Since \(\hat S_O\subseteq\mathbb{S}_O\) and the proof system
\(\mathsf{PS}\) acts on all of~\(\mathbb{S}_O\)
(Section~\ref{subsec:atomic-derivation}), the deductive closure
\(\Cn(\hat S_O)\), the irredundant core \(\Atom(\hat S_O)\), and the derivation
depth \(\Dd(q\mid\Atom(\hat S_O))\) are all well-defined for~\(\hat S_O\).
In particular, the semantic invariants \(\mathsf{A}\) and \(\mathsf{D_d}\)
(Definitions~\ref{def:atomicity-measure}--\ref{def:max-depth}) are defined for
\(\hat S_O\) by the same constructions as for~\(S_O\).
This structural inheritance is what distinguishes the semantic channel framework from a
purely symbol-level treatment.

\smallskip\noindent
\textup{(ii)}~When the receiver seeks to reconstruct states in the original semantic
vocabulary, one takes \(\hat S_O:=S_O\).
When the receiver operates with a coarser or finer semantic vocabulary
than the sender, \(\hat S_O\) may differ from~\(S_O\);
this generality is relevant to heterogeneous agent scenarios
(e.g., in multi-agent systems where different agents maintain different knowledge
bases~\cite{lamb2020graph,garcez2023neurosymbolic}).
More generally, the same sender communicating with different receivers gives rise
to different information models (sharing the same \(S_O\) but differing in \(S_C\),
\(\hat S_C\), \(\hat S_O\), and in the associated enabling maps and noise structures).
\end{remark}

\subsubsection*{End-to-End Composition}

The three information models
\(\mathcal I\), \(\mathcal I_{\mathrm{ch}}\), and \(\mathcal I_{\mathrm{dec}}\)
form a composable chain:
\(S_C(\mathcal I)=S_O(\mathcal I_{\mathrm{ch}})=S_C\) and
\(S_C(\mathcal I_{\mathrm{ch}})=S_O(\mathcal I_{\mathrm{dec}})=\hat S_C\).
The composite information model
(Definition~\ref{def:model-composition}, Remark~\ref{rem:composition-assoc})
\begin{equation}\label{eq:composite-model}
  \mathcal I_{\mathrm{sem}}
  \;:=\;
  \mathcal I_{\mathrm{dec}}\circ\mathcal I_{\mathrm{ch}}\circ\mathcal I
\end{equation}
has semantic state set \(S_O\), carrier state set \(\hat S_O\), and composite enabling
map
\begin{equation}\label{eq:composite-enabling-e2e}
  \mathcal E_{\mathrm{sem}}(s_o)
  \;=\;
  \bigcup_{s_c\,\in\,\mathcal E(s_o)}\;
  \bigcup_{\hat s_c\,\in\,\mathcal E_{\mathrm{ch}}(s_c)}
  \mathcal E_{\mathrm{dec}}(\hat s_c),
  \; s_o\in S_O.
\end{equation}

\subsubsection*{End-to-End Noise Structure}

The composite information model \(\mathcal I_{\mathrm{sem}}\) has semantic
state set~\(S_O\) and carrier state set
\(\hat S_O\subseteq\mathbb S_O\)
(Definition~\ref{def:decoding-model}).
Since both \(S_O\) and \(\hat S_O\) reside in the common semantic universe
\(\mathbb S_O\), the noisy information framework of
Section~\ref{subsec:noisy} applies directly to the end-to-end model,
yielding a structured decomposition of the semantic gap between sender and
receiver.

\begin{proposition}[End-to-end carrier representability and noisy semantic base]
\label{prop:e2e-carrier-rep}
The composite information model
\(\mathcal I_{\mathrm{sem}}
  =\mathcal I_{\mathrm{dec}}\circ\mathcal I_{\mathrm{ch}}\circ\mathcal I\)
satisfies Assumption~\textup{\ref{assump:carrier-rep}}.
Consequently, setting \(\tilde S_O:=\hat S_O\), the reconstructed
semantic space is a noisy semantic base of~\(S_O\)
\textup{(Definition~\ref{def:noisy})} with \emph{end-to-end noise pair}
\begin{equation}\label{eq:e2e-noise-minus}
  S_O^{-}:=S_O\setminus\tilde S_O
  \qquad\textup{(intended but unreconstructable)},
\end{equation}
\begin{equation}\label{eq:e2e-noise-plus}
  S_O^{+}:=\tilde S_O\setminus S_O
  \qquad\textup{(unintended but reachable)},
\end{equation}
so that
\begin{equation}\label{eq:tildeSO-perturbation}
  \tilde S_O
  \;=\;(S_O\setminus S_O^{-})\cup S_O^{+}.
\end{equation}
Moreover, the noisy information
\(\tilde{\mathcal I}_{\mathrm{sem}}\)
\textup{(Definition~\ref{def:noisy-info})} associated with
\(\mathcal I_{\mathrm{sem}}\) is well-defined, and the semantic
invariants
\(\mathsf A(\tilde{\mathcal I}_{\mathrm{sem}})
  =\lvert\Atom(\tilde S_O)\rvert\) and
\(\mathsf{D_d}(\tilde{\mathcal I}_{\mathrm{sem}})
  =\max_{q\in\tilde S_O}\Dd\bigl(q\mid\Atom(\tilde S_O)\bigr)\)
are finite and computable
\textup{(Remark~\ref{rem:noisy-ideal-connection})}.
\end{proposition}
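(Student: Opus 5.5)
The plan is to show that the composite model's carrier set $\hat S_O$ is itself a carrier-equivalent semantic set. Assumption~\ref{assump:carrier-rep} for $\mathcal I_{\mathrm{sem}}$ requires an $\mathcal L_{\mathrm{sem}}$-definable $S'_O\subseteq\mathbb S_O$ with $S'_O\syn\hat S_O$, since $\hat S_O$ is the carrier state set of $\mathcal I_{\mathrm{sem}}$.

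The natural witness is $S'_O:=\hat S_O$ itself. By Definition~\ref{def:decoding-model}, $\hat S_O\subseteq\mathbb S_O$. It is the carrier state set of an information model satisfying the axioms, so by Axiom~\ref{ax:time-domains} (applied to $\mathcal I_{\mathrm{dec}}$) it is an $\mathcal L_{\mathrm{sem}}$-definable state set. Reflexivity of $\syn$ (Proposition~\ref{prop:syn-equiv}), witnessed by the $\mathcal L_{\mathrm{sem}}$-definable identity graph, then gives $\hat S_O\syn\hat S_O$. This is the order-preserving bijection required by Definition~\ref{def:synonymous-states}, and it verifies the assumption.

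Next I would set $\tilde S_O:=\hat S_O$ and define $S_O^\pm$ as in \eqref{eq:e2e-noise-minus}--\eqref{eq:e2e-noise-plus}. The identity \eqref{eq:tildeSO-perturbation} is the set equality $(S_O\setminus(S_O\setminus\tilde S_O))\cup(\tilde S_O\setminus S_O)=(S_O\cap\tilde S_O)\cup(\tilde S_O\setminus S_O)=\tilde S_O$. We also have $S_O^-\subseteq S_O$ and $S_O^+\subseteq\mathbb S_O\setminus S_O$, so Definition~\ref{def:noisy} is met. Well-definedness of $\tilde{\mathcal I}_{\mathrm{sem}}$ in Definition~\ref{def:noisy-info} then follows with $G$ the identity graph on $\hat S_O$ and $\tau_{OC}=\tau_{CO}=\mathrm{id}$. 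This is exactly the construction in the proof of Proposition~\ref{prop:noisy-exist}.

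For finiteness and computability of $\mathsf A$ and $\mathsf{D_d}$, I would invoke Remark~\ref{rem:noisy-ideal-connection}. $\tilde S_O$ is finite, because $\hat S_O$ is the finite carrier set of $\mathcal I_{\mathrm{dec}}$. It is effectively listable via $\enc_O$, and definable sets in a finite structure are enumerable by Remark~\ref{rem:definable-enumerable}. The redundancy test holds by Remark~\ref{rem:T-operator-context}(b). The argument of Theorem~\ref{thm:semantic-invariants}(i)--(ii) then applies verbatim.

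The main obstacle is the definability step. The paper's carrier sets are sometimes treated as meta-level data, for example $\hat S_C$ is only required to be nonempty and finite. So I would have to confirm that $\hat S_O$, as the carrier state set of an information model satisfying Axiom~\ref{ax:time-domains}, really is $\mathcal L_{\mathrm{sem}}$-definable, and that it lies over the object and time sorts, so that the closure property of Assumption~\ref{assump:semantic-universe} is not even needed. If that reading fails, the fallback is to add $\mathcal L_{\mathrm{sem}}$-definability of $\hat S_O$ as an explicit hypothesis.
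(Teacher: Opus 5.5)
Your proposal is correct and matches the paper's proof essentially step for step: the paper also takes \(S'_O:=\hat S_O\), gets \(\hat S_O\subseteq\mathbb S_O\) from Definition~\ref{def:decoding-model} and its \(\mathcal L_{\mathrm{sem}}\)-definability from Axiom~\ref{ax:time-domains} applied to \(\mathcal I_{\mathrm{dec}}\), uses reflexivity of \(\syn\) via the identity graph, and defers finiteness and computability to Remark~\ref{rem:noisy-ideal-connection}. Your explicit check of the identity \((S_O\setminus S_O^-)\cup S_O^+=\tilde S_O\) and your caveat about definability add detail only; the paper reads definability directly off Axiom~\ref{ax:time-domains}.
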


\begin{proof}
In the composite model \(\mathcal I_{\mathrm{sem}}\), the carrier state set
is~\(\hat S_O\).
By Definition~\ref{def:decoding-model},
\(\hat S_O\subseteq\mathbb S_O\) and \(\hat S_O\) is
\(\mathcal L_{\mathrm{sem}}\)-definable
(Axiom~\ref{ax:time-domains} applied to~\(\mathcal I_{\mathrm{dec}}\)).
Setting \(S'_O:=\hat S_O\), the reflexivity of~\(\syn\)
(Proposition~\ref{prop:syn-equiv}) gives
\(S'_O\syn\hat S_O\), witnessed by the identity graph
\(G_{\mathrm{id}}(s,s'):=\mathbf 1[s=s']\).
Hence Assumption~\ref{assump:carrier-rep} is satisfied with
\(\tilde S_O=\hat S_O\).
The noise pair follows from Definition~\ref{def:noisy-info}, and the
computability statement from Remark~\ref{rem:noisy-ideal-connection}
(whose hypotheses are verified by the same argument given there, with
\(\tilde S_O\) in place of the general carrier-equivalent state set).
\end{proof}

\begin{corollary}[End-to-end core preservation and closure fidelity]
\label{cor:e2e-noise-fidelity}
Let \(A=\Atom(S_O)\) and let \((S_O^{-},S_O^{+})\) be the end-to-end
noise pair of Proposition~\textup{\ref{prop:e2e-carrier-rep}}.
Then all four conclusions of
Proposition~\textup{\ref{prop:noise-fidelity}} hold with
\(\tilde S_O=\hat S_O\):
\begin{enumerate}[label=\textup{(\roman*)}]
  \item \(\rho_{\Atom}(S_O,\tilde S_O)=1\) if and only if
        \(A\cap S_O^{-}=\varnothing\)
        \textup{(}no core element is lost\textup{)}.
  \item If \(A\cap S_O^{-}=\varnothing\), then
        \(\Cn(S_O)\subseteq\Cn(\tilde S_O)\).
  \item If \(A\cap S_O^{-}=\varnothing\) and
        \(S_O^{+}\subseteq\Cn(S_O)\), then
        \(\Cn(S_O)=\Cn(\tilde S_O)\) and
        \(\mathsf{F}_{\Cn}(S_O,\tilde S_O)=1\).
  \item \(S_O^{-}=S_O^{+}=\varnothing\) \textup{(}i.e.,
        \(\tilde S_O=S_O\)\textup{)} implies
        \(\rho_{\Atom}=1\) and \(\mathsf{F}_{\Cn}=1\).
\end{enumerate}
\end{corollary}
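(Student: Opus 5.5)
This corollary is a direct specialization of Proposition~\ref{prop:noise-fidelity}, so the plan is to verify its hypotheses and then transfer each clause. The only hypothesis of Proposition~\ref{prop:noise-fidelity} is that \(\tilde S_O\) is a noisy semantic base of \(S_O\) in the sense of Definition~\ref{def:noisy}. That is, \(\tilde S_O=(S_O\setminus S_O^-)\cup S_O^+\) with \(S_O^-\subseteq S_O\) and \(S_O^+\subseteq\mathbb S_O\setminus S_O\).

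First I would invoke Proposition~\ref{prop:e2e-carrier-rep}. It sets \(\tilde S_O:=\hat S_O\subseteq\mathbb S_O\) and defines \(S_O^-:=S_O\setminus\tilde S_O\) and \(S_O^+:=\tilde S_O\setminus S_O\). By construction \(S_O^-\subseteq S_O\), and \(S_O^+\subseteq\mathbb S_O\setminus S_O\) because \(\hat S_O\subseteq\mathbb S_O\) by Definition~\ref{def:decoding-model}. The identity \eqref{eq:tildeSO-perturbation} is elementary set algebra. Concretely, \((S_O\setminus(S_O\setminus\tilde S_O))\cup(\tilde S_O\setminus S_O)=(S_O\cap\tilde S_O)\cup(\tilde S_O\setminus S_O)=\tilde S_O\). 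This confirms that \(\hat S_O\) is a noisy semantic base of \(S_O\) with the stated noise pair.

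Second, I would apply Proposition~\ref{prop:noise-fidelity}\textup{(i)}--\textup{(iv)} verbatim with \(\tilde S_O=\hat S_O\) and \(A=\Atom(S_O)\). This yields clauses \textup{(i)}--\textup{(iv)} of the corollary one by one. For \textup{(iv)}, note that \(S_O^-=S_O^+=\varnothing\) is equivalent to \(\hat S_O=S_O\), which matches the parenthetical remark in the statement.

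There is no real obstacle here. The only point needing care is that the end-to-end noise pair satisfies the disjointness requirement \(S_O^+\cap S_O=\varnothing\). Proposition~\ref{prop:noise-fidelity}\textup{(i)} uses this requirement to conclude \(A\cap S_O^+=\varnothing\). It holds automatically because \(S_O^+\) is defined as a set difference with \(S_O\).
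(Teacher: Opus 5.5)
Your proposal is correct and follows the paper's proof: the paper likewise observes that, by Proposition~\ref{prop:e2e-carrier-rep}, \(\tilde S_O=\hat S_O\) is a noisy semantic base of \(S_O\) with noise pair \((S_O^{-},S_O^{+})\), and then applies Proposition~\ref{prop:noise-fidelity}. Your explicit checks that \(S_O^-\subseteq S_O\), that \(S_O^+\subseteq\mathbb S_O\setminus S_O\), and that the perturbation identity holds spell out what the paper leaves implicit in the word ``immediate.''
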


\begin{proof}
Immediate from Propositions~\ref{prop:e2e-carrier-rep}
and~\ref{prop:noise-fidelity}, since \(\tilde S_O\) is a noisy
semantic base of~\(S_O\) with noise pair \((S_O^{-},S_O^{+})\).
\end{proof}

\begin{remark}[Stage-wise noise decomposition]
\label{rem:stage-noise}
The noisy information framework of Section~\ref{subsec:noisy} applies
not only to the composite model~\(\mathcal I_{\mathrm{sem}}\) but also
to each constituent model individually, provided
Assumption~\ref{assump:carrier-rep} holds at that stage.
In particular:
(a)~the encoding model~\(\mathcal I\) has semantic space~\(S_O\) and
carrier space~\(S_C\); if Assumption~\ref{assump:carrier-rep} holds
for~\(\mathcal I\), there exists a noisy semantic base
\(\tilde S_O^{(\mathrm{enc})}\subseteq\mathbb S_O\) with
\(\tilde S_O^{(\mathrm{enc})}\syn S_C\), yielding an encoding noise pair
\((S_O^{-,\mathrm{enc}},\,S_O^{+,\mathrm{enc}})\);
(b)~the carrier channel model~\(\mathcal I_{\mathrm{ch}}\) has its own
noise pair capturing physical-layer distortion;
(c)~the decoding model~\(\mathcal I_{\mathrm{dec}}\) has a noise pair
capturing decoder-induced mismatch.
The end-to-end noise pair
\((S_O^{-},S_O^{+})\) subsumes the combined effect of all three stages.
This stage-wise view can guide the identification of the dominant noise
source; a detailed treatment of the composition of stage-wise noise
pairs is deferred to future work.
\end{remark}

\begin{remark}[Notation convention: \(\tilde S_O\) and~\(\hat S_O\)]
\label{rem:tildeSO-convention}
By Proposition~\ref{prop:e2e-carrier-rep}, \(\tilde S_O=\hat S_O\)
as subsets of~\(\mathbb S_O\).
We adopt the following convention for the remainder of this section:
\(\hat S_O\) is used when the reconstructed space is viewed as the
\emph{output alphabet} of the end-to-end channel (the coding-theoretic
perspective), while \(\tilde S_O\) is used when it is viewed as a
\emph{noisy semantic base} of~\(S_O\) with noise pair
\((S_O^{-},S_O^{+})\) (the noise-structure perspective).
The semantic invariants \(\Atom(\tilde S_O)\), \(\Cn(\tilde S_O)\), and
\(\Dd(\cdot\mid\Atom(\tilde S_O))\) are always well-defined since
\(\tilde S_O\subseteq\mathbb S_O\).
\end{remark}


\subsubsection*{Probabilistic Characterization of the End-to-End Noise Pair}

The noise pair \((S_O^{-},S_O^{+})\) is a \emph{set-level}
(deterministic) decomposition of the gap between \(S_O\)
and~\(\tilde S_O\).
The semantic channel kernel
\(\kappa_{\mathrm{sem}}:S_O\rightsquigarrow\tilde S_O\) endows this
decomposition with \emph{probabilistic} content by specifying, for each
sent state~\(s_o\), the likelihood that the reconstruction lands in
each region of~\(\tilde S_O\).

\begin{definition}[Noise-region partition and per-input probabilities]
\label{def:noise-regions}
The noise pair \((S_O^{-},S_O^{+})\) of
Proposition~\ref{prop:e2e-carrier-rep} induces a partition of
\(\tilde S_O\) into a \emph{preserved region} and a
\emph{spurious region}:
\begin{align}
  \tilde S_O^{\cap}
  &\;:=\; S_O\cap\tilde S_O
  \;=\; S_O\setminus S_O^{-}
  &&\text{(preserved)},
  \label{eq:region-preserved}\\[3pt]
  \tilde S_O^{+}
  &\;:=\; S_O^{+}
  \;=\; \tilde S_O\setminus S_O
  &&\text{(spurious)}.
  \label{eq:region-spurious}
\end{align}
The two regions are disjoint:
\(\tilde S_O=\tilde S_O^{\cap}\;\dot\cup\;\tilde S_O^{+}\).
For each \(s_o\in S_O\), define the \emph{preservation probability} and
the \emph{spurious probability}:
\begin{align}
  p_{\cap}(s_o)
  &\;:=\;
  \sum_{\hat s_o\in\tilde S_O^{\cap}}
  \kappa_{\mathrm{sem}}(\hat s_o\mid s_o),
  \label{eq:p-preserved}\\[3pt]
  p_{+}(s_o)
  &\;:=\;
  \sum_{\hat s_o\in\tilde S_O^{+}}
  \kappa_{\mathrm{sem}}(\hat s_o\mid s_o),
  \label{eq:p-spurious}
\end{align}
with \(p_{\cap}(s_o)+p_{+}(s_o)=1\).
For an element \(a\in\Atom(S_O)\cap\tilde S_O^{\cap}\), the
\emph{per-core-element self-preservation probability} is
\begin{equation}\label{eq:pi-self}
  \pi(a)\;:=\;\kappa_{\mathrm{sem}}(a\mid a).
\end{equation}
\end{definition}

\begin{proposition}[Noise-region decomposition of channel behavior]
\label{prop:prob-noise}
Let \((S_O,P_O)\) be a semantic source and let
\(\mathfrak C\) be a semantic channel with end-to-end noise pair
\((S_O^{-},S_O^{+})\) and kernel
\(\kappa_{\mathrm{sem}}:S_O\rightsquigarrow\tilde S_O\).
Let \(A=\Atom(S_O)\).
\begin{enumerate}[label=\textup{(\roman*)}]
  \item \emph{Distortion decomposition:}\;
        The expected Hamming distortion decomposes as
        \begin{align}\label{eq:dH-noise-decomp}
          \bar d_H(\mathfrak C,P_O)
          \;=\;&
          \underbrace{\sum_{s_o\in S_O}P_O(s_o)
            \!\!\sum_{\substack{\hat s_o\in\tilde S_O^{\cap}\\
                               \hat s_o\neq s_o}}
            \!\!\kappa_{\mathrm{sem}}(\hat s_o\mid s_o)
          }_{\displaystyle\bar d_H^{\cap}
             \;\text{(within-vocabulary confusion)}} \nonumber\\
          \;+\;&
          \underbrace{\sum_{s_o\in S_O}P_O(s_o)\,p_{+}(s_o)
          }_{\displaystyle\bar d_H^{+}
             \;\text{(spurious substitution)}}.
        \end{align}
  \item \emph{Necessary condition for perfect symbol fidelity:}\;
        \(\bar d_H(\mathfrak C,P_O)=0\) for a full-support~\(P_O\)
        requires \(S_O^{-}=\varnothing\)
        \textup{(}i.e., \(S_O\subseteq\tilde S_O\)\textup{)} and
        \(\kappa_{\mathrm{sem}}(s_o\mid s_o)=1\) for every
        \(s_o\in S_O\).
  \item \emph{Probabilistic core preservation:}\;
        If \(A\cap S_O^{-}=\varnothing\), then \(\pi(a)\) is
        well-defined for every \(a\in A\) and
        \begin{equation}\label{eq:avg-core-preservation}
          \bar\pi(\mathfrak C,P_O)
          \;:=\;
          \frac{
            \sum_{a\in A}P_O(a)\;\pi(a)
          }{P_O(A)}
        \end{equation}
        measures the average self-preservation probability among core
        elements
        \textup{(}where \(P_O(A):=\sum_{a\in A}P_O(a)\)\textup{)}.
  \item \emph{Sufficient condition for closure preservation
        with high probability:}\;
        If \(A\cap S_O^{-}=\varnothing\) and
        \(S_O^{+}\subseteq\Cn(S_O)\), then by
        Corollary~\textup{\ref{cor:e2e-noise-fidelity}(iii)},
        \(\Cn(S_O)=\Cn(\tilde S_O)\).
        Moreover, substituting any redundant state
        \(s_o\in S_O\setminus A\) by any
        \(\hat s_o\in\tilde S_O\) preserves the deductive
        closure:
        \(\Cn\bigl((S_O\setminus\{s_o\})\cup\{\hat s_o\}\bigr)
          =\Cn(S_O)\).
        \textup{(}This closure-preservation property is
        formalized as zero closure distortion in
        Section~\textup{\ref{subsec:distortion}};
        see Definition~\textup{\ref{def:closure-distortion}}
        and
        Remark~\textup{\ref{rem:closure-distortion-content}(a)}.\textup{)}
\end{enumerate}
\end{proposition}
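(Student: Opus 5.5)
The plan is to verify the four items one at a time, using only the definitions and the partition \(\tilde S_O=\tilde S_O^{\cap}\,\dot\cup\,\tilde S_O^{+}\) of Definition~\ref{def:noise-regions}. Throughout, I take the expected Hamming distortion in its standard form, anticipating Definition~\ref{def:hamming-distortion}:
\(\bar d_H(\mathfrak C,P_O)=\sum_{s_o}P_O(s_o)\sum_{\hat s_o\in\tilde S_O}\kappa_{\mathrm{sem}}(\hat s_o\mid s_o)\,\mathbf 1[\hat s_o\neq s_o]\).

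For \textup{(i)}, I will split the inner sum over \(\hat s_o\) along the two regions of the partition. The key observation concerns the spurious region: for \(\hat s_o\in\tilde S_O^{+}=\tilde S_O\setminus S_O\) and \(s_o\in S_O\), we always have \(\hat s_o\neq s_o\), so the indicator equals \(1\) there. Hence the spurious part of the inner sum is exactly \(p_{+}(s_o)\) by~\eqref{eq:p-spurious}. The preserved part keeps the restriction \(\hat s_o\neq s_o\), and summing both parts against \(P_O\) gives~\eqref{eq:dH-noise-decomp}.

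For \textup{(ii)}, I will use the fact that both terms in \textup{(i)} are nonnegative. Since \(P_O(s_o)>0\) for every \(s_o\), the condition \(\bar d_H=0\) forces \(\kappa_{\mathrm{sem}}(\hat s_o\mid s_o)=0\) for every \(\hat s_o\neq s_o\). Because \(\kappa_{\mathrm{sem}}(\cdot\mid s_o)\) is a probability distribution on \(\tilde S_O\), all of its mass must then sit on \(s_o\). This requires \(s_o\in\tilde S_O\) and \(\kappa_{\mathrm{sem}}(s_o\mid s_o)=1\). The membership \(s_o\in\tilde S_O\) for all \(s_o\) gives \(S_O\subseteq\tilde S_O\), i.e.\ \(S_O^{-}=\varnothing\).

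For \textup{(iii)}, the task is only well-definedness. From \(A\subseteq S_O\) and \(A\cap S_O^{-}=\varnothing\) we get \(A\subseteq S_O\setminus S_O^{-}=\tilde S_O^{\cap}\). So \(a\) lies in the output space of the kernel and \(\pi(a)=\kappa_{\mathrm{sem}}(a\mid a)\) makes sense. Then \(\bar\pi\) is a convex average of values in \([0,1]\) whenever \(P_O(A)>0\), which holds for instance under full support and \(A\neq\varnothing\). In the degenerate case \(P_O(A)=0\), I would adopt the \(0/0\) convention used for \(\rho_{\Atom}\) in Definition~\ref{def:atom-preservation}.

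For \textup{(iv)}, the first claim is a direct citation of Corollary~\ref{cor:e2e-noise-fidelity}\textup{(iii)}. The substitution claim is the only step with real content, and I will argue it by two inclusions.
\begin{itemize}
\item For \(\supseteq\): since \(s_o\notin A\), we have \(A\subseteq S_O\setminus\{s_o\}\). By \textup{(Cn2)} and Proposition~\ref{prop:atom-core-correct}\textup{(i)}, \(\Cn(S_O)=\Cn(A)\subseteq\Cn\bigl((S_O\setminus\{s_o\})\cup\{\hat s_o\}\bigr)\).
\item For \(\subseteq\): we have \(S_O\setminus\{s_o\}\subseteq\Cn(S_O)\) by \textup{(Cn1)}. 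For \(\hat s_o\), it lies in \(S_O\setminus S_O^{-}\subseteq\Cn(S_O)\) or in \(S_O^{+}\subseteq\Cn(S_O)\), the latter by hypothesis. Thus the substituted set lies in \(\Cn(S_O)\), and \textup{(Cn2)}--\textup{(Cn3)} close the argument.
\end{itemize}
The main point to get right is that the hypothesis \(S_O^{+}\subseteq\Cn(S_O)\) is exactly what covers a substitute \(\hat s_o\) drawn from the spurious region.
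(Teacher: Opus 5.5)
Your proposal is correct and follows the paper's proof item by item: for (i) the same region split, where $\hat s_o\neq s_o$ is automatic on $\tilde S_O^{+}$; for (ii) the same nonnegativity and full-support argument; for (iii) the same inclusion $A\subseteq\tilde S_O^{\cap}$; and for (iv) a two-inclusion closure argument. The only variation is in (iv), where the paper routes through $\Cn(S_O\setminus\{s_o\})=\Cn(S_O)$ and $\hat s_o\in\Cn(\tilde S_O)=\Cn(S_O)$, whereas you use $\Cn(A)=\Cn(S_O)$ with $A\subseteq S_O\setminus\{s_o\}$, together with $\tilde S_O\subseteq S_O\cup S_O^{+}\subseteq\Cn(S_O)$; this shows the substitution claim actually needs only the hypothesis $S_O^{+}\subseteq\Cn(S_O)$.
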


\begin{proof}
\textup{(i)}\
Since \(\tilde S_O=\tilde S_O^{\cap}\;\dot\cup\;\tilde S_O^{+}\),
\begin{align*}
  \bar d_H(\mathfrak C,P_O)
  =\;&\sum_{s_o}P_O(s_o)
    \sum_{\hat s_o\neq s_o}\kappa_{\mathrm{sem}}(\hat s_o\mid s_o)\\
  =\;&\sum_{s_o}P_O(s_o)\Bigl[
    \sum_{\substack{\hat s_o\in\tilde S_O^{\cap}\\\hat s_o\neq s_o}}
    \kappa_{\mathrm{sem}}(\hat s_o\mid s_o) \\
    +\;&\sum_{\hat s_o\in\tilde S_O^{+}}
    \kappa_{\mathrm{sem}}(\hat s_o\mid s_o)\Bigr].
\end{align*}
For \(s_o\in S_O\) and \(\hat s_o\in\tilde S_O^{+}\subseteq
\mathbb S_O\setminus S_O\), one has \(\hat s_o\neq s_o\), so the second
sum equals~\(p_{+}(s_o)\).

\smallskip\noindent
\textup{(ii)}\
Since \(d_H\ge 0\) and \(\kappa_{\mathrm{sem}}(\hat s_o\mid s_o)\ge 0\),
\(\bar d_H=0\) under full-support~\(P_O\) requires
\(d_H(s_o,\hat s_o)=0\) whenever
\(P_O(s_o)>0\) and \(\kappa_{\mathrm{sem}}(\hat s_o\mid s_o)>0\).
Hence \(\hat s_o=s_o\) for every reachable pair, i.e.,
\(\kappa_{\mathrm{sem}}(s_o\mid s_o)=1\) for all \(s_o\in S_O\).
This requires \(s_o\in\tilde S_O\) for every \(s_o\), i.e.,
\(S_O\subseteq\tilde S_O\), hence \(S_O^{-}=\varnothing\).

\smallskip\noindent
\textup{(iii)}\
If \(A\cap S_O^{-}=\varnothing\), then \(A\subseteq\tilde S_O^{\cap}\),
so \(a\in\tilde S_O\) and \(\pi(a)=\kappa_{\mathrm{sem}}(a\mid a)\) is
well-defined for every \(a\in A\).
The formula~\eqref{eq:avg-core-preservation} is then an average of
values in~\([0,1]\).

\smallskip\noindent
\textup{(iv)}\
The first claim is Corollary~\ref{cor:e2e-noise-fidelity}(iii).
For the second: if \(s_o\in S_O\setminus A\), then
\(s_o\in\Cn(S_O\setminus\{s_o\})\) (definition of redundancy), so
\(\Cn(S_O\setminus\{s_o\})=\Cn(S_O)\).
For any \(\hat s_o\in\tilde S_O\subseteq\Cn(\tilde S_O)=\Cn(S_O)\),
monotonicity and idempotence of~\(\Cn\) give
\(\Cn\bigl((S_O\setminus\{s_o\})\cup\{\hat s_o\}\bigr)=\Cn(S_O)\).
\end{proof}

\begin{remark}[Interpretation: when does the noise pair matter?]
\label{rem:noise-pair-nontriviality}
When \(\tilde S_O=S_O\) (i.e., \(S_O^{-}=S_O^{+}=\varnothing\)), the
noise pair is trivial, both noise regions collapse
(\(\tilde S_O^{\cap}=S_O\), \(\tilde S_O^{+}=\varnothing\)),
and all semantic distortion information resides in the randomness of
\(\kappa_{\mathrm{sem}}:S_O\rightsquigarrow S_O\).
The noise pair structure of Section~\ref{subsec:noisy} becomes trivial
in this case, and the analysis reduces to the standard setting.

Conversely, when \(\tilde S_O\neq S_O\)---as in heterogeneous-agent
scenarios where sender and receiver maintain different vocabularies
\textup{(Remark~\ref{rem:decoded-space-structure}(ii))}---the noise
pair provides essential structural information beyond what the kernel
alone reveals:
\(S_O^{-}\) identifies semantic states that the receiver's vocabulary
\emph{cannot} represent (vocabulary loss), and
\(S_O^{+}\) identifies states that the receiver can produce but the
sender did not intend (vocabulary surplus / hallucination).
The decomposition~\eqref{eq:dH-noise-decomp} separates the Hamming
distortion into confusion within the shared vocabulary and errors due
to vocabulary mismatch, providing design guidance for encoder/decoder
optimization.
\end{remark}

\begin{definition}[Semantic channel]
\label{def:semantic-channel}
A \emph{semantic channel} is a tuple
\[
  \mathfrak C
  \;=\;
  \bigl(\,
    \mathcal I,\;
    \mathcal I_{\mathrm{ch}},\;
    \mathcal I_{\mathrm{dec}},\;
    \kappa_{\enc},\;
    W,\;
    D
  \,\bigr),
\]
where \(\mathcal I\), \(\mathcal I_{\mathrm{ch}}\),
\(\mathcal I_{\mathrm{dec}}\) are as above,
\(\kappa_{\enc}\in\mathcal K(\mathcal I)\),
\(W\in\mathcal K(\mathcal I_{\mathrm{ch}})\), and
\(D\in\mathcal K(\mathcal I_{\mathrm{dec}})\).

\smallskip
The \emph{semantic channel kernel} (or \emph{end-to-end kernel}) is the composite
Markov kernel
\begin{equation}\label{eq:semantic-channel-kernel}
  \kappa_{\mathrm{sem}}
  \;:=\;
  D\circ W\circ\kappa_{\enc}
  \;:\;
  S_O\rightsquigarrow\hat S_O.
\end{equation}
Explicitly, for \(s_o\in S_O\) and \(\hat s_o\in\hat S_O\),
\begin{align}\label{eq:semantic-channel-explicit}
  &\kappa_{\mathrm{sem}}(\hat s_o\mid s_o) \nonumber \\
  \;=\;
  &\sum_{s_c\in S_C}\;
  \sum_{\hat s_c\in\hat S_C}
  \kappa_{\enc}(s_c\mid s_o)\;
  W(\hat s_c\mid s_c)\;
  D(\hat s_o\mid\hat s_c).
\end{align}
\end{definition}

\begin{proposition}[Properties of the semantic channel kernel]
\label{prop:semantic-channel-props}
Let \(\mathfrak C\) be a semantic channel with kernel
\(\kappa_{\mathrm{sem}}=D\circ W\circ\kappa_{\enc}\).
\begin{enumerate}[label=\textup{(\roman*)}]
  \item \(\kappa_{\mathrm{sem}}:S_O\rightsquigarrow\hat S_O\) is a Markov kernel.
  \item \(\kappa_{\mathrm{sem}}\in
        \mathcal K(\mathcal I_{\mathrm{sem}})\), where
        \(\mathcal I_{\mathrm{sem}}
          =\mathcal I_{\mathrm{dec}}\circ\mathcal I_{\mathrm{ch}}\circ\mathcal I\)
        is the composite information model~\eqref{eq:composite-model}.
        In particular, for every \(s_o\in S_O\),
        \[
          \supp\bigl(\kappa_{\mathrm{sem}}(\cdot\mid s_o)\bigr)
          \;\subseteq\;
          \mathcal E_{\mathrm{sem}}(s_o).
        \]
  \item \emph{Dependence on components:}
        \(\kappa_{\mathrm{sem}}\) depends on the triple
        \((\kappa_{\enc},W,D)\).
        Fixing \(W\) and varying \((\kappa_{\enc},D)\) recovers the classical
        coding-theoretic setup: the encoder and decoder are design variables, while the
        physical channel is given.
\end{enumerate}
\end{proposition}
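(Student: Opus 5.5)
The plan is to reduce everything to the two composition results already established, Proposition~\ref{prop:kernel-composition} and Proposition~\ref{prop:enabling-compose}, using associativity to handle the three-fold composition.

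For (i), I will apply Proposition~\ref{prop:kernel-composition} twice. First, $W\circ\kappa_{\enc}:S_O\rightsquigarrow\hat S_C$ is a Markov kernel. Composing it with $D$ then gives that $D\circ(W\circ\kappa_{\enc})$ is a Markov kernel. Associativity of kernel composition, also from Proposition~\ref{prop:kernel-composition}, makes the bracketing irrelevant. Expanding the double sum of \eqref{eq:kernel-composition} reproduces the explicit formula \eqref{eq:semantic-channel-explicit}.

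For (ii), the argument runs at the level of information models.
\begin{itemize}
  \item The pair $(\mathcal I,\mathcal I_{\mathrm{ch}})$ is composable, since $S_C(\mathcal I)=S_O(\mathcal I_{\mathrm{ch}})$. Proposition~\ref{prop:enabling-compose} therefore gives $W\circ\kappa_{\enc}\in\mathcal K(\mathcal I_{\mathrm{ch}}\circ\mathcal I)$.
  \item The model $\mathcal I_{\mathrm{ch}}\circ\mathcal I$ has carrier set $\hat S_C=S_O(\mathcal I_{\mathrm{dec}})$. The pair $(\mathcal I_{\mathrm{ch}}\circ\mathcal I,\ \mathcal I_{\mathrm{dec}})$ is thus composable, and a second application gives $D\circ(W\circ\kappa_{\enc})\in\mathcal K(\mathcal I_{\mathrm{dec}}\circ(\mathcal I_{\mathrm{ch}}\circ\mathcal I))$.
  \item Remark~\ref{rem:composition-assoc} identifies this composite with $\mathcal I_{\mathrm{sem}}$ of \eqref{eq:composite-model}. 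Its enabling map is exactly \eqref{eq:composite-enabling-e2e}, so the support inclusion $\supp(\kappa_{\mathrm{sem}}(\cdot\mid s_o))\subseteq\mathcal E_{\mathrm{sem}}(s_o)$ is just the enabling support constraint \eqref{eq:enabling-support} unpacked.
\end{itemize}

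There is a subtle point here. Proposition~\ref{prop:enabling-compose} only uses the support constraint and the set-valued composition \eqref{eq:composite-enabling}. So even if the intermediate composite were not verified to satisfy every clause of Axiom~\ref{ax:enabling-mapping}, the argument still goes through. In any case, Proposition~\ref{prop:composition-axioms} supplies (E1)--(E3) for it.

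As a sanity check that bypasses the model bookkeeping, I can argue pointwise. Suppose $\kappa_{\mathrm{sem}}(\hat s_o\mid s_o)>0$. Since all terms in \eqref{eq:semantic-channel-explicit} are nonnegative, some summand is positive. That summand yields $s_c$ and $\hat s_c$ with $\kappa_{\enc}(s_c\mid s_o)>0$, $W(\hat s_c\mid s_c)>0$ and $D(\hat s_o\mid\hat s_c)>0$. The three support constraints then place $s_c\in\mathcal E(s_o)$, $\hat s_c\in\mathcal E_{\mathrm{ch}}(s_c)$ and $\hat s_o\in\mathcal E_{\mathrm{dec}}(\hat s_c)$, which is membership in the triple union \eqref{eq:composite-enabling-e2e}.

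Item (iii) is definitional rather than a mathematical claim. By \eqref{eq:semantic-channel-kernel}, $\kappa_{\mathrm{sem}}$ is a function of the triple $(\kappa_{\enc},W,D)$. Fixing $W\in\mathcal K(\mathcal I_{\mathrm{ch}})$ and letting $\kappa_{\enc}$ range over $\mathcal K(\mathcal I)$ and $D$ over $\mathcal K(\mathcal I_{\mathrm{dec}})$ gives the encoder/decoder design space of the classical setup. Both ranges are nonempty by Proposition~\ref{prop:enabling-existence}.

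The only step needing care is the composability bookkeeping in (ii): confirming that the intermediate composite's carrier set matches $\mathcal I_{\mathrm{dec}}$'s semantic set with compatible encodings. That is immediate from Definitions~\ref{def:carrier-channel-model} and~\ref{def:decoding-model}.
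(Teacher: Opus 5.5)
Your proposal is correct and matches the paper's proof: two applications of the kernel-composition result give (i), and two applications of the enabling-kernel composition result give (ii), first to $(\mathcal I,\mathcal I_{\mathrm{ch}})$ and then to $(\mathcal I_{\mathrm{ch}}\circ\mathcal I,\mathcal I_{\mathrm{dec}})$, with (iii) immediate from the definition. Your pointwise check through the explicit triple sum is a sound addition the paper leaves implicit, as is your observation that only the support constraint and set-valued composition are used, so the unverified clause (E4) of the intermediate composite does not matter.
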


\begin{proof}
Part~\textup{(i)} follows from two applications of
Proposition~\ref{prop:kernel-composition}.
Part~\textup{(ii)} follows from two applications of
Proposition~\ref{prop:enabling-compose}:
first, \(W\circ\kappa_{\enc}\in
\mathcal K(\mathcal I_{\mathrm{ch}}\circ\mathcal I)\);
then \(D\circ(W\circ\kappa_{\enc})\in
\mathcal K(\mathcal I_{\mathrm{dec}}\circ\mathcal I_{\mathrm{ch}}\circ\mathcal I)
=\mathcal K(\mathcal I_{\mathrm{sem}})\).
Part~\textup{(iii)} is immediate from the definition.
\end{proof}

\begin{remark}[End-to-end induced distributions]
\label{rem:end-to-end-dist}
Given a semantic source \((S_O,P_O)\) and a semantic channel~\(\mathfrak C\), the
joint distribution over the full state chain
\((s_o,s_c,\hat s_c,\hat s_o)\in
 S_O\times S_C\times\hat S_C\times\hat S_O\) factors as
\begin{align*}
  &P(s_o,s_c,\hat s_c,\hat s_o) \\
  \;=\;
  &P_O(s_o)\;
  \kappa_{\enc}(s_c\mid s_o)\;
  W(\hat s_c\mid s_c)\;
  D(\hat s_o\mid\hat s_c),
\end{align*}
forming a Markov chain \(S_O\to S_C\to\hat S_C\to\hat S_O\).
The end-to-end marginal satisfies
\(P(\hat s_o\mid s_o)=\kappa_{\mathrm{sem}}(\hat s_o\mid s_o)\).
By Proposition~\ref{prop:semantic-channel-props}\textup{(ii)}, the support of this
joint distribution respects the enabling structure at every stage.
\end{remark}


\begin{definition}[Ideal semantic channel]
\label{def:ideal-channel}
A semantic channel \(\mathfrak C\) is \emph{ideal} if all three constituent information
models are ideal
(Definition~\ref{def:ideal-info}):
\(S_O\syn S_C\), \(S_C\syn\hat S_C\), and \(\hat S_C\syn\tilde S_O\);
and the enabling kernels are the deterministic kernels induced by the respective
synonymy witnesses:
\[
  \kappa_{\enc}(s_c\mid s_o)
  =\mathbf{1}[s_c=\tau_{OC}(s_o)],
\]
\[
  W(\hat s_c\mid s_c)
  =\mathbf{1}[\hat s_c=\tau_{C\hat C}(s_c)],
\]
\[
  D(\hat s_o\mid\hat s_c)
  =\mathbf{1}[\hat s_o=\tau_{\hat C\tilde O}(\hat s_c)].
\]
In this case, the end-to-end kernel \(\kappa_{\mathrm{sem}}\) is a deterministic
bijection from~\(S_O\) to~\(\tilde S_O\) and \(S_O\syn\tilde S_O\)
(by transitivity of~\(\syn\), Proposition~\ref{prop:syn-equiv}).
\end{definition}

\begin{remark}[Ideal channel, trivial noise, and noise pair characterization]
\label{rem:ideal-channel-noise}
For an ideal semantic channel \(\mathfrak C\)
\textup{(Definition~\ref{def:ideal-channel})},
the end-to-end kernel \(\kappa_{\mathrm{sem}}\) is deterministic and equals the
bijection
\(\tau_{\mathrm{e2e}}:=\tau_{\hat C\tilde O}\circ\tau_{C\hat C}\circ\tau_{OC}
 :S_O\to\tilde S_O\),
which preserves and reflects the induced time precedence
(by transitivity of~\(\syn\)).

\smallskip
When \(\tilde S_O=S_O\), the end-to-end noise pair is trivial:
\((S_O^{-},S_O^{+})=(\varnothing,\varnothing)\)
\textup{(Proposition~\ref{prop:e2e-carrier-rep})}.
The composed bijection \(\tau_{\mathrm{e2e}}\) is a
time-order automorphism of~\(S_O\);
it reduces to the identity kernel if and only if
\(\tau_{\mathrm{e2e}}=\mathrm{id}_{S_O}\).
In this case, \(p_{\cap}(s_o)=1\) and \(p_{+}(s_o)=0\) for every
\(s_o\in S_O\) (Definition~\ref{def:noise-regions}), and every
per-core-element self-preservation probability satisfies
\(\pi(a)=1\) for all \(a\in\Atom(S_O)\).
By Corollary~\ref{cor:e2e-noise-fidelity}(iv),
\(\rho_{\Atom}(S_O,\tilde S_O)=1\) and
\(\mathsf{F}_{\Cn}(S_O,\tilde S_O)=1\).

\smallskip
For a non-ideal channel, the noise pair \((S_O^{-},S_O^{+})\) provides
a structured decomposition of the discrepancy between intended and
reconstructed semantics, and the noise-region
probabilities~\eqref{eq:p-preserved}--\eqref{eq:p-spurious} quantify
the stochastic behavior of this discrepancy.
The semantic invariants of the noisy semantic base~\(\tilde S_O\)
(\(\Atom(\tilde S_O)\), \(\mathsf{D_d}(\tilde{\mathcal I}_{\mathrm{sem}})\))
serve as computable proxies for the effective semantic structure of the
channel---a connection made precise in the distortion analysis of
Section~\ref{subsec:distortion} and the structural indices of
Section~\ref{subsec:invariants}.
\end{remark}

\subsection{Semantic Distortion and Fidelity}
\label{subsec:distortion}

Classical channel coding measures distortion at the \emph{symbol level}: a
transmitted symbol is either reproduced correctly or it is not.
The semantic channel framework inherits richer structure---deductive closure,
irredundant cores, and derivation depth---from Section~\ref{sec:model}, enabling
distortion measures that capture whether the \emph{inferential content} and
\emph{structural complexity} of a semantic state are preserved, even when the
literal symbol is altered.

This subsection introduces four per-pair distortion functions of increasing
semantic depth, a parameterized composite distortion, and two set-level fidelity
measures that connect to the noise pair of
Section~\ref{subsec:noisy}.

\smallskip
\noindent\textbf{Standing data.}
Throughout this subsection, \(\mathfrak C\) is a semantic channel
(Definition~\ref{def:semantic-channel}) with source space~\(S_O\), reconstructed
space~\(\hat S_O\subseteq\mathbb{S}_O\), and end-to-end kernel
\(\kappa_{\mathrm{sem}}:S_O\rightsquigarrow\hat S_O\).
The irredundant core \(A:=\Atom(S_O)\) and the proof system
\((\mathsf{PS},\,T_{\mathsf{PS}},\,\Cn)\) are as fixed in
Sections~\ref{subsec:logic}--\ref{subsec:atomic-derivation}.

\begin{definition}[Distortion function]
\label{def:distortion-function}
A \emph{distortion function} (relative to \(S_O\) and \(\hat S_O\)) is any function
\[
  d\;:\;S_O\times\hat S_O\;\longrightarrow\;[0,\infty)
\]
satisfying \(d(s,s)=0\) for every \(s\in S_O\cap\hat S_O\).
A distortion function is \emph{bounded} if
\(\sup_{s_o,\hat s_o}d(s_o,\hat s_o)<\infty\), and \emph{normalized} if its range
is contained in~\([0,1]\).
\end{definition}

\begin{definition}[Symbol-level (Hamming) distortion]
\label{def:hamming-distortion}
Assume \(\hat S_O\subseteq\mathbb{S}_O\) so that equality of states is
well-defined.
The \emph{Hamming distortion} is
\[
  d_H(s_o,\hat s_o)
  \;:=\;
  \mathbf{1}[s_o\neq\hat s_o],
  \qquad s_o\in S_O,\;\hat s_o\in\hat S_O.
\]
This is a normalized distortion function in the sense of
Definition~\ref{def:distortion-function}.
\end{definition}

\begin{definition}[Closure distortion (context-dependent)]
\label{def:closure-distortion}
Let \(\Gamma\subseteq\mathbb{S}_O\) be a finite \emph{reference base}
(typically \(\Gamma=S_O\) or \(\Gamma=\Atom(S_O)\)).
For \(s_o\in S_O\) and \(\hat s_o\in\hat S_O\), write
\[
  \Gamma_{-s_o}:=\Gamma\setminus\{s_o\},
\]
and define the two \emph{substitution closures}
\[
  C_s := \Cn(\Gamma_{-s_o}\cup\{s_o\}),
  \qquad
  C_{\hat s} := \Cn(\Gamma_{-s_o}\cup\{\hat s_o\}).
\]
The \emph{closure distortion} relative to~\(\Gamma\) is the Jaccard distance
\begin{equation}\label{eq:d-Cn}
  d_{\Cn}(s_o,\hat s_o\mid\Gamma)
  \;:=\;
  1-\frac{\lvert C_s\cap C_{\hat s}\rvert}
         {\lvert C_s\cup C_{\hat s}\rvert},
\end{equation}
with the convention \(0/0:=0\) (both closures empty).
When \(\Gamma=S_O\), we abbreviate \(d_{\Cn}(s_o,\hat s_o):=d_{\Cn}(s_o,\hat s_o\mid S_O)\).
When \(s_o\notin\Gamma\), one has \(\Gamma_{-s_o}=\Gamma\), so the operation becomes
\emph{addition} of \(s_o\) (resp.\ \(\hat s_o\)) to~\(\Gamma\) rather than substitution;
the definition remains well-formed in this case.
\end{definition}

\begin{remark}[Conventions for degenerate Jaccard ratios]
\label{rem:jaccard-conventions}
The closure distortion~\(d_{\Cn}\) and the closure
fidelity~\(\mathsf{F}_{\Cn}\)
\textup{(Definition~\ref{def:closure-fidelity})} both use
Jaccard-type ratios but adopt dual conventions for the degenerate
case of two empty sets:
\(d_{\Cn}\) is a \emph{distance}, so identical (empty) sets
have distance~\(0\) (\(0/0:=0\));
\(\mathsf{F}_{\Cn}\) is a \emph{similarity}, so identical
(empty) closures have similarity~\(1\) (\(0/0:=1\)).
The two conventions are consistent:
\(d_{\Cn}(s_o,\hat s_o\mid\Gamma)=0\) if and only if
\(C_s=C_{\hat s}\), and
\(\mathsf{F}_{\Cn}(S,\hat S)=1\) if and only if
\(\Cn(S)=\Cn(\hat S)\);
the degenerate case is assigned the appropriate extremal value
on each scale.
\end{remark}

\begin{remark}[Semantic content of the closure distortion]
\label{rem:closure-distortion-content}
The closure distortion measures the fraction of deductive consequences that change
when \(s_o\) is replaced by \(\hat s_o\) in the reference base.
Two key properties follow directly from the \(\Cn\)-axiomatics
\textup{(Assumption~\ref{assump:proof-system})}:
\textup{(a)}~If \(s_o\in\Cn(\Gamma_{-s_o})\) (i.e., \(s_o\) is a stored shortcut
relative to~\(\Gamma\)), then \(C_s=\Cn(\Gamma_{-s_o})\); hence any
\(\hat s_o\in\Cn(\Gamma_{-s_o})\) yields \(C_{\hat s}=C_s\) and
\(d_{\Cn}=0\).
Errors on redundant states incur zero closure distortion---a natural semantic
property.
\textup{(b)}~If \(s_o\in\Atom(S_O)\) and \(\Gamma=S_O\), then
\(s_o\notin\Cn(\Gamma_{-s_o})\), so the replacement genuinely matters and
\(d_{\Cn}\) is typically positive unless \(\hat s_o\) is deductively equivalent to
\(s_o\) over \(\Gamma_{-s_o}\).
\end{remark}

\begin{definition}[Derivation-depth distortion]
\label{def:depth-distortion}
Let \(A=\Atom(S_O)\) and \(d_{\max}:=\mathsf{D_d}(\mathcal I)\)
\textup{(Definition~\ref{def:max-depth})}.
For \(s_o\in S_O\) and \(\hat s_o\in\hat S_O\), define
\begin{align}\label{eq:d-Dd}
  d_{\Dd}(s_o,\hat s_o) :=
  \begin{dcases}
    \min\!\Bigl(
    \frac{\lvert\Dd(s_o\mid A)-\Dd(\hat s_o\mid A)\rvert}
         {\max(d_{\max},\,1)},\; 1\Bigr) \\
    \qquad \text{if } \hat s_o\in\Cn(A), \\[4pt]
    1  \qquad \text{otherwise}.
  \end{dcases}
\end{align}
This is a normalized distortion function.
The \(\min(\cdot,1)\) cap ensures normalization when
\(\hat s_o\in\Cn(A)\setminus S_O\) has derivation depth exceeding~\(d_{\max}\).
\end{definition}

\begin{remark}[Interpretation of depth distortion]
\label{rem:depth-distortion-interp}
The depth distortion quantifies how much the \emph{inferential complexity} of a
state changes through the channel: \(\Dd(s_o\mid A)\) is the minimum number of
proof steps to derive \(s_o\) from the irredundant core
\textup{(Definition~\ref{def:derivation-depth})}.
If \(\hat s_o\) has the same depth as \(s_o\) relative to~\(A\), the depth
distortion is zero regardless of whether the two states are identical.
States not derivable from the sender's core
(\(\hat s_o\notin\Cn(A)\)) receive the maximum penalty~\(1\).
\end{remark}

\begin{definition}[Composite semantic distortion]
\label{def:composite-distortion}
Let \(\alpha,\beta,\gamma\ge 0\) with \(\alpha+\beta+\gamma=1\) be fixed
\emph{distortion weights} and let \(\Gamma\) be a reference base for
\(d_{\Cn}\).
The \emph{composite semantic distortion} is
\begin{align}\label{eq:d-sem}
  &d_{\mathrm{sem}}(s_o,\hat s_o) \nonumber \\
  \;:=\;
  &\alpha\,d_H(s_o,\hat s_o)
  \;+\;
  \beta\,d_{\Cn}(s_o,\hat s_o\mid\Gamma)
  \;+\;
  \gamma\,d_{\Dd}(s_o,\hat s_o).
\end{align}
This is a normalized distortion function
\textup{(Definition~\ref{def:distortion-function})} with values in~\([0,1]\).
\end{definition}

\begin{remark}[Choice of weights and special cases]
\label{rem:weight-choice}
Setting \((\alpha,\beta,\gamma)=(1,0,0)\) recovers the classical Hamming
distortion; \((0,1,0)\) yields a purely deductive-content measure; and
\((0,0,1)\) isolates the structural-complexity component.
The choice of weights and reference base~\(\Gamma\) reflects the
application-specific trade-off between symbol fidelity, inferential
completeness, and proof-complexity preservation.
\end{remark}

\begin{definition}[Expected semantic distortion]
\label{def:expected-distortion}
Let \((S_O,P_O)\) be a semantic source
\textup{(Definition~\ref{def:semantic-source})} and let \(d\) be a distortion
function.
The \emph{expected semantic distortion} under the joint distribution
induced by \(P_O\) and \(\kappa_{\mathrm{sem}}\) is
\begin{equation}\label{eq:expected-d}
  \bar d(\mathfrak C,P_O)
  :=
  \sum_{s_o\in S_O}\sum_{\hat s_o\in\hat S_O}
  P_O(s_o)\kappa_{\mathrm{sem}}(\hat s_o\mid s_o)
  d(s_o,\hat s_o).
\end{equation}
When \(d=d_{\mathrm{sem}}\), we write
\(\bar d_{\mathrm{sem}}(\mathfrak C,P_O)\).
\end{definition}

\begin{proposition}[Zero-distortion characterization]
\label{prop:zero-distortion}
Let \(\mathfrak C\) be a semantic channel with \(\hat S_O\subseteq\mathbb{S}_O\)
and let \(A=\Atom(S_O)\).
\begin{enumerate}[label=\textup{(\roman*)}]
  \item \emph{Hamming:}
        \(\bar d_H(\mathfrak C,P_O)=0\) for every full-support
        \(P_O\in\Delta(S_O)\) if and only if
        \(S_O\subseteq\hat S_O\) and \(\kappa_{\mathrm{sem}}\) is the deterministic
        kernel corresponding to the inclusion
        \(\mathrm{id}:S_O\hookrightarrow\hat S_O\).
  \item \emph{Closure:}
        If for every \(s_o\in S_O\) and every
        \(\hat s_o\in\supp\bigl(\kappa_{\mathrm{sem}}(\cdot\mid s_o)\bigr)\),
        \(d_{\Cn}(s_o,\hat s_o\mid S_O)=0\),
        then for each such pair \((s_o,\hat s_o)\),
        \(\Cn(S_O)=\Cn\bigl((S_O\setminus\{s_o\})\cup\{\hat s_o\}\bigr)\).
  \item \emph{Ideal channel:}
        Let \(\mathfrak C\) be an ideal semantic channel
        \textup{(Definition~\ref{def:ideal-channel})} with
        \(\hat S_O=S_O\) and
        \(\tau_{\mathrm{e2e}}=\mathrm{id}_{S_O}\).
        Then \(\bar d_{\mathrm{sem}}(\mathfrak C,P_O)=0\) for every
        \(P_O\in\Delta(S_O)\) and every choice of weights
        \((\alpha,\beta,\gamma)\).
\end{enumerate}
\end{proposition}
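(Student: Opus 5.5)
The three parts are essentially independent; I will prove them in order, leaning on the explicit formula \eqref{eq:expected-d} and nonnegativity of every term.

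For \textup{(i)}, the ``if'' direction is a one-line computation: if \(S_O\subseteq\hat S_O\) and \(\kappa_{\mathrm{sem}}(\hat s_o\mid s_o)=\mathbf 1[\hat s_o=s_o]\), then each summand of \eqref{eq:expected-d} contains either a zero kernel value or \(d_H(s_o,s_o)=0\), so \(\bar d_H=0\) for every \(P_O\). For ``only if'', fix any full-support \(P_O\); for instance, the uniform source is enough. The argument of Proposition~\ref{prop:prob-noise}(ii) then applies verbatim. Every summand is nonnegative and \(P_O(s_o)>0\), so \(\kappa_{\mathrm{sem}}(\hat s_o\mid s_o)>0\) forces \(d_H(s_o,\hat s_o)=0\), i.e.\ \(\hat s_o=s_o\). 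Since \(\kappa_{\mathrm{sem}}(\cdot\mid s_o)\) is a probability distribution on \(\hat S_O\), this gives \(s_o\in\hat S_O\) and \(\kappa_{\mathrm{sem}}(s_o\mid s_o)=1\), which is exactly the inclusion kernel.

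For \textup{(ii)}, fix a pair \((s_o,\hat s_o)\) in the support. By the Jaccard convention of Remark~\ref{rem:jaccard-conventions}, \(d_{\Cn}=0\) means \(C_s=C_{\hat s}\). With \(\Gamma=S_O\) and \(s_o\in S_O\), we have \(\Gamma_{-s_o}\cup\{s_o\}=S_O\), hence \(C_s=\Cn(S_O)\). Also \(C_{\hat s}=\Cn((S_O\setminus\{s_o\})\cup\{\hat s_o\})\), which gives the claim directly. The one point needing care is the degenerate case where both closures are empty. There the \(0/0:=0\) convention still yields equality of the two (empty) sets, so the conclusion is unaffected. I would state this explicitly rather than treat it as an obstacle.

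For \textup{(iii)}, the ideal-channel hypothesis with \(\tau_{\mathrm{e2e}}=\mathrm{id}_{S_O}\) makes \(\kappa_{\mathrm{sem}}\) the identity kernel on \(S_O\) (Remark~\ref{rem:ideal-channel-noise}). Then \(\bar d_{\mathrm{sem}}=\sum_{s_o}P_O(s_o)\,d_{\mathrm{sem}}(s_o,s_o)\), so it suffices to check that each component vanishes on the diagonal. First, \(d_H(s,s)=0\) holds trivially. Second, \(d_{\Cn}(s,s\mid\Gamma)=0\) because \(C_s=C_{\hat s}\) literally coincide, and this covers both the empty and nonempty cases. Third, \(d_{\Dd}(s,s)=0\) because \(s\in S_O\subseteq\Cn(A)\) by Proposition~\ref{prop:atom-core-correct}(iv), so the first branch of \eqref{eq:d-Dd} applies and gives \(\min(0,1)=0\). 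Linearity in \((\alpha,\beta,\gamma)\) then finishes the argument for every choice of weights.

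The only mildly delicate step is the diagonal check for \(d_{\Dd}\), where one must verify membership in \(\Cn(A)\) to avoid the ``otherwise'' branch, and this is exactly what Proposition~\ref{prop:atom-core-correct}(iv) supplies.
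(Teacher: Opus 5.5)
Your proof is correct and follows the paper's argument for all three parts essentially line by line: nonnegativity plus full support for (i), the identity $C_s=\Cn(S_O)$ when $\Gamma=S_O$ for (ii), and a componentwise check of $d_{\mathrm{sem}}$ on the diagonal of the identity kernel for (iii). You also spell out the ``if'' direction of (i), which the paper leaves implicit. Your discussion of the doubly-empty Jaccard case is unnecessary: by \textup{(Cn1)}, $s_o\in C_s$ always holds, so $C_s\neq\varnothing$ and the $0/0$ convention never applies (which is fortunate, since read literally $1-0/0$ with $0/0:=0$ would give $1$, not $0$).
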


\begin{proof}
\textup{(i)}\
Since \(d_H\ge 0\) and \(P_O\) is full-support,
\(\bar d_H=0\) iff \(d_H(s_o,\hat s_o)=0\) whenever
\(\kappa_{\mathrm{sem}}(\hat s_o\mid s_o)>0\).
This forces \(\hat s_o=s_o\) for every reachable pair, i.e., \(\kappa_{\mathrm{sem}}\)
is deterministic with \(\kappa_{\mathrm{sem}}(s_o\mid s_o)=1\) for all \(s_o\in S_O\),
which requires \(S_O\subseteq\hat S_O\).

\smallskip\noindent
\textup{(ii)}\
\(d_{\Cn}(s_o,\hat s_o\mid S_O)=0\) means the Jaccard distance~\eqref{eq:d-Cn}
vanishes, i.e., \(C_s=C_{\hat s}\), which is precisely
\(\Cn(S_O)=\Cn\bigl((S_O\setminus\{s_o\})\cup\{\hat s_o\}\bigr)\)
(since \(\Cn(\Gamma_{-s_o}\cup\{s_o\})=\Cn(S_O)\) when \(\Gamma=S_O\)).

\smallskip\noindent
\textup{(iii)}\
Under the stated hypotheses, \(\kappa_{\mathrm{sem}}\) is the identity kernel on
\(S_O\), so \(\hat s_o=s_o\) deterministically.
Hence \(d_H(s_o,s_o)=0\), \(d_{\Cn}(s_o,s_o\mid\Gamma)=0\)
(since \(C_s=C_{\hat s}\)), and \(d_{\Dd}(s_o,s_o)=0\)
(since \(s_o\in\Cn(A)\) by
Proposition~\ref{prop:atom-core-correct}\textup{(iv)} and the depth difference is
zero).
Therefore \(d_{\mathrm{sem}}(s_o,s_o)=0\) for all \(s_o\), and
\(\bar d_{\mathrm{sem}}=0\).
\end{proof}


\medskip
The per-pair distortion functions above quantify the quality of individual
state reconstructions.
We now introduce set-level fidelity measures that compare two knowledge bases
\emph{as wholes}, connecting directly to the noise pair
\((S_O^-,S_O^+)\) of Section~\ref{subsec:noisy}.


\begin{proposition}[Noise-pair bounds on semantic distortion]
\label{prop:noise-distortion-bounds}
Let \(\mathfrak C\) be a semantic channel with end-to-end noise pair
\((S_O^{-},S_O^{+})\)
\textup{(Proposition~\ref{prop:e2e-carrier-rep})}, kernel
\(\kappa_{\mathrm{sem}}:S_O\rightsquigarrow\tilde S_O\),
and source distribution \(P_O\in\Delta(S_O)\).
Let \(A=\Atom(S_O)\).
\begin{enumerate}[label=\textup{(\roman*)}]
  \item \emph{Closure distortion under core preservation:}\;
        If \(A\cap S_O^{-}=\varnothing\) and
        \(S_O^{+}\subseteq\Cn(S_O)\), then for every
        \(s_o\in S_O\setminus A\) and every
        \(\hat s_o\in\tilde S_O\),
        \(d_{\Cn}(s_o,\hat s_o\mid S_O)=0\).
        Consequently, the expected closure distortion satisfies
        \begin{equation}\label{eq:dCn-core-bound}
          \bar d_{\Cn}(\mathfrak C,P_O)
          \;\le\;
          P_O(A)\cdot
          \max_{a\in A}\;\bar d_{\Cn}(a\mid\mathfrak C).
        \end{equation}
  \item \emph{Depth distortion under vocabulary match:}\;
        If \(S_O^{-}=S_O^{+}=\varnothing\) \textup{(}i.e.,
        \(\tilde S_O=S_O\)\textup{)},
        then for every reachable pair \((s_o,\hat s_o)\) with
        \(\hat s_o\in S_O=\tilde S_O\),
        \(\hat s_o\in\Cn(A)\)
        \textup{(}by Proposition~\textup{\ref{prop:atom-core-correct}(iv)}\textup{)},
        and the depth distortion reduces to
        \[
          d_{\Dd}(s_o,\hat s_o)
          =
          \min\!\Bigl(
            \frac{\lvert\Dd(s_o\mid A)-\Dd(\hat s_o\mid A)\rvert}
                 {\max(\mathsf{D_d}(\mathcal I),1)},1\Bigr).
        \]
  \item \emph{Spurious-region penalty:}\;
        If \(\hat s_o\in S_O^{+}\) and
        \(\hat s_o\notin\Cn(A)\), then
        \(d_{\Dd}(s_o,\hat s_o)=1\) and
        \(d_{\Cn}(s_o,\hat s_o\mid S_O)\) is typically positive.
        In particular, spurious states that are not derivable from the
        sender's core receive the maximum depth penalty.
\end{enumerate}
\end{proposition}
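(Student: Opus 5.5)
The proof goes part by part. Part~(i) carries the content, while (ii) and (iii) are direct unwindings of the definitions of \(d_{\Dd}\) and \(d_{\Cn}\).

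\textbf{Part (i).} Fix \(s_o\in S_O\setminus A\) and \(\hat s_o\in\tilde S_O\), and take \(\Gamma=S_O\).

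Step 1: show \(C_s=\Cn(S_O)\). This holds because \(\Gamma_{-s_o}\cup\{s_o\}=S_O\).

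Step 2: show \(C_{\hat s}=\Cn(S_O)\). This is exactly the second claim of Proposition~\ref{prop:prob-noise}(iv), and its hypotheses \(A\cap S_O^-=\varnothing\) and \(S_O^+\subseteq\Cn(S_O)\) are ours. If I argue it directly, I proceed as follows.
\begin{itemize}
\item Since \(s_o\notin A\), we have \(A\subseteq S_O\setminus\{s_o\}\). Monotonicity and Proposition~\ref{prop:atom-core-correct}(i) then give \(\Cn(S_O\setminus\{s_o\})=\Cn(S_O)\).
\item Corollary~\ref{cor:e2e-noise-fidelity}(iii) gives \(\hat s_o\in\tilde S_O\subseteq\Cn(\tilde S_O)=\Cn(S_O)\).
\item So \((S_O\setminus\{s_o\})\cup\{\hat s_o\}\subseteq\Cn(S_O)\). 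Idempotence (Cn3) bounds its closure above by \(\Cn(S_O)\), and monotonicity bounds it below.
\end{itemize}

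Step 3: conclude \(d_{\Cn}=0\). With \(C_s=C_{\hat s}\), the Jaccard distance vanishes under either convention of Remark~\ref{rem:jaccard-conventions}. The routing through \(A\) in Step~2 is deliberate: it avoids assuming \(s_o\in\Cn(S_O\setminus\{s_o\})\) directly, since the scan of Definition~\ref{def:atom-so} only certifies redundancy relative to the intermediate set.

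\textbf{The expected-distortion bound.} I interpret \(\bar d_{\Cn}(a\mid\mathfrak C)\) as the per-input expectation \(\sum_{\hat s_o}\kappa_{\mathrm{sem}}(\hat s_o\mid a)\,d_{\Cn}(a,\hat s_o\mid S_O)\). I then split the sum in \eqref{eq:expected-d} over \(s_o\in S_O\setminus A\) and \(s_o\in A\). The first block vanishes term by term by the pointwise claim, since every reachable \(\hat s_o\) lies in \(\tilde S_O\). The second block is \(\sum_{a\in A}P_O(a)\bar d_{\Cn}(a\mid\mathfrak C)\), which is at most \(P_O(A)\max_{a}\bar d_{\Cn}(a\mid\mathfrak C)\). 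The main obstacle is fixing this undefined notation consistently; the rest is bookkeeping.

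\textbf{Part (ii).} If \(\tilde S_O=S_O\), then every reachable \(\hat s_o\) lies in \(S_O\subseteq\Cn(A)\) by Proposition~\ref{prop:atom-core-correct}(iv). So the first branch of \eqref{eq:d-Dd} applies, with \(d_{\max}=\mathsf{D_d}(\mathcal I)\).

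\textbf{Part (iii).} If \(\hat s_o\notin\Cn(A)\), the second branch of \eqref{eq:d-Dd} gives \(d_{\Dd}=1\) at once. The ``typically positive'' clause about \(d_{\Cn}\) is informal, so I only justify it heuristically. Since \(\hat s_o\in C_{\hat s}\) but \(\hat s_o\notin\Cn(S_O)=C_s\), we get \(C_{\hat s}\neq C_s\) whenever \(s_o\) is redundant. Hence \(d_{\Cn}>0\) in that case.
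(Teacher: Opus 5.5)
Your proof is correct and follows the paper's argument step for step. It uses the same pointwise vanishing of $d_{\Cn}$ off the core via Corollary~\ref{cor:e2e-noise-fidelity}(iii) together with monotonicity and idempotence, the same split of the expectation over $A$ and $S_O\setminus A$, and the same two-branch reading of $d_{\Dd}$ for (ii)--(iii).

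Two small remarks. First, your detour through $A\subseteq S_O\setminus\{s_o\}$ is valid but not needed. The scan's certificate $s_o\in\Cn(A'\setminus\{s_o\})$ holds with $A'\subseteq S_O$, so (Cn2) already gives $s_o\in\Cn(S_O\setminus\{s_o\})$, which is what the paper uses. Second, in (iii) your reflexivity argument gives $\hat s_o\in C_{\hat s}\setminus\Cn(S_O)=C_{\hat s}\setminus C_s$, hence $d_{\Cn}>0$, for every $s_o\in S_O$, not only redundant ones. That is sharper than the paper's ``generally''.
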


\begin{proof}
\textup{(i)}\
For \(s_o\in S_O\setminus A\): the state \(s_o\) is redundant, i.e.,
\(s_o\in\Cn(S_O\setminus\{s_o\})\), so
\(\Cn(S_O\setminus\{s_o\})=\Cn(S_O)\).
For any \(\hat s_o\in\tilde S_O\): since
\(A\cap S_O^{-}=\varnothing\) and \(S_O^{+}\subseteq\Cn(S_O)\),
Corollary~\ref{cor:e2e-noise-fidelity}(iii) gives
\(\Cn(\tilde S_O)=\Cn(S_O)\), hence
\(\hat s_o\in\tilde S_O\subseteq\Cn(\tilde S_O)=\Cn(S_O)\).
Then
\(\Cn\bigl((S_O\setminus\{s_o\})\cup\{\hat s_o\}\bigr)
  =\Cn(S_O\setminus\{s_o\})=\Cn(S_O)\),
so \(d_{\Cn}(s_o,\hat s_o\mid S_O)=0\).
The bound~\eqref{eq:dCn-core-bound} follows because non-core states
contribute zero:
\(\bar d_{\Cn}(\mathfrak C,P_O)
  =\sum_{a\in A}P_O(a)\,\bar d_{\Cn}(a\mid\mathfrak C)
  \le P_O(A)\max_{a\in A}\bar d_{\Cn}(a\mid\mathfrak C)\).

\smallskip\noindent
\textup{(ii)}\
If \(\tilde S_O=S_O\), every reachable \(\hat s_o\in S_O\) lies in
\(\Cn(A)=\Cn(S_O)\supseteq S_O\)
(Proposition~\ref{prop:atom-core-correct}(iv)).
Hence the first branch of~\eqref{eq:d-Dd} applies.

\smallskip\noindent
\textup{(iii)}\
If \(\hat s_o\notin\Cn(A)\), the second branch of~\eqref{eq:d-Dd}
gives \(d_{\Dd}(s_o,\hat s_o)=1\).
For the closure distortion, replacing \(s_o\) by an
\(\hat s_o\notin\Cn(A)\) generally adds consequences outside
\(\Cn(S_O)\), causing \(C_{\hat s}\neq C_s\).
\end{proof}

\begin{remark}[Hierarchy of semantic distortion and relation to classical measures]
\label{rem:distortion-hierarchy}
The four distortion components are ordered by semantic depth.
Hamming distortion \(d_H\) is the coarsest: it detects any symbol change.
Closure distortion \(d_{\Cn}\) is strictly finer---it can be zero even when
\(d_H=1\), namely when the substitution of \(\hat s_o\) for \(s_o\) preserves all
deductive consequences
\textup{(Remark~\ref{rem:closure-distortion-content}(a))}.
Depth distortion \(d_{\Dd}\) is orthogonal to closure distortion: two states may
generate the same closure yet reside at different strata, or conversely share the
same depth but differ in deductive content.
The composite \(d_{\mathrm{sem}}\) reconciles these dimensions via the weight
vector.

At the set level, Proposition~\ref{prop:noise-fidelity}\textup{(iii)} shows that
deductive completeness is robust to noise that respects the irredundant core:
as long as no core element is lost and all spurious additions are already
derivable, the deductive closure is perfectly preserved.
This result has no counterpart in classical (Hamming-based) distortion theory,
where any symbol error incurs positive distortion.

In Shannon's rate--distortion theory~\cite{cover2006elements}, the distortion
function is an arbitrary non-negative function on source--reconstruction pairs.
The semantic distortion functions defined above are instances of this general
framework, but they exploit the proof-system structure of
Section~\ref{sec:model} to assign distortion values that reflect
\emph{inferential} rather than merely \emph{syntactic} differences.
The expected distortion \(\bar d_{\mathrm{sem}}\)
\textup{(Definition~\ref{def:expected-distortion})} serves as the fidelity
criterion in the semantic rate--distortion and channel coding problems
formulated in Sections~\ref{subsec:invariants}--\ref{subsec:coding}.
\end{remark}

\subsection{Semantic Channel Invariants}
\label{subsec:invariants}

This subsection defines the principal information-theoretic and structural
invariants of a semantic channel: semantic mutual information and semantic
channel capacity (quantifying the information throughput), and two structural
indices---\emph{semantic fidelity} and \emph{depth expansion}---that capture
how well the channel preserves the proof-system structure of
Section~\ref{sec:model}.
Throughout, all logarithms are base~\(2\) and informational quantities are
measured in bits.

\begin{definition}[Shannon entropy, conditional entropy, and mutual information]
\label{def:entropy-mi}
Let \(X,Y\) be jointly distributed finite random variables with joint pmf
\(P_{XY}\), marginals \(P_X,P_Y\), and conditional pmf
\(P_{Y\mid X}\).
\begin{enumerate}[label=\textup{(\roman*)}]
  \item The \emph{Shannon entropy} of~\(X\) is
        \(H(X):=-\sum_{x}P_X(x)\log P_X(x)\).
  \item The \emph{conditional entropy} is
        \(H(Y\mid X):=-\sum_{x,y}P_{XY}(x,y)\log P_{Y\mid X}(y\mid x)\).
  \item The \emph{mutual information} is
        \(I(X;Y):=H(X)-H(X\mid Y)=H(Y)-H(Y\mid X)\).
\end{enumerate}
We adopt the conventions \(0\log 0:=0\) and \(0\log(0/0):=0\);
see~\cite{cover2006elements} for standard properties.
\end{definition}

\begin{definition}[Semantic mutual information]
\label{def:semantic-mi}
Let \((S_O,P_O)\) be a semantic source
\textup{(Definition~\ref{def:semantic-source})} and let \(\mathfrak C\) be a
semantic channel with end-to-end kernel
\(\kappa_{\mathrm{sem}}:S_O\rightsquigarrow\hat S_O\)
\textup{(Definition~\ref{def:semantic-channel})}.
Denote the induced joint distribution on \(S_O\times\hat S_O\) by
\[
  P_{O\hat O}(s_o,\hat s_o)
  :=
  P_O(s_o)\,\kappa_{\mathrm{sem}}(\hat s_o\mid s_o).
\]
The \emph{semantic mutual information} is
\begin{align}\label{eq:I-sem}
  I_{\mathrm{sem}}(P_O,\mathfrak C)
  :=
  &I(\mathsf{S}_o;\,\hat{\mathsf{S}}_o) \nonumber \\
  =
  &\sum_{s_o,\hat s_o}
  P_{O\hat O}(s_o,\hat s_o)
  \log\frac{P_{O\hat O}(s_o,\hat s_o)}
           {P_O(s_o)\,P_{\hat O}(\hat s_o)},
\end{align}
where \(P_{\hat O}(\hat s_o):=\sum_{s_o}P_{O\hat O}(s_o,\hat s_o)\) is the
marginal on~\(\hat S_O\).
\end{definition}

\begin{remark}[Where semantic structure enters]
\label{rem:where-semantic}
The formula~\eqref{eq:I-sem} is the standard Shannon mutual information;
there is no separate ``semantic MI formula.''
The semantic content of the framework enters in three ways:
\textup{(a)}~the kernel \(\kappa_{\mathrm{sem}}\) is a composite enabling kernel
whose support respects the enabling structure at every stage
\textup{(Proposition~\ref{prop:semantic-channel-props}(ii))};
\textup{(b)}~the choice of encoder and decoder is constrained to
\(\mathcal K(\mathcal I)\) and \(\mathcal K(\mathcal I_{\mathrm{dec}})\);
\textup{(c)}~the distortion and fidelity criteria that govern operational
performance exploit the proof-system structure
\textup{(Section~\ref{subsec:distortion})}.
Classical Shannon theory emerges when these constraints and criteria reduce to
their symbol-level counterparts.
\end{remark}

\begin{definition}[Shannon capacity of the carrier channel]
\label{def:shannon-cap}
Let \(W:S_C\rightsquigarrow\hat S_C\) be the carrier channel kernel
\textup{(Definition~\ref{def:carrier-channel-model})}.
The \emph{Shannon capacity} of~\(W\) is
\begin{equation}\label{eq:C-Shannon}
  C(W)
  \;:=\;
  \max_{P_C\in\Delta(S_C)}\;
  I(\mathsf{S}_c;\,\hat{\mathsf{S}}_c),
\end{equation}
where the mutual information is computed under the joint
\(P_C(s_c)\,W(\hat s_c\mid s_c)\).
Since \(S_C\) and \(\hat S_C\) are finite, the maximum exists and satisfies
\(0\le C(W)\le\log\min(|S_C|,|\hat S_C|)\).
\end{definition}

\begin{definition}[Semantic channel capacity]
\label{def:semantic-cap}
Fix the carrier channel kernel~\(W\) and the information models
\(\mathcal I\), \(\mathcal I_{\mathrm{ch}}\), \(\mathcal I_{\mathrm{dec}}\)
\textup{(Section~\ref{subsec:channel-def})}.
The \emph{semantic channel capacity} is
\begin{equation}\label{eq:C-sem}
  C_{\mathrm{sem}}(W)
  \;:=\;
  \max_{\substack{
    P_O\in\Delta(S_O),\\[2pt]
    \kappa_{\enc}\in\mathcal K(\mathcal I),\\[2pt]
    D\in\mathcal K(\mathcal I_{\mathrm{dec}})
  }}
  I_{\mathrm{sem}}(P_O,\,\mathfrak C),
\end{equation}
where \(\mathfrak C=(\mathcal I,\mathcal I_{\mathrm{ch}},
\mathcal I_{\mathrm{dec}},\kappa_{\enc},W,D)\) and
\(I_{\mathrm{sem}}\) is as in Definition~\ref{def:semantic-mi}.
The maximum exists because the feasible set
(a product of probability simplices and stochastic-matrix polytopes
intersected with the enabling support constraints) is compact
and the objective is continuous.
\end{definition}

\begin{remark}[Dependence on enabling structures]
\label{rem:Csem-dependence}
The notation \(C_{\mathrm{sem}}(W)\) suppresses the dependence on
\(\mathcal I\) and \(\mathcal I_{\mathrm{dec}}\); more precisely,
the capacity depends on the enabling maps
\(\mathcal E\) and \(\mathcal E_{\mathrm{dec}}\) that constrain the
feasible encoders and decoders.
When the enabling constraints are vacuous (full enabling), the dependence
vanishes and Theorem~\ref{thm:data-processing}\textup{(iii)} recovers
\(C_{\mathrm{sem}}(W)=C(W)\).
\end{remark}

\begin{theorem}[Data processing bound and equality conditions]
\label{thm:data-processing}
Let \(W:S_C\rightsquigarrow\hat S_C\) be a carrier channel kernel.
\begin{enumerate}[label=\textup{(\roman*)}]
  \item \emph{Upper bound.}\;
        \(C_{\mathrm{sem}}(W)\;\le\;C(W)\).
  \item \emph{Source entropy bound.}\;
        \(C_{\mathrm{sem}}(W)\;\le\;\log|S_O|\).
        Consequently,
        \begin{equation}\label{eq:combined-bound}
          C_{\mathrm{sem}}(W)\;\le\;\min\!\bigl(C(W),\;\log|S_O|\bigr).
        \end{equation}
  \item \emph{Sufficient conditions for equality.}\;
        If the following three conditions hold:
        \textup{(a)}~\(\mathcal E(s_o)=S_C\) for every \(s_o\in S_O\)
              \textup{(}full encoding enabling\textup{)};
        \textup{(b)}~\(\mathcal E_{\mathrm{dec}}(\hat s_c)=\hat S_O\) for every
              \(\hat s_c\in\hat S_C\)
              \textup{(}full decoding enabling\textup{)};
        \textup{(c)}~\(|S_O|\ge|S_C|\) and \(|\hat S_O|\ge|\hat S_C|\);

        \noindent then \(C_{\mathrm{sem}}(W)=C(W)\).
\end{enumerate}
\end{theorem}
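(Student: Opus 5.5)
For part~(i), the plan is to use the Markov chain $\mathsf S_o\to\mathsf S_c\to\hat{\mathsf S}_c\to\hat{\mathsf S}_o$ from Remark~\ref{rem:end-to-end-dist}. Fix any feasible triple $(P_O,\kappa_{\enc},D)$. The data processing inequality applied twice gives
\[
I(\mathsf S_o;\hat{\mathsf S}_o)\le I(\mathsf S_o;\hat{\mathsf S}_c)\le I(\mathsf S_c;\hat{\mathsf S}_c).
\]
The last quantity is the mutual information of $W$ under the induced input law $P_C$ of Proposition~\ref{prop:induced-dist}. Since $P_C\in\Delta(S_C)$, it is bounded by $C(W)$ by Definition~\ref{def:shannon-cap}. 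Taking the maximum over feasible triples in~\eqref{eq:C-sem} yields (i).

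For part~(ii), I will use $I(\mathsf S_o;\hat{\mathsf S}_o)=H(\mathsf S_o)-H(\mathsf S_o\mid\hat{\mathsf S}_o)\le H(\mathsf S_o)\le\log|S_O|$. Maximizing over feasible triples gives (ii), and combining it with (i) gives~\eqref{eq:combined-bound}.

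For part~(iii), only the inequality $C_{\mathrm{sem}}(W)\ge C(W)$ remains, and I will prove it by an explicit construction. Let $P_C^\star$ be a capacity-achieving input for $W$.
\begin{itemize}
\item \emph{Encoder.} Since $|S_O|\ge|S_C|$, choose a surjection $f:S_O\to S_C$, or more simply an injection $g:S_C\to S_O$. Set $P_O:=g_\#P_C^\star$, which puts zero mass off $g(S_C)$. Take $\kappa_{\enc}$ deterministic with $\kappa_{\enc}(\cdot\mid g(s_c))=\delta_{s_c}$, and define it arbitrarily on the remaining states. Full encoding enabling (a) makes every such kernel lie in $\mathcal K(\mathcal I)$.
\item \emph{Decoder.} Since $|\hat S_O|\ge|\hat S_C|$, choose an injection $h:\hat S_C\to\hat S_O$ and let $D$ be the deterministic kernel of $h$. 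This is admissible by (b).
\end{itemize}
Under this choice, $\mathsf S_c=g^{-1}(\mathsf S_o)$ almost surely and $\hat{\mathsf S}_o=h(\hat{\mathsf S}_c)$. Mutual information is invariant under injective relabelings, so
\[
I(\mathsf S_o;\hat{\mathsf S}_o)=I(\mathsf S_c;\hat{\mathsf S}_c)=C(W),
\]
and therefore $C_{\mathrm{sem}}(W)\ge C(W)$.

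I expect the main obstacle to be bookkeeping rather than analysis. First, the encoder must be a well-defined kernel on all of $S_O$, including states outside $g(S_C)$; assigning those states any carrier works because they carry zero source mass. Second, I must check that the relabeling-invariance step applies exactly, which holds because $g$ and $h$ are injective on the supports involved.
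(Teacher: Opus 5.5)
Your proposal is correct and follows the paper's proof essentially step for step: the data processing inequality along $\mathsf S_o\to\mathsf S_c\to\hat{\mathsf S}_c\to\hat{\mathsf S}_o$ for (i), the bound $I(\mathsf S_o;\hat{\mathsf S}_o)\le H(\mathsf S_o)\le\log|S_O|$ for (ii), and for (iii) a deterministic encoder realizing a capacity-achieving $P_C^\star$ together with a deterministic injective decoder. The only difference is cosmetic: the paper uses a surjection $f:S_O\to S_C$, splits $P_C^\star$ across its fibers, and then argues $H(\hat{\mathsf S}_c\mid\mathsf S_o)=H(\hat{\mathsf S}_c\mid\mathsf S_c)$, whereas your injection $g:S_C\to S_O$ with $P_O=g_\#P_C^\star$ makes $\mathsf S_o$ and $\mathsf S_c$ almost surely in bijection, so $I(\mathsf S_o;\hat{\mathsf S}_c)=I(\mathsf S_c;\hat{\mathsf S}_c)$ is immediate.
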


\begin{proof}
\textup{(i)}\
For any admissible \((\kappa_{\enc},D,P_O)\), the Markov chain
\(\mathsf{S}_o\to\mathsf{S}_c\to\hat{\mathsf{S}}_c\to\hat{\mathsf{S}}_o\)
holds by Remark~\ref{rem:end-to-end-dist}.
The data processing inequality~\cite{cover2006elements} gives
\(I(\mathsf{S}_o;\hat{\mathsf{S}}_o)
  \le I(\mathsf{S}_c;\hat{\mathsf{S}}_c)
  \le C(W)\).
Taking the maximum over all admissible triples yields
\(C_{\mathrm{sem}}(W)\le C(W)\).

\smallskip\noindent
\textup{(ii)}\
\(I(\mathsf{S}_o;\hat{\mathsf{S}}_o)\le H(\mathsf{S}_o)\le\log|S_O|\).

\smallskip\noindent
\textup{(iii)}\
We exhibit a triple achieving \(I_{\mathrm{sem}}=C(W)\).
Let \(P_C^*\in\Delta(S_C)\) achieve \(C(W)\).
Since \(|S_O|\ge|S_C|\) and the encoding enabling is full, there exists a
deterministic surjection \(f:S_O\to S_C\) with
\(f(s_o)\in\mathcal E(s_o)=S_C\).
Choose \(P_O\) so that
\(P_C^*(s_c)=\sum_{s_o:\,f(s_o)=s_c}P_O(s_o)\) for every \(s_c\)
(possible since \(f\) is surjective: assign mass within each fiber
proportionally to \(P_C^*\)).
Set \(\kappa_{\enc}(s_c\mid s_o):=\mathbf{1}[s_c=f(s_o)]\).

Since \(|\hat S_O|\ge|\hat S_C|\) and the decoding enabling is full,
fix an injection \(g:\hat S_C\to\hat S_O\) with
\(g(\hat s_c)\in\mathcal E_{\mathrm{dec}}(\hat s_c)=\hat S_O\).
Set \(D(\hat s_o\mid\hat s_c):=\mathbf{1}[\hat s_o=g(\hat s_c)]\).

The encoding is deterministic, so \(\mathsf{S}_c=f(\mathsf{S}_o)\) and
\(H(\hat{\mathsf{S}}_c\mid\mathsf{S}_o)=H(\hat{\mathsf{S}}_c\mid\mathsf{S}_c)\)
(because \(\hat{\mathsf{S}}_c\) depends on \(\mathsf{S}_o\) only through
\(\mathsf{S}_c=f(\mathsf{S}_o)\)).
Hence
\begin{align*}
  I(\mathsf{S}_o;\hat{\mathsf{S}}_c)
  =&H(\hat{\mathsf{S}}_c)-H(\hat{\mathsf{S}}_c\mid\mathsf{S}_o) \\
  =&H(\hat{\mathsf{S}}_c)-H(\hat{\mathsf{S}}_c\mid\mathsf{S}_c) \\
  =&I(\mathsf{S}_c;\hat{\mathsf{S}}_c).
\end{align*}
Since \(g\) is injective, \(\hat{\mathsf{S}}_o=g(\hat{\mathsf{S}}_c)\)
is a deterministic invertible function of~\(\hat{\mathsf{S}}_c\)
(on the range of~\(g\)); hence
\(I(\mathsf{S}_o;\hat{\mathsf{S}}_o)
 =I(\mathsf{S}_o;g(\hat{\mathsf{S}}_c))
 =I(\mathsf{S}_o;\hat{\mathsf{S}}_c)\)
by the data processing equality for invertible
transformations~\cite{cover2006elements}.
Combining with the earlier chain,
\(I(\mathsf{S}_o;\hat{\mathsf{S}}_o)
 =I(\mathsf{S}_c;\hat{\mathsf{S}}_c)
 =C(W)\).
\end{proof}

\begin{theorem}[Computability of semantic channel invariants]
\label{thm:computability-invariants}
Under the standing finiteness assumptions
\textup{(Assumptions~\ref{assump:ordered-structures}
and~\ref{assump:finite-so})} and
Axiom~\textup{\ref{ax:T-operator}}, the following quantities are computable
from the finite instance data
\((\mathcal I,\mathcal I_{\mathrm{ch}},\mathcal I_{\mathrm{dec}},W)\):
\begin{enumerate}[label=\textup{(\roman*)}]
  \item The Shannon capacity \(C(W)\).
  \item The semantic channel capacity \(C_{\mathrm{sem}}(W)\).
  \item For any fixed semantic channel~\(\mathfrak C\) and source~\(P_O\):
        the semantic mutual information
        \(I_{\mathrm{sem}}(P_O,\mathfrak C)\),
        the expected semantic distortion
        \(\bar d_{\mathrm{sem}}(\mathfrak C,P_O)\),
        and the structural indices defined below.
  \item The semantic invariants \(\mathsf{A}(\mathcal I)\),
        \(\mathsf{D_d}(\mathcal I)\), \(\Atom(S_O)\),
        and the derivation depths \(\Dd(q\mid\Atom(S_O))\) for all
        \(q\in S_O\).
\end{enumerate}
\end{theorem}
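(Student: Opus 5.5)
I will prove the four items in reverse order of difficulty, since (iv) and (iii) reduce to results already established and the capacities (i)--(ii) need an approximation argument.

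For (iv), I will invoke Theorem~\ref{thm:semantic-invariants}(ii). Its proof only uses finiteness and listability of $S_O$ together with the decidability of $s\in\Cn(\Gamma)$. That decidability is supplied by Remark~\ref{rem:T-operator-context}(b) under Axiom~\ref{ax:T-operator}. Hence $\Atom(S_O)$, $\mathsf A(\mathcal I)$, each $\Dd(q\mid\Atom(S_O))$ (via Lemma~\ref{lem:depth-properties}(i)) and $\mathsf{D_d}(\mathcal I)$ are computable.

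For (iii), the kernel $\kappa_{\mathrm{sem}}$ is a finite sum of products of given rational (or computable-real) entries, as in \eqref{eq:semantic-channel-explicit}. Then $I_{\mathrm{sem}}$ is a finite expression in logarithms of computable reals, and is therefore computable. For $\bar d_{\mathrm{sem}}$, I will show each per-pair distortion is computable:
\begin{itemize}
\item $d_H$ is trivial;
\item $d_{\Cn}$ requires the finite sets $C_s,C_{\hat s}$, obtained by iterating $T_{\mathsf{PS}}$ to stabilization using (IC2), (IC3) and (IC4), after which the Jaccard ratio is a rational;
\item $d_{\Dd}$ requires membership in $\Cn(A)$ (decidable), the depths from Lemma~\ref{lem:depth-properties}(i), and $d_{\max}$ from (iv).
\end{itemize}
The structural indices defined later are, I expect, finite combinations of these same ingredients, namely $\rho_{\Atom}$, $\mathsf F_{\Cn}$, $p_\cap$, $p_+$ and $\pi$. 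Each is a rational function of kernel entries and finite set cardinalities, so it is computable by the same argument.

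For (i), I will use the standard fact that $P_C\mapsto I(\mathsf S_c;\hat{\mathsf S}_c)$ is concave and continuous on the compact simplex $\Delta(S_C)$. The Blahut--Arimoto algorithm gives computable upper and lower bounds that converge to $C(W)$, with an explicit error certificate. Alternatively, a uniform modulus of continuity of $I$ over the simplex, which is explicit in $|S_C|$ and $|\hat S_C|$, lets one evaluate on a finite grid. Either route yields approximations to any prescribed precision, which is what computability of a real number means.

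The main obstacle is (ii). The feasible set for $C_{\mathrm{sem}}$ is the product of $\Delta(S_O)$ with the encoder and decoder polytopes cut out by the support constraints \eqref{eq:enabling-support}. There is a subtlety here: $R_{\mathcal E}$ need not be decidable (Remark~\ref{rem:enabling-notes}(iii)). I will therefore read ``finite instance data'' as including the finite relations $\mathcal E$ and $\mathcal E_{\mathrm{dec}}$ explicitly, and state that assumption. With those relations given, the feasible set is an explicitly described compact polytope. The objective is continuous but not jointly concave, so I cannot use Blahut--Arimoto directly.

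Instead I will rely on uniform continuity. Mutual information is uniformly continuous on the finite-dimensional feasible set, with an explicit modulus obtained from entropy continuity bounds of Fannes type, $|H(p)-H(q)|\le \|p-q\|_1\log(n)+h(\|p-q\|_1)$. This yields a computable $\delta(\epsilon)$, and maximizing over a $\delta$-grid of the polytope brackets $C_{\mathrm{sem}}(W)$ within $\epsilon$. To get a sharper but optional reduction, I could note that for a fixed encoder and decoder the problem is concave in $P_O$. The resulting function of the remaining kernels is still continuous, so the grid argument covers all variables regardless.
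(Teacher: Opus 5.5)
Your proposal is correct and follows the paper's proof essentially item by item: (iv) from Theorem~\ref{thm:semantic-invariants}(ii), (iii) by finite sums over the kernel table plus iterating $T_{\mathsf{PS}}$ to stabilization for $\Cn$ and $\Dd$, (i) by Blahut--Arimoto, and (ii) by compactness, continuity and a finite net of the feasible set. Three of your additions tighten points the paper leaves implicit: the explicit Fannes-type modulus, deriving the redundancy test from Remark~\ref{rem:T-operator-context}(b), and the caveat that the enabling relations must be supplied as explicit data. One small correction: the ``structural indices defined below'' are $\mathsf F(\mathfrak C)$ and $\mathsf E(\mathfrak C)$ (Definitions~\ref{def:fidelity-index}--\ref{def:depth-expansion-index}), not $\rho_{\Atom}$, $p_\cap$, $p_+$, $\pi$; since they are finite minima/maxima over $S_O$ of per-input expected closure and depth distortions, your per-pair computability of $d_{\Cn}$ and $d_{\Dd}$ already covers them.
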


\begin{proof}
\textup{(i)}~and~\textup{(ii)}:
All state spaces are finite, so the feasible set of each optimization
is a product of probability simplices intersected with finitely many
linear enabling-support constraints---hence a compact subset
of~\(\mathbb{R}^N\) for finite~\(N\).
The mutual information is a continuous function of the joint
distribution (with the convention \(0\log 0:=0\)), and the joint
distribution is an affine (hence continuous) function of the
optimization variables.
By the extreme value theorem the maxima
in~\eqref{eq:C-Shannon} and~\eqref{eq:C-sem} exist.

\smallskip\noindent
\emph{Computability (to arbitrary precision).}
For~\(C(W)\), the Blahut--Arimoto
algorithm~\cite{cover2006elements} computes the maximum to
arbitrary precision in finitely many iterations.
For~\(C_{\mathrm{sem}}(W)\), the feasible set is a compact
subset of a finite-dimensional Euclidean space
(all variables are entries of finite stochastic matrices).
The objective \(I_{\mathrm{sem}}\) is continuous on this
compact set.
Hence for any \(\epsilon>0\), a finite \(\epsilon\)-net of the
feasible set yields a value within~\(\epsilon\) of the true
maximum, and each evaluation is a finite computation.
(In practice, one may enumerate all deterministic
encoder--decoder pairs---of which there are finitely many---and
for each fixed pair perform a concave maximization
over~\(P_O\) via the Blahut--Arimoto method, then take the
overall maximum; this yields an exact solution in finite time
since the pointwise maximum of finitely many concave programs
is attained.)

\smallskip\noindent
\textup{(iii)}:
For fixed \(\mathfrak C\) and~\(P_O\), the joint distribution
\(P_{O\hat O}\) is a finite table computable by matrix multiplication
(equation~\eqref{eq:semantic-channel-explicit}).
The mutual information~\eqref{eq:I-sem} and expected
distortion~\eqref{eq:expected-d} are finite sums over this table.
The closure distortion~\(d_{\Cn}\) and depth distortion~\(d_{\Dd}\) require
computing \(\Cn(\cdot)\) and \(\Dd(\cdot\mid\cdot)\), which are computable
by iterating \(T_{\mathsf{PS}}\) (Axiom~\ref{ax:T-operator}).

\smallskip\noindent
\textup{(iv)}:
This is the content of
Theorem~\ref{thm:semantic-invariants}\textup{(ii)}.
\end{proof}


\medskip
The semantic mutual information and capacity characterize the
\emph{information throughput} of a semantic channel.
The following two indices characterize its \emph{structural quality}:
how faithfully the channel preserves the deductive closure and the
derivation-depth stratification of the source.

\begin{definition}[Per-input expected distortions]
\label{def:per-input-distortion}
For a semantic channel \(\mathfrak C\) and each \(s_o\in S_O\), define the
\emph{per-input expected closure distortion} and
\emph{per-input expected depth distortion}:
\begin{align*}
  \bar d_{\Cn}(s_o\mid\mathfrak C)
  &:=
  \sum_{\hat s_o\in\hat S_O}
  \kappa_{\mathrm{sem}}(\hat s_o\mid s_o)\,
  d_{\Cn}(s_o,\hat s_o\mid S_O),\\
  \bar d_{\Dd}(s_o\mid\mathfrak C)
  &:=
  \sum_{\hat s_o\in\hat S_O}
  \kappa_{\mathrm{sem}}(\hat s_o\mid s_o)\,
  d_{\Dd}(s_o,\hat s_o).
\end{align*}
\end{definition}

\begin{definition}[Channel semantic fidelity index]
\label{def:fidelity-index}
The \emph{semantic fidelity index} of a semantic channel~\(\mathfrak C\) is
\begin{equation}\label{eq:fidelity-index}
  \mathsf{F}(\mathfrak C)
  :=
  \min_{s_o\in S_O}
  \bigl(1-\bar d_{\Cn}(s_o\mid\mathfrak C)\bigr)
  =
  1-\max_{s_o\in S_O}\bar d_{\Cn}(s_o\mid\mathfrak C).
\end{equation}
\end{definition}

\begin{definition}[Channel depth expansion index]
\label{def:depth-expansion-index}
The \emph{depth expansion index} of a semantic channel~\(\mathfrak C\) is
\begin{equation}\label{eq:depth-expansion}
  \mathsf{E}(\mathfrak C)
  \;:=\;
  \max_{s_o\in S_O}\;\bar d_{\Dd}(s_o\mid\mathfrak C).
\end{equation}
\end{definition}

\begin{proposition}[Properties of structural indices]
\label{prop:structural-indices}
Let \(\mathfrak C\) be a semantic channel.
\begin{enumerate}[label=\textup{(\roman*)}]
  \item \emph{Range:}\;
        \(0\le\mathsf{F}(\mathfrak C)\le 1\) and
        \(0\le\mathsf{E}(\mathfrak C)\le 1\).
  \item \emph{Perfect fidelity:}\;
        \(\mathsf{F}(\mathfrak C)=1\) if and only if for every
        \(s_o\in S_O\) and every
        \(\hat s_o\in\supp(\kappa_{\mathrm{sem}}(\cdot\mid s_o))\),
        \(d_{\Cn}(s_o,\hat s_o\mid S_O)=0\).
        Equivalently, every reachable substitution preserves the
        deductive closure of~\(S_O\)
        \textup{(Proposition~\ref{prop:zero-distortion}(ii))}.
  \item \emph{Zero depth expansion:}\;
        \(\mathsf{E}(\mathfrak C)=0\) if and only if for every
        \(s_o\in S_O\) and every
        \(\hat s_o\in\supp(\kappa_{\mathrm{sem}}(\cdot\mid s_o))\),
        \(\hat s_o\in\Cn(A)\) and
        \(\Dd(\hat s_o\mid A)=\Dd(s_o\mid A)\).
  \item \emph{Ideal channel:}\;
        If \(\mathfrak C\) is ideal
        \textup{(Definition~\ref{def:ideal-channel})} with
        \(\hat S_O=S_O\) and \(\tau_{\mathrm{e2e}}=\mathrm{id}_{S_O}\),
        then \(\mathsf{F}(\mathfrak C)=1\) and
        \(\mathsf{E}(\mathfrak C)=0\).
  \item \emph{Computability:}\;
        Both \(\mathsf{F}(\mathfrak C)\) and \(\mathsf{E}(\mathfrak C)\)
        are computable from the instance data under
        Axiom~\textup{\ref{ax:T-operator}}.
  \item \emph{Expected distortion bounds:}\;
        For any \(P_O\in\Delta(S_O)\),
        \[
          \bar d_{\Cn}(\mathfrak C,P_O)
          \;\le\;
          1-\mathsf{F}(\mathfrak C),
          \qquad
          \bar d_{\Dd}(\mathfrak C,P_O)
          \;\le\;
          \mathsf{E}(\mathfrak C).
        \]
\end{enumerate}
\end{proposition}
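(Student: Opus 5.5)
The plan is to treat each item separately and rely only on three facts: $d_{\Cn}$ and $d_{\Dd}$ take values in $[0,1]$, each $\kappa_{\mathrm{sem}}(\cdot\mid s_o)$ is a probability vector, and all spaces are finite.

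For \textup{(i)}, every $\bar d_{\Cn}(s_o\mid\mathfrak C)$ and $\bar d_{\Dd}(s_o\mid\mathfrak C)$ is a convex combination of numbers in $[0,1]$. Indeed, $d_{\Cn}$ is a Jaccard distance, and $d_{\Dd}$ is capped at $1$ (Definition~\ref{def:depth-distortion}). So each of these quantities lies in $[0,1]$. Since $S_O$ is finite and nonempty, the min in~\eqref{eq:fidelity-index} and the max in~\eqref{eq:depth-expansion} also lie in $[0,1]$.

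For \textup{(ii)}, $\mathsf F=1$ holds iff $\bar d_{\Cn}(s_o\mid\mathfrak C)=0$ for all $s_o$. A sum of nonnegative terms $\kappa_{\mathrm{sem}}(\hat s_o\mid s_o)\,d_{\Cn}(s_o,\hat s_o\mid S_O)$ vanishes iff each term with positive weight vanishes, that is, iff $d_{\Cn}=0$ on the support. The restatement as closure preservation then follows from Proposition~\ref{prop:zero-distortion}(ii). For the converse direction, note that $d_{\Cn}=0$ iff $C_s=C_{\hat s}$, and $C_s=\Cn(S_O)$ when $\Gamma=S_O$.

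For \textup{(iii)}, the same support argument gives $\mathsf E=0$ iff $d_{\Dd}(s_o,\hat s_o)=0$ on every support. I then unfold~\eqref{eq:d-Dd}. The second branch gives $1\neq 0$, so $\hat s_o\in\Cn(A)$ is forced. In the first branch the denominator $\max(d_{\max},1)\ge1$ is positive, so the capped ratio vanishes iff the depths agree. Here $\Dd(s_o\mid A)$ is finite by Proposition~\ref{prop:atom-core-correct}(iv) and Lemma~\ref{lem:depth-properties}(i).

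For \textup{(iv)}, the hypotheses make $\kappa_{\mathrm{sem}}=\mathrm{id}_{S_O}$, so the only reachable pairs are $(s_o,s_o)$. The distortions there vanish by the argument of Proposition~\ref{prop:zero-distortion}(iii). Plugging this into (ii) and (iii) gives $\mathsf F=1$ and $\mathsf E=0$.

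For \textup{(v)}, the key fact is that $\kappa_{\mathrm{sem}}$ is a finite table obtained by matrix multiplication. Each $d_{\Cn}$ and $d_{\Dd}$ value is computable by iterating $T_{\mathsf{PS}}$ to stabilization (Axiom~\ref{ax:T-operator}, Theorem~\ref{thm:computability-invariants}(iii)). It then remains to take finite sums and a finite min or max.

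For \textup{(vi)}, write $\bar d_{\Cn}(\mathfrak C,P_O)=\sum_{s_o}P_O(s_o)\,\bar d_{\Cn}(s_o\mid\mathfrak C)$, which is an average. An average is bounded by the maximum over $s_o$, which equals $1-\mathsf F(\mathfrak C)$. The depth bound is identical.

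No step is a real obstacle. The only point needing care is (iii), where I have to check that the infinite-depth/noncore branch cannot give $0$ and that the normalization denominator is never zero. I also need to be consistent that the expected distortions use $\Gamma=S_O$, matching Definition~\ref{def:per-input-distortion}.
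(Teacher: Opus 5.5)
Your proposal is correct and follows the paper's proof essentially item by item. Both use normalization plus convex combinations for (i), the vanishing-of-a-nonnegative-weighted-sum support argument for (ii)--(iii), reduction to the identity kernel for (iv), finite table computation for (v), and "average $\le$ maximum" for (vi). Your additional checks make explicit details the paper's proof leaves implicit: that the second branch of $d_{\Dd}$ can never vanish, that the denominator $\max(d_{\max},1)$ is at least $1$, and the converse direction of the closure-preservation restatement in (ii) via $C_s=\Cn(S_O)$.
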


\begin{proof}
\textup{(i)}\
Each per-input distortion lies in \([0,1]\) (since \(d_{\Cn}\) and \(d_{\Dd}\)
are normalized), so their convex combinations and extrema over \(S_O\) lie
in~\([0,1]\).

\smallskip\noindent
\textup{(ii)}\
\(\mathsf{F}(\mathfrak C)=1\) iff
\(\bar d_{\Cn}(s_o\mid\mathfrak C)=0\) for all \(s_o\in S_O\).
Since \(d_{\Cn}\ge 0\) and \(\kappa_{\mathrm{sem}}(\hat s_o\mid s_o)\ge 0\),
\(\bar d_{\Cn}(s_o\mid\mathfrak C)=0\) iff
\(d_{\Cn}(s_o,\hat s_o\mid S_O)=0\) whenever
\(\kappa_{\mathrm{sem}}(\hat s_o\mid s_o)>0\).

\smallskip\noindent
\textup{(iii)}\
Analogous to~\textup{(ii)}, using that \(d_{\Dd}(s_o,\hat s_o)=0\) iff
\(\hat s_o\in\Cn(A)\) and
\(\Dd(\hat s_o\mid A)=\Dd(s_o\mid A)\).

\smallskip\noindent
\textup{(iv)}\
Under the stated hypotheses, \(\kappa_{\mathrm{sem}}\) is the identity kernel
on~\(S_O\), so the only reachable pair for each \(s_o\) is
\(\hat s_o=s_o\).
Then \(d_{\Cn}(s_o,s_o\mid S_O)=0\) and \(d_{\Dd}(s_o,s_o)=0\)
(Proposition~\ref{prop:zero-distortion}(iii)), giving
\(\mathsf{F}=1\) and \(\mathsf{E}=0\).

\smallskip\noindent
\textup{(v)}\
For each \(s_o\in S_O\) and each
\(\hat s_o\in\hat S_O\), both \(d_{\Cn}(s_o,\hat s_o\mid S_O)\) and
\(d_{\Dd}(s_o,\hat s_o)\) are computable
(Theorem~\ref{thm:computability-invariants}(iii)).
Since \(S_O\) and \(\hat S_O\) are finite, the sums and the
\(\max/\min\) over \(S_O\) are computable.

\smallskip\noindent
\textup{(vi)}\
\(\bar d_{\Cn}(\mathfrak C,P_O)
=\sum_{s_o}P_O(s_o)\,\bar d_{\Cn}(s_o\mid\mathfrak C)
\le\max_{s_o}\bar d_{\Cn}(s_o\mid\mathfrak C)
=1-\mathsf{F}(\mathfrak C)\).
The depth bound is analogous.
\end{proof}

\begin{corollary}[Fidelity concentration on core elements]
\label{cor:fidelity-core-concentration}
Under the conditions of
Proposition~\textup{\ref{prop:noise-distortion-bounds}(i)}
\textup{(}\(A\cap S_O^{-}=\varnothing\) and
\(S_O^{+}\subseteq\Cn(S_O)\)\textup{)},
the semantic fidelity index
\textup{(Definition~\ref{def:fidelity-index})} satisfies
\begin{equation}\label{eq:F-core-only}
  \mathsf{F}(\mathfrak C)
  \;=\;
  1-\max_{a\in\Atom(S_O)}\;\bar d_{\Cn}(a\mid\mathfrak C).
\end{equation}
That is, the worst-case closure distortion is attained at a core
element; redundant states contribute zero closure distortion.
\end{corollary}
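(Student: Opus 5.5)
The plan is to start from Definition~\ref{def:fidelity-index}, which gives \(\mathsf{F}(\mathfrak C)=1-\max_{s_o\in S_O}\bar d_{\Cn}(s_o\mid\mathfrak C)\). I will split the maximum over \(S_O\) into the two disjoint pieces \(A=\Atom(S_O)\) and \(J=S_O\setminus A\) of Definition~\ref{def:intrinsic-operational-bases}, and then show that the \(J\)-part contributes nothing to the maximum.

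\textbf{Step 1 (redundant inputs vanish).} Fix \(s_o\in J\). Every \(\hat s_o\) in the support of \(\kappa_{\mathrm{sem}}(\cdot\mid s_o)\) lies in \(\hat S_O=\tilde S_O\) (Remark~\ref{rem:tildeSO-convention}). Under the hypotheses \(A\cap S_O^-=\varnothing\) and \(S_O^+\subseteq\Cn(S_O)\), Proposition~\ref{prop:noise-distortion-bounds}(i) gives \(d_{\Cn}(s_o,\hat s_o\mid S_O)=0\) for each such pair. Hence \(\bar d_{\Cn}(s_o\mid\mathfrak C)=\sum_{\hat s_o}\kappa_{\mathrm{sem}}(\hat s_o\mid s_o)\cdot 0=0\).

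\textbf{Step 2 (combine the maxima).} Since per-input distortions are nonnegative, we get
\[
\max_{s_o\in S_O}\bar d_{\Cn}(s_o\mid\mathfrak C)=\max\Bigl(\max_{a\in A}\bar d_{\Cn}(a\mid\mathfrak C),\,0\Bigr)=\max_{a\in A}\bar d_{\Cn}(a\mid\mathfrak C),
\]
provided \(A\neq\varnothing\). Substituting this into the definition yields \eqref{eq:F-core-only}.

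\textbf{Main obstacle: the degenerate case \(A=\varnothing\).} By Proposition~\ref{prop:atom-core-correct}(i), \(A=\varnothing\) forces \(\Cn(S_O)=\Cn(\varnothing)\), so every element of \(S_O\) is redundant. I would treat this case with the convention \(\max\varnothing:=0\) (as in Definition~\ref{def:max-depth}). Step~1 then shows every input has zero closure distortion, so both sides of \eqref{eq:F-core-only} equal \(1\). If \(S_O\) itself is empty, the statement is vacuous under the same convention.

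Otherwise the argument is routine. The only point needing care is that Proposition~\ref{prop:noise-distortion-bounds}(i) is stated for all \(\hat s_o\in\tilde S_O\), which covers the whole support of \(\kappa_{\mathrm{sem}}(\cdot\mid s_o)\).
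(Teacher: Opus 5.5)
Your proposal is correct and follows the paper's proof: Proposition~\ref{prop:noise-distortion-bounds}(i) gives \(\bar d_{\Cn}(s_o\mid\mathfrak C)=0\) for every \(s_o\in S_O\setminus\Atom(S_O)\), and nonnegativity then reduces the maximum over \(S_O\) to a maximum over the core. The paper leaves the degenerate case \(\Atom(S_O)=\varnothing\) implicit; it occurs exactly when \(S_O\subseteq\Cn(\varnothing)\), and your treatment of it with the convention \(\max\varnothing:=0\) is correct.
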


\begin{proof}
By Proposition~\ref{prop:noise-distortion-bounds}(i),
\(\bar d_{\Cn}(s_o\mid\mathfrak C)=0\) for every
\(s_o\in S_O\setminus A\).
Hence
\(\max_{s_o\in S_O}\bar d_{\Cn}(s_o\mid\mathfrak C)
  =\max_{a\in A}\bar d_{\Cn}(a\mid\mathfrak C)\),
and the conclusion follows from the definition of
\(\mathsf{F}(\mathfrak C)\)
\textup{(Definition~\ref{def:fidelity-index})}.
\end{proof}

\medskip
The structural indices \(\mathsf{F}(\mathfrak C)\) and
\(\mathsf{E}(\mathfrak C)\) characterize the worst-case quality of
individual state reconstructions.
The following indices exploit the noise pair
\((S_O^{-},S_O^{+})\)
(Proposition~\ref{prop:e2e-carrier-rep}) and the noise-region
partition (Definition~\ref{def:noise-regions}) to capture the
channel's quality at the level of the knowledge base as a whole.

\begin{definition}[Probabilistic core preservation index]
\label{def:prob-core-index}
Let \(\mathfrak C\) be a semantic channel with
\(A\cap S_O^{-}=\varnothing\) (no core element is lost from the
reconstructed vocabulary).
The \emph{probabilistic core preservation index} is
\begin{equation}\label{eq:core-pres-index}
  \Phi_{\Atom}(\mathfrak C)
  :=
  \min_{a\in\Atom(S_O)}\pi(a)
  =
  \min_{a\in\Atom(S_O)}\kappa_{\mathrm{sem}}(a\mid a).
\end{equation}
When \(A\cap S_O^{-}\neq\varnothing\), set
\(\Phi_{\Atom}(\mathfrak C):=0\).
\end{definition}

\begin{definition}[Spurious probability index]
\label{def:spurious-index}
The \emph{spurious probability index} of a semantic
channel~\(\mathfrak C\) is
\begin{equation}\label{eq:spurious-index}
  \Psi_{+}(\mathfrak C)
  :=
  \max_{s_o\in S_O}\;p_{+}(s_o)
  =
  \max_{s_o\in S_O}\;
  \sum_{\hat s_o\in S_O^{+}}
  \kappa_{\mathrm{sem}}(\hat s_o\mid s_o).
\end{equation}
\end{definition}

\begin{proposition}[Properties of noise-pair indices]
\label{prop:noise-pair-indices}
Let \(\mathfrak C\) be a semantic channel.
\begin{enumerate}[label=\textup{(\roman*)}]
  \item \emph{Range:}\;
        \(0\le\Phi_{\Atom}(\mathfrak C)\le 1\) and
        \(0\le\Psi_{+}(\mathfrak C)\le 1\).
  \item \emph{Ideal channel:}\;
        If \(\mathfrak C\) is ideal
        \textup{(Definition~\ref{def:ideal-channel})} with
        \(\tilde S_O=S_O\) and
        \(\tau_{\mathrm{e2e}}=\mathrm{id}_{S_O}\), then
        \(\Phi_{\Atom}(\mathfrak C)=1\) and
        \(\Psi_{+}(\mathfrak C)=0\).
  \item \emph{Relation to Hamming distortion:}\;
        For any full-support \(P_O\),
        \begin{equation}\label{eq:dH-spurious-bound}
          \bar d_H^{+}(\mathfrak C,P_O)
          \;\le\;
          \Psi_{+}(\mathfrak C),
        \end{equation}
        where \(\bar d_H^{+}\) is the spurious-substitution
        component of~\eqref{eq:dH-noise-decomp}.
  \item \emph{Relation to core self-preservation:}\;
        \begin{equation}\label{eq:Hamming-core-lower}
          1-\Phi_{\Atom}(\mathfrak C)
          \;\le\;
          \max_{a\in\Atom(S_O)}\;
          \bar d_H(a\mid\mathfrak C),
        \end{equation}
        where \(\bar d_H(a\mid\mathfrak C)
          :=1-\kappa_{\mathrm{sem}}(a\mid a)\) when
        \(a\in\tilde S_O^{\cap}\), and
        \(\bar d_H(a\mid\mathfrak C):=1\) when
        \(a\in S_O^{-}\).
  \item \emph{Upper bound on expected Hamming distortion:}\;
        For any \(P_O\in\Delta(S_O)\),
        \begin{equation}\label{eq:dH-joint-bound}
          \bar d_H(\mathfrak C,P_O)
          \;\le\;
          1-\Phi_{\Atom}(\mathfrak C)\cdot P_O\bigl(\Atom(S_O)\bigr).
        \end{equation}
        In particular, if \(P_O\) is uniform on \(S_O\),
        \(\bar d_H\le 1-\Phi_{\Atom}\cdot|A|/|S_O|\).
  \item \emph{Computability:}\;
        Both \(\Phi_{\Atom}(\mathfrak C)\) and
        \(\Psi_{+}(\mathfrak C)\) are computable from the finite
        instance data.
\end{enumerate}
\end{proposition}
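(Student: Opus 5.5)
The plan is to treat the six items separately. Each one reduces to an elementary manipulation of the finite kernel table \(\kappa_{\mathrm{sem}}\) together with the noise-region partition of Definition~\ref{def:noise-regions}. Throughout I will use the convention that \(\kappa_{\mathrm{sem}}(s_o\mid s_o):=0\) whenever \(s_o\in S_O^{-}\), since such an \(s_o\) is not in the output space \(\tilde S_O\). With this convention the per-input Hamming error is always \(1-\kappa_{\mathrm{sem}}(s_o\mid s_o)\), and it coincides with the definition of \(\bar d_H(a\mid\mathfrak C)\) given in~(iv).

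\textbf{Items (i), (ii) and (vi).}
\begin{itemize}
\item For~(i), \(\Phi_{\Atom}\) is either the constant \(0\) or a minimum of kernel entries. \(\Psi_{+}\) is a maximum of the partial row sums \(p_{+}(s_o)\), and \(p_{+}(s_o)\in[0,1]\) because \(p_{\cap}+p_{+}=1\).
\item For~(ii), I will invoke Remark~\ref{rem:ideal-channel-noise}. Under the hypotheses the kernel is \(\mathrm{id}_{S_O}\) and \((S_O^{-},S_O^{+})=(\varnothing,\varnothing\)). Hence \(A\cap S_O^{-}=\varnothing\), \(\pi(a)=1\) for all \(a\) so \(\Phi_{\Atom}=1\), and \(p_{+}\equiv0\) because the spurious region is empty, so \(\Psi_{+}=0\).
\item For~(vi), all quantities are finite sums, minima and maxima over the finite table computed from~\eqref{eq:semantic-channel-explicit}. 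The sets \(A\), \(S_O^{-}\) and \(S_O^{+}\) are computable by Theorem~\ref{thm:semantic-invariants} and plain set differences.
\end{itemize}

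\textbf{Item (iii).} Write \(\bar d_H^{+}=\sum_{s_o}P_O(s_o)\,p_{+}(s_o)\). This is a convex combination, so it is bounded by \(\max_{s_o}p_{+}(s_o)=\Psi_{+}\). Full support is not actually needed for this step.

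\textbf{Item (iv).} I will split on whether \(A\cap S_O^{-}\) is empty.
\begin{itemize}
\item If \(A\cap S_O^{-}=\varnothing\), then \(1-\Phi_{\Atom}=\max_{a}(1-\pi(a))=\max_a\bar d_H(a\mid\mathfrak C)\), so the inequality holds with equality.
\item Otherwise \(\Phi_{\Atom}=0\) by definition. Any \(a\in A\cap S_O^{-}\) has \(\bar d_H(a\mid\mathfrak C)=1\), so the right-hand side equals \(1\) and the inequality holds.
\end{itemize}

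\textbf{Item (v).} This is the only step needing a small estimate. I expand
\[
\bar d_H=\sum_{s_o}P_O(s_o)\bigl(1-\kappa_{\mathrm{sem}}(s_o\mid s_o)\bigr)
\]
and split the sum over \(A\) and \(S_O\setminus A\). For \(a\in A\) the summand satisfies \(1-\kappa_{\mathrm{sem}}(a\mid a)\le 1-\Phi_{\Atom}\). In the case \(\Phi_{\Atom}>0\) this holds by the definition of the minimum. In the case \(\Phi_{\Atom}=0\) it holds trivially, and in that case the whole bound reduces to \(\bar d_H\le1\) anyway. For non-core states I bound the summand by \(1\). This gives
\[
\bar d_H\le P_O(A)(1-\Phi_{\Atom})+1-P_O(A)=1-\Phi_{\Atom}P_O(A).
\]
The uniform specialization follows from \(P_O(A)=|A|/|S_O|\).

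The main point of care is keeping the convention for lost states consistent across (iv) and~(v), so that the case \(A\cap S_O^{-}\neq\varnothing\) is handled by the definition \(\Phi_{\Atom}:=0\) rather than by an undefined \(\pi(a)\).
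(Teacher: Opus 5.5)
Your proposal is correct and follows the paper's proof item by item. It uses the same convex-combination bound for (iii), the same convention $\kappa_{\mathrm{sem}}(s_o\mid s_o):=0$ for lost states, and essentially the same core/non-core estimate for (v), including the trivial case $\Phi_{\Atom}=0$.

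In (iv) your argument is actually the sounder one. The paper asserts $1-\pi(a)\ge 1-\Phi_{\Atom}(\mathfrak C)$ for every core element ``since $\Phi_{\Atom}\le\pi(a)$'', but that inequality runs the wrong way. You instead observe that the maximum is attained at a minimizer of $\pi$, which gives equality when $A\cap S_O^{-}=\varnothing$, and that the maximum equals $1$ otherwise.
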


\begin{proof}
\textup{(i)}\
Each \(\pi(a)\in[0,1]\) and each \(p_{+}(s_o)\in[0,1]\),
so the extrema lie in~\([0,1]\).

\smallskip\noindent
\textup{(ii)}\
Under the stated hypotheses, \(\kappa_{\mathrm{sem}}\) is the identity
kernel on~\(S_O\), so \(\pi(a)=1\) for all~\(a\) and
\(S_O^{+}=\varnothing\), giving \(p_{+}(s_o)=0\) for all~\(s_o\).

\smallskip\noindent
\textup{(iii)}\
\(\bar d_H^{+}(\mathfrak C,P_O)
  =\sum_{s_o}P_O(s_o)\,p_{+}(s_o)
  \le\max_{s_o}p_{+}(s_o)
  =\Psi_{+}(\mathfrak C)\).

\smallskip\noindent
\textup{(iv)}\
For \(a\in\Atom(S_O)\cap\tilde S_O^{\cap}\),
\(\bar d_H(a\mid\mathfrak C)
  =\sum_{\hat s_o\neq a}\kappa_{\mathrm{sem}}(\hat s_o\mid a)
  =1-\pi(a)\ge 1-\Phi_{\Atom}(\mathfrak C)\)
(since \(\Phi_{\Atom}\le\pi(a)\)).
The case \(a\in S_O^{-}\) gives \(\bar d_H(a\mid\mathfrak C)=1
  \ge 1-\Phi_{\Atom}(\mathfrak C)\).
Hence
\(\max_{a}\bar d_H(a\mid\mathfrak C)\ge 1-\Phi_{\Atom}(\mathfrak C)\).

\smallskip\noindent
\textup{(v)}\
For every \(s_o\in S_O\),
\(\sum_{\hat s_o\neq s_o}\kappa_{\mathrm{sem}}(\hat s_o\mid s_o)
  =1-\kappa_{\mathrm{sem}}(s_o\mid s_o)\),
where \(\kappa_{\mathrm{sem}}(s_o\mid s_o):=0\) when
\(s_o\notin\tilde S_O\) (since the kernel maps into
\(\tilde S_O\) and \(s_o\notin\tilde S_O\) forces all mass to
\(\hat s_o\neq s_o\)).
Hence
\begin{align*}
  \bar d_H 
  =&\sum_{s_o\in S_O}P_O(s_o) 
   \bigl(1-\kappa_{\mathrm{sem}}(s_o\mid s_o)\bigr) \\
  =&1-\sum_{s_o\in\tilde S_O^{\cap}}
     P_O(s_o)\,\kappa_{\mathrm{sem}}(s_o\mid s_o).
\end{align*}
If \(A\cap S_O^{-}\neq\varnothing\), then
\(\Phi_{\Atom}=0\) and~\eqref{eq:dH-joint-bound} reduces to
\(\bar d_H\le 1\), which is trivially true.
If \(A\cap S_O^{-}=\varnothing\), then
\(A\subseteq\tilde S_O^{\cap}\) and
\begin{align*}
  \sum_{s_o\in\tilde S_O^{\cap}}
  P_O(s_o)\,\kappa_{\mathrm{sem}}(s_o\mid s_o)
  \;\ge\;
  &\sum_{a\in A}P_O(a)\,\pi(a) \\
  \;\ge\;
  &P_O(A)\cdot\Phi_{\Atom},
\end{align*}
so \(\bar d_H\le 1-P_O(A)\,\Phi_{\Atom}\).

\smallskip\noindent
\textup{(vi)}\
Each \(\pi(a)\) and \(p_{+}(s_o)\) is a finite sum of entries
of~\(\kappa_{\mathrm{sem}}\), which is computable
(Theorem~\ref{thm:computability-invariants}(iii)).
The extrema over the finite sets \(\Atom(S_O)\) and \(S_O\) are
then computable by enumeration.
\end{proof}

\begin{remark}[Summary of noise-pair-based channel quality descriptors]
\label{rem:noise-quality-summary}
The channel quality of a semantic channel \(\mathfrak C\) is now
described by two complementary families of indices.

The \emph{structural indices}
\(\mathsf{F}(\mathfrak C)\) and \(\mathsf{E}(\mathfrak C)\)
(Definitions~\ref{def:fidelity-index}--\ref{def:depth-expansion-index})
measure worst-case closure and depth distortion across all source states.

The \emph{noise-pair indices}
\(\Phi_{\Atom}(\mathfrak C)\) and \(\Psi_{+}(\mathfrak C)\)
(Definitions~\ref{def:prob-core-index}--\ref{def:spurious-index})
measure the channel's fidelity specifically with respect to the
noise-region partition of~\(\tilde S_O\):
\(\Phi_{\Atom}\) captures how well the irredundant core is preserved
at the symbol level, while \(\Psi_{+}\) captures the worst-case
probability of ``hallucinating'' a state outside the sender's
vocabulary.
Together, the four scalar indices
\((\mathsf{F},\mathsf{E},\Phi_{\Atom},\Psi_{+})\) provide a
compact, computable fingerprint of the semantic channel's quality
that is richer than any single classical metric.

When \(\tilde S_O=S_O\) (trivial noise pair),
\(\Phi_{\Atom}\) reduces to the minimum diagonal entry of
\(\kappa_{\mathrm{sem}}\) restricted to core elements, and
\(\Psi_{+}=0\); the noise-pair indices then complement the
structural indices without adding redundancy.
When \(\tilde S_O\neq S_O\), the noise-pair indices capture
vocabulary-mismatch effects that are invisible to the structural
indices alone.
\end{remark}


\begin{definition}[Receiver-side structural comparison indices]
\label{def:receiver-structural}
Let \(\mathfrak C\) be a semantic channel with end-to-end noisy
semantic base \(\tilde S_O\) and source space~\(S_O\).
\begin{enumerate}[label=\textup{(\roman*)}]
  \item The \emph{atomicity shift} is
        \begin{equation}\label{eq:delta-A}
          \Delta\mathsf{A}(\mathfrak C)
          \;:=\;
          \lvert\Atom(\tilde S_O)\rvert
          \;-\;
          \lvert\Atom(S_O)\rvert.
        \end{equation}
  \item The \emph{depth shift} is
        \begin{align}\label{eq:delta-Dd}
          &\Delta\mathsf{D_d}(\mathfrak C) \nonumber \\
          \!:=\!
          &\max_{q\in\tilde S_O}\Dd\bigl(q\!\mid\!\Atom(\tilde S_O)\bigr)
          \!-\!
          \max_{q\in S_O}\Dd\bigl(q\!\mid\!\Atom(S_O)\bigr).
        \end{align}
\end{enumerate}
Both quantities are well-defined, finite, and computable
\textup{(Remark~\ref{rem:noisy-ideal-connection} and
Theorem~\ref{thm:semantic-invariants}(ii))}.
\end{definition}

\begin{proposition}[Properties of structural comparison indices]
\label{prop:structural-comparison}
Let \(\mathfrak C\) be a semantic channel with noise pair
\((S_O^{-},S_O^{+})\).
\begin{enumerate}[label=\textup{(\roman*)}]
  \item \emph{Trivial noise:}\;
        If \(S_O^{-}=S_O^{+}=\varnothing\), then
        \(\Delta\mathsf{A}(\mathfrak C)=0\) and
        \(\Delta\mathsf{D_d}(\mathfrak C)=0\).
  \item \emph{Core-preserving noise:}\;
        If \(A\cap S_O^{-}=\varnothing\) and
        \(S_O^{+}\subseteq\Cn(S_O)\), then
        \(\Cn(\tilde S_O)=\Cn(S_O)\)
        \textup{(Corollary~\ref{cor:e2e-noise-fidelity}(iii))}
        and \(\Atom(S_O)\subseteq\tilde S_O\).
        Consequently,
        \(\lvert\Atom(\tilde S_O)\rvert
          \le\lvert\Atom(S_O)\rvert\),
        i.e., \(\Delta\mathsf{A}(\mathfrak C)\le 0\):
        core-preserving noise can only reduce or maintain the
        receiver's atomicity (additional derivable states may make
        some previously irredundant elements redundant in~\(\tilde S_O\)).
  \item \emph{Loss-only noise (\(S_O^{+}=\varnothing\)):}\;
        \(\tilde S_O\subseteq S_O\), so
        \(\Cn(\tilde S_O)\subseteq\Cn(S_O)\) and
        the receiver's closure can only shrink.
        If some core element is lost
        (\(A\cap S_O^{-}\neq\varnothing\)), then
        \(\Cn(\tilde S_O)\subsetneq\Cn(S_O)\) in general,
        and \(\Delta\mathsf{A}(\mathfrak C)\) may be negative
        (fewer atoms needed to generate a smaller closure).
  \item \emph{Ideal channel with \(\tilde S_O=S_O\):}\;
        \(\Delta\mathsf{A}=0\) and \(\Delta\mathsf{D_d}=0\).
\end{enumerate}
\end{proposition}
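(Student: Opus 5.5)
Parts \textup{(i)} and \textup{(iv)} are bookkeeping. If \(S_O^{-}=S_O^{+}=\varnothing\), then \(\tilde S_O=(S_O\setminus S_O^{-})\cup S_O^{+}=S_O\) as sets. By canonicality (Proposition~\ref{prop:atom-core-correct}\textup{(iii)}), the procedure of Definition~\ref{def:atom-so} run on the same set in the same canonical order returns the same output, so \(\Atom(\tilde S_O)=\Atom(S_O)\). Then both maxima in \eqref{eq:delta-Dd} range over the same set with the same base, and \(\Delta\mathsf A=\Delta\mathsf{D_d}=0\). Part \textup{(iv)} reduces to \textup{(i)}: for an ideal channel with \(\tilde S_O=S_O\), the end-to-end noise pair is trivial (Remark~\ref{rem:ideal-channel-noise}).

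For \textup{(iii)}, \(S_O^{+}=\varnothing\) gives \(\tilde S_O=S_O\setminus S_O^{-}\subseteq S_O\). Monotonicity \textup{(Cn2)} then gives \(\Cn(\tilde S_O)\subseteq\Cn(S_O)\). The remaining clauses are stated only ``in general'' and ``may'', so I would justify them with an example rather than prove them. Take a Datalog base in which some \(a\in A\) is lost and no rule re-derives it; then \(a\in\Cn(S_O)\setminus\Cn(\tilde S_O)\), and the core of the smaller base has fewer atoms.

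For \textup{(ii)}, the first two claims are immediate. The closure equality is Corollary~\ref{cor:e2e-noise-fidelity}\textup{(iii)}. The inclusion follows because \(A\cap S_O^{-}=\varnothing\) and \(A\subseteq S_O\), so \(A\subseteq S_O\setminus S_O^{-}\subseteq\tilde S_O\).

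The inequality \(|\Atom(\tilde S_O)|\le|\Atom(S_O)|\) is the main obstacle, and I expect it to fail for an arbitrary closure operator and canonical order. Here is a counterexample. Let \(S_O=\{a\}\) and \(\tilde S_O=\{a,b,c\}\), with rules \(a\vdash b\), \(a\vdash c\) and \(\{b,c\}\vdash a\). Suppose the canonical order scans \(a\) first. Then \(a\in\Cn(\{b,c\})\) is removed, and the output is \(\{b,c\}\), of size \(2>1\).

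So I would prove the inequality under an added hypothesis on the scan order: every element of \(\tilde S_O\setminus A\) precedes every element of \(A\). Under this order, for each scanned \(s\notin A\), the current set still contains \(A\), and \(s\in\Cn(S_O)=\Cn(A)\subseteq\Cn(\text{current}\setminus\{s\})\) by \textup{(Cn2)} and Proposition~\ref{prop:atom-core-correct}\textup{(i)}. Hence all such \(s\) are removed. Once they are gone, the current set is exactly \(A\), which is irredundant by Proposition~\ref{prop:atom-core-correct}\textup{(ii)}. Every later removal test would ask whether some \(a\in\Cn(A\setminus\{a\})\); irredundancy says no, so nothing more is removed. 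Thus \(\Atom(\tilde S_O)=A\) and \(\Delta\mathsf A=0\le 0\).

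The fallback hypothesis is that \(\Cn\) has the property that all irredundant generating sets of a given closure have the same cardinality, as for plain fact bases. Under it, the inequality follows because \(\Atom(\tilde S_O)\) and \(A\) generate the same closure. I would state the inequality as conditional on one of these hypotheses and record the counterexample as a remark.
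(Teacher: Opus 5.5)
Your arguments for (i), (iii) and (iv) agree in substance with the paper's. On (ii) you are right and the paper is not. Its proof notes that \(A\subseteq\tilde S_O\) generates \(\Cn(\tilde S_O)\). It then asserts that \(\Atom(\tilde S_O)\) ``is a subset of some generating set with \(|\Atom(\tilde S_O)|\le|A|\)''. That step conflates \emph{minimal} (irredundant) with \emph{minimum}: \(\Atom(\tilde S_O)\) is an irredundant subset of \(\tilde S_O\), not of \(A\). The greedy scan of Definition~\ref{def:atom-so} can discard an element of \(A\) early and then be forced to keep several surplus states.

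Your counterexample meets every hypothesis: \(A=\Atom(\{a\})=\{a\}\), \(S_O^{-}=\varnothing\), and \(S_O^{+}=\{b,c\}\subseteq\Cn(\{a\})\). The scan \(a,b,c\) removes \(a\) because \(a\in\Cn(\{b,c\})\). It keeps \(b\) and \(c\) because \(\Cn(\{b\})=\{b\}\) and \(\Cn(\{c\})=\{c\}\). Hence \(\Delta\mathsf{A}=+1\).

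It is also an instance of the paper's own setting. Take \(a=\mathit{P}(e)\), \(b=\mathit{Q}(e)\), \(c=\mathit{R}(e)\) with Datalog rules \(\mathit{Q}(x)\leftarrow\mathit{P}(x)\), \(\mathit{R}(x)\leftarrow\mathit{P}(x)\), \(\mathit{P}(x)\leftarrow\mathit{Q}(x),\mathit{R}(x)\). Axiom~\ref{ax:T-operator} holds, and the alphabetical order used in Section~\ref{subsec:app-example} scans \(\mathit{P}(e)\) first.

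So part (ii) is false as stated. The same defect carries over to Proposition~\ref{prop:overlap-structural}(iii)(b) and to the ``\(\le 0\)'' entry of Table~\ref{tab:invariant-overlap} (sender \(\{a\}\), receiver \(\{a,b,c\}\)). What does hold unconditionally is the bound for a \emph{minimum-cardinality} generating subset of \(\tilde S_O\). That size is at most \(|A|\) precisely because \(A\) is one such subset.

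Both of your repairs are sound. Under your scan-order hypothesis, every element of \(\tilde S_O\setminus A\) lies in \(\Cn(S_O)=\Cn(A)\), since it is either in \(S_O\) or in \(S_O^{+}\subseteq\Cn(S_O)\). Each is tested while \(A\) is still intact, so all are removed, the output is exactly \(A\), and in fact \(\Delta\mathsf{A}=0\). Under the equal-cardinality hypothesis, \(A\) and \(\Atom(\tilde S_O)\) are both irredundant generating sets of \(\Cn(S_O)\), so their sizes agree.

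One small caution in (iii): losing a core element that nothing re-derives does not by itself shrink the core. With \(S_O=\{a,b\}\) and the rule \(a\vdash b\), losing \(a\) leaves the core \(\{b\}\), of the same size. For the ``may be negative'' clause, exhibit a concrete instance instead, e.g.\ two unrelated facts and no rules.
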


\begin{proof}
\textup{(i)}\
\(\tilde S_O=S_O\) gives identical cores and depths.

\smallskip\noindent
\textup{(ii)}\
By Corollary~\ref{cor:e2e-noise-fidelity}(iii),
\(\Cn(\tilde S_O)=\Cn(S_O)\).
Since \(A\subseteq\tilde S_O\) and \(\Cn(A)=\Cn(S_O)=\Cn(\tilde S_O)\),
\(A\) is a generating set for \(\Cn(\tilde S_O)\); the irredundant core
\(\Atom(\tilde S_O)\) is a subset of some generating set with
\(|\Atom(\tilde S_O)|\le|A|\).
(Formally: let \(d\in S_{+,d}^{ij}\) and suppose
\(a\in\Cn\bigl((S_O^{(j)}\setminus\{a\})\bigr)\); this can
occur when \(d\) together with other elements of~\(S_O^{(j)}\)
provides an alternative derivation of~\(a\) that bypasses the
direct core-level proof.
The irredundantization of~\(S_O^{(j)}\) may then remove~\(a\),
yielding \(|\Atom(S_O^{(j)})|<|A^{(i)}|\).)

\smallskip\noindent
\textup{(iii)}\
\(\tilde S_O\subseteq S_O\) gives
\(\Cn(\tilde S_O)\subseteq\Cn(S_O)\) by monotonicity~\textup{(Cn2)}.
The strict inclusion when a core element is lost follows from the
irredundancy of~\(\Atom(S_O)\): if \(a\in A\cap S_O^{-}\), then
\(a\notin\Cn(A\setminus\{a\})\supseteq\Cn(\tilde S_O\cap A)\),
so \(a\notin\Cn(\tilde S_O)\) in general.

\smallskip\noindent
\textup{(iv)}\
Immediate from~\textup{(i)}.
\end{proof}

\begin{remark}[Interpretation of structural shifts]
\label{rem:structural-shift-interp}
The atomicity shift \(\Delta\mathsf{A}\) measures whether the channel
makes the knowledge base \emph{more or less compressed} in the
deductive sense: \(\Delta\mathsf{A}<0\) means the receiver's core is
smaller than the sender's (the noise has introduced derivable
redundancy), while \(\Delta\mathsf{A}>0\) means the receiver has
gained genuinely new irredundant content.
The depth shift \(\Delta\mathsf{D_d}\) measures whether the
receiver's deepest inference chain has grown or shrunk.
Together with the per-pair depth expansion~\(\mathsf{E}(\mathfrak C)\),
these indices give a multi-scale view of the channel's effect on
inferential structure: \(\mathsf{E}\) is a probabilistic worst-case
over individual states, while \(\Delta\mathsf{D_d}\) is a
deterministic comparison of global maximum depths.
\end{remark}


\begin{corollary}[Irredundant source with trivial noise pair]
\label{cor:irredundant-classical}
Suppose \(\Atom(S_O)=S_O\) \textup{(}i.e., \(S_O\) is
irredundant\textup{)}, \(\tilde S_O=S_O\)
\textup{(}trivial noise pair: \(S_O^{-}=S_O^{+}=\varnothing\)\textup{)},
and the enabling maps satisfy the full enabling conditions of
Theorem~\textup{\ref{thm:data-processing}(iii)(a)--(b)} with
\(|S_O|\ge|S_C|\) and \(|\hat S_O|\ge|\hat S_C|\)
\textup{(}the latter equivalent to \(|S_O|\ge|\hat S_C|\)
since \(\hat S_O=\tilde S_O=S_O\)\textup{)}.
Then:
\begin{enumerate}[label=\textup{(\roman*)}]
  \item \(C_{\mathrm{sem}}(W)=C(W)\).
  \item \(\mathsf{D_d}(\mathcal I)=0\)
        \textup{(Theorem~\ref{thm:semantic-invariants}(iv))}, so
        \(d_{\Dd}(s_o,\hat s_o)=0\) for every reachable pair.
  \item The composite semantic distortion reduces to
        \(d_{\mathrm{sem}}=\alpha\,d_H+\beta\,d_{\Cn}\)
        for every reachable pair.
  \item \(\Psi_{+}(\mathfrak C)=0\) and
        \(\Phi_{\Atom}(\mathfrak C)
          =\min_{s_o\in S_O}\kappa_{\mathrm{sem}}(s_o\mid s_o)\).
\end{enumerate}
\end{corollary}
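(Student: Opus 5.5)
The plan is to verify the four items one at a time, each by reduction to an earlier result; no new machinery is needed.

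\textbf{Item (i).} Apply Theorem~\ref{thm:data-processing}(iii) directly. Its hypotheses (a)--(b) are assumed verbatim. For (c) we have \(|S_O|\ge|S_C|\) by assumption, and \(|\hat S_O|\ge|\hat S_C|\) also holds. Note that \(\hat S_O=\tilde S_O=S_O\) by Remark~\ref{rem:tildeSO-convention} and the trivial noise pair, so the latter condition is the same as \(|S_O|\ge|\hat S_C|\). Hence \(C_{\mathrm{sem}}(W)=C(W)\). Irredundancy plays no role in this item.

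\textbf{Item (ii).} Theorem~\ref{thm:semantic-invariants}(iv) gives \(\mathsf{D_d}(\mathcal I)=0\) because \(\Atom(S_O)=S_O\). That theorem is stated for ideal models, but its proof of (iv) uses only Lemma~\ref{lem:depth-properties}(ii), so it applies here. Now take a reachable pair \((s_o,\hat s_o)\). Then \(\hat s_o\in\tilde S_O=S_O=A\subseteq\Cn(A)\), so the first branch of \eqref{eq:d-Dd} applies, as in Proposition~\ref{prop:noise-distortion-bounds}(ii). Both \(s_o\) and \(\hat s_o\) lie in \(A\), so Lemma~\ref{lem:depth-properties}(ii) gives \(\Dd(s_o\mid A)=\Dd(\hat s_o\mid A)=0\). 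The numerator vanishes, so \(d_{\Dd}(s_o,\hat s_o)=0\). The normalizer \(\max(d_{\max},1)=1\) keeps the quotient well defined.

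\textbf{Item (iii).} Substitute \(d_{\Dd}=0\) from (ii) into \eqref{eq:d-sem}; the \(\gamma\) term drops out on every reachable pair.

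\textbf{Item (iv).} Because \(S_O^{+}=\varnothing\), the maximand in \eqref{eq:spurious-index} is an empty sum for every \(s_o\), so \(\Psi_{+}(\mathfrak C)=0\). Because \(S_O^{-}=\varnothing\), we have \(A\cap S_O^{-}=\varnothing\), so the first clause of Definition~\ref{def:prob-core-index} is in force. Since \(A=S_O\), it gives \(\Phi_{\Atom}(\mathfrak C)=\min_{s_o\in S_O}\kappa_{\mathrm{sem}}(s_o\mid s_o)\). This is well defined because each \(s_o\in\tilde S_O\).

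The only step needing care is in (ii). Theorem~\ref{thm:semantic-invariants} is phrased for ideal models, so I would note explicitly that its part (iv) is a purely set-theoretic fact about \(S_O\) and its core. I would also stress that reachability confines \(\hat s_o\) to \(S_O\), which is exactly what the trivial noise pair supplies; without it, \(\hat s_o\notin\Cn(A)\) would give \(d_{\Dd}=1\). Everything else is direct substitution.
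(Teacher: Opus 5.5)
Your proposal is correct and follows the paper's proof. Item (i) is Theorem~\ref{thm:data-processing}(iii) applied directly. Items (ii)--(iii) come from Theorem~\ref{thm:semantic-invariants}(iv), and item (iv) is read off the definitions of $\Psi_{+}$ and $\Phi_{\Atom}$ using $S_O^{+}=S_O^{-}=\varnothing$ and $\Atom(S_O)=S_O$. You also make explicit two details the paper leaves implicit: part (iv) of Theorem~\ref{thm:semantic-invariants} does not use ideality, and $d_{\Dd}=0$ needs $\hat s_o\in S_O=\Atom(S_O)$, not just $\mathsf{D_d}(\mathcal I)=0$.
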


\begin{proof}
Parts~\textup{(i)}--\textup{(iii)} follow from the same reasoning as
before \textup{(}Theorem~\ref{thm:data-processing}(iii) and
Theorem~\ref{thm:semantic-invariants}(iv)\textup{)}.
For~\textup{(iv)}: \(S_O^{+}=\varnothing\) gives
\(\Psi_{+}=0\), and \(\Atom(S_O)=S_O\) with
\(\tilde S_O=S_O\) gives
\(\Phi_{\Atom}=\min_{a\in S_O}\pi(a)
  =\min_{s_o\in S_O}\kappa_{\mathrm{sem}}(s_o\mid s_o)\).
\end{proof}

\begin{proposition}[Structural upper bounds on semantic channel capacity]
\label{prop:structural-bound}
Let \(W:S_C\rightsquigarrow\hat S_C\) be a carrier channel kernel.
\begin{enumerate}[label=\textup{(\roman*)}]
  \item \emph{Combined bound:}\;
        \(C_{\mathrm{sem}}(W)
          \le\min\bigl(C(W),\;\log|S_O|,\;\log|\hat S_O|\bigr)\).
  \item \emph{Enabling-constrained bound:}\;
        If the encoding enabling map satisfies
        \(|\mathcal E(s_o)|\le k\) for all \(s_o\in S_O\), then
        \[
          C_{\mathrm{sem}}(W)
          \;\le\;
          \log\Bigl\lvert\bigcup_{s_o\in S_O}\mathcal E(s_o)\Bigr\rvert
          \;\le\;
          \log\bigl(|S_O|\cdot k\bigr).
        \]
  \item \emph{Core entropy bound:}\;
        For any semantic channel \(\mathfrak C\) and any
        \(P_O\in\Delta(S_O)\),
        \begin{equation}\label{eq:core-bound}
          I_{\mathrm{sem}}(P_O,\mathfrak C)
          \;\le\;
          H(\mathsf{S}_o)
          \;\le\;
          \log|S_O|.
        \end{equation}
\end{enumerate}
\end{proposition}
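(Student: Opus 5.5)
The three bounds follow from standard mutual-information inequalities, applied uniformly to every admissible triple $(P_O,\kappa_{\enc},D)$ in the maximization~\eqref{eq:C-sem}. Part~(iii) is immediate: $I(\mathsf S_o;\hat{\mathsf S}_o)=H(\mathsf S_o)-H(\mathsf S_o\mid\hat{\mathsf S}_o)\le H(\mathsf S_o)$ because conditional entropy is nonnegative. The entropy of a distribution on a finite set is at most the log of its cardinality, so $H(\mathsf S_o)\le\log|S_O|$.

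For part~(i), Theorem~\ref{thm:data-processing}(i)--(ii) already gives $C_{\mathrm{sem}}(W)\le\min(C(W),\log|S_O|)$. It remains to add the reconstruction-side term. Use the symmetric form $I(\mathsf S_o;\hat{\mathsf S}_o)=H(\hat{\mathsf S}_o)-H(\hat{\mathsf S}_o\mid\mathsf S_o)\le H(\hat{\mathsf S}_o)\le\log|\hat S_O|$, which holds because $\hat{\mathsf S}_o$ takes values in the finite set $\hat S_O$. Since this holds for every feasible triple, it also holds for the maximum, and taking the minimum of the three bounds gives~(i).

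For part~(ii), bound the information by the entropy of the transmitted carrier state. Fix an admissible triple. By the enabling support constraint~\eqref{eq:enabling-support}, $\mathsf S_c$ is almost surely in $U:=\bigcup_{s_o}\mathcal E(s_o)$; this is the support inclusion~\eqref{eq:support-joint} of Proposition~\ref{prop:induced-dist}. The Markov chain $\mathsf S_o\to\mathsf S_c\to\hat{\mathsf S}_c\to\hat{\mathsf S}_o$ of Remark~\ref{rem:end-to-end-dist} and the data processing inequality then give
\[
I(\mathsf S_o;\hat{\mathsf S}_o)\le I(\mathsf S_o;\mathsf S_c)\le H(\mathsf S_c)\le\log|\supp(P_C)|\le\log|U|.
\]
Maximizing over feasible triples yields the first inequality of~(ii). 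The second inequality is counting: $|U|\le\sum_{s_o}|\mathcal E(s_o)|\le|S_O|\,k$, and $\log$ is monotone.

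No step is a real obstacle. The only point needing care is in~(ii): the data processing step must use the encoder output $\mathsf S_c$, not the channel output $\hat{\mathsf S}_c$, because only $\mathsf S_c$ is confined to $U$ by the enabling support constraint. Coverage~(E2) would in fact make $U=S_C$, but the argument does not rely on this.
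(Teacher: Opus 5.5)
Your proposal is correct and follows the paper's proof essentially step for step. Part (iii) and the $\log|S_O|$, $\log|\hat S_O|$ terms of (i) come from $I\le\min\bigl(H(\mathsf S_o),H(\hat{\mathsf S}_o)\bigr)$, and (ii) bounds the information by $H(\mathsf S_c)$ via the enabling support constraint; the only cosmetic difference is that you route the data processing step through $I(\mathsf S_o;\mathsf S_c)$, whereas the paper uses $I(\mathsf S_c;\hat{\mathsf S}_c)$, and both are immediately bounded by $H(\mathsf S_c)$.
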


\begin{proof}
\textup{(i)}\
The bound \(C_{\mathrm{sem}}(W)\le C(W)\) is
Theorem~\ref{thm:data-processing}\textup{(i)}.
The bound \(C_{\mathrm{sem}}(W)\le\log|S_O|\) follows from
\(I(\mathsf{S}_o;\hat{\mathsf{S}}_o)\le H(\mathsf{S}_o)\le\log|S_O|\),
and \(C_{\mathrm{sem}}(W)\le\log|\hat S_O|\) from
\(I(\mathsf{S}_o;\hat{\mathsf{S}}_o)\le H(\hat{\mathsf{S}}_o)
  \le\log|\hat S_O|\).

\smallskip\noindent
\textup{(ii)}\
By the enabling support constraint~\eqref{eq:enabling-support},
\(\supp(P_C)\subseteq\bigcup_{s_o}\mathcal E(s_o)\), so
\(H(\mathsf{S}_c)\le\log|\bigcup_{s_o}\mathcal E(s_o)|\).
Data processing gives
\(I(\mathsf{S}_o;\hat{\mathsf{S}}_o)
  \le I(\mathsf{S}_c;\hat{\mathsf{S}}_c)
  \le H(\mathsf{S}_c)\).
The second inequality uses
\(|\bigcup_{s_o}\mathcal E(s_o)|\le|S_O|\cdot k\).

\smallskip\noindent
\textup{(iii)}\
\(I_{\mathrm{sem}}(P_O,\mathfrak C)
  =I(\mathsf{S}_o;\hat{\mathsf{S}}_o)
  \le H(\mathsf{S}_o)\le\log|S_O|\).
\end{proof}


\begin{theorem}[Semantic Fano bound via noise-pair indices]
\label{thm:semantic-fano}
Let \((S_O,P_O)\) be a full-support semantic source,
\(\mathfrak C\) a semantic channel with end-to-end noise pair
\((S_O^{-},S_O^{+})\), and let
\(\epsilon:=\bar d_H(\mathfrak C,P_O)\).
Write \(h_b(p):=-p\log p-(1{-}p)\log(1{-}p)\) for the binary entropy
\textup{(}base~\(2\)\textup{)}.
\begin{enumerate}[label=\textup{(\roman*)}]
  \item \emph{Lower bound on \(I_{\mathrm{sem}}\):}\;
        \begin{equation}\label{eq:semantic-fano-lower}
          I_{\mathrm{sem}}(P_O,\mathfrak C)
          \;\ge\;
          H(\mathsf{S}_o)
          -h_b(\epsilon)
          -\epsilon\log\bigl(|\tilde S_O|-1\bigr).
        \end{equation}
  \item \emph{Noise-pair substitution:}\;
        By Proposition~\textup{\ref{prop:noise-pair-indices}(v)},
        \(\epsilon\le 1-\Phi_{\Atom}(\mathfrak C)\cdot P_O(A)\).
        Hence, when \(\Phi_{\Atom}\) is close to~\(1\) and
        \(P_O(A)\) is bounded away from~\(0\),
        the lower bound~\eqref{eq:semantic-fano-lower} is close to
        \(H(\mathsf S_o)\), i.e., the channel transmits nearly all
        source entropy.
  \item \emph{Ideal channel:}\;
        If \(\tilde S_O=S_O\),
        \(\tau_{\mathrm{e2e}}=\mathrm{id}_{S_O}\), then
        \(\epsilon=0\) and
        \(I_{\mathrm{sem}}=H(\mathsf S_o)\).
\end{enumerate}
\end{theorem}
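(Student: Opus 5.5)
The plan is to prove part (i) by applying the classical Fano inequality to the estimation problem in which the reconstruction $\hat{\mathsf S}_o$ is used directly as an estimate of $\mathsf S_o$. The first step is to make the error event precise. Both $S_O$ and $\tilde S_O=\hat S_O$ sit in $\mathbb S_O$, so equality of states is well defined. The event $E:=\mathbf 1[\hat{\mathsf S}_o\neq\mathsf S_o]$ then has probability $\Pr(E=1)=\bar d_H(\mathfrak C,P_O)=\epsilon$ by Definition~\ref{def:expected-distortion} with $d=d_H$.

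The core of (i) is to bound $H(\mathsf S_o\mid\hat{\mathsf S}_o)$, and I will do this by hand rather than quoting Fano as a black box, because the estimate takes values in $\tilde S_O$ rather than $S_O$. Expand by the chain rule:
\[
H(\mathsf S_o\mid\hat{\mathsf S}_o)\le H(E)+H(\mathsf S_o\mid\hat{\mathsf S}_o,E).
\]
For the last term, split on $E$. When $E=0$, $\mathsf S_o=\hat{\mathsf S}_o$, so the contribution is $0$. When $E=1$ and $\hat{\mathsf S}_o=\hat s$ is fixed, $\mathsf S_o$ ranges over $S_O\setminus\{\hat s\}$. This set has at most $|S_O|-1$ elements when $\hat s\in S_O$, and at most $|S_O|$ elements when $\hat s\in S_O^{+}$.

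The main obstacle is matching the alphabet size in the stated bound, which uses $|\tilde S_O|-1$ rather than $|S_O|-1$. My first attempt will be to argue that the relevant cardinality is that of the estimator's alphabet, in the standard form of Fano with estimator alphabet $\mathcal X$ as in \cite{cover2006elements}. If that fails, I will handle the two cases separately. When $|\tilde S_O|\ge|S_O|$, the inequality $\log(|S_O|-1)\le\log(|\tilde S_O|-1)$ finishes the argument directly. When $|\tilde S_O|<|S_O|$, the natural bound is $\log(|S_O|-1)$, and I will flag that the statement may require the alphabet-size term to be $\max(|S_O|,|\tilde S_O|)$ (or $|S_O|$). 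I would state the proof under the hypothesis that the conditional support has at most $|\tilde S_O|-1$ alternatives, for instance when $S_O\subseteq\tilde S_O$, and note this caveat explicitly.

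With that term in hand, $H(E)=h_b(\epsilon)$ gives the standard Fano form. Then $I_{\mathrm{sem}}=H(\mathsf S_o)-H(\mathsf S_o\mid\hat{\mathsf S}_o)$ from Definition~\ref{def:entropy-mi} yields \eqref{eq:semantic-fano-lower}.

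Part (ii) is a direct substitution of Proposition~\ref{prop:noise-pair-indices}(v). The remaining point is that the right-hand side of \eqref{eq:semantic-fano-lower} is nonincreasing in $\epsilon$ on $[0,1-1/|\tilde S_O|]$. On that range, replacing $\epsilon$ by the upper bound $1-\Phi_{\Atom}P_O(A)$ keeps the inequality valid, and as $\Phi_{\Atom}P_O(A)\to1$ both $h_b$ and the linear term vanish by continuity.

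Part (iii) follows from Proposition~\ref{prop:zero-distortion}(iii): the kernel is the identity, so $\epsilon=0$ and $H(\mathsf S_o\mid\hat{\mathsf S}_o)=0$, hence $I_{\mathrm{sem}}=H(\mathsf S_o)$. This is consistent with the upper bound in Proposition~\ref{prop:structural-bound}(iii).
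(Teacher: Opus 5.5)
Your route is the paper's route: Fano on $P_{O\hat O}$ for (i), substitution of Proposition~\ref{prop:noise-pair-indices}(v) for (ii), and the identity kernel for (iii). The paper's whole argument for (i) is your ``first attempt'': it quotes Fano with $\log(|\tilde S_O|-1)$. You are right to distrust this, and you should drop that attempt altogether. The $|\mathcal X|-1$ refinement in \cite{cover2006elements} counts alternatives in the alphabet of the \emph{estimated} variable and assumes the estimate lies in that same alphabet. Here $\mathsf S_o\in S_O$ but $\hat{\mathsf S}_o\in\tilde S_O$, so (i) as stated is not valid in general.

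Your fallback case split has a false step. You claim that $|\tilde S_O|\ge|S_O|$ ``finishes the argument directly'', but that uses $|S_O|-1$ alternatives. You yourself noted that when $\hat s\in S_O^{+}$ there are $|S_O|$ alternatives. A counterexample:
\begin{itemize}
\item Take $S_O=\{s_1,s_2\}$ with $P_O$ uniform, and $\tilde S_O=\{s_1,t\}$ with $t\in S_O^{+}$.
\item Use a kernel, admissible under full decoding enabling, that sends both inputs to $t$.
\item Then $|\tilde S_O|=|S_O|$, $\epsilon=1$, $h_b(1)=0$ and $\log(|\tilde S_O|-1)=0$.
\item So the right side of \eqref{eq:semantic-fano-lower} is $1$ bit, while $I_{\mathrm{sem}}=0$.
\end{itemize}
The correct count is this: given $E=1$ and $\hat{\mathsf S}_o=\hat s$, there are $|S_O\setminus\{\hat s\}|\le|S_O\cup\tilde S_O|-1=|S_O|+|S_O^{+}|-1$ alternatives. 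Fano over the common alphabet $S_O\cup\tilde S_O$ therefore gives the always-valid bound
\[
I_{\mathrm{sem}}(P_O,\mathfrak C)\;\ge\;H(\mathsf S_o)-h_b(\epsilon)-\epsilon\log\bigl(|S_O|+|S_O^{+}|-1\bigr).
\]
This coincides with the stated bound exactly when $S_O^{-}=\varnothing$, which is your ``for instance'' condition. The stated form also holds when $|\tilde S_O|>|S_O|$, but not when $|\tilde S_O|=|S_O|$ and $S_O^{+}\neq\varnothing$. Putting $\max(|S_O|,|\tilde S_O|)$ inside $\log(\cdot-1)$ is not a valid repair (the same example breaks it); $\log|S_O|$ without the $-1$ is.

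On (ii), your remark that the right-hand side is nonincreasing in $\epsilon$ only on $[0,1-1/|\tilde S_O|]$ is genuinely needed. The paper's ``direct substitution'' omits it. However, you silently replaced the stated hypothesis by $\Phi_{\Atom}P_O(A)\to1$. You should say explicitly that the statement's hypothesis ($\Phi_{\Atom}\approx1$ with $P_O(A)$ merely bounded away from $0$) does not suffice, and that the paper's claim ``$\epsilon\to0$'' under it is wrong. A counterexample:
\begin{itemize}
\item Take $\tilde S_O=S_O$, $|A|=|J|=2$ and $P_O$ uniform.
\item Use a kernel that is the identity on $A$ and sends all of $J$ to one core element $a_0$.
\item Then $\Phi_{\Atom}=1$, $P_O(A)=1/2$ and $\epsilon=1/2$.
\item Also $I_{\mathrm{sem}}=h_b(1/4)\approx0.81$ bits, against $H(\mathsf S_o)=2$ bits.
\end{itemize}
What is needed is $P_O(A)\to1$ as well. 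Part (iii) is fine and matches the paper.
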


\begin{proof}
\textup{(i)}\
The standard Fano inequality
\cite{cover2006elements} applied to the joint distribution
\(P_{O\hat O}\) gives
\[
  H(\mathsf{S}_o\mid\hat{\mathsf{S}}_o)
  \;\le\;
  h_b\!\bigl(\Pr[\mathsf{S}_o\neq\hat{\mathsf{S}}_o]\bigr)
  +\Pr[\mathsf{S}_o\neq\hat{\mathsf{S}}_o]
   \log\bigl(|\tilde S_O|-1\bigr).
\]
Since \(\Pr[\mathsf{S}_o\neq\hat{\mathsf{S}}_o]=\epsilon\) and
\(I_{\mathrm{sem}}=H(\mathsf{S}_o)-H(\mathsf{S}_o\mid\hat{\mathsf{S}}_o)\),
inequality~\eqref{eq:semantic-fano-lower} follows.

\smallskip\noindent
\textup{(ii)}\
Direct substitution of
\(\epsilon\le 1-\Phi_{\Atom}\cdot P_O(A)\)
into~\eqref{eq:semantic-fano-lower}.
As \(\Phi_{\Atom}\to 1\) with \(P_O(A)\) bounded below by some
\(p_{\min}>0\), one has \(\epsilon\to 0\) and the RHS tends to
\(H(\mathsf{S}_o)\).

\smallskip\noindent
\textup{(iii)}\
Under the stated conditions, \(\epsilon=0\)
\textup{(Proposition~\ref{prop:zero-distortion}(iii))}, so
\(h_b(0)=0\) and the lower bound equals~\(H(\mathsf{S}_o)\).
Since \(I_{\mathrm{sem}}\le H(\mathsf{S}_o)\) always, equality holds.
\end{proof}

\begin{remark}[Role of the semantic Fano bound]
\label{rem:semantic-fano-role}
Theorem~\ref{thm:semantic-fano} closes the gap between the
noise-pair indices and the information-theoretic invariants:
a high probabilistic core preservation index
\(\Phi_{\Atom}\approx 1\) and a low spurious index
\(\Psi_{+}\approx 0\) force \(\bar d_H\) to be small, which in
turn forces \(I_{\mathrm{sem}}\) to be close to the source entropy.
This quantifies the intuition that faithful transmission of the
irredundant core is both necessary and (approximately) sufficient
for high mutual information.
Conversely, a low \(\Phi_{\Atom}\) permits large \(\bar d_H\) and
thereby allows \(I_{\mathrm{sem}}\) to be small.
The bound is tight in the ideal-channel limit
\textup{(part (iii))}.
\end{remark}

\begin{remark}[Semantic compression advantage and effective source entropy]
\label{rem:semantic-compression}
The formal bound \(I_{\mathrm{sem}}\le\log|S_O|\) treats all semantic
states as equally costly to transmit.
Under perfect closure fidelity
\(\mathsf{F}(\mathfrak C)=1\)
\textup{(Definition~\ref{def:fidelity-index})}, the situation improves
qualitatively.

Let \(A=\Atom(S_O)\) and \(J=S_O\setminus A\).
For any \emph{redundant} state \(j\in J\) and any
\(\hat s_o\in\Cn(S_O)\cap\hat S_O\), Remark~\ref{rem:closure-distortion-content}\textup{(a)}
gives \(d_{\Cn}(j,\hat s_o\mid S_O)=0\), because
\(\Cn(S_O\setminus\{j\})=\Cn(S_O)\) (by definition of redundancy) implies
\(\Cn\bigl((S_O\setminus\{j\})\cup\{\hat s_o\}\bigr)=\Cn(S_O)\)
whenever \(\hat s_o\in\Cn(S_O)\).
Hence the channel need not distinguish among the \(|J|\) redundant states;
they can all be mapped to a single acceptable output.
Only the \(|A|=|\Atom(S_O)|\) irredundant core elements must be faithfully
transmitted.

If the decoder can compute \(\Cn(\cdot)\) (e.g., by iterating
\(T_{\mathsf{PS}}\)), it can reconstruct the full knowledge base \(S_O\)
from a faithful copy of~\(A\) alone.
The effective source entropy for closure-reliable transmission therefore
reduces to at most \(\log|\Atom(S_O)|\) bits, yielding a
\emph{deductive compression gain} of
\(\log(|S_O|/|\Atom(S_O)|)\) bits.
This gain is zero precisely when \(\Atom(S_O)=S_O\)
\textup{(Corollary~\ref{cor:irredundant-classical})}, recovering the
classical setting.
The formal achievability result exploiting this gain is
Theorem~\ref{thm:achievability} in Section~\ref{subsec:coding}.
\end{remark}


\begin{theorem}[Summary of semantic channel invariants and their
  relationships]
\label{thm:invariant-summary}
Let \(\mathfrak C\) be a semantic channel
\textup{(Definition~\ref{def:semantic-channel})} with source
space~\(S_O\), reconstructed space
\(\tilde S_O=\hat S_O\subseteq\mathbb{S}_O\),
end-to-end kernel
\(\kappa_{\mathrm{sem}}:S_O\rightsquigarrow\tilde S_O\),
carrier channel kernel~\(W\), and end-to-end noise pair
\((S_O^{-},S_O^{+})\).
Let \(A=\Atom(S_O)\) and let \(P_O\in\Delta(S_O)\) be
full-support.
Under the standing assumptions of Section~\textup{\ref{sec:model}}
and Axiom~\textup{\ref{ax:T-operator}}, the following
invariants are all well-defined, finite, and computable:

\medskip
\noindent\emph{I.\;Source-side structural invariants}
\textup{(Section~\ref{subsec:atomic-derivation}):}
\(\mathsf{A}(\mathcal I)=|A|\),\;
\(\mathsf{D_d}(\mathcal I)=\max_{q\in S_O}\Dd(q\mid A)\).

\medskip
\noindent\emph{II.\;Set-level fidelity invariants}
\textup{(Definitions~\ref{def:closure-fidelity}--\ref{def:atom-preservation},
Proposition~\ref{prop:e2e-carrier-rep}):}
\(\rho_{\Atom}(S_O,\tilde S_O)\),\;
\(\mathsf{F}_{\Cn}(S_O,\tilde S_O)\).

\medskip
\noindent\emph{III.\;Noise-pair probabilistic indices}
\textup{(Definitions~\ref{def:prob-core-index}--\ref{def:spurious-index}):}
\(\Phi_{\Atom}(\mathfrak C)\),\;
\(\Psi_{+}(\mathfrak C)\).

\medskip
\noindent\emph{IV.\;Structural quality indices}
\textup{(Definitions~\ref{def:fidelity-index}--\ref{def:depth-expansion-index}):}
\(\mathsf{F}(\mathfrak C)\),\;
\(\mathsf{E}(\mathfrak C)\).

\medskip
\noindent\emph{V.\;Receiver-side structural comparison}
\textup{(Definition~\ref{def:receiver-structural}):}
\(\Delta\mathsf{A}(\mathfrak C)\),\;
\(\Delta\mathsf{D_d}(\mathfrak C)\).

\medskip
\noindent\emph{VI.\;Information-theoretic invariants}
\textup{(Definitions~\ref{def:semantic-mi}--\ref{def:semantic-cap}):}
\(I_{\mathrm{sem}}(P_O,\mathfrak C)\),\;
\(C_{\mathrm{sem}}(W)\),\;
\(C(W)\).

\medskip
\noindent The key inter-family relationships are:
\begin{enumerate}[label=\textup{(\alph*)}]
  \item \emph{Data processing chain}
        \textup{(Theorem~\ref{thm:data-processing}):}\;
        \(I_{\mathrm{sem}}\le C_{\mathrm{sem}}(W)\le C(W)
          \le\log\min(|S_C|,|\hat S_C|)\).
  \item \emph{Distortion bounds via structural indices}
        \textup{(Proposition~\ref{prop:structural-indices}(vi)):}\;
        \(\bar d_{\Cn}\le 1-\mathsf{F}\) and
        \(\bar d_{\Dd}\le\mathsf{E}\).
  \item \emph{Hamming bound via noise-pair index}
        \textup{(Proposition~\ref{prop:noise-pair-indices}(v)):}\;
        \(\bar d_H\le 1-\Phi_{\Atom}\cdot P_O(A)\).
  \item \emph{Semantic Fano bound}
        \textup{(Theorem~\ref{thm:semantic-fano}):}\;
        \(I_{\mathrm{sem}}\ge H(\mathsf S_o)
          -h_b(1{-}\Phi_{\Atom}P_O(A))
          -(1{-}\Phi_{\Atom}P_O(A))\log(|\tilde S_O|{-}1)\).
  \item \emph{Core concentration of fidelity}
        \textup{(Corollary~\ref{cor:fidelity-core-concentration}):}\;
        under core-preserving noise,
        \(\mathsf{F}=1-\max_{a\in A}\bar d_{\Cn}(a\mid\mathfrak C)\).
  \item \emph{Closure fidelity under core preservation}
        \textup{(Corollary~\ref{cor:e2e-noise-fidelity}(iii)):}\;
        \(A\cap S_O^{-}=\varnothing\) and
        \(S_O^{+}\subseteq\Cn(S_O)\) imply
        \(\mathsf{F}_{\Cn}(S_O,\tilde S_O)=1\).
  \item \emph{Ideal collapse:}\;
        when \(\tilde S_O=S_O\) and
        \(\tau_{\mathrm{e2e}}=\mathrm{id}_{S_O}\), all
        distortion-type invariants vanish and all
        fidelity/preservation invariants are maximal:
        \(\mathsf{F}=\mathsf{F}_{\Cn}=\rho_{\Atom}
          =\Phi_{\Atom}=1\),\;
        \(\mathsf{E}=\Psi_{+}=\Delta\mathsf{A}
          =\Delta\mathsf{D_d}=0\),\;
        \(I_{\mathrm{sem}}=H(\mathsf S_o)\).
\end{enumerate}
\end{theorem}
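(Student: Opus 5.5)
The theorem is a consolidation result, so the proof is an assembly. I will treat the two halves separately: first the well-definedness, finiteness and computability of the six families, then the relationships (a)--(g), citing the earlier result named in each item and adding small bridging arguments where the summary says slightly more than the cited result.

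For the first half I go family by family. Family I is Theorem~\ref{thm:semantic-invariants}(i)--(ii), equivalently Theorem~\ref{thm:computability-invariants}(iv). For Family II, \(\Cn(S_O)\) and \(\Cn(\tilde S_O)\) are finite and computable by Remark~\ref{rem:T-operator-context}(a) and iteration of \(T_{\mathsf{PS}}\). Since \(A\) is computable, \(\rho_{\Atom}\) and the Jaccard ratio \(\mathsf F_{\Cn}\) are finite computable rationals; the \(0/0\) conventions cover the degenerate cases. Family III is Proposition~\ref{prop:noise-pair-indices}(vi), Family IV is Proposition~\ref{prop:structural-indices}(v), Family V is Definition~\ref{def:receiver-structural} via Remark~\ref{rem:noisy-ideal-connection}, and Family VI is Theorem~\ref{thm:computability-invariants}(i)--(iii).

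For the relationships I take the items in order.
\begin{itemize}
\item[(a)] \(I_{\mathrm{sem}}(P_O,\mathfrak C)\le C_{\mathrm{sem}}(W)\) holds because the given triple \((P_O,\kappa_{\enc},D)\) is feasible in~\eqref{eq:C-sem}. The inequality \(C_{\mathrm{sem}}\le C(W)\) is Theorem~\ref{thm:data-processing}(i), and the last bound is the one stated in Definition~\ref{def:shannon-cap}.
\item[(b)] This is Proposition~\ref{prop:structural-indices}(vi).
\item[(c)] This is Proposition~\ref{prop:noise-pair-indices}(v).
\item[(e)] This is Corollary~\ref{cor:fidelity-core-concentration}.
\item[(f)] This is Corollary~\ref{cor:e2e-noise-fidelity}(iii).
\item[(d)] This needs a monotonicity argument in addition to Theorem~\ref{thm:semantic-fano}(i)--(ii). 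The map \(g(\epsilon)=h_b(\epsilon)+\epsilon\log(|\tilde S_O|-1)\) is nondecreasing on \([0,1-1/|\tilde S_O|]\), so substituting the upper bound \(\epsilon\le \epsilon^\ast:=1-\Phi_{\Atom}P_O(A)\) preserves the inequality as long as \(\epsilon^\ast\) stays in that range. Outside that range I will fall back on the bound holding trivially: there the right-hand side is at most \(H(\mathsf S_o)-\log|\tilde S_O|+\ldots\), and I would check this against \(I_{\mathrm{sem}}\ge 0\), or else state (d) under the proviso \(\epsilon^\ast\le 1-1/|\tilde S_O|\).
\end{itemize}

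For (g), assume \(\tilde S_O=S_O\) and \(\tau_{\mathrm{e2e}}=\mathrm{id}_{S_O}\), so \(\kappa_{\mathrm{sem}}\) is the identity kernel and the noise pair is trivial. The individual values then come from earlier results:
\begin{itemize}
\item \(\mathsf F=1\) and \(\mathsf E=0\) by Proposition~\ref{prop:structural-indices}(iv);
\item \(\Phi_{\Atom}=1\) and \(\Psi_+=0\) by Proposition~\ref{prop:noise-pair-indices}(ii);
\item \(\rho_{\Atom}=\mathsf F_{\Cn}=1\) by Corollary~\ref{cor:e2e-noise-fidelity}(iv);
\item \(\Delta\mathsf A=\Delta\mathsf{D_d}=0\) by Proposition~\ref{prop:structural-comparison}(i);
\item \(I_{\mathrm{sem}}=H(\mathsf S_o)\) by Theorem~\ref{thm:semantic-fano}(iii);
\item the vanishing of the expected distortions by Proposition~\ref{prop:zero-distortion}(iii).
\end{itemize}

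The main obstacle is item (d): the substitution requires monotonicity of the Fano term, which fails beyond \(1-1/|\tilde S_O|\). I would handle it with the case split above, or restrict the displayed inequality to that regime.
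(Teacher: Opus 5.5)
Your assembly is the same as the paper's proof. The paper likewise just points to the cited results for each family and each relationship, and collects \textup{(g)} from Proposition~\ref{prop:zero-distortion}(iii), Remark~\ref{rem:ideal-channel-noise}, Propositions~\ref{prop:structural-indices}(iv) and~\ref{prop:noise-pair-indices}(ii), and Proposition~\ref{prop:structural-comparison}. Your feasibility argument for $I_{\mathrm{sem}}\le C_{\mathrm{sem}}(W)$ in \textup{(a)} fills a step the paper leaves implicit. Everything except \textup{(d)} goes through as you describe.

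On \textup{(d)}, your diagnosis is correct and sharper than the paper's ``direct substitution''. Passing from $\epsilon$ to $\epsilon^\ast=1-\Phi_{\Atom}P_O(A)$ needs $g$ nondecreasing, which holds only up to $1-1/|\tilde S_O|$. Your first fallback, however, fails. Past that point the right-hand side is not nonpositive in general: at $\epsilon^\ast=1$ it equals $H(\mathsf S_o)-\log(|\tilde S_O|-1)$, and there the inequality can simply be false.

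Here is a concrete case. Take $|\tilde S_O|=|S_O|=N\ge 2$ with some core element lost, so $\Phi_{\Atom}=0$ and $\epsilon^\ast=1$. Let $P_O$ be uniform and let the decoder always output one fixed element of $S_O\cap\tilde S_O$. Then $I_{\mathrm{sem}}=0$, and Theorem~\ref{thm:semantic-fano}(i) holds with equality, since $\epsilon=1-1/N$ and $g(\epsilon)=\log N$. Yet the right-hand side of \textup{(d)} is $\log N-\log(N-1)>0$. So \textup{(d)} as stated cannot be proved. Only your second option survives: state it under $\epsilon^\ast\le 1-1/|\tilde S_O|$, or replace $g(\epsilon^\ast)$ by its nondecreasing envelope, which equals $\log|\tilde S_O|$ beyond that point.

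Even then you inherit a defect of Theorem~\ref{thm:semantic-fano}(i) itself. Fano's inequality is governed by the alphabet of the variable being estimated. It gives $H(\mathsf S_o\mid\hat{\mathsf S}_o)\le h_b(\epsilon)+\epsilon\log|S_O|$, with $|S_O|-1$ allowed only when $\hat{\mathsf S}_o\in S_O$ almost surely; it does not give a bound with $\log(|\tilde S_O|-1)$. For instance, take $S_O=\{x_1,\dots,x_4\}$ uniform, $\tilde S_O=\{x_1,x_2\}$, $x_1\mapsto x_1$, and $x_2,x_3,x_4\mapsto x_2$. Then $\epsilon=1/2$ and $I_{\mathrm{sem}}=h_b(1/4)\approx 0.81$, whereas (i) asserts $I_{\mathrm{sem}}\ge 1$. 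A valid version of \textup{(d)} therefore uses $\log|S_O|$ in place of $\log(|\tilde S_O|-1)$, together with the nondecreasing envelope of the resulting function.
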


\begin{proof}
All individual claims have been established in the cited results.
The computability of every invariant follows from
Theorem~\ref{thm:computability-invariants} and
Proposition~\ref{prop:structural-comparison}.
The ideal-channel collapse~\textup{(g)} collects
Proposition~\ref{prop:zero-distortion}(iii),
Remark~\ref{rem:ideal-channel-noise},
Proposition~\ref{prop:structural-indices}(iv),
Proposition~\ref{prop:noise-pair-indices}(ii), and
Proposition~\ref{prop:structural-comparison}(iv).
\end{proof}

\subsection{Semantic Channel Coding: Preliminary Results}
\label{subsec:coding}

This subsection formulates the semantic channel coding problem and states
preliminary converse and achievability bounds.
The principal novelty relative to classical channel coding is the
\emph{semantic reliability criterion}: under closure-based fidelity,
the decoder need only reconstruct the irredundant core faithfully;
redundant states can be recovered by deductive inference, yielding a
measurable ``deductive compression gain.''

\smallskip
\noindent\textbf{Block extension.}
For the carrier channel kernel \(W:S_C\rightsquigarrow\hat S_C\)
(Definition~\ref{def:carrier-channel-model}), the \(n\)-fold
memoryless extension \(W^{\otimes n}:S_C^n\rightsquigarrow\hat S_C^n\)
is defined by
\[
  W^{\otimes n}(\hat s_c^n\mid s_c^n)
  \;:=\;
  \prod_{i=1}^{n} W\!\bigl(\hat s_c^{(i)}\mid s_c^{(i)}\bigr),
\]
where \(s_c^n=(s_c^{(1)},\ldots,s_c^{(n)})\in S_C^n\) and similarly
for~\(\hat s_c^n\).

\begin{definition}[Semantic block code]
\label{def:semantic-codebook}
An \emph{\((n,M)\) semantic block code} for the carrier channel~\(W\)
consists of:
\begin{enumerate}[label=\textup{(\roman*)}]
  \item a \emph{message set} \(\mathcal M\subseteq S_O\) with
        \(|\mathcal M|=M\);
  \item an \emph{encoding function}
        \(f_n:\mathcal M\to S_C^n\);
  \item a \emph{decoding function}
        \(g_n:\hat S_C^n\to\hat S_O\).
\end{enumerate}
The \emph{rate} of the code is
\(R:=(\log M)/n\) bits per channel use
(all logarithms base~\(2\)).
\end{definition}

\begin{remark}[Message set as semantic states]
\label{rem:message-semantic}
In classical channel coding, the message set \(\{1,\ldots,M\}\) is an
abstract index set.
Here, \(\mathcal M\subseteq S_O\) carries the full semantic structure
(deductive closure, irredundant core, derivation depth) inherited from
Section~\ref{sec:model}.
This structure enters the reliability criteria defined below but plays
no role in the encoding/decoding mechanics, which remain purely
combinatorial.
\end{remark}

\begin{definition}[Semantic reliability criteria]
\label{def:semantic-reliability}
Let \((f_n,g_n)\) be an \((n,M)\) semantic block code, and for each
\(m\in\mathcal M\) let
\(\hat{\mathsf S}_o^{(m)}:=g_n\!\bigl(\hat{\mathsf S}_c^n\bigr)\)
be the random reconstructed state when the codeword \(f_n(m)\) is
transmitted through~\(W^{\otimes n}\).
\begin{enumerate}[label=\textup{(\roman*)}]
  \item \emph{Maximum Hamming error probability:}
  \begin{equation}\label{eq:Pe-Hamming}
    P_e^{(n)}
    \;:=\;
    \max_{m\in\mathcal M}\;
    \Pr\!\bigl[\hat{\mathsf S}_o^{(m)}\neq m\bigr].
  \end{equation}
  \item \emph{Maximum expected semantic distortion}
        (for a distortion function~\(d\)):
  \begin{equation}\label{eq:Dn-max}
    D_{\max}^{(n)}(d)
    \;:=\;
    \max_{m\in\mathcal M}\;
    \E\!\bigl[d\!\bigl(m,\,\hat{\mathsf S}_o^{(m)}\bigr)\bigr].
  \end{equation}
  \item \emph{Maximum closure error probability:}
  \begin{equation}\label{eq:Pe-Cn}
    P_{e,\Cn}^{(n)}
    \;:=\;
    \max_{m\in\mathcal M}\;
    \Pr\!\bigl[d_{\Cn}\!\bigl(m,\,\hat{\mathsf S}_o^{(m)}\mid S_O\bigr)>0\bigr].
  \end{equation}
\end{enumerate}
\end{definition}

\begin{remark}[Hierarchy of reliability criteria]
\label{rem:reliability-hierarchy}
The three criteria are increasingly permissive.
Hamming reliability (\(P_e^{(n)}\to 0\)) demands exact symbol
reconstruction; semantic distortion reliability
(\(D_{\max}^{(n)}(d_{\mathrm{sem}})\to 0\) with \(\alpha>0\)) demands
vanishing expected distortion, which implies \(P_e^{(n)}\to 0\) since
\(d_{\mathrm{sem}}\ge\alpha\,d_H\);
closure reliability (\(P_{e,\Cn}^{(n)}\to 0\)) only demands that every
reachable reconstruction preserves the deductive closure of~\(S_O\),
tolerating symbol errors that leave the closure unchanged.
The deductive compression gain of
Remark~\ref{rem:semantic-compression} is realized under the third
criterion.
\end{remark}

\begin{definition}[Achievable semantic rate and semantic coding capacity]
\label{def:achievable-rate}
Fix a reliability criterion \(\mathsf{Rel}\in\{P_e,\;D_{\max}(d),\;P_{e,\Cn}\}\).
A rate \(R\ge 0\) is \emph{\(\mathsf{Rel}\)-achievable} if there
exists a sequence of \((n_k,M_k)\) semantic block codes with
\(\liminf_{k\to\infty}(\log M_k)/n_k\ge R\) and
\(\mathsf{Rel}^{(n_k)}\to 0\) as \(k\to\infty\).

The \emph{semantic coding capacity} under criterion
\(\mathsf{Rel}\) is
\[
  C_{\mathrm{code}}(W,\mathsf{Rel})
  \;:=\;
  \sup\bigl\{R\ge 0: R\text{ is \(\mathsf{Rel}\)-achievable}\bigr\}.
\]
When \(\mathsf{Rel}=P_e\), we write \(C_{\mathrm{code}}(W)\) and
recover the classical (Hamming) coding capacity.
\end{definition}

\begin{theorem}[Converse bounds for semantic block codes]
\label{thm:converse}
Let \(W:S_C\rightsquigarrow\hat S_C\) be a carrier channel kernel.
\begin{enumerate}[label=\textup{(\roman*)}]
  \item \emph{Hamming converse:}\;
        Any \((n,M)\) code with \(P_e^{(n)}\le\epsilon\) satisfies
        \begin{equation}\label{eq:hamming-converse-fl}
          \log M\;\le\;\frac{nC(W)+1}{1-\epsilon}.
        \end{equation}
  \item \emph{Composite-distortion converse:}\;
        If \(d=d_{\mathrm{sem}}\) with \(\alpha>0\), then
        \(D_{\max}^{(n)}(d_{\mathrm{sem}})\ge\alpha\,P_e^{(n)}\), so
        \(D_{\max}^{(n)}\to 0\) implies \(P_e^{(n)}\to 0\) and
        part~\textup{(i)} applies.
  \item \emph{Closure converse \textup{(}under
        Assumption~\textup{\ref{assump:core-disjoint}):}}\;
        Any \((n,|S_O|)\) code with \(\mathcal M=S_O\) and
        \(P_{e,\Cn}^{(n)}\le\epsilon\) satisfies
        \begin{equation}\label{eq:closure-converse-fl}
          \log|\Atom(S_O)|
          \;\le\;
          \frac{nC(W)+1}{1-\epsilon}.
        \end{equation}
  \item \emph{Source-size bound:}\;
        \(M\le|S_O|\) for any single-letter message set.
\end{enumerate}
\end{theorem}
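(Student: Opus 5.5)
The plan is to treat the four parts separately. Parts (i) and (iii) reduce to the standard weak converse (Fano plus data processing over the memoryless extension). Parts (ii) and (iv) are immediate.

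For \textup{(i)}, fix an \((n,M)\) code \((f_n,g_n)\). Let \(\mathsf M\) be uniform on \(\mathcal M\) and consider the Markov chain \(\mathsf M\to f_n(\mathsf M)\to\hat{\mathsf S}_c^n\to g_n(\hat{\mathsf S}_c^n)\). Since the maximal error is at most \(\epsilon\), the average error \(\Pr[g_n(\hat{\mathsf S}_c^n)\neq\mathsf M]\) is also at most \(\epsilon\). Fano's inequality, in the crude form \(h_b(\cdot)\le 1\), then gives \(H(\mathsf M\mid\hat{\mathsf S}_c^n)\le 1+\epsilon\log M\). Writing \(\log M=H(\mathsf M)=I(\mathsf M;\hat{\mathsf S}_c^n)+H(\mathsf M\mid\hat{\mathsf S}_c^n)\), I will bound the mutual information by data processing, \(I(\mathsf M;\hat{\mathsf S}_c^n)\le I(\mathsf S_c^n;\hat{\mathsf S}_c^n)\). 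For the memoryless kernel \(W^{\otimes n}\) this satisfies \(I(\mathsf S_c^n;\hat{\mathsf S}_c^n)\le\sum_{i}I(\mathsf S_c^{(i)};\hat{\mathsf S}_c^{(i)})\le nC(W)\), using the usual chain-rule argument \(H(\hat{\mathsf S}_c^n\mid\mathsf S_c^n)=\sum_i H(\hat{\mathsf S}_c^{(i)}\mid\mathsf S_c^{(i)})\) and subadditivity of \(H(\hat{\mathsf S}_c^n)\). Rearranging \(\log M\le nC(W)+1+\epsilon\log M\) yields \eqref{eq:hamming-converse-fl}.

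Part \textup{(ii)} follows from the pointwise bound \(d_{\mathrm{sem}}\ge\alpha\,d_H\), which holds because \(d_{\Cn},d_{\Dd}\ge 0\). Taking expectations gives \(\E[d_{\mathrm{sem}}(m,\hat{\mathsf S}_o^{(m)})]\ge\alpha\Pr[\hat{\mathsf S}_o^{(m)}\neq m]\) for each \(m\), and maximizing over \(m\) gives \(D_{\max}^{(n)}\ge\alpha P_e^{(n)}\). Part \textup{(iv)} is immediate from \(\mathcal M\subseteq S_O\).

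The substantive step is \textup{(iii)}. I expect Assumption~\ref{assump:core-disjoint} to state that the closure-acceptance sets \(\mathcal A(a):=\{\hat s\in\hat S_O: d_{\Cn}(a,\hat s\mid S_O)=0\}\) are pairwise disjoint for distinct core elements \(a\in\Atom(S_O)\). If so, I will convert the closure-reliable code into a Hamming-reliable code on the message set \(\Atom(S_O)\):
\begin{itemize}
\item keep the encoder \(f_n\) restricted to \(\Atom(S_O)\);
\item define a new decoder \(g_n'\) that outputs the unique \(a\) with \(g_n(\hat s_c^n)\in\mathcal A(a)\), or an arbitrary core element if there is none.
\end{itemize}
By disjointness, for each \(a\in\Atom(S_O)\) the event \(\{g_n'\neq a\}\) is contained in \(\{d_{\Cn}(a,g_n(\hat{\mathsf S}_c^n)\mid S_O)>0\}\). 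Hence the resulting \((n,|\Atom(S_O)|)\) code has \(P_e^{(n)}\le P_{e,\Cn}^{(n)}\le\epsilon\), and part \textup{(i)} gives \eqref{eq:closure-converse-fl}.

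The main obstacle is matching this reduction to the exact form of Assumption~\ref{assump:core-disjoint}. Without some such disjointness, a single reconstruction could be closure-acceptable for two different core elements, and the bound would fail. If the assumption is instead phrased in terms of closures, e.g.\ \(\Cn((S_O\setminus\{a\})\cup\{\hat s\})=\Cn(S_O)\) holding for at most one \(a\), I will first derive the disjointness of the sets \(\mathcal A(a)\) from it, using Proposition~\ref{prop:zero-distortion}(ii), and then proceed as above.
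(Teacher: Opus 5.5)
Your proposal is correct and follows the paper's proof essentially step for step: Fano plus data processing over $W^{\otimes n}$ for (i), the pointwise bound $d_{\mathrm{sem}}\ge\alpha\,d_H$ for (ii), and for (iii) a reduction to an ordinary $|\Atom(S_O)|$-message decoding problem. That reduction uses the pairwise-disjoint acceptable sets $R_{\Cn}(a)$ of Assumption~\ref{assump:core-disjoint}, which are exactly your $\mathcal A(a)$. The only cosmetic difference is that the paper's core decoder outputs an erasure symbol ``?'' where yours outputs an arbitrary core element, and this changes nothing in the Fano argument.
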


\begin{proof}
\textup{(i)}\
Fix an \((n,M)\) code with \(P_e^{(n)}\le\epsilon\).
For a uniform message \(\mathsf{M}\sim\mathrm{Unif}(\mathcal M)\),
Fano's inequality gives
\(H(\mathsf{M}\mid\hat{\mathsf M})\le 1+\epsilon\log(M-1)
  \le 1+\epsilon\log M\).
Hence
\[
  \log M
  = I(\mathsf{M};\hat{\mathsf M})+H(\mathsf{M}\mid\hat{\mathsf M})
  \le nC(W)+1+\epsilon\log M,
\]
where \(I(\mathsf{M};\hat{\mathsf M})
  \le I\!\bigl(f_n(\mathsf{M});\hat{\mathsf S}_c^n\bigr)
  \le nC(W)\) by data processing and the memoryless
property~\cite{cover2006elements}.
Rearranging gives~\eqref{eq:hamming-converse-fl}.

\smallskip\noindent
\textup{(ii)}\
Since \(d_{\mathrm{sem}}\ge\alpha\,d_H\) with \(\alpha>0\),
\(D_{\max}^{(n)}\ge\alpha\,P_e^{(n)}\).
Hence \(D_{\max}^{(n)}\to 0\) implies \(P_e^{(n)}\to 0\), and
\textup{(i)} applies.

\smallskip\noindent
\textup{(iii)}\
Let \(A=\Atom(S_O)\).
Under Assumption~\ref{assump:core-disjoint}, the acceptable sets
\(\{R_{\Cn}(a):a\in A\}\) are pairwise disjoint.
The decoder \(g_n:\hat S_C^n\to\hat S_O\) is a deterministic function,
so the preimages \(B_a:=\{y^n:g_n(y^n)\in R_{\Cn}(a)\}\) for
\(a\in A\) are pairwise disjoint subsets of~\(\hat S_C^n\).

For each \(a\in A\), closure reliability gives
\(W^{\otimes n}(B_a\mid f_n(a))\ge 1-\epsilon\).
Define the deterministic ``core decoder''
\(\hat a:\hat S_C^n\to A\cup\{?\}\) by
\(\hat a(y^n):=a\) if \(y^n\in B_a\), and
\(\hat a(y^n):=\,?\) otherwise.
Then \(\Pr[\hat a\neq a\mid\mathsf{M}=a]\le\epsilon\) for every
\(a\in A\), and the pairwise disjointness of
\(\{B_a\}\) ensures that this is a standard
(non-list) decoding problem on~\(|A|\) messages.
The Fano argument of part~\textup{(i)}, applied with message set~\(A\)
and decoder~\(\hat a\), gives
\(
  \log|A|\le\frac{nC(W)+1}{1-\epsilon}
\).

\smallskip\noindent
\textup{(iv)}\ Immediate from \(M\le|S_O|\).
\end{proof}

\begin{remark}[Tightness of the converse]
\label{rem:converse-tightness}
Part~\textup{(iii)} shows that the Shannon capacity of the carrier
channel is an \emph{absolute ceiling} even under the most permissive
(closure-based) reliability criterion: semantic inference at the
receiver cannot increase the channel's information-carrying ability
beyond~\(C(W)\).
What semantic inference \emph{does} achieve is a reduction in the
number of messages that need to be reliably distinguished, as shown
in the achievability result below.
\end{remark}

\begin{theorem}[Achievability for semantic block codes]
\label{thm:achievability}
Let \(W:S_C\rightsquigarrow\hat S_C\) with \(C(W)>0\), and let
\(A=\Atom(S_O)\).
\begin{enumerate}[label=\textup{(\roman*)}]
  \item \emph{Hamming achievability \textup{(classical)}:}\;
        For any message set \(\mathcal M\subseteq S_O\) with
        \(|\mathcal M|=M\) and \(\hat S_O\supseteq\mathcal M\),
        every rate \(R<C(W)\) is \(P_e\)-achievable in the
        following sense: for every blocklength~\(n\) satisfying
        \((\log M)/n<C(W)\), there exists an \((n,M)\) semantic
        block code \((f_n,g_n)\) with
        \(f_n:\mathcal M\to S_C^n\) and
        \(g_n:\hat S_C^n\to\hat S_O\) such that
        \(P_e^{(n)}\to 0\) as
        \(n\to\infty\) with \(M\) fixed.
        In particular, the full knowledge base
        \(\mathcal M=S_O\) can be communicated
        Hamming-reliably using
        \(n>\log|S_O|/C(W)\) channel uses.
        No single-letter alphabet-size comparison between
        \(|S_O|\) and \(|S_C|\) is required, since the block
        code operates over the product alphabet~\(S_C^n\).
  \item \emph{Closure achievability under deductive decoding:}\;
        Suppose \(\hat S_O\supseteq S_O\), the encoding enabling is
        full
        \textup{(}\(\mathcal E(s_o)=S_C\) for all \(s_o\)\textup{)},
        and the decoder has access to the proof system
        \((\mathsf{PS},T_{\mathsf{PS}},\Cn)\).
        Then there exists a sequence of \((n,|S_O|)\) semantic block
        codes---i.e., codes with \(\mathcal M=S_O\)---satisfying
        \(P_{e,\Cn}^{(n)}\to 0\) as \(n\to\infty\), provided
        \begin{equation}\label{eq:core-rate-condition}
          \frac{\log|\Atom(S_O)|}{n}\;<\;C(W).
        \end{equation}
        Equivalently, the full knowledge base \(S_O\) can be
        communicated with vanishing closure error using
        \(n>\log|\Atom(S_O)|/C(W)\) channel uses.
\end{enumerate}
\end{theorem}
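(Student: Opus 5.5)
For part~(i) I would use the classical random-coding argument with a fixed message count. Fix \(M=|\mathcal M|\) and a capacity-achieving input \(P_C^*\in\Delta(S_C)\). For each \(m\in\mathcal M\), draw \(f_n(m)\) i.i.d.\ from \((P_C^*)^{\otimes n}\). Decode by joint typicality: \(g_n(y^n)\) is the unique \(m\) with \((f_n(m),y^n)\) jointly typical, and otherwise a fixed default element of \(\hat S_O\). This output lies in \(\hat S_O\) because \(\hat S_O\supseteq\mathcal M\).

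The standard bound on the average error is \(\epsilon_n+M\,2^{-n(C(W)-3\delta)}\). Since \(M\) is fixed and \(C(W)>0\), this tends to \(0\) as \(n\to\infty\); this is where the hypothesis \((\log M)/n<C(W)\) is used, and with \(M\) fixed it holds for all large \(n\). Because the number of messages is fixed, the maximum error is controlled too: the union bound over \(M\) codewords costs only a factor \(M\). Alternatively, one can expurgate, keeping the best half of a codebook of size \(2M\). Either way \(P_e^{(n)}\to0\). Taking \(\mathcal M=S_O\) gives the stated blocklength claim. The construction works over \(S_C^n\), so no comparison between \(|S_O|\) and \(|S_C|\) is needed.

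For part~(ii) I would reduce to part~(i) applied to the core \(A=\Atom(S_O)\).

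\textbf{Encoder.} Let \(\phi:S_O\to A\) send \(a\in A\) to itself. It sends each redundant \(j\in J=S_O\setminus A\) to a fixed core element \(a_0\), or more carefully to an element chosen so that reconstruction is closure-safe (see below). Let \((f_n^A,g_n^A)\) be the Hamming-reliable code of part~(i) for message set \(A\subseteq\hat S_O\). It exists provided \(\log|A|/n<C(W)\), which is exactly~\eqref{eq:core-rate-condition}. Set \(f_n:=f_n^A\circ\phi\). Full encoding enabling ensures every codeword is admissible.

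\textbf{Decoder.} Set \(g_n:=g_n^A\), so the output lies in \(A\subseteq S_O\subseteq\hat S_O\).

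\textbf{Core messages.} For \(m=a\in A\), correct decoding gives \(\hat s_o=a\), and \(d_{\Cn}(a,a\mid S_O)=0\).

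\textbf{Redundant messages.} For \(m=j\in J\), correct decoding gives \(\hat s_o=\phi(j)\in A\subseteq\Cn(S_O\setminus\{j\})\). This membership holds because \(j\) is redundant, so \(\Cn(S_O\setminus\{j\})=\Cn(S_O)\supseteq A\). Remark~\ref{rem:closure-distortion-content}(a) then gives \(d_{\Cn}(j,\phi(j)\mid S_O)=0\).

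Hence \(\Pr[d_{\Cn}>0]\le\Pr[g_n^A(\cdot)\neq\phi(m)]\le P_e^{(n)}(A)\to0\), uniformly over \(m\). So \(P_{e,\Cn}^{(n)}\to0\), and the closure-distortion criterion is met regardless of the choice of \(\phi\) on \(J\).

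The main obstacle is the redundant-message step, namely that \(\phi(j)\in A\) is always acceptable. This needs care, because \(d_{\Cn}\) is defined by substitution in \(\Gamma=S_O\), not by addition. The key identity to verify is \(\Cn((S_O\setminus\{j\})\cup\{\phi(j)\})=\Cn(S_O)\) for every redundant \(j\). It follows from \(\phi(j)\in S_O\setminus\{j\}\) together with (Cn1)--(Cn3). A degenerate case also needs handling: if \(A=\varnothing\), then \(\Cn(S_O)=\Cn(\varnothing)\), and one uses \(M=1\) with no transmission at all. A secondary subtlety is that part~(i) is used with message set \(A\) instead of \(S_O\), so the effective code has \(|A|\) distinct codewords even though it is an \((n,|S_O|)\) code in the sense of Definition~\ref{def:semantic-codebook}. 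I would note explicitly that \(f_n\) need not be injective, and that the decoder's use of \(\Cn\) is only needed conceptually, to recover \(S_O\) from \(A\).
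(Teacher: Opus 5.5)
Your proposal is correct and follows essentially the same route as the paper: part~(i) is the classical random-coding channel coding theorem, and part~(ii) is the same two-layer code. That code consists of a Hamming-reliable core code for \(A\), redundant states sent on a core codeword, decoder output in \(A\), and closure-safety obtained from \(j\in\Cn(S_O\setminus\{j\})\) together with (Cn1)--(Cn3). The one difference is cosmetic: the paper notes that redundant messages have zero closure error for every \(n\), since any core output is closure-safe for \(j\in J\), whereas you bound that error by the core code's error probability, which also suffices.
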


\begin{proof}[Proof sketch]
\textup{(i)}\
This is the standard channel coding theorem
(Shannon~\cite{shannon1948mathematical},
cf.~\cite{cover2006elements}, Theorem~7.7.1).
The encoding function \(f_n:\mathcal M\to S_C^n\) maps each message
to an \(n\)-length codeword over the carrier alphabet; the decoding
function \(g_n:\hat S_C^n\to\hat S_O\) maps each received sequence
to a reconstructed state in~\(\hat S_O\supseteq\mathcal M\).
The classical random-coding argument yields a code with maximum error
probability \(P_e^{(n)}\to 0\) whenever \((\log M)/n<C(W)\).
The condition \(\hat S_O\supseteq\mathcal M\) ensures that the
decoder can output any message.

\smallskip\noindent
\textup{(ii)}\
The code is constructed in two layers.

\emph{Layer~1 (core code).}
By the classical channel coding theorem, since
\(\log|A|/n<C(W)\), there exists an \((n,|A|)\) block code
\((f_n^A,g_n^A)\) for~\(W\) with message set \(A\) and
\(P_e^{(n)}(A)\to 0\).

\emph{Layer~2 (redundant reconstruction).}
For each redundant state \(j\in J=S_O\setminus A\), the decoder
does not attempt symbol-level reconstruction.
Instead, the encoding function is extended to all of \(S_O\) as
follows: for \(j\in J\), set \(f_n(j):=f_n^A(a_0)\) for an
arbitrary fixed \(a_0\in A\).
The decoder first applies \(g_n^A\) to recover a core element
\(\hat a\in A\) (or, in the error event, an incorrect
element), and then outputs \(\hat a\) regardless of whether the
sent message was in \(A\) or~\(J\).

\emph{Closure analysis.}
For \(m\in A\): if the core code decodes correctly
(\(\hat a=m\)), then \(\hat s_o=m\) and
\(d_{\Cn}(m,m\mid S_O)=0\).
Error probability: \(P_e^{(n)}(A)\to 0\).

For \(m=j\in J\): the decoder outputs some \(\hat a\in A\subseteq S_O\subseteq\Cn(S_O)\).
Since \(j\) is redundant, \(\Cn(S_O\setminus\{j\})=\Cn(S_O)\).
As shown in Remark~\ref{rem:semantic-compression},
\(d_{\Cn}(j,\hat a\mid S_O)=0\) for any
\(\hat a\in\Cn(S_O)\cap\hat S_O\supseteq S_O\ni\hat a\).
Hence the closure error probability for redundant messages
is~\emph{zero} for all~\(n\).

Combining: \(P_{e,\Cn}^{(n)}\le P_e^{(n)}(A)\to 0\).
\end{proof}

\begin{remark}[Deductive compression gain realized]
\label{rem:gain-realized}
Combining Theorem~\ref{thm:converse}\textup{(iii)} and
Theorem~\ref{thm:achievability}\textup{(ii)}
\textup{(}under Assumption~\textup{\ref{assump:core-disjoint})}:
the minimum blocklength for closure-reliable communication of~\(S_O\)
satisfies
\begin{align*}
  \frac{\log|\Atom(S_O)|}{C(W)}
  \;\lesssim\;
  &n^*(S_O,W,P_{e,\Cn},\epsilon) \\
  \;\lesssim\;
  &\frac{\log|\Atom(S_O)|}{C(W)},
\end{align*}
up to terms vanishing as \(\epsilon\to 0\).
In contrast, Hamming-reliable communication requires
\(n_H^*\approx\log|S_O|/C(W)\) channel uses.
The deductive compression ratio
\(n^*/n_H^*\approx\log|\Atom(S_O)|/\log|S_O|\)
measures the channel-use savings afforded by semantic inference at the
receiver.
When \(S_O\) is irredundant, no savings occur
\textup{(Corollary~\ref{cor:irredundant-classical})}.
The two-layer code structure bears a conceptual resemblance to the
semantic channel coding theorem of Ma et~al.~\cite{ma2025theory},
who exploit the many-to-one structure of synonymous sources to
reduce the effective message set.
The key difference is that their compression arises from a
pre-defined synonymous partition of the source alphabet, whereas
ours arises from the receiver's ability to \emph{re-derive}
redundant states via the shared proof system---a mechanism that
requires no pre-agreed partition and yields the
vocabulary-invariant compression ratio of
Theorem~\ref{thm:heterogeneous-compression}.
\end{remark}


\begin{remark}[Noise pair and the achievability mechanism]
\label{rem:noise-pair-achievability}
The achievability proof of Theorem~\ref{thm:achievability}(ii) can be
reinterpreted through the noise-pair lens.
The two-layer code constructs a decoding rule whose output always
lies in~\(\Atom(S_O)\subseteq S_O\cap\tilde S_O=\tilde S_O^{\cap}\),
thereby ensuring \(p_{+}(s_o)=0\) for every message~\(s_o\)
(Definition~\ref{def:noise-regions}).
Closure reliability then reduces to symbol-level reliability on the
core: \(P_{e,\Cn}^{(n)}\le P_e^{(n)}(A)\to 0\), where
\(P_e^{(n)}(A)\) is the Hamming error probability of the
\((n,|A|)\) core sub-code.
The condition \(\Phi_{\Atom}(\mathfrak C_n)\to 1\) as
\(n\to\infty\) (where \(\mathfrak C_n\) denotes the semantic channel
induced by the \(n\)-block code) is equivalent to
\(P_e^{(n)}(A)\to 0\): each core element is self-preserved with
probability tending to~\(1\).
The deductive compression gain arises precisely because the code need
not achieve \(\Phi_{\Atom}=1\) for non-core elements; these are
handled by the closure structure at zero additional channel cost.
\end{remark}

\begin{corollary}[Minimum blocklength for semantic communication]
\label{cor:min-blocklength}
Let \(W:S_C\rightsquigarrow\hat S_C\) with \(C(W)>0\) and
\(A=\Atom(S_O)\).
Define the \emph{minimum blocklength} for communicating the full
knowledge base \(S_O\) under criterion \(\mathsf{Rel}\) with
maximum error at most~\(\epsilon\):
\begin{align*}
  &n^*(S_O,W,\mathsf{Rel},\epsilon)  \\
  \;:=\;
  &\min\bigl\{\,n\ge 1:\,
  \exists\,(n,|S_O|)\text{ code with }
  \mathsf{Rel}^{(n)}\le\epsilon\,\bigr\}.
\end{align*}
Then, for sufficiently small \(\epsilon>0\):
\begin{enumerate}[label=\textup{(\roman*)}]
  \item \emph{Hamming:}
        \(\displaystyle
        n^*(S_O,W,P_e,\epsilon)
        \ge
        \frac{(1-\epsilon)\log|S_O|-1}{C(W)}\).
  \item \emph{Closure:}\;
        Under the hypotheses of
        Theorem~\textup{\ref{thm:achievability}(ii)},
        \(\displaystyle
        n^*(S_O,W,P_{e,\Cn},\epsilon)
        \;\le\;
        \left\lceil
          \frac{\log|A|}{C(W)-\delta(\epsilon)}
        \right\rceil\)
        for some \(\delta(\epsilon)\to 0\) as \(\epsilon\to 0\).

        Under the additional
        Assumption~\textup{\ref{assump:core-disjoint}} below,
        \(\displaystyle
        n^*(S_O,W,P_{e,\Cn},\epsilon)
        \;\ge\;
        \frac{(1-\epsilon)\log|A|-1}{C(W)}\).
\end{enumerate}
The \emph{deductive compression ratio} is therefore
\[
  \frac{n^*(S_O,W,P_{e,\Cn},\epsilon)}
       {n^*(S_O,W,P_e,\epsilon)}
  \;\approx\;
  \frac{\log|\Atom(S_O)|}{\log|S_O|}\,,
\]
which equals~\(1\) when \(S_O\) is irredundant and is strictly less
than~\(1\) whenever \(|J|=|S_O\setminus\Atom(S_O)|>0\).
\end{corollary}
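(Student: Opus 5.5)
The plan is to obtain each bound by rearranging the corresponding coding theorem already proved, and then to combine them for the ratio.

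\emph{Part (i), Hamming.} Suppose an \((n,|S_O|)\) code has \(P_e^{(n)}\le\epsilon\). Theorem~\ref{thm:converse}(i) with \(M=|S_O|\) gives
\[
  (1-\epsilon)\log|S_O|\le nC(W)+1 .
\]
Since \(C(W)>0\), dividing by \(C(W)\) yields the stated lower bound for every admissible \(n\), and hence for the minimum \(n^*\).

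\emph{Part (ii), closure upper bound.} I would reuse the two-layer construction from the proof of Theorem~\ref{thm:achievability}(ii). The redundant states \(J\) contribute zero closure error for every \(n\), so \(P_{e,\Cn}^{(n)}\le P_e^{(n)}(A)\), the maximal error of an \((n,|A|)\) core code for \(W\).

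It therefore suffices to show that, for small \(\epsilon\), an \((n,|A|)\) code with maximal error at most \(\epsilon\) exists once \(\log|A|/n\le C(W)-\delta(\epsilon)\). I would obtain this from a finite-blocklength form of the channel coding theorem, for instance Feinstein's lemma or a random-coding error exponent. These give maximal error at most \(\epsilon\) at every rate \(R<C(W)\) for all \(n\ge n_0(R,\epsilon)\). The function \(\delta(\epsilon)\) is then chosen so that the rate gap absorbs the required blocklength. Taking the smallest integer \(n\) with \(n\ge\log|A|/(C(W)-\delta(\epsilon))\) gives the ceiling bound.

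\emph{Main obstacle.} The hard step is making the existence claim uniform. The classical theorem is asymptotic in \(n\), whereas here \(|A|\) is fixed. I would therefore state \(\delta(\epsilon)\) through the error exponent \(E_r(R)>0\) for \(R<C(W)\). The condition on \(\delta\) is that \(n\ge\log|A|/(C(W)-\delta)\) forces \(2^{-nE_r(C(W)-\delta)}\le\epsilon/2\), after expurgating to pass from average to maximal error. Since \(\epsilon\to0\) can be compensated by letting the rate gap \(\delta\) grow only in the regime that is needed, I would allow \(\delta(\epsilon)\) to depend on \(|A|\) as well. As a fallback, I would read the bound in the same ``\(\lesssim\)'' sense used in Remark~\ref{rem:gain-realized}.

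\emph{Part (ii), closure lower bound.} Under Assumption~\ref{assump:core-disjoint}, Theorem~\ref{thm:converse}(iii) gives \(\log|A|\le (nC(W)+1)/(1-\epsilon)\) for any closure-reliable code with \(\mathcal M=S_O\). Rearranging gives the bound on \(n^*\).

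\emph{Ratio.} I would divide the closure bounds by the Hamming bounds. The Hamming upper bound \(n^*\lesssim\log|S_O|/C(W)\) comes from Theorem~\ref{thm:achievability}(i) by the same finite-\(n\) argument. As \(\epsilon\to0\), both numerator and denominator behave like \(\log|A|/C(W)\) and \(\log|S_O|/C(W)\) up to vanishing and \(O(1)\) corrections, which gives
\[
  \frac{n^*(S_O,W,P_{e,\Cn},\epsilon)}{n^*(S_O,W,P_e,\epsilon)}\approx\frac{\log|A|}{\log|S_O|}.
\]
Finally, \(A\subseteq S_O\) with equality exactly when \(J=\varnothing\), by Theorem~\ref{thm:semantic-invariants}(iv). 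So the ratio equals \(1\) iff \(S_O\) is irredundant, and it is strictly below \(1\) when \(|J|>0\), using \(\log|A|<\log|S_O|\) provided \(|S_O|\ge2\).
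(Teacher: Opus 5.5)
Your derivations of (i) and of the closure lower bound match the paper's intent and are correct (they hold for every \(\epsilon\in[0,1)\), in fact). The paper gives no separate proof; Remark~\ref{rem:gain-realized} likewise just rearranges Theorem~\ref{thm:converse}(i) with \(M=|S_O|\) and Theorem~\ref{thm:converse}(iii).

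The gap is the closure upper bound, which is exactly the step you flag, and it cannot be closed. With \(\delta(\epsilon)\to0\) the inequality is false whenever \(|A|\ge2\), Assumption~\ref{assump:core-disjoint} holds, and \(w_{\min}:=\min_{x,y}W(y\mid x)>0\). Take core elements \(a_1\neq a_2\) with disjoint decoding regions \(B_{a_1},B_{a_2}\), as in the proof of Theorem~\ref{thm:converse}(iii). If \(B_{a_2}=\varnothing\), the closure error at \(a_2\) is \(1\). Otherwise the closure error at \(a_1\) is at least \(W^{\otimes n}(B_{a_2}\mid f_n(a_1))\ge w_{\min}^n\). Hence \(n^*(S_O,W,P_{e,\Cn},\epsilon)\ge\log(1/\epsilon)/\log(1/w_{\min})\to\infty\), whereas \(\lceil\log|A|/(C(W)-\delta(\epsilon))\rceil\) stays bounded.

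The paper's own Datalog instance (Section~\ref{subsec:app-example}) is a counterexample. No rule has an \(\mathit{Edge}\) head, so \(R_{\Cn}(a)=\{a\}\) for each core atom. With \(|A|=4\) and \(C(W)\approx2.536\), the right-hand side equals \(1\) as soon as \(\delta(\epsilon)<0.536\). Yet for \(\epsilon<w_{\min}=1/90\), a single channel use cannot suffice.

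Your error-exponent route runs into the same obstruction. With \(|A|\) fixed and \(n\approx\log|A|/(C(W)-\delta)\), the blocklength stays bounded while \(E_r(C(W)-\delta)\to0\), so \(2^{-nE_r}\) tends to \(1\) rather than falling below \(\epsilon\). Letting the rate gap grow means \(\delta(\epsilon)\to C(W)\), not \(0\), and making \(\delta\) depend on the fixed \(|A|\) changes nothing. Falling back on the \(\lesssim\) reading of Remark~\ref{rem:gain-realized} reproduces the paper's heuristic but does not prove the displayed bound.

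The same defect affects your Hamming upper bound via Theorem~\ref{thm:achievability}(i), and hence the ratio. For fixed \(|A|\) and \(|S_O|\), both minimum blocklengths grow like \(\log(1/\epsilon)\), at rates set by fixed-message-number error exponents. Their ratio is therefore governed by those exponents, not by \(\log|A|/\log|S_O|\). In the \(q=10\) example, \(4\) and \(8\) messages fit on distinct input symbols and have the same optimal exponent, so the ratio tends to \(1\), not \(2/3\).

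What Theorem~\ref{thm:achievability}(ii) genuinely yields is \(n^*(S_O,W,P_{e,\Cn},\epsilon)<\infty\). More precisely, it is at most the minimum blocklength of an \(|A|\)-message code with maximal error \(\epsilon\). The \(\log|A|/C(W)\) scaling and the compression ratio become provable only after reformulating the statement in a regime where the message sets grow with \(n\). One example is coding \(k\) knowledge-base states jointly and letting \(k\to\infty\) at fixed \(\epsilon\).
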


\begin{assumption}[Deductive independence of core elements]
\label{assump:core-disjoint}
For every pair of distinct core elements \(a_1,a_2\in\Atom(S_O)\),
the closure-acceptable output sets are disjoint:
\[
  R_{\Cn}(a_1)\;\cap\; R_{\Cn}(a_2)\;=\;\varnothing,
\]
where \(R_{\Cn}(m):=\{\hat s_o\in\hat S_O: d_{\Cn}(m,\hat s_o\mid S_O)=0\}\).
\end{assumption}

\begin{remark}[When Assumption~\ref{assump:core-disjoint} holds]
\label{rem:core-disjoint-context}
Assumption~\ref{assump:core-disjoint} is satisfied whenever each core
element \(a\in\Atom(S_O)\) contributes a unique ``deductive increment''
to the closure---i.e., the set of formulas derivable from
\(S_O\) but not from \(S_O\setminus\{a\}\) is disjoint for different
core elements.
This holds in many natural knowledge-base settings (e.g., when core
elements are ground facts about distinct entities and the inference
rules respect the entity structure).
When the assumption fails, the converse still holds with
\(|\Atom(S_O)|\) replaced by the size of a maximal subset of
\(\Atom(S_O)\) with pairwise disjoint acceptable sets.
\end{remark}

\begin{definition}[Semantic rate--distortion function]
\label{def:semantic-rd}
Let \((S_O,P_O)\) be a semantic source
(Definition~\ref{def:semantic-source}) and let \(d\) be a bounded
distortion function (Definition~\ref{def:distortion-function}).
The \emph{semantic rate--distortion function} is
\begin{equation}\label{eq:R-sem-D}
  R_{\mathrm{sem}}(D)
  \;:=\;
  \min_{\substack{
    P_{\hat S_o\mid S_o}:\\[1pt]
    \hat S_o\in\hat S_O,\\[1pt]
    \E\bigl[d(S_o,\hat S_o)\bigr]\le D
  }}
  I(\mathsf{S}_o;\,\hat{\mathsf{S}}_o),
\end{equation}
where the minimization is over all conditional distributions
\(P_{\hat S_o\mid S_o}:\,S_O\rightsquigarrow\hat S_O\) satisfying
the expected distortion constraint, and the mutual information is
computed under the joint \(P_O\cdot P_{\hat S_o\mid S_o}\).
\end{definition}

\begin{proposition}[Properties of the semantic rate--distortion function]
\label{prop:rd-properties}
Under the standing finiteness assumptions, for any bounded distortion
function~\(d\):
\begin{enumerate}[label=\textup{(\roman*)}]
  \item \emph{Existence:}\;
        The minimum in~\eqref{eq:R-sem-D} exists for every
        \(D\ge 0\) such that the feasible set is nonempty.
  \item \emph{Monotonicity:}\;
        \(R_{\mathrm{sem}}(D)\) is non-increasing in~\(D\).
  \item \emph{Convexity:}\;
        \(R_{\mathrm{sem}}(D)\) is convex in~\(D\).
  \item \emph{Boundary values:}\;
        \(R_{\mathrm{sem}}(0)\le\log|S_O|\) in general.
        When \(d=d_H\),
        \(R_{\mathrm{sem}}(0)=H(P_O)\le\log|S_O|\).
  \item \emph{Closure distortion at \(D=0\):}\;
        When \(d=d_{\Cn}(\cdot,\cdot\mid S_O)\) and
        \(\hat S_O\supseteq\Cn(S_O)\cap\mathbb{S}_O\), then
        \begin{equation}\label{eq:Rsem-zero-Cn}
          R_{\mathrm{sem}}(0;\,d_{\Cn})
          \;\le\;
          \log|\Atom(S_O)|.
        \end{equation}
  \item \emph{Computability:}\;
        \(R_{\mathrm{sem}}(D)\) is computable from the finite instance
        data under Axiom~\textup{\ref{ax:T-operator}}.
\end{enumerate}
\end{proposition}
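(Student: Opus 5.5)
The plan is to treat \(R_{\mathrm{sem}}\) as a classical rate--distortion function on finite alphabets, and to add one explicit construction for part~(v).

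\textbf{Parts (i)--(iii).} The feasible set is the set of stochastic matrices \(P_{\hat S_o\mid S_o}:S_O\rightsquigarrow\hat S_O\) satisfying the linear constraint \(\sum P_O(s_o)P(\hat s_o\mid s_o)d(s_o,\hat s_o)\le D\). This is a closed subset of a product of simplices, hence compact. Mutual information is continuous in the kernel for fixed \(P_O\), so the minimum exists whenever the set is nonempty, which gives~(i). Monotonicity~(ii) holds because enlarging \(D\) enlarges the feasible set. For convexity~(iii), I take optimizers \(P_1,P_2\) at \(D_1,D_2\). The mixture \(\lambda P_1+(1-\lambda)P_2\) is feasible at \(\lambda D_1+(1-\lambda)D_2\) by linearity of expected distortion. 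Convexity of \(I\) in the channel for a fixed input distribution then gives the inequality, as in~\cite{cover2006elements}.

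\textbf{Part (iv).} Data processing gives \(I\le H(\mathsf S_o)\le\log|S_O|\) for any feasible kernel, so \(R_{\mathrm{sem}}(0)\le\log|S_O|\) as soon as the feasible set at \(D=0\) is nonempty. I read ``in general'' as ``when finite''. Nonemptiness holds whenever \(S_O\subseteq\hat S_O\), because the identity kernel then has zero distortion by \(d(s,s)=0\).

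For \(d=d_H\), the identity kernel gives \(I=H(P_O)\). Conversely, zero Hamming distortion forces \(\hat s_o=s_o\) almost surely on \(\supp P_O\). Then \(H(\mathsf S_o\mid\hat{\mathsf S}_o)=0\), so \(I=H(P_O)\), and this gives equality.

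\textbf{Part (v), the substantive step.} I construct a zero-closure-distortion kernel that factors through \(A=\Atom(S_O)\). For \(a\in A\), map \(a\mapsto a\); for \(j\in J=S_O\setminus A\), map \(j\mapsto a_0\) for a fixed \(a_0\in A\), exactly as in Layer~2 of the proof of Theorem~\ref{thm:achievability}(ii). The degenerate case \(A=\varnothing\) occurs only when \(S_O\) is empty, and I will handle it separately.

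The hypothesis \(\hat S_O\supseteq\Cn(S_O)\cap\mathbb S_O\supseteq S_O\supseteq A\) makes this kernel admissible. Its distortion is \(d_{\Cn}(a,a)=0\) on core elements. On a redundant \(j\), the value is \(d_{\Cn}(j,a_0\mid S_O)=0\) by Remark~\ref{rem:closure-distortion-content}(a), since \(a_0\in S_O\subseteq\Cn(S_O\setminus\{j\})\). The output is a deterministic function of the input with range contained in \(A\), so \(I(\mathsf S_o;\hat{\mathsf S}_o)=H(\hat{\mathsf S}_o)\le\log|A|\). The minimum is no larger, which proves~\eqref{eq:Rsem-zero-Cn}.

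The point I expect to be the main obstacle is checking that the equality \(\Cn(S_O\setminus\{j\})=\Cn(S_O)\) holds for every \(j\in J\). The irredundantization procedure of Definition~\ref{def:atom-so} only guarantees \(j\in\Cn(A')\) for the intermediate set \(A'\) at its scan stage. However, \(A'\subseteq S_O\setminus\{j\}\) together with monotonicity and idempotence (Cn2--Cn3) gives \(j\in\Cn(S_O\setminus\{j\})\), and that suffices.

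\textbf{Part (vi).} By Axiom~\ref{ax:T-operator}, every \(d\)-value, including \(d_{\Cn}\) and \(d_{\Dd}\), is computable. The remaining problem is a convex program over a compact polytope with a continuous objective. The Blahut--Arimoto algorithm, or the \(\epsilon\)-net argument used in Theorem~\ref{thm:computability-invariants}, therefore computes \(R_{\mathrm{sem}}(D)\) to arbitrary precision.
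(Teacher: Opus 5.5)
Your proposal is correct and follows essentially the same route as the paper: standard finite-alphabet rate--distortion facts for (i)--(iv), the identical deterministic map that fixes each core element and sends every \(j\in J\) to a fixed \(a_0\in\Atom(S_O)\) for (v), and a compactness/\(\epsilon\)-net argument for (vi). Your additions make explicit what the paper leaves implicit: the nonemptiness caveat at \(D=0\) in (iv), which is needed because \(\hat S_O\) need not contain \(S_O\); the admissibility check via \(\hat S_O\supseteq S_O\); and the derivation of \(j\in\Cn(S_O\setminus\{j\})\) from the scan-stage set. One slip: your aside that \(\Atom(S_O)=\varnothing\) forces \(S_O=\varnothing\) is wrong in general, since \(\Atom(S_O)=\varnothing\) exactly when \(S_O\subseteq\Cn(\varnothing)\), and in that degenerate case the bound \(\log|\Atom(S_O)|\) in (v) itself fails (a case the paper's choice of \(a_0\) also silently excludes).
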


\begin{proof}
Parts \textup{(i)}--\textup{(iv)} follow from standard rate--distortion
theory on finite alphabets
(see~\cite{cover2006elements}, Chapter~10): the minimization is over a
compact set (the probability simplex with a linear constraint), and
mutual information is continuous, convex in the conditional distribution
for fixed~\(P_O\), and lower-semicontinuous.

\smallskip\noindent
\textup{(v)}\
We exhibit a feasible conditional distribution achieving
\(I(\mathsf{S}_o;\hat{\mathsf S}_o)\le\log|A|\) at \(D=0\).
Let \(A=\Atom(S_O)\) and \(J=S_O\setminus A\).
Fix an arbitrary \(a_0\in A\) and define the deterministic mapping
\[
  \phi(s_o)
  :=
  \begin{cases}
    s_o & \text{if } s_o\in A,\\
    a_0 & \text{if } s_o\in J.
  \end{cases}
\]
Set \(P_{\hat S_o\mid S_o}(\hat s_o\mid s_o):=\mathbf{1}[\hat s_o=\phi(s_o)]\).

\emph{Distortion check.}
For \(s_o\in A\): \(\hat s_o=s_o\), so
\(d_{\Cn}(s_o,s_o\mid S_O)=0\).
For \(s_o=j\in J\): \(\hat s_o=a_0\in A\subseteq S_O\subseteq\Cn(S_O)\).
Since \(j\) is redundant,
\(\Cn(S_O\setminus\{j\})=\Cn(S_O)\), so
\(\Cn\bigl((S_O\setminus\{j\})\cup\{a_0\}\bigr)=\Cn(S_O)\)
(because \(S_O\setminus\{j\}\subseteq(S_O\setminus\{j\})\cup\{a_0\}\)
and \((S_O\setminus\{j\})\cup\{a_0\}\subseteq S_O\subseteq\Cn(S_O)
=\Cn(S_O\setminus\{j\})\), so both directions of the inclusion follow
from monotonicity and idempotence of~\(\Cn\)).
Hence \(d_{\Cn}(j,a_0\mid S_O)=0\) and
\(\E[d_{\Cn}]=0\).

\emph{Rate bound.}
The mapping is deterministic, so
\(H(\hat{\mathsf S}_o\mid\mathsf{S}_o)=0\) and
\(I(\mathsf{S}_o;\hat{\mathsf S}_o)=H(\hat{\mathsf S}_o)\).
Since \(\hat{\mathsf S}_o=\phi(\mathsf{S}_o)\) takes values
in~\(A\), \(H(\hat{\mathsf S}_o)\le\log|A|\).

\smallskip\noindent
\textup{(vi)}\
The feasible set is a compact subset of a finite-dimensional
Euclidean space (probability simplex with a linear distortion
constraint), and the objective \(I(\mathsf{S}_o;\hat{\mathsf S}_o)\)
is continuous.
An \(\epsilon\)-net yields an approximation within~\(\epsilon\)
of the true minimum.
The distortion values \(d_{\Cn}\) and \(d_{\Dd}\) are computable by
iterating \(T_{\mathsf{PS}}\) (Axiom~\ref{ax:T-operator}).
\end{proof}

\begin{remark}[Separation of semantic source and channel coding]
\label{rem:separation}
The classical source--channel separation theorem
\cite{shannon1948mathematical,cover2006elements} states that source
coding and channel coding can be performed independently without loss
of optimality.
In the semantic framework, the same separation structure applies at
the formal level: the semantic source can be compressed to rate
\(R_{\mathrm{sem}}(D)\) bits per symbol using the semantic distortion
measure, and these bits can be transmitted through the carrier channel
at any rate below~\(C(W)\).
The semantic rate--distortion function generalizes Shannon's
classical rate--distortion formulation~\cite{shannon1959coding,cover2006elements}.
The computational aspects of the semantic rate--distortion function
under synonymous mappings have been recently addressed by
Han et~al.~\cite{han2025extended} via an extended Blahut--Arimoto algorithm;
our definition subsumes their setting when the distortion is specialized
to closure distortion~$d_{\Cn}$.
The semantic novelty is that the semantic rate--distortion function
\(R_{\mathrm{sem}}(D;\,d_{\Cn})\) can be strictly smaller than its
Hamming counterpart \(R(D;\,d_H)\), because closure-preserving
substitutions are penalty-free.
In particular, the zero-distortion rate under closure fidelity is at
most \(\log|\Atom(S_O)|\) \textup{(inequality~\eqref{eq:Rsem-zero-Cn})},
compared to \(H(P_O)\le\log|S_O|\) under Hamming fidelity.
\end{remark}

\begin{remark}[Connection to knowledge-augmented communication
  and multi-agent systems]
\label{rem:applications}
The semantic channel framework, and the deductive compression gain in
particular, are directly relevant to emerging applications.

In \emph{retrieval-augmented generation} (RAG) pipelines, a knowledge
base is transmitted (or cached) between a retrieval module and a
language model.
The semantic coding results suggest that only the irredundant core of
the knowledge base needs to be transmitted; the remaining entries can
be reconstructed by the receiver's inference engine, reducing
communication cost.

In \emph{multi-agent systems}, heterogeneous agents maintain different
knowledge bases
\textup{(Remark~\ref{rem:decoded-space-structure}(ii))}.
The semantic channel framework provides a principled model for
analyzing inter-agent communication: the encoding and decoding enabling
constraints capture each agent's representational capabilities, the
closure-based fidelity criterion captures whether the communicated
knowledge is \emph{inferentially complete} for the receiver (rather
than merely symbol-accurate), and the deductive compression gain
quantifies the savings achievable when agents share a common proof
system.

These connections motivate the further development of semantic channel
codes, multi-letter extensions, and explicit code constructions as
subjects of ongoing and future work.
\end{remark}

\section{Application: Heterogeneous Multi-Agent Semantic Communication}
\label{sec:application}

The theoretical framework developed in Sections~\ref{sec:model}
and~\ref{sec:channel} is fully general: the semantic state
space~\(S_O\), the reconstructed space~\(\hat S_O\), and the
enabling structures that constrain encoding and decoding are
left as abstract parameters.
This section instantiates the framework in a concrete and
practically motivated setting---\emph{heterogeneous multi-agent
semantic communication}---and derives new results that
demonstrate the framework's explanatory and predictive power
beyond what classical channel coding theory can provide.

The distinguishing feature of the heterogeneous setting is that
the sender and receiver maintain \emph{different} knowledge
bases: the sender's semantic space is~\(S_O\) while the
receiver's reconstructed space~\(\hat S_O\) may differ
from~\(S_O\) both in vocabulary (the set of expressible states)
and in inferential structure (the irredundant core and
derivation-depth stratification).
In the terminology of Section~\ref{subsec:noisy}, the
end-to-end noise pair \((S_O^{-},S_O^{+})\) is generically
\emph{non-trivial}: \(S_O^{-}\neq\varnothing\) captures
sender concepts absent from the receiver's vocabulary
(\emph{vocabulary loss}), and \(S_O^{+}\neq\varnothing\)
captures receiver concepts absent from the sender's intent
(\emph{vocabulary surplus}).
Classical Shannon theory, which treats sender and receiver
alphabets as abstract label sets, cannot distinguish vocabulary
loss from vocabulary surplus, nor can it exploit shared
deductive structure to reduce communication cost.
The semantic channel invariants of Section~\ref{subsec:invariants}
are precisely the tools needed to make these distinctions precise
and quantitative.

The section is organized as follows.
Section~\ref{subsec:app-problem} describes the multi-agent
communication scenario and identifies the key design questions.
Section~\ref{subsec:app-assumptions} formalizes the scenario
within the information model framework and states the standing
assumptions specific to this application.
Section~\ref{subsec:app-theory} instantiates the semantic channel
machinery and derives closed-form relationships between
knowledge-base overlap structure and semantic channel invariants.
Section~\ref{subsec:app-results} presents the main analytical
results: conditions for closure-reliable heterogeneous
communication, a heterogeneous deductive compression theorem,
and a broadcast extension to one-sender--multi-receiver
scenarios.
Section~\ref{subsec:app-example} verifies all results on an
explicit Datalog knowledge-base instance with full numerical
computation of every invariant.

\subsection{Problem Description: Heterogeneous Agent Communication}
\label{subsec:app-problem}

Consider a network of \(K+1\) autonomous agents---indexed by
\(i\in\{0,1,\ldots,K\}\)---that must coordinate by exchanging semantic
states over noisy physical links.
Each agent~\(i\) maintains a finite knowledge base
\(S_O^{(i)}\subseteq\mathbb{S}_O\), where \(\mathbb{S}_O\) is the
common ambient semantic universe introduced in
Section~\ref{subsec:atomic-derivation}.
All agents share the same proof system
\((\mathsf{PS},\,T_{\mathsf{PS}},\,\Cn)\) and the same semantic
sublanguage~\(\mathcal L_{\mathrm{sem}}\); they differ, however, in
the \emph{sets of semantic states they store and operate on}.
Agent~\(i\) can derive consequences within \(\Cn(S_O^{(i)})\) and
possesses the irredundant core \(A^{(i)}:=\Atom(S_O^{(i)})\)
together with the associated derivation-depth stratification
(Definitions~\ref{def:atom-so} and~\ref{def:derivation-depth}).

This knowledge-base heterogeneity is the defining feature of the
scenario and the source of all phenomena that distinguish it from
classical Shannon-theoretic
communication~\cite{shannon1948mathematical,cover2006elements}.
Recent work on semantic
communication~\cite{luo2022semantic,shi2021semantic,bao2011towards}
has highlighted the need for frameworks that go beyond
symbol-level fidelity, but a rigorous logical-information-theoretic
treatment of heterogeneous knowledge bases has been lacking.
When agent~\(i\) (the \emph{sender}) transmits a semantic
state~\(s_o\in S_O^{(i)}\) to agent~\(j\) (the \emph{receiver}),
the receiver reconstructs a state
\(\hat s_o\in S_O^{(j)}\)---not necessarily in
\(S_O^{(i)}\)---because agent~\(j\) can only produce outputs
expressible in its own vocabulary.
Unless \(S_O^{(i)}=S_O^{(j)}\), the end-to-end noise pair
\((S_O^{-},S_O^{+})\) of
Proposition~\ref{prop:e2e-carrier-rep} is generically nontrivial:
states in \(S_O^{(i)}\setminus S_O^{(j)}\) have no direct
counterpart in the receiver's vocabulary (\emph{vocabulary loss}),
while states in \(S_O^{(j)}\setminus S_O^{(i)}\) can appear in the
receiver's output without having been intended by the sender
(\emph{vocabulary surplus}).
Classical channel coding theory, which treats the source and
reconstruction alphabets as unstructured label sets, is blind to
this distinction: it can detect that a symbol error has occurred, but
cannot determine whether the error represents a genuine loss of
semantic content or a harmless reformulation within the receiver's
richer (or merely different) vocabulary.

\subsubsection*{Communication Sub-Scenarios}

Three sub-scenarios of increasing structural complexity arise
naturally in the multi-agent setting; they are listed below in
decreasing order of analytical depth in this paper.

\begin{definition}[Pairwise unicast scenario]
\label{def:pairwise-scenario}
Fix a sender--receiver pair \((i,j)\) with \(i\neq j\).
Agent~\(i\) wishes to communicate its full knowledge base
\(S_O^{(i)}\) to agent~\(j\) over a noisy carrier channel
\(W_{ij}:S_C\rightsquigarrow\hat S_C\)
(Definition~\ref{def:carrier-channel-model}).
Agent~\(j\) reconstructs a state in its own vocabulary
\(\hat S_O:=S_O^{(j)}\), using a decoding kernel
\(D\in\mathcal K(\mathcal I_{\mathrm{dec}}^{(j)})\).
The end-to-end semantic channel is
\[
  \mathfrak C^{ij}
  \;=\;
  \bigl(\,
    \mathcal I^{(i)},\;
    \mathcal I_{\mathrm{ch}}^{ij},\;
    \mathcal I_{\mathrm{dec}}^{(j)},\;
    \kappa_{\enc},\;
    W_{ij},\;
    D
  \,\bigr),
\]
with semantic source space~\(S_O^{(i)}\), reconstructed
space~\(S_O^{(j)}\), and noise pair
\begin{equation}\label{eq:pairwise-noise-pair}
  S_O^{-}=S_O^{(i)}\setminus S_O^{(j)},
  \qquad
  S_O^{+}=S_O^{(j)}\setminus S_O^{(i)}.
\end{equation}
The constituent information models
\(\mathcal I^{(i)}\), \(\mathcal I_{\mathrm{ch}}^{ij}\), and
\(\mathcal I_{\mathrm{dec}}^{(j)}\) are formalized in
Section~\ref{subsec:app-assumptions}.
\end{definition}

The pairwise unicast scenario is the primary focus of
Sections~\ref{subsec:app-assumptions}--\ref{subsec:app-results}.
All new theorems are stated and proved for this case first;
generalizations to the broadcast setting are given as corollaries.

\begin{definition}[Broadcast scenario]
\label{def:broadcast-scenario}
A designated sender (agent~\(0\)) communicates its knowledge base
\(S_O^{(0)}\) simultaneously to \(K\) receivers
(agents \(1,\ldots,K\)) over a common carrier channel~\(W\).
Each receiver~\(j\) maintains a distinct vocabulary
\(\hat S_O^{(j)}:=S_O^{(j)}\) and observes a
(possibly receiver-specific) noise pair
\[
  (S_O^{-,j},\;S_O^{+,j})
  \;=\;
  \bigl(S_O^{(0)}\setminus S_O^{(j)},\;
        S_O^{(j)}\setminus S_O^{(0)}\bigr).
\]
All \(K\) receivers observe the \emph{same} channel
output~\(\hat S_C\); the receiver-specific noise pairs arise
solely from vocabulary mismatch, not from different physical
channel realizations.
The broadcast semantic channel is a family
\(\{\mathfrak C^{0j}\}_{j=1}^{K}\) of pairwise channels sharing the
same sender, the same encoding kernel, and the same carrier channel,
but differing in decoding model and noise pair.
\end{definition}

The broadcast scenario reveals a phenomenon absent from classical broadcast channel
theory~\cite{cover2006elements}: even over a \emph{noiseless} carrier
(\(W=\mathrm{id}_{S_C}\)), the achievable fidelity at each receiver is
constrained by its vocabulary overlap with the sender---a purely
semantic bottleneck (see Proposition~\ref{prop:broadcast-bottleneck}
in Section~\ref{subsec:app-results}).

\begin{remark}[Relay scenario (future work)]
\label{rem:relay-scenario}
A third sub-scenario arises when an intermediate agent~\(k\) acts as
a relay: agent~\(i\) transmits to agent~\(k\), which performs
inference within \(\Cn(S_O^{(k)})\) and then re-encodes and forwards
the result to agent~\(j\).
This is naturally modeled as a composition of two pairwise channels,
\(\mathfrak C^{ij}_{\mathrm{relay}}
 =\mathfrak C^{kj}\circ\mathfrak C^{ik}\),
using the information-model composition machinery of
Definition~\ref{def:model-composition} and
Remark~\ref{rem:composition-assoc}.
The relay setting raises the question of whether intermediate inference can
\emph{change} the effective capacity of the end-to-end
link---a possibility that has no direct classical analogue,
since in classical relay channel
theory~\cite{cover2006elements} the relay cannot exploit
logical structure to reduce the message set.
A complete treatment requires multi-letter extensions and is deferred
to future work; the pairwise and broadcast results of this section
provide the necessary building blocks.
\end{remark}

\subsubsection*{Key Questions}

The heterogeneous multi-agent setting gives rise to four interrelated
design and analysis questions that the classical framework cannot
address.

\smallskip
\noindent\textbf{Q1 (Closure reliability from vocabulary overlap).}\;
Under what conditions on the overlap between \(S_O^{(i)}\) and
\(S_O^{(j)}\) can agent~\(j\) reconstruct the \emph{deductive closure}
of agent~\(i\)'s knowledge base---i.e., achieve
\(\mathsf{F}_{\Cn}\bigl(S_O^{(i)},S_O^{(j)}\bigr)=1\)
(Definition~\ref{def:closure-fidelity})---even if the literal symbols
differ?
Proposition~\ref{prop:noise-fidelity} provides necessary and sufficient
conditions in the abstract noise-pair language; the task here is to
translate those conditions into explicit, verifiable predicates on the
knowledge-base pair \((S_O^{(i)},S_O^{(j)})\).

\smallskip
\noindent\textbf{Q2 (Heterogeneous deductive compression).}\;
How many channel uses are needed to communicate the full knowledge base
\(S_O^{(i)}\) to agent~\(j\) under closure reliability, and how does
this compare with the Hamming-reliable baseline?
Theorem~\ref{thm:achievability} and
Corollary~\ref{cor:min-blocklength} (cf.~\cite{shannon1948mathematical,cover2006elements}) establish the deductive compression
ratio \(\log|\Atom(S_O)|/\log|S_O|\) in the homogeneous setting
(\(\hat S_O=S_O\)); the question is whether the same ratio persists,
improves, or degrades under vocabulary heterogeneity.

\smallskip
\noindent\textbf{Q3 (Invariant diagnosis).}\;
How do the six families of semantic channel invariants catalogued in
Theorem~\ref{thm:invariant-summary}---source-side structural
invariants, set-level fidelity, noise-pair indices, structural quality
indices, receiver-side comparison indices, and information-theoretic
invariants---depend on the knowledge-base overlap?
Explicit formulas in terms of the overlap structure would allow a system
designer to predict channel performance \emph{a priori} from
knowledge-base metadata, without computing the full channel kernel.

\smallskip
\noindent\textbf{Q4 (Broadcast bottleneck).}\;
In the broadcast scenario, which receiver determines the minimum
blocklength, and does the cost grow with the number of receivers?
Classical broadcast channel theory identifies the weakest receiver via
the degradation order of the physical channel; the semantic setting
introduces a second axis of ``weakness'' rooted in vocabulary overlap,
and the interaction between the two axes must be characterized.

\subsubsection*{Preview of Main Results}

The answers to Q1--Q4 are developed in full in
Section~\ref{subsec:app-results}; an informal summary is given here to
guide the reader.

For \textbf{Q1}, the answer has two layers.
At the \emph{set level}, closure fidelity
\(\mathsf{F}_{\Cn}(S_O^{(i)},S_O^{(j)})=1\)
holds if and only if
(a)~every core element of the sender is \emph{derivable from}
the receiver's knowledge base
(\(A^{(i)}\subseteq\Cn(S_O^{(j)})\)), and
(b)~every surplus state in the receiver's vocabulary is
derivable from the sender's knowledge base
(Proposition~\ref{prop:overlap-fidelity}).
At the \emph{operational} level, the two-layer code of
Theorem~\ref{thm:heterogeneous-closure} achieves vanishing
closure error under a stronger but readily verifiable form
of~(a): the sender's core is \emph{literally contained} in
the receiver's vocabulary
(\(A^{(i)}\subseteq S_O^{(j)}\)).

For \textbf{Q2}, the deductive compression ratio is shown to be
\emph{invariant} under vocabulary heterogeneity, provided the
two-condition criterion of~Q1 is satisfied
(Theorem~\ref{thm:heterogeneous-compression}).
The minimum blocklength for closure-reliable communication remains
\(n^*\approx\log|\Atom(S_O^{(i)})|/C(W_{ij})\), identical to the
homogeneous case.
When condition~(a) fails---i.e., some sender core element is
not derivable from the receiver's knowledge base---no code of
any blocklength can achieve \(\mathsf{F}_{\Cn}=1\), and the
achievable fidelity is bounded by the core preservation ratio
(Corollary~\ref{cor:heterogeneous-impossibility}).

For \textbf{Q3}, each invariant is expressed in terms of the
overlap decomposition introduced in
Section~\ref{subsec:app-assumptions}
(Propositions~\ref{prop:overlap-noise}--\ref{prop:overlap-structural}).
The set-level invariants (\(\rho_{\Atom}\),
\(\mathsf{F}_{\Cn}\), \(\Delta\mathsf{A}\),
\(\Delta\mathsf{D_d}\)) are fully determined by the
knowledge-base pair alone, via the overlap scalars
\(|A_-^{ij}|\), \(|S_{+,d}^{ij}|\), and
\(|S_{+,n}^{ij}|\); the probabilistic indices
(\(\Phi_{\Atom}\), \(\Psi_{+}\), \(\mathsf{F}\),
\(\mathsf{E}\)) are \emph{constrained} by these scalars
(e.g., \(\Phi_{\Atom}=0\) whenever \(A_-^{ij}\neq\varnothing\))
but additionally depend on the channel kernel; and the
information-theoretic invariants depend further on the carrier
channel~\(W_{ij}\).

For \textbf{Q4}, the broadcast minimum blocklength is shown to depend
only on the \emph{sender's} core and the physical channel capacity,
independent of the number of receivers---provided every receiver's
vocabulary covers the sender's core
(Theorem~\ref{thm:broadcast-compression}).
When some receiver violates this coverage condition, it becomes the
broadcast bottleneck not because of channel degradation but because of
vocabulary loss, and no amount of coding can compensate
(Proposition~\ref{prop:broadcast-bottleneck}).
This ``semantic bottleneck'' phenomenon has no counterpart in classical
broadcast theory and constitutes the sharpest qualitative distinction
between the two frameworks.

\begin{remark}[Classical recovery as a special case]
\label{rem:classical-recovery}
When all agents share the same knowledge base
(\(S_O^{(i)}=S_O^{(j)}\) for all \(i,j\)), the noise pair is trivial,
the two-condition criterion is vacuously satisfied, and all results of
this section reduce to the homogeneous theory of
Section~\ref{subsec:coding} (and, in the irredundant case, to
classical Shannon theory via
Corollary~\ref{cor:irredundant-classical}).
The heterogeneous analysis thus strictly generalizes rather than
replaces the earlier results.
\end{remark}

\subsection{Formal Model and Standing Assumptions}
\label{subsec:app-assumptions}

This subsection formalizes the multi-agent communication scenario
of Section~\ref{subsec:app-problem} within the information model
framework of Sections~\ref{sec:model}--\ref{sec:channel}, and
introduces the \emph{overlap decomposition}---the combinatorial
structure through which all semantic channel invariants are
expressed as functions of the sender--receiver knowledge-base pair.


\begin{definition}[Agent knowledge base]
\label{def:agent-kb}
Fix a set of \(K+1\) agents indexed by
\(i\in\{0,1,\ldots,K\}\).
Each agent~\(i\) is associated with a finite knowledge base
\(S_O^{(i)}\subseteq\mathbb{S}_O\) that is
\(\mathcal L_{\mathrm{sem}}\)-definable in the ambient
structure~\(\mathfrak R\) and satisfies
Assumptions~\ref{assump:finite-so}--\ref{assump:core-extractable}.
The \emph{irredundant core} and \emph{stored shortcuts} of
agent~\(i\) are
\[
  A^{(i)}:=\Atom\bigl(S_O^{(i)}\bigr),
  \qquad
  J^{(i)}:=S_O^{(i)}\setminus A^{(i)}.
\]
By Proposition~\ref{prop:atom-core-correct},
\(\Cn\bigl(A^{(i)}\bigr)=\Cn\bigl(S_O^{(i)}\bigr)\),
the core~\(A^{(i)}\) is irredundant, and
\(S_O^{(i)}\subseteq\Cn\bigl(A^{(i)}\bigr)\).
\end{definition}


\begin{assumption}[Common proof system and ambient universe]
\label{assump:common-ps}
All agents share:
\begin{enumerate}[label=\textup{(CP\arabic*)}]
  \item the same proof system
        \((\mathsf{PS},\,T_{\mathsf{PS}},\,\Cn)\) and inference
        fragment \(\mathcal L_{\mathrm{kb}}\)
        \textup{(Assumption~\ref{assump:proof-system},
        Axiom~\ref{ax:T-operator})};
  \item the same ambient semantic universe \(\mathbb{S}_O\)
        with its injective encoding~\(\enc_O\) and canonical order
        \textup{(Section~\ref{subsec:atomic-derivation},
        Assumption~\ref{assump:semantic-universe})};
  \item the same semantic sublanguage \(\mathcal L_{\mathrm{sem}}\)
        \textup{(Assumption~\ref{assump:semantic-sublanguage})}.
\end{enumerate}
The agents differ \emph{only} in the knowledge bases
\(S_O^{(0)},S_O^{(1)},\ldots,S_O^{(K)}\subseteq\mathbb{S}_O\)
that they store and operate on.
\end{assumption}

\begin{remark}[Relaxation to compatible sub-systems]
\label{rem:compatible-subsystems}
Assumption~\ref{assump:common-ps} is the strongest form of
proof-system homogeneity: every agent can apply every inference
rule of~\(\mathsf{PS}\).
A natural relaxation allows each agent~\(i\) to use a
sub-system \(\mathsf{PS}^{(i)}\subseteq\mathsf{PS}\) with
\(\Cn^{(i)}(\Gamma)\subseteq\Cn(\Gamma)\) for
all~\(\Gamma\subseteq\mathbb{S}_O\).
The results of this section remain valid with \(\Cn\) replaced by
the weakest common sub-system
\(\Cn^{\cap}:=\bigcap_i\Cn^{(i)}\), at the cost of
potentially larger irredundant cores and weaker compression gains.
We do not pursue this relaxation and work throughout with the
common operator~\(\Cn\).
\end{remark}


\begin{definition}[Pairwise overlap decomposition]
\label{def:overlap-decomposition}
For a fixed sender--receiver pair \((i,j)\) with \(i\neq j\),
define the following subsets of~\(\mathbb{S}_O\).

\smallskip
\noindent\emph{Three-way partition of
\(S_O^{(i)}\cup S_O^{(j)}\):}
\begin{align}
  S_{\cap}^{ij}
  &\;:=\; S_O^{(i)}\cap S_O^{(j)}
  &&\text{(common states)},
  \label{eq:overlap-common}\\[2pt]
  S_{-}^{ij}
  &\;:=\; S_O^{(i)}\setminus S_O^{(j)}
  &&\text{(lost states)},
  \label{eq:overlap-lost}\\[2pt]
  S_{+}^{ij}
  &\;:=\; S_O^{(j)}\setminus S_O^{(i)}
  &&\text{(surplus states)}.
  \label{eq:overlap-surplus}
\end{align}

\noindent\emph{Core partition:}
\begin{align}
  A_{\cap}^{ij}
  &\;:=\; A^{(i)}\cap S_O^{(j)}
  &&\text{(preserved core)},
  \label{eq:overlap-core-preserved}\\[2pt]
  A_{-}^{ij}
  &\;:=\; A^{(i)}\setminus S_O^{(j)}
  &&\text{(lost core)}.
  \label{eq:overlap-core-lost}
\end{align}

\noindent\emph{Surplus stratification:}
\begin{align}
  S_{+,d}^{ij}
  &\;:=\; S_{+}^{ij}\cap\Cn\bigl(S_O^{(i)}\bigr)
  &&\text{(derivable surplus)},
  \label{eq:overlap-surplus-d}\\[2pt]
  S_{+,n}^{ij}
  &\;:=\; S_{+}^{ij}\setminus\Cn\bigl(S_O^{(i)}\bigr)
  &&\text{(non-derivable surplus)}.
  \label{eq:overlap-surplus-n}
\end{align}
\end{definition}

\begin{proposition}[Overlap partition properties]
\label{prop:overlap-partition}
The overlap decomposition of
Definition~\textup{\ref{def:overlap-decomposition}} satisfies:
\begin{enumerate}[label=\textup{(\roman*)}]
  \item \emph{Three-way disjoint union:}\;
        \(S_O^{(i)}\cup S_O^{(j)}
          = S_{-}^{ij}\;\dot\cup\;
            S_{\cap}^{ij}\;\dot\cup\;
            S_{+}^{ij}\).
  \item \emph{Sender decomposition:}\;
        \(S_O^{(i)}
          = S_{\cap}^{ij}\;\dot\cup\; S_{-}^{ij}\).
  \item \emph{Receiver decomposition:}\;
        \(S_O^{(j)}
          = S_{\cap}^{ij}\;\dot\cup\; S_{+}^{ij}\).
  \item \emph{Core partition:}\;
        \(A^{(i)}
          = A_{\cap}^{ij}\;\dot\cup\; A_{-}^{ij}\).
  \item \emph{Surplus partition:}\;
        \(S_{+}^{ij}
          = S_{+,d}^{ij}\;\dot\cup\; S_{+,n}^{ij}\).
  \item \emph{Noise-pair consistency:}\;
        \(S_{-}^{ij}=S_O^{-}\) and \(S_{+}^{ij}=S_O^{+}\),
        where \((S_O^{-},S_O^{+})\) is the end-to-end noise
        pair of
        Definition~\textup{\ref{def:pairwise-scenario}}.
  \item \emph{Core loss refines state loss:}\;
        \(A_{-}^{ij}\subseteq S_{-}^{ij}\).
  \item \emph{Computability:}\;
        All seven sets and their cardinalities are computable
        from the finite knowledge bases
        \(S_O^{(i)},S_O^{(j)}\) under
        Axiom~\textup{\ref{ax:T-operator}}.
\end{enumerate}
\end{proposition}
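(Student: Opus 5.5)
The plan is to prove the eight items directly from the defining formulas \eqref{eq:overlap-common}--\eqref{eq:overlap-surplus-n}, using elementary set identities. The only nontrivial inputs are $A^{(i)}\subseteq S_O^{(i)}$, which holds by construction in Definition~\ref{def:atom-so}, and the decidability/finiteness facts from Axiom~\ref{ax:T-operator}.

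For \textup{(ii)} and \textup{(iii)} we use that, for any sets $X,Y$, the set $X$ is the disjoint union of $X\cap Y$ and $X\setminus Y$. Taking $(X,Y)=(S_O^{(i)},S_O^{(j)})$ gives \textup{(ii)}, and taking $(X,Y)=(S_O^{(j)},S_O^{(i)})$ gives \textup{(iii)}. Item \textup{(i)} then follows by taking the union of \textup{(ii)} and \textup{(iii)}. For pairwise disjointness, note that $S_-^{ij}$ and $S_+^{ij}$ are disjoint because an element of the first lies outside $S_O^{(j)}$ while an element of the second lies in $S_O^{(j)}$. Item \textup{(iv)} is the same identity applied to $X=A^{(i)}$ and $Y=S_O^{(j)}$. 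Item \textup{(v)} is the same identity applied to $X=S_+^{ij}$ and $Y=\Cn(S_O^{(i)})$. Item \textup{(vi)} is a direct comparison with \eqref{eq:pairwise-noise-pair}, since both sides are literally the same set differences.

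For \textup{(vii)}, we take $a\in A_-^{ij}$. Then $a\notin S_O^{(j)}$, and $a\in A^{(i)}\subseteq S_O^{(i)}$ because the irredundantization procedure only deletes elements. Hence $a\in S_O^{(i)}\setminus S_O^{(j)}=S_-^{ij}$.

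For \textup{(viii)}, we first note that $S_O^{(i)}$ and $S_O^{(j)}$ are finite and effectively listable by Assumption~\ref{assump:finite-so}, and that membership of their elements is decidable from the injective codes $\enc_O$. So the intersection and both differences are computable by finite enumeration and comparison of codes. The set $A^{(i)}$ is computable by Theorem~\ref{thm:semantic-invariants}(ii), so $A_\cap^{ij}$ and $A_-^{ij}$ are computable. For the surplus stratification we must decide $s\in\Cn(S_O^{(i)})$ for each $s\in S_+^{ij}$. By \textup{(IC2)} and \textup{(IC4)} we can compute $\Cn(S_O^{(i)})$ as a finite set, iterating $T_{\mathsf{PS}}$ until stabilization as in Remark~\ref{rem:T-operator-context}(b). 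Membership is then a finite check, and the cardinalities of all seven sets follow.

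No step is a real obstacle. The point that needs the most care is \textup{(viii)}: here $s\in S_+^{ij}$ is not in $S_O^{(i)}$, so Assumption~\ref{assump:core-extractable} as stated does not cover it. We therefore have to rely on the fixpoint computation of Axiom~\ref{ax:T-operator} rather than that assumption.
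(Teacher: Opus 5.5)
Your proposal is correct and follows the paper's proof essentially step for step. Items (i)--(vii) come from elementary set identities together with $A^{(i)}\subseteq S_O^{(i)}$, and (viii) comes from finite enumeration plus iterating $T_{\mathsf{PS}}$ to stabilization under (IC2)--(IC4), including your correct remark that Assumption~\ref{assump:core-extractable} does not cover $s\notin S_O^{(i)}$. The only cosmetic difference is that you derive (i) from (ii) and (iii), whereas the paper goes the other way.
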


\begin{proof}
Parts~\textup{(i)}--\textup{(iii)} are standard set partition
identities: \(S_{\cap}^{ij}\), \(S_{-}^{ij}\), \(S_{+}^{ij}\)
are pairwise disjoint by construction, and their union equals
\(S_O^{(i)}\cup S_O^{(j)}\); restricting to \(S_O^{(i)}\)
yields~\textup{(ii)}, and to \(S_O^{(j)}\) yields~\textup{(iii)}.

Part~\textup{(iv)}: since \(A^{(i)}\subseteq S_O^{(i)}\),
\(A^{(i)} = (A^{(i)}\cap S_O^{(j)})
  \;\dot\cup\;(A^{(i)}\setminus S_O^{(j)})
  = A_{\cap}^{ij}\;\dot\cup\; A_{-}^{ij}\).

Part~\textup{(v)}: immediate from the definition of
\(S_{+,d}^{ij}\) and \(S_{+,n}^{ij}\) as complementary
subsets of~\(S_{+}^{ij}\).

Part~\textup{(vi)}: comparing
\eqref{eq:overlap-lost}--\eqref{eq:overlap-surplus} with
\eqref{eq:pairwise-noise-pair} gives the identification
directly.

Part~\textup{(vii)}: \(A_{-}^{ij}
  =A^{(i)}\setminus S_O^{(j)}
  \subseteq S_O^{(i)}\setminus S_O^{(j)}
  =S_{-}^{ij}\), since \(A^{(i)}\subseteq S_O^{(i)}\).

Part~\textup{(viii)}: since \(S_O^{(i)}\) and \(S_O^{(j)}\)
are finite and effectively listable
\textup{(Assumption~\ref{assump:finite-so})}, membership is
decidable by exhaustive comparison.
The sets \(S_{\cap}^{ij}\), \(S_{-}^{ij}\), \(S_{+}^{ij}\),
\(A_{\cap}^{ij}\), \(A_{-}^{ij}\) are then computable by
enumeration.
The surplus stratification requires testing
\(s\in\Cn(S_O^{(i)})\) for each \(s\in S_{+}^{ij}\), which
is decidable by iterating \(T_{\mathsf{PS}}\) from
\(S_O^{(i)}\) until stabilization
\textup{(Axiom~\ref{ax:T-operator}(IC2)--(IC4))}.
\end{proof}

\begin{remark}[Core preservation is tested against
  \(S_O^{(j)}\), not against \(A^{(j)}\)]
\label{rem:core-vs-vocab}
The preserved core
\(A_{\cap}^{ij}=A^{(i)}\cap S_O^{(j)}\) tests whether each
sender core element is \emph{present in the receiver's
vocabulary}---that is, an element of \(S_O^{(j)}\)---not
whether it belongs to the receiver's irredundant core
\(A^{(j)}\).
A sender core element \(a\in A^{(i)}\) may appear in
\(S_O^{(j)}\) as a \emph{redundant} stored shortcut
(i.e., \(a\in J^{(j)}\)); it is still counted as preserved,
because the receiver can produce it as a decoding output
regardless of its redundancy status in~\(S_O^{(j)}\).
\end{remark}

\begin{remark}[Asymmetry of the overlap decomposition]
\label{rem:overlap-asymmetry}
The decomposition of
Definition~\ref{def:overlap-decomposition} depends on which
agent is designated as sender and which as receiver.
Swapping the roles yields the \emph{transposed} decomposition:
\(S_{\cap}^{ji}=S_{\cap}^{ij}\),
\(S_{-}^{ji}=S_{+}^{ij}\),
\(S_{+}^{ji}=S_{-}^{ij}\),
\(A_{\cap}^{ji}=A^{(j)}\cap S_O^{(i)}\),
\(A_{-}^{ji}=A^{(j)}\setminus S_O^{(i)}\).
In particular, the core partition
\((A_{\cap}^{ij},A_{-}^{ij})\) and the surplus
stratification \((S_{+,d}^{ij},S_{+,n}^{ij})\)
are generally \emph{not} symmetric in \((i,j)\).
\end{remark}

\begin{remark}[Key scalar summaries of the overlap]
\label{rem:three-scalars}
Although the overlap decomposition produces seven subsets, the
\emph{set-level} conditions governing closure fidelity
\textup{(Proposition~\ref{prop:overlap-fidelity})} reduce to two
binary tests:
\(|A_{-}^{ij}|=0\) (no core loss) and
\(|S_{+,n}^{ij}|=0\) (no non-derivable surplus).
The \emph{coding-theoretic} results of
Section~\ref{subsec:app-results} (blocklength, compression
ratio) depend additionally on \(|A^{(i)}|\) and \(C(W_{ij})\),
while the \emph{probabilistic} indices
(\(\Phi_{\Atom}\), \(\Psi_{+}\), \(\mathsf{F}\),
\(\mathsf{E}\)) depend further on the channel kernel
\(\kappa_{\mathrm{sem}}^{ij}\).
The remaining overlap cardinalities are related by simple
accounting:
\(|A_{\cap}^{ij}|=|A^{(i)}|-|A_{-}^{ij}|\),
\(|S_{+}^{ij}|=|S_{+,d}^{ij}|+|S_{+,n}^{ij}|\);
the quantities \(|S_{-}^{ij}|\), \(|S_{\cap}^{ij}|\), and
\(|S_O^{(j)}|\) are mutually determined once any one of them
is known, via
\(|S_{\cap}^{ij}|=|S_O^{(i)}|-|S_{-}^{ij}|\) and
\(|S_O^{(j)}|=|S_{\cap}^{ij}|+|S_{+}^{ij}|\).
\end{remark}


\begin{remark}[Broadcast overlap decomposition]
\label{rem:broadcast-overlap}
In the broadcast scenario
(Definition~\ref{def:broadcast-scenario}), the sender is
agent~\(0\) and the receivers are agents \(1,\ldots,K\).
For each receiver~\(j\), the overlap decomposition
\textup{(Definition~\ref{def:overlap-decomposition})} is
applied to the pair \((0,j)\), yielding receiver-specific
quantities
\(A_{\cap}^{0j}\), \(A_{-}^{0j}\),
\(S_{+,d}^{0j}\), \(S_{+,n}^{0j}\), etc.
The \emph{broadcast core coverage condition}---that
\(A_{-}^{0j}=\varnothing\) for every
\(j\in\{1,\ldots,K\}\)---plays a central role in
Theorem~\ref{thm:broadcast-compression} and
Proposition~\ref{prop:broadcast-bottleneck}.
\end{remark}


\begin{definition}[Heterogeneous semantic channel (formal)]
\label{def:heterogeneous-channel}
For a sender--receiver pair \((i,j)\) with \(i\neq j\), the
\emph{heterogeneous semantic channel} is the semantic channel
\textup{(Definition~\ref{def:semantic-channel})}
\[
  \mathfrak C^{ij}
  \;=\;
  \bigl(\,
    \mathcal I^{(i)},\;
    \mathcal I_{\mathrm{ch}}^{ij},\;
    \mathcal I_{\mathrm{dec}}^{(j)},\;
    \kappa_{\enc},\;
    W_{ij},\;
    D
  \,\bigr),
\]
where the constituent models are defined as follows.
\begin{enumerate}[label=\textup{(\roman*)}]
  \item \emph{Sender information model.}\;
        \(\mathcal I^{(i)}
          =\langle O^{(i)},T_O^{(i)},S_O^{(i)},
                   C,T_C,S_C,R_{\mathcal E}^{(i)}\rangle\)
        is an information model
        \textup{(Definition~\ref{def:info-instance})} with
        semantic state set \(S_O^{(i)}\), carrier state set
        \(S_C\), and enabling map
        \(\mathcal E^{(i)}:S_O^{(i)}\Rightarrow S_C\).
  \item \emph{Carrier channel model.}\;
        \(\mathcal I_{\mathrm{ch}}^{ij}\) is a carrier channel
        information model
        \textup{(Definition~\ref{def:carrier-channel-model})}
        with input \(S_C\), output \(\hat S_C\), and carrier
        channel kernel
        \(W_{ij}:S_C\rightsquigarrow\hat S_C\).
  \item \emph{Receiver decoding model.}\;
        \(\mathcal I_{\mathrm{dec}}^{(j)}
          =\langle\hat C,T_{\hat C},\hat S_C,
                   \hat O^{(j)},T_{\hat O}^{(j)},S_O^{(j)},
                   R_{\mathcal E}^{\mathrm{dec},(j)}\rangle\)
        is a decoding information model
        \textup{(Definition~\ref{def:decoding-model})} with
        reconstructed space
        \(\hat S_O:=S_O^{(j)}\subseteq\mathbb{S}_O\) and
        enabling map
        \(\mathcal E_{\mathrm{dec}}^{(j)}:\hat S_C\Rightarrow S_O^{(j)}\).
  \item \emph{Kernels.}\;
        \(\kappa_{\enc}\in\mathcal K(\mathcal I^{(i)})\)
        is the encoding kernel and
        \(D\in\mathcal K(\mathcal I_{\mathrm{dec}}^{(j)})\)
        is the decoding kernel.
\end{enumerate}
The \emph{end-to-end kernel} is
\begin{equation}\label{eq:ksem-ij}
  \kappa_{\mathrm{sem}}^{ij}
  \;:=\;
  D\circ W_{ij}\circ\kappa_{\enc}
  \;:\;
  S_O^{(i)}\rightsquigarrow S_O^{(j)},
\end{equation}
and the \emph{end-to-end noise pair} is
\((S_O^{-},S_O^{+})=(S_{-}^{ij},S_{+}^{ij})\)
\textup{(Proposition~\ref{prop:overlap-partition}(vi))}.
\end{definition}

\begin{remark}[Inherited proof-system structure at the receiver]
\label{rem:receiver-ps-structure}
Since \(S_O^{(j)}\subseteq\mathbb{S}_O\) and the proof system
\(\mathsf{PS}\) acts on all of~\(\mathbb{S}_O\)
\textup{(Assumption~\ref{assump:common-ps})}, the receiver
inherits the deductive closure \(\Cn(S_O^{(j)})\), the
irredundant core \(A^{(j)}=\Atom(S_O^{(j)})\), and the
derivation-depth stratification
\(\Dd(\cdot\mid A^{(j)})\).
The semantic invariants
\(\mathsf{A}(\mathcal I^{(j)})=|A^{(j)}|\) and
\(\mathsf{D_d}(\mathcal I^{(j)})
  =\max_{q\in S_O^{(j)}}\Dd(q\mid A^{(j)})\)
are therefore well-defined and computable
\textup{(Theorem~\ref{thm:semantic-invariants})}.
In general, \(A^{(j)}\neq A^{(i)}\),
\(\Cn(S_O^{(j)})\neq\Cn(S_O^{(i)})\), and
\(\mathsf{D_d}(\mathcal I^{(j)})\neq\mathsf{D_d}(\mathcal I^{(i)})\);
the overlap decomposition
\textup{(Definition~\ref{def:overlap-decomposition})} quantifies
each of these discrepancies.
\end{remark}


\begin{assumption}[Full enabling (heterogeneous setting)]
\label{assump:full-enabling-hetero}
For the heterogeneous semantic channel
\(\mathfrak C^{ij}\) of
Definition~\ref{def:heterogeneous-channel}:
\begin{enumerate}[label=\textup{(FE\arabic*)}]
  \item \emph{Full encoding enabling:}\;
        \(\mathcal E^{(i)}(s_o)=S_C\) for every
        \(s_o\in S_O^{(i)}\).
  \item \emph{Full decoding enabling:}\;
        \(\mathcal E_{\mathrm{dec}}^{(j)}(\hat s_c)=S_O^{(j)}\)
        for every \(\hat s_c\in\hat S_C\).
\end{enumerate}
\end{assumption}

\begin{remark}[Role of the full enabling assumption]
\label{rem:full-enabling-role}
Under Assumption~\ref{assump:full-enabling-hetero}, the encoder
can map any semantic state to any carrier symbol, and the
decoder can produce any state in the receiver's vocabulary from
any received carrier symbol.
All vocabulary-mismatch effects are then captured
\emph{entirely} by the noise pair
\((S_{-}^{ij},S_{+}^{ij})\); no additional bottleneck arises
from the enabling structure.
This is the heterogeneous counterpart of the full enabling
condition in
Theorem~\ref{thm:data-processing}\textup{(iii)(a)--(b)}.
When the enabling is constrained (e.g., the encoder can access
only a subset of~\(S_C\) for each \(s_o\)), additional capacity
reductions follow from
Proposition~\ref{prop:structural-bound}\textup{(ii)}; such
cases are noted where relevant but not analyzed in full
generality.
\end{remark}

\begin{remark}[Carrier alphabet size requirement]
\label{rem:carrier-size}
The block coding achievability results of
Theorem~\ref{thm:achievability} do \emph{not} require a
single-letter alphabet-size comparison between \(|S_O^{(i)}|\)
and~\(|S_C|\), since block codes operate over the product
alphabet~\(S_C^n\).
However, the \emph{capacity equality}
\(C_{\mathrm{sem}}^{ij}(W_{ij})=C(W_{ij})\)
\textup{(Proposition~\ref{prop:overlap-capacity}(ii))} invokes
Theorem~\ref{thm:data-processing}\textup{(iii)}, which requires
\(|S_O^{(i)}|\le|S_C|\) and \(|S_O^{(j)}|\le|\hat S_C|\).
Throughout Sections~\ref{subsec:app-theory}--\ref{subsec:app-results},
we assume \(|S_C|\ge\max_{i}|S_O^{(i)}|\) and
\(|\hat S_C|\ge|S_C|\), which is the natural setting when the
physical carrier is designed to accommodate the largest agent
vocabulary.
These conditions ensure both the capacity equality and the
practical convenience of single-letter encoding when desired.
\end{remark}


\begin{assumption}[Standing assumptions for
  Section~\ref{sec:application}]
\label{assump:standing-hetero}
Throughout
Sections~\ref{subsec:app-assumptions}--\ref{subsec:app-results},
the following conditions are in force unless explicitly stated
otherwise:
\begin{enumerate}[label=\textup{(SA\arabic*)}]
  \item all standing assumptions of
        Sections~\ref{sec:model}--\ref{sec:channel}, including
        Assumptions~\ref{assump:ordered-structures},
        \ref{assump:semantic-sublanguage},
        \ref{assump:proof-system},
        \ref{assump:finite-so},
        \ref{assump:core-extractable},
        Axiom~\ref{ax:T-operator}, and
        Assumption~\ref{assump:semantic-universe};
  \item the common proof system assumption
        \textup{(Assumption~\ref{assump:common-ps})};
  \item the full enabling assumption
        \textup{(Assumption~\ref{assump:full-enabling-hetero})};
  \item the carrier channel satisfies \(C(W_{ij})>0\) and the
        carrier alphabet sizes satisfy
        \(|S_C|\ge\max_i|S_O^{(i)}|\) and
        \(|\hat S_C|\ge|S_C|\);
  \item the deductive independence of core elements
        \textup{(Assumption~\ref{assump:core-disjoint})} holds
        for the sender's knowledge base \(S_O^{(i)}\) when
        converse bounds are invoked.
\end{enumerate}
\end{assumption}


Table~\ref{tab:notation-iv} collects the notation introduced
in this subsection for convenient reference throughout
Section~\ref{sec:application}.

\begin{table}[t]
\centering
\caption{Notation summary for
  Section~\ref{sec:application}.
  All quantities are defined relative to a fixed
  sender--receiver pair~\((i,j)\).}
\label{tab:notation-iv}
\renewcommand{\arraystretch}{1.2}
\begin{tabularx}{\columnwidth}{@{}l L@{}}
\toprule
\textbf{Symbol} & \textbf{Meaning} \\
\midrule
\(S_O^{(i)}\) &
  Knowledge base (semantic state set) of agent~\(i\) \\
\(A^{(i)},\; J^{(i)}\) &
  Irredundant core and stored shortcuts of agent~\(i\) \\
\(S_{\cap}^{ij}\) &
  Common states:
  \(S_O^{(i)}\cap S_O^{(j)}\) \\
\(S_{-}^{ij}\) &
  Lost states:
  \(S_O^{(i)}\setminus S_O^{(j)}\)
  \((\,=S_O^{-})\) \\
\(S_{+}^{ij}\) &
  Surplus states:
  \(S_O^{(j)}\setminus S_O^{(i)}\)
  \((\,=S_O^{+})\) \\
\(A_{\cap}^{ij}\) &
  Preserved sender core:
  \(A^{(i)}\cap S_O^{(j)}\) \\
\(A_{-}^{ij}\) &
  Lost sender core:
  \(A^{(i)}\setminus S_O^{(j)}\) \\
\(S_{+,d}^{ij}\) &
  Derivable surplus:
  \(S_{+}^{ij}\cap\Cn(S_O^{(i)})\) \\
\(S_{+,n}^{ij}\) &
  Non-derivable surplus:
  \(S_{+}^{ij}\setminus\Cn(S_O^{(i)})\) \\
\(\mathfrak C^{ij}\) &
  Heterogeneous semantic channel from~\(i\) to~\(j\) \\
\(W_{ij}\) &
  Carrier channel kernel from~\(i\) to~\(j\) \\
\(\kappa_{\mathrm{sem}}^{ij}\) &
  End-to-end semantic kernel for pair \((i,j)\) \\
\bottomrule
\end{tabularx}
\end{table}

\subsection{Instantiation of Semantic Channel Invariants}
\label{subsec:app-theory}

This subsection applies the invariant machinery of
Section~\ref{sec:channel} to the heterogeneous pair~\((i,j)\),
expressing each invariant of
Theorem~\ref{thm:invariant-summary} as a function of the
overlap decomposition of Section~\ref{subsec:app-assumptions}.
Throughout, we fix a sender--receiver pair \((i,j)\) with the
heterogeneous semantic channel \(\mathfrak C^{ij}\) of
Definition~\ref{def:heterogeneous-channel}, under the standing
assumptions of Assumption~\ref{assump:standing-hetero}.


\begin{proposition}[Set-level invariants from overlap]
\label{prop:overlap-noise}
For the heterogeneous semantic channel \(\mathfrak C^{ij}\)
with noise pair
\((S_O^{-},S_O^{+})=(S_{-}^{ij},S_{+}^{ij})\)
\textup{(Proposition~\ref{prop:overlap-partition}(vi))}:
\begin{enumerate}[label=\textup{(\roman*)}]
  \item \emph{Preserved region:}\;
        \(\tilde S_O^{\cap}
          =S_O^{(i)}\cap S_O^{(j)}
          =S_{\cap}^{ij}\).
  \item \emph{Core preservation ratio:}\;
        \begin{equation}\label{eq:rho-overlap}
          \rho_{\Atom}\bigl(S_O^{(i)},S_O^{(j)}\bigr)
          \;=\;
          \frac{|A_{\cap}^{ij}|}{|A^{(i)}|}
          \;=\;
          1-\frac{|A_{-}^{ij}|}{|A^{(i)}|}.
        \end{equation}
        In particular, \(\rho_{\Atom}=1\) if and only if
        \(A_{-}^{ij}=\varnothing\).
  \item \emph{Spurious derivability:}\;
        \(S_O^{+}\subseteq\Cn(S_O^{(i)})\) if and only if
        \(S_{+,n}^{ij}=\varnothing\).
\end{enumerate}
\end{proposition}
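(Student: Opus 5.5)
All three parts reduce to unfolding definitions and quoting two earlier results. The key identification is $\tilde S_O=\hat S_O=S_O^{(j)}$, which follows from Proposition~\ref{prop:e2e-carrier-rep} together with Definition~\ref{def:heterogeneous-channel}(iii). The other is the noise-pair consistency $(S_O^-,S_O^+)=(S_{-}^{ij},S_{+}^{ij})$ of Proposition~\ref{prop:overlap-partition}(vi).

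For part~(i), I will use Definition~\ref{def:noise-regions}, which gives $\tilde S_O^{\cap}=S_O\cap\tilde S_O$. Substituting the sender space $S_O=S_O^{(i)}$ and $\tilde S_O=S_O^{(j)}$ yields $S_O^{(i)}\cap S_O^{(j)}$, and this is $S_{\cap}^{ij}$ by \eqref{eq:overlap-common}. I will also check the alternative form $S_O\setminus S_O^-=S_O^{(i)}\setminus S_{-}^{ij}=S_{\cap}^{ij}$, which is Proposition~\ref{prop:overlap-partition}(ii).

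For part~(ii), I will plug $A=A^{(i)}$ and $\hat S=S_O^{(j)}$ into Definition~\ref{def:atom-preservation}. The numerator $|A^{(i)}\cap S_O^{(j)}|$ equals $|A_{\cap}^{ij}|$ by \eqref{eq:overlap-core-preserved}. The disjoint core partition of Proposition~\ref{prop:overlap-partition}(iv) gives $|A_{\cap}^{ij}|=|A^{(i)}|-|A_{-}^{ij}|$, which produces the second expression in \eqref{eq:rho-overlap}. The ``in particular'' clause follows because the ratio equals $1$ iff $|A_{-}^{ij}|=0$. Equivalently, it follows from Proposition~\ref{prop:noise-fidelity}(i), since $A\cap S_O^-=A^{(i)}\cap S_{-}^{ij}=A_{-}^{ij}$; here I use $A^{(i)}\subseteq S_O^{(i)}$ to get $A^{(i)}\setminus S_O^{(j)}=A^{(i)}\cap S_{-}^{ij}$.

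The one point needing care is the degenerate case $A^{(i)}=\varnothing$. There the convention $0/0:=1$ applies, $A_{-}^{ij}=\varnothing$, and so both sides equal $1$ consistently provided the middle expressions are read under the same convention. I will note this explicitly.

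For part~(iii), I will use $S_O^+=S_{+}^{ij}$ and the surplus partition $S_{+}^{ij}=S_{+,d}^{ij}\,\dot\cup\,S_{+,n}^{ij}$ (Proposition~\ref{prop:overlap-partition}(v)). Since $S_{+,n}^{ij}=S_{+}^{ij}\setminus\Cn(S_O^{(i)})$ by \eqref{eq:overlap-surplus-n}, the inclusion $S_{+}^{ij}\subseteq\Cn(S_O^{(i)})$ holds exactly when this set difference is empty.

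None of the steps is a genuine obstacle. The only step needing care is stating the identification $\tilde S_O=S_O^{(j)}$ and the $A^{(i)}=\varnothing$ convention cleanly.
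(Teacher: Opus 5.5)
Your proposal is correct and follows the paper's proof essentially line by line. The paper obtains (i) via $S_O^{(i)}\setminus S_O^{-}=S_O^{(i)}\cap S_O^{(j)}$, which is the alternative form you also check; it gets (ii) from Definition~\ref{def:atom-preservation} plus the core partition; and it proves (iii) by the same set-difference argument.

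One small correction to your degenerate-case note. If $A^{(i)}=\varnothing$ and the convention $0/0:=1$ is applied uniformly, the last expression in \eqref{eq:rho-overlap} becomes $1-1=0\neq 1$, so the two sides do not agree. The second equality should therefore be stated for $A^{(i)}\neq\varnothing$, as the paper tacitly assumes, or read with the dual convention $0/0:=0$ for the loss ratio. The ``in particular'' clause holds either way.
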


\begin{proof}
\textup{(i)}:\
\(S_O^{-}=S_{-}^{ij}\)
\textup{(Proposition~\ref{prop:overlap-partition}(vi))}, so
\(\tilde S_O^{\cap}
  =S_O^{(i)}\setminus S_O^{-}
  =S_O^{(i)}\setminus(S_O^{(i)}\setminus S_O^{(j)})
  =S_O^{(i)}\cap S_O^{(j)}
  =S_{\cap}^{ij}\).

\smallskip\noindent
\textup{(ii)}:\
By Definition~\ref{def:atom-preservation},
\(\rho_{\Atom}
  =|A^{(i)}\cap S_O^{(j)}|/|A^{(i)}|
  =|A_{\cap}^{ij}|/|A^{(i)}|\).
By Proposition~\ref{prop:overlap-partition}(iv),
\(|A_{\cap}^{ij}|=|A^{(i)}|-|A_{-}^{ij}|\).

\smallskip\noindent
\textup{(iii)}:\
\(S_O^{+}=S_{+}^{ij}\) and
\(S_{+}^{ij}\subseteq\Cn(S_O^{(i)})\)
iff
\(S_{+}^{ij}\setminus\Cn(S_O^{(i)})=\varnothing\)
iff \(S_{+,n}^{ij}=\varnothing\).
\end{proof}


\begin{proposition}[Closure fidelity: necessary and sufficient
  conditions]
\label{prop:overlap-fidelity}
For the sender--receiver pair~\((i,j)\),
\[
  \mathsf{F}_{\Cn}\bigl(S_O^{(i)},\,S_O^{(j)}\bigr)=1
  \quad\Longleftrightarrow\quad
  \Cn\bigl(S_O^{(i)}\bigr)=\Cn\bigl(S_O^{(j)}\bigr),
\]
and this holds if and only if both of the following conditions
are satisfied:
\begin{enumerate}[label=\textup{(F\arabic*)}]
  \item \emph{Sender core derivable from receiver:}\;
        \(A^{(i)}\subseteq\Cn\bigl(S_O^{(j)}\bigr)\).
  \item \emph{No non-derivable surplus:}\;
        \(S_{+,n}^{ij}=\varnothing\)
        \textup{(}equivalently,
        \(S_{+}^{ij}\subseteq\Cn(S_O^{(i)})\)\textup{)}.
\end{enumerate}
\end{proposition}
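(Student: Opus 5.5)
The plan is to prove the two equivalences separately, using only the closure axioms (Cn1)--(Cn3), Proposition~\ref{prop:atom-core-correct}(i), and the overlap partition of Proposition~\ref{prop:overlap-partition}.

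\emph{First equivalence.} This is just Definition~\ref{def:closure-fidelity}. The Jaccard ratio equals $1$ exactly when $|\Cn(S_O^{(i)})\cap\Cn(S_O^{(j)})|=|\Cn(S_O^{(i)})\cup\Cn(S_O^{(j)})|$. Both closures are finite by Remark~\ref{rem:T-operator-context}(a), so this forces the intersection and the union to coincide, and hence the two closures are equal. The $0/0:=1$ convention covers the case where both closures are empty. Conversely, equal closures give ratio $1$.

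\emph{Sufficiency of (F1)--(F2).} I first show $\Cn(S_O^{(i)})\subseteq\Cn(S_O^{(j)})$. By (F1), $A^{(i)}\subseteq\Cn(S_O^{(j)})$. Idempotence in the form ``$\Delta\subseteq\Cn(\Gamma)\Rightarrow\Cn(\Delta)\subseteq\Cn(\Gamma)$'' then gives $\Cn(A^{(i)})\subseteq\Cn(S_O^{(j)})$. Proposition~\ref{prop:atom-core-correct}(i) identifies $\Cn(A^{(i)})$ with $\Cn(S_O^{(i)})$.

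For the reverse inclusion I use the receiver decomposition $S_O^{(j)}=S_\cap^{ij}\,\dot\cup\,S_+^{ij}$ from Proposition~\ref{prop:overlap-partition}(iii). The part $S_\cap^{ij}\subseteq S_O^{(i)}\subseteq\Cn(S_O^{(i)})$ by reflexivity. By (F2) together with the surplus partition (v), $S_+^{ij}=S_{+,d}^{ij}\subseteq\Cn(S_O^{(i)})$. So $S_O^{(j)}\subseteq\Cn(S_O^{(i)})$, and idempotence again yields $\Cn(S_O^{(j)})\subseteq\Cn(S_O^{(i)})$. This mirrors the proof of Proposition~\ref{prop:noise-fidelity}(iii). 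The difference is that here the weaker derivability condition (F1) replaces literal core containment, so I cannot quote that proposition directly.

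\emph{Necessity.} Assume $\Cn(S_O^{(i)})=\Cn(S_O^{(j)})$. Then $A^{(i)}\subseteq S_O^{(i)}\subseteq\Cn(S_O^{(i)})=\Cn(S_O^{(j)})$, which is (F1). Also $S_+^{ij}\subseteq S_O^{(j)}\subseteq\Cn(S_O^{(j)})=\Cn(S_O^{(i)})$, so $S_{+,n}^{ij}=\varnothing$, which is (F2). The parenthetical equivalence in (F2) is Proposition~\ref{prop:overlap-noise}(iii).

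I do not expect a real obstacle; the argument is routine once the closure axioms are in place. The only delicate point is the finiteness and convention bookkeeping in the Jaccard step. I also need to make sure the idempotence form of (Cn3) applies to the possibly larger sets $\Cn(\cdot)$. Assumption~\ref{assump:proof-system} states (Cn3) for arbitrary $\Gamma\subseteq\mathcal L_{\mathrm{kb}}$, so this is fine.
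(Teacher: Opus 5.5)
Your proposal is correct and follows essentially the same route as the paper. You read the first equivalence off Definition~\ref{def:closure-fidelity}. For sufficiency, you get $\Cn(S_O^{(i)})\subseteq\Cn(S_O^{(j)})$ from (F1) via idempotence and Proposition~\ref{prop:atom-core-correct}(i), and the reverse inclusion from (F2) via the receiver decomposition $S_O^{(j)}=S_{\cap}^{ij}\cup S_{+}^{ij}$. Your necessity step is the same two-line argument as the paper's. Your explicit note that both closures are finite, so that the Jaccard ratio is meaningful and equal cardinalities force equal sets, is a small piece of bookkeeping the paper leaves implicit.
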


\begin{proof}
The first equivalence is the definition of
\(\mathsf{F}_{\Cn}\)
\textup{(Definition~\ref{def:closure-fidelity}):}
\(\mathsf{F}_{\Cn}=1\) iff the Jaccard index of the two
closures equals~\(1\), i.e., the closures coincide.
It remains to show
\(\Cn(S_O^{(i)})=\Cn(S_O^{(j)})\) iff
\textup{(F1)}+\textup{(F2)}.

\smallskip\noindent
\emph{Sufficiency.}\;
From~\textup{(F1)}:
\(A^{(i)}\subseteq\Cn(S_O^{(j)})\).
By monotonicity~\textup{(Cn2)},
\(\Cn(A^{(i)})\subseteq\Cn(\Cn(S_O^{(j)}))
  =\Cn(S_O^{(j)})\)
\textup{(idempotence (Cn3))}. This chain of inclusions is a standard consequence of
the Tarski closure axioms; see,
e.g.,~\cite{tarski1983logic}.
Since
\(\Cn(A^{(i)})=\Cn(S_O^{(i)})\)
\textup{(Proposition~\ref{prop:atom-core-correct}(i))},
\(\Cn(S_O^{(i)})\subseteq\Cn(S_O^{(j)})\).

From~\textup{(F2)}:
\(S_{+}^{ij}\subseteq\Cn(S_O^{(i)})\).
Since
\(S_{\cap}^{ij}\subseteq S_O^{(i)}
  \subseteq\Cn(S_O^{(i)})\)
\textup{(reflexivity (Cn1))},
\(S_O^{(j)}=S_{\cap}^{ij}\cup S_{+}^{ij}
  \subseteq\Cn(S_O^{(i)})\).
By monotonicity and idempotence,
\(\Cn(S_O^{(j)})\subseteq\Cn(S_O^{(i)})\).

Combining the two inclusions yields
\(\Cn(S_O^{(i)})=\Cn(S_O^{(j)})\).

\smallskip\noindent
\emph{Necessity.}\;
Suppose \(\Cn(S_O^{(i)})=\Cn(S_O^{(j)})\).

For~\textup{(F1)}:
\(A^{(i)}\subseteq\Cn(A^{(i)})
  =\Cn(S_O^{(i)})=\Cn(S_O^{(j)})\).

For~\textup{(F2)}:
\(S_{+}^{ij}\subseteq S_O^{(j)}
  \subseteq\Cn(S_O^{(j)})=\Cn(S_O^{(i)})\),
so \(S_{+,n}^{ij}
  =S_{+}^{ij}\setminus\Cn(S_O^{(i)})=\varnothing\).
\end{proof}

\begin{remark}[Strong vs.\ weak core coverage]
\label{rem:strong-weak-coverage}
Condition~\textup{(F1)} requires that each sender core element
be \emph{derivable from} the receiver's knowledge base; it
does \emph{not} require the element to be literally present
in~\(S_O^{(j)}\).
A strictly stronger condition is
\begin{enumerate}[label=\textup{(F1\('\))}]
  \item \(A_{-}^{ij}=\varnothing\), i.e.,
        \(A^{(i)}\subseteq S_O^{(j)}\).
\end{enumerate}
Condition~\textup{(F1\({}'\))} implies~\textup{(F1)} (since
\(S_O^{(j)}\subseteq\Cn(S_O^{(j)})\)) but not conversely:
a core element \(a\in A_{-}^{ij}\) may satisfy
\(a\in\Cn(S_O^{(j)})\) even though \(a\notin S_O^{(j)}\).

For \emph{set-level} closure fidelity, the weak
condition~\textup{(F1)} is both necessary and sufficient
\textup{(Proposition~\ref{prop:overlap-fidelity})}.
For \emph{operational} closure reliability via the two-layer
code of Theorem~\textup{\ref{thm:achievability}(ii)}, the
decoder outputs core elements directly and therefore requires
the strong condition~\textup{(F1\({}'\))} so that
\(A^{(i)}\subseteq S_O^{(j)}=\hat S_O\).
When only~\textup{(F1)} holds with
\(A_{-}^{ij}\neq\varnothing\), a more sophisticated decoding
strategy is needed; this is addressed in
Section~\ref{subsec:app-results}.
Throughout the remainder of this subsection, results are stated
under whichever version is required, with the distinction noted
explicitly.
\end{remark}

\begin{corollary}[Sufficient condition via overlap scalars]
\label{cor:overlap-sufficient}
If \(A_{-}^{ij}=\varnothing\) and
\(S_{+,n}^{ij}=\varnothing\), then
\(\mathsf{F}_{\Cn}(S_O^{(i)},S_O^{(j)})=1\).
This is the instantiation of
Proposition~\textup{\ref{prop:noise-fidelity}(iii)} in the
overlap language and the condition used in the achievability
results of Section~\textup{\ref{subsec:app-results}}.
\end{corollary}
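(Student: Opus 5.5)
The claim is Corollary~\ref{cor:overlap-sufficient}. I would prove it by reducing to Proposition~\ref{prop:overlap-fidelity}, which characterizes \(\mathsf{F}_{\Cn}(S_O^{(i)},S_O^{(j)})=1\) exactly by conditions \textup{(F1)} and \textup{(F2)}. So it is enough to show that the two hypotheses \(A_{-}^{ij}=\varnothing\) and \(S_{+,n}^{ij}=\varnothing\) imply \textup{(F1)} and \textup{(F2)}.

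\emph{Step 1 (strong coverage gives \textup{(F1)}).} By Proposition~\ref{prop:overlap-partition}(iv), \(A^{(i)}=A_{\cap}^{ij}\,\dot\cup\,A_{-}^{ij}\). Hence \(A_{-}^{ij}=\varnothing\) gives \(A^{(i)}=A_{\cap}^{ij}=A^{(i)}\cap S_O^{(j)}\), that is, \(A^{(i)}\subseteq S_O^{(j)}\); this is condition \textup{(F1\('\))}. Reflexivity \textup{(Cn1)} gives \(S_O^{(j)}\subseteq\Cn(S_O^{(j)})\), so \(A^{(i)}\subseteq\Cn(S_O^{(j)})\), which is \textup{(F1)}, as noted in Remark~\ref{rem:strong-weak-coverage}.

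\emph{Step 2.} \textup{(F2)} is literally the hypothesis \(S_{+,n}^{ij}=\varnothing\). Applying Proposition~\ref{prop:overlap-fidelity} then finishes the proof.

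\emph{Step 3 (cross-check).} For the claim that this instantiates Proposition~\ref{prop:noise-fidelity}(iii), I would note two translations. First, Proposition~\ref{prop:overlap-partition}(vi) identifies the noise pair as \((S_O^{-},S_O^{+})=(S_{-}^{ij},S_{+}^{ij})\), with \(\tilde S_O=S_O^{(j)}\). Second, \(A^{(i)}\cap S_O^{-}=A^{(i)}\setminus S_O^{(j)}=A_{-}^{ij}\), and Proposition~\ref{prop:overlap-noise}(iii) translates \(S_O^{+}\subseteq\Cn(S_O^{(i)})\) into \(S_{+,n}^{ij}=\varnothing\). Proposition~\ref{prop:noise-fidelity}(iii) then gives the same conclusion directly.

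There is no real obstacle. The only point needing care is Step 1, namely that the stronger literal-containment condition \textup{(F1\('\))} implies the derivability condition \textup{(F1)}, and reflexivity handles that.
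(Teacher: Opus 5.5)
Your proof is correct and is essentially the paper's own argument: \(A_{-}^{ij}=\varnothing\) gives \(A^{(i)}\subseteq S_O^{(j)}\subseteq\Cn(S_O^{(j)})\), i.e.\ \textup{(F1)}; the second hypothesis is literally \textup{(F2)}; and Proposition~\ref{prop:overlap-fidelity} concludes. Your Step~3 cross-check through Proposition~\ref{prop:noise-fidelity}(iii) is also valid (with \(\tilde S_O=S_O^{(j)}\) and \(A^{(i)}\cap S_O^{-}=A_{-}^{ij}\)) and gives an independent second route, but it is not needed for the proof.
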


\begin{proof}
\(A_{-}^{ij}=\varnothing\) gives
\(A^{(i)}\subseteq S_O^{(j)}\subseteq\Cn(S_O^{(j)})\), so
\textup{(F1)} holds.
\(S_{+,n}^{ij}=\varnothing\) is \textup{(F2)}.
Apply Proposition~\ref{prop:overlap-fidelity}.
\end{proof}


\begin{proposition}[Noise-pair probabilistic indices from
  overlap]
\label{prop:overlap-noise-indices}
Let \(\mathfrak C^{ij}\) be a heterogeneous semantic channel
with kernel
\(\kappa_{\mathrm{sem}}^{ij}
  :S_O^{(i)}\rightsquigarrow S_O^{(j)}\).
\begin{enumerate}[label=\textup{(\roman*)}]
  \item \emph{Core preservation index:}\;
        \begin{equation}\label{eq:Phi-overlap}
          \Phi_{\Atom}(\mathfrak C^{ij})
          \;=\;
          \begin{cases}
            \displaystyle\min_{a\in A^{(i)}}
              \kappa_{\mathrm{sem}}^{ij}(a\mid a)
            & \text{if } A_{-}^{ij}=\varnothing,\\[6pt]
            0 & \text{if } A_{-}^{ij}\neq\varnothing.
          \end{cases}
        \end{equation}
  \item \emph{Spurious probability index:}\;
        \begin{equation}\label{eq:Psi-overlap}
          \Psi_{+}(\mathfrak C^{ij})
          \;=\;
          \max_{s_o\in S_O^{(i)}}\;
          \sum_{\hat s_o\in S_{+}^{ij}}
          \kappa_{\mathrm{sem}}^{ij}(\hat s_o\mid s_o).
        \end{equation}
        In particular, \(\Psi_{+}=0\) whenever
        \(S_{+}^{ij}=\varnothing\)
        \textup{(}i.e., \(S_O^{(j)}\subseteq S_O^{(i)}\)\textup{)}.
  \item \emph{Noiseless deterministic case:}\;
        If \(W_{ij}=\mathrm{id}_{S_C}\) and both
        \(\kappa_{\enc}\) and \(D\) are deterministic with
        induced end-to-end function
        \(f:=D\circ\mathrm{id}\circ\kappa_{\enc}
          :S_O^{(i)}\to S_O^{(j)}\), then
        \begin{align}
          \Phi_{\Atom}(\mathfrak C^{ij})
          &=
          \begin{cases}
            1 & \text{if } A_{-}^{ij}=\varnothing \\
              &  \text{ and } f(a)=a \;\forall a\in A^{(i)},\\
            0 & \text{otherwise},
          \end{cases}
          \label{eq:Phi-noiseless}\\[4pt]
          \Psi_{+}(\mathfrak C^{ij})
          &=
          \begin{cases}
            0 & \text{if } f(S_O^{(i)})\subseteq S_{\cap}^{ij},\\
            1 & \text{otherwise}.
          \end{cases}
          \label{eq:Psi-noiseless}
        \end{align}
\end{enumerate}
\end{proposition}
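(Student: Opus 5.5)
The proof is a direct instantiation of Definitions~\ref{def:prob-core-index} and~\ref{def:spurious-index} in the overlap language of Proposition~\ref{prop:overlap-partition}, followed by a deterministic specialization. Throughout I use the identifications $S_O^{-}=S_{-}^{ij}$ and $S_O^{+}=S_{+}^{ij}$ from Proposition~\ref{prop:overlap-partition}(vi), with $A=A^{(i)}$.

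\emph{Core preservation index, part (i).} I note that $A\cap S_O^{-}=A^{(i)}\cap(S_O^{(i)}\setminus S_O^{(j)})$. Since $A^{(i)}\subseteq S_O^{(i)}$, this set equals $A^{(i)}\setminus S_O^{(j)}=A_{-}^{ij}$. The two branches of Definition~\ref{def:prob-core-index} are therefore exactly the cases $A_{-}^{ij}=\varnothing$ and $A_{-}^{ij}\neq\varnothing$, and this gives~\eqref{eq:Phi-overlap}. In the first case every $a\in A^{(i)}$ lies in $S_O^{(j)}=\hat S_O$, so $\kappa_{\mathrm{sem}}^{ij}(a\mid a)$ is well defined (Proposition~\ref{prop:prob-noise}(iii)).

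\emph{Spurious probability index, part (ii).} This is Definition~\ref{def:spurious-index} with $S_O^{+}$ replaced by $S_{+}^{ij}$. If $S_{+}^{ij}=\varnothing$, each inner sum is empty, so $\Psi_{+}=0$. Moreover $S_{+}^{ij}=\varnothing$ is equivalent to $S_O^{(j)}\subseteq S_O^{(i)}$.

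\emph{Noiseless deterministic case, part (iii).} With $W_{ij}=\mathrm{id}$ and both $\kappa_{\enc}$ and $D$ deterministic, Proposition~\ref{prop:kernel-composition} makes $\kappa_{\mathrm{sem}}^{ij}(\hat s\mid s)=\mathbf 1[\hat s=f(s)]$, so every kernel entry is $0$ or $1$.

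For $\Phi_{\Atom}$: if $A_{-}^{ij}\neq\varnothing$, part~(i) gives $0$. Otherwise $\Phi_{\Atom}$ is the minimum of $\mathbf 1[f(a)=a]$ over $a\in A^{(i)}$, which is $1$ exactly when $f$ fixes every core element and $0$ otherwise. The degenerate case $A^{(i)}=\varnothing$ needs a convention for the empty minimum; I would adopt $\min\varnothing:=1$, consistent with the $0/0:=1$ convention of Definition~\ref{def:atom-preservation}.

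For $\Psi_{+}$: $p_{+}(s)=\mathbf 1[f(s)\in S_{+}^{ij}]$. Since $f(s)\in S_O^{(j)}=S_{\cap}^{ij}\,\dot\cup\,S_{+}^{ij}$ (Proposition~\ref{prop:overlap-partition}(iii)), we have $p_{+}(s)=0$ iff $f(s)\in S_{\cap}^{ij}$. Taking the maximum over $s$ gives~\eqref{eq:Psi-noiseless}, using that $S_O^{(i)}$ is nonempty.

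The main point needing care is the set identity $A\cap S_O^{-}=A_{-}^{ij}$, together with the edge cases of an empty core and an empty source. Everything else is bookkeeping with indicator kernels.
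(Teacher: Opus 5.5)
Your proposal is correct and follows the paper's proof essentially step by step. The identification $A^{(i)}\cap S_O^{-}=A_{-}^{ij}$ reduces (i) and (ii) to the definitions, and (iii) is the same indicator-kernel bookkeeping. Your extra remarks on an empty core (taking $\min\varnothing:=1$, which agrees with the vacuous case of \eqref{eq:Phi-noiseless}) and on the nonemptiness of $S_O^{(i)}$ are careful touches that the paper leaves implicit.
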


\begin{proof}
\textup{(i)}:\
By Proposition~\ref{prop:overlap-partition}(vi),
\(A^{(i)}\cap S_O^{-}=A^{(i)}\cap S_{-}^{ij}=A_{-}^{ij}\).
When \(A_{-}^{ij}\neq\varnothing\),
Definition~\ref{def:prob-core-index} sets
\(\Phi_{\Atom}:=0\).
When \(A_{-}^{ij}=\varnothing\),
\(A^{(i)}\subseteq S_O^{(j)}=\tilde S_O\), so
\(\pi(a)=\kappa_{\mathrm{sem}}^{ij}(a\mid a)\) is
well-defined for every \(a\in A^{(i)}\) and
\(\Phi_{\Atom}=\min_{a}\pi(a)\).

\smallskip\noindent
\textup{(ii)}:\
Direct from Definition~\ref{def:spurious-index} with
\(S_O^{+}=S_{+}^{ij}\).
When \(S_{+}^{ij}=\varnothing\), the sum is empty and
\(\Psi_{+}=0\).

\smallskip\noindent
\textup{(iii)}:\
When \(\kappa_{\mathrm{sem}}^{ij}\) is deterministic,
\(\kappa_{\mathrm{sem}}^{ij}(\hat s_o\mid s_o)
  \in\{0,1\}\) for all \(s_o,\hat s_o\).
For \(\Phi_{\Atom}\):
\(\pi(a)=\kappa_{\mathrm{sem}}^{ij}(a\mid a)=1\) iff
\(f(a)=a\), and this must hold for all \(a\in A^{(i)}\)
(which requires \(a\in S_O^{(j)}\), i.e.,
\(A_{-}^{ij}=\varnothing\)).
For \(\Psi_{+}\):
\(p_{+}(s_o)=\mathbf{1}[f(s_o)\in S_{+}^{ij}]\), which
is~\(0\) for all \(s_o\) iff
\(f(S_O^{(i)})\subseteq S_{\cap}^{ij}\), and~\(1\) for some
\(s_o\) otherwise; the maximum is therefore \(0\) or~\(1\).
\end{proof}

\begin{remark}[Kernel dependence of probabilistic indices]
\label{rem:kernel-dependence}
Unlike the set-level invariants
\(\rho_{\Atom}\) and \(\mathsf{F}_{\Cn}\), which depend only
on the knowledge-base pair
\((S_O^{(i)},S_O^{(j)})\), the probabilistic indices
\(\Phi_{\Atom}\) and \(\Psi_{+}\) depend on the channel
kernel \(\kappa_{\mathrm{sem}}^{ij}\) and hence on the
specific encoder, physical channel, and decoder.
The overlap decomposition constrains the \emph{range} of these
indices---\(\Phi_{\Atom}=0\) whenever
\(A_{-}^{ij}\neq\varnothing\), and
\(\Psi_{+}=0\) whenever
\(S_{+}^{ij}=\varnothing\)---but their precise values within
the feasible range are determined by the kernel.
\end{remark}


\begin{proposition}[Structural quality indices from overlap]
\label{prop:overlap-quality}
Let \(\mathfrak C^{ij}\) be a heterogeneous semantic channel
with kernel
\(\kappa_{\mathrm{sem}}^{ij}
  :S_O^{(i)}\rightsquigarrow S_O^{(j)}\)
and let \(A=A^{(i)}\).
\begin{enumerate}[label=\textup{(\roman*)}]
  \item \emph{Fidelity concentration under core-preserving
        overlap:}\;
        If \(A_{-}^{ij}=\varnothing\) and
        \(S_{+,n}^{ij}=\varnothing\), then by
        Corollary~\textup{\ref{cor:fidelity-core-concentration}},
        \begin{equation}\label{eq:F-overlap}
          \mathsf{F}(\mathfrak C^{ij})
          \;=\;
          1-\max_{a\in A^{(i)}}
            \bar d_{\Cn}(a\mid\mathfrak C^{ij}).
        \end{equation}
        That is, the worst-case closure distortion is attained
        at a sender core element; redundant states contribute
        zero.
  \item \emph{Depth expansion under vocabulary match:}\;
        If \(S_{-}^{ij}=S_{+}^{ij}=\varnothing\)
        \textup{(}i.e., \(S_O^{(i)}=S_O^{(j)}\)\textup{)},
        then every reachable \(\hat s_o\in S_O^{(j)}=S_O^{(i)}\)
        lies in~\(\Cn(A)\) and the depth distortion reduces to
        its first branch~\eqref{eq:d-Dd}.
        In particular,
        \(\mathsf{E}(\mathfrak C^{ij})=0\) iff the kernel
        preserves the derivation depth of every state
        \textup{(Proposition~\ref{prop:structural-indices}(iii))}.
\end{enumerate}
\end{proposition}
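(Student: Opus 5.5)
Both parts reduce to instantiating earlier general results for the heterogeneous channel \(\mathfrak C^{ij}\), using the identifications \(\tilde S_O=S_O^{(j)}\) and \((S_O^{-},S_O^{+})=(S_{-}^{ij},S_{+}^{ij})\) (Definition~\ref{def:heterogeneous-channel}, Proposition~\ref{prop:overlap-partition}(vi)). So the plan is to translate each overlap hypothesis into the noise-pair hypothesis of the corresponding general statement and then cite that statement.

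For part~(i), I would first show that the hypotheses of Corollary~\ref{cor:fidelity-core-concentration}, namely \(A\cap S_O^{-}=\varnothing\) and \(S_O^{+}\subseteq\Cn(S_O^{(i)})\), hold for \(\mathfrak C^{ij}\). By Proposition~\ref{prop:overlap-partition}(vi) together with the definition of \(A_{-}^{ij}\), we have \(A^{(i)}\cap S_O^{-}=A^{(i)}\cap S_{-}^{ij}=A_{-}^{ij}\), so \(A_{-}^{ij}=\varnothing\) gives the first condition. Proposition~\ref{prop:overlap-noise}(iii) shows that \(S_{+,n}^{ij}=\varnothing\) is equivalent to \(S_O^{+}\subseteq\Cn(S_O^{(i)})\), which is the second. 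Corollary~\ref{cor:fidelity-core-concentration}, with \(S_O=S_O^{(i)}\), then yields \eqref{eq:F-overlap} directly.

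The sentence ``redundant states contribute zero'' comes from Proposition~\ref{prop:noise-distortion-bounds}(i), which gives \(\bar d_{\Cn}(s_o\mid\mathfrak C^{ij})=0\) for every \(s_o\in J^{(i)}\). That argument rests on \(\Cn(\tilde S_O)=\Cn(S_O^{(i)})\). In the heterogeneous setting this identity follows from Corollary~\ref{cor:overlap-sufficient} (equivalently, Corollary~\ref{cor:e2e-noise-fidelity}(iii)), so no new reasoning is required.

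For part~(ii), \(S_{-}^{ij}=S_{+}^{ij}=\varnothing\) gives \(S_O^{(j)}=S_O^{(i)}\) by Proposition~\ref{prop:overlap-partition}(ii)--(iii), and hence \(\tilde S_O=S_O^{(i)}\). Proposition~\ref{prop:noise-distortion-bounds}(ii) then states that every reachable \(\hat s_o\) lies in \(S_O^{(i)}\subseteq\Cn(A)\), using Proposition~\ref{prop:atom-core-correct}(iv), so the first branch of \eqref{eq:d-Dd} applies. The ``iff'' is Proposition~\ref{prop:structural-indices}(iii). Since membership \(\hat s_o\in\Cn(A)\) is automatic in this case, the condition there reduces to \(\Dd(\hat s_o\mid A)=\Dd(s_o\mid A)\) for every \(s_o\) and every \(\hat s_o\in\supp(\kappa_{\mathrm{sem}}^{ij}(\cdot\mid s_o))\). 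That is exactly the statement that the kernel preserves derivation depth.

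I do not expect a genuine obstacle. The only point needing care is to confirm that the general results, which were phrased with the reconstructed space \(\hat S_O\), apply with \(\hat S_O=S_O^{(j)}\) and the overlap-defined noise pair. Definition~\ref{def:heterogeneous-channel} and Proposition~\ref{prop:e2e-carrier-rep} guarantee this. A secondary check is that in part~(i) the strong condition \(A_{-}^{ij}=\varnothing\) is what is used, and not merely (F1).
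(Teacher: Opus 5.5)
Your proposal is correct and follows essentially the same route as the paper. In part~(i) the paper also reduces the overlap hypotheses to the core-preserving noise conditions via Corollary~\ref{cor:overlap-sufficient}. It then applies Proposition~\ref{prop:noise-distortion-bounds}(i) to drop the non-core states from the maximum, which is just the proof of Corollary~\ref{cor:fidelity-core-concentration} written out; your explicit check of \(S_O^{+}\subseteq\Cn(S_O^{(i)})\) via Proposition~\ref{prop:overlap-noise}(iii) states a hypothesis the paper leaves implicit. In part~(ii) the paper uses the same chain: Proposition~\ref{prop:atom-core-correct}(iv), Proposition~\ref{prop:noise-distortion-bounds}(ii), and Proposition~\ref{prop:structural-indices}(iii).
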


\begin{proof}
\textup{(i)}:\
By Corollary~\ref{cor:overlap-sufficient},
\(\Cn(S_O^{(i)})=\Cn(S_O^{(j)})\) and
\(A^{(i)}\cap S_O^{-}=A_{-}^{ij}=\varnothing\).
Proposition~\ref{prop:noise-distortion-bounds}(i) gives
\(d_{\Cn}(s_o,\hat s_o\mid S_O^{(i)})=0\) for every
\(s_o\in S_O^{(i)}\setminus A\) and
\(\hat s_o\in S_O^{(j)}\).
Hence
\(\max_{s_o\in S_O^{(i)}}
  \bar d_{\Cn}(s_o\mid\mathfrak C^{ij})
  =\max_{a\in A}
  \bar d_{\Cn}(a\mid\mathfrak C^{ij})\),
and the conclusion follows from
Definition~\ref{def:fidelity-index}.

\smallskip\noindent
\textup{(ii)}:\
When \(S_O^{(i)}=S_O^{(j)}\), the noise pair is trivial and
every \(\hat s_o\in S_O^{(j)}=S_O^{(i)}\) satisfies
\(\hat s_o\in\Cn(A)\)
\textup{(Proposition~\ref{prop:atom-core-correct}(iv))}.
The claim follows from
Proposition~\ref{prop:noise-distortion-bounds}(ii) and
Proposition~\ref{prop:structural-indices}(iii).
\end{proof}


\begin{proposition}[Receiver-side structural comparison from
  overlap]
\label{prop:overlap-structural}
Let \(\mathfrak C^{ij}\) be a heterogeneous semantic channel
with \(\tilde S_O=S_O^{(j)}\).
\begin{enumerate}[label=\textup{(\roman*)}]
  \item \emph{Atomicity shift:}\;
        \(\Delta\mathsf{A}(\mathfrak C^{ij})
          =|A^{(j)}|-|A^{(i)}|\).
  \item \emph{Depth shift:}\;
        \(\Delta\mathsf{D_d}(\mathfrak C^{ij})
          =\mathsf{D_d}(\mathcal I^{(j)})
          -\mathsf{D_d}(\mathcal I^{(i)})\).
  \item \emph{Under core-preserving conditions
        \textup{(}\(A_{-}^{ij}=\varnothing\),
        \(S_{+,n}^{ij}=\varnothing\)\textup{):}}
        \begin{enumerate}[label=\textup{(\alph*)}]
          \item \(\Cn(S_O^{(j)})=\Cn(S_O^{(i)})\)
                \textup{(Corollary~\ref{cor:overlap-sufficient})}.
          \item \(|A^{(j)}|\le|A^{(i)}|\), i.e.,
                \(\Delta\mathsf{A}\le 0\)
                \textup{(Proposition~\ref{prop:structural-comparison}(ii))}.
                The inequality is strict when some element of
                \(A^{(i)}\) becomes redundant in~\(S_O^{(j)}\)
                due to the presence of derivable surplus
                states~\(S_{+,d}^{ij}\).
          \item Equality
                \(\Delta\mathsf{A}=0\) holds when
                \(S_{+,d}^{ij}=\varnothing\) and
                \(S_{-}^{ij}\cap J^{(i)}=S_{-}^{ij}\)
                \textup{(}only non-core states are
                lost\textup{)}, because then
                \(S_O^{(j)}=A^{(i)}\cup
                  (J^{(i)}\setminus S_{-}^{ij})\) and
                \(A^{(i)}\) remains irredundant
                in~\(S_O^{(j)}\).
        \end{enumerate}
  \item \emph{Trivial noise pair
        \textup{(}\(S_O^{(i)}=S_O^{(j)}\)\textup{):}}\;
        \(\Delta\mathsf{A}=0\) and
        \(\Delta\mathsf{D_d}=0\)
        \textup{(Proposition~\ref{prop:structural-comparison}(i))}.
\end{enumerate}
\end{proposition}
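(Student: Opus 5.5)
Parts (i), (ii) and (iv) are bookkeeping; the real content is (iii)(b)--(c). For (i) and (ii) I will unfold Definition~\ref{def:receiver-structural} with \(\tilde S_O=S_O^{(j)}\) (Definition~\ref{def:heterogeneous-channel}, Proposition~\ref{prop:overlap-partition}(vi)). This gives \(\Atom(\tilde S_O)=A^{(j)}\) and \(\Atom(S_O)=A^{(i)}\), so (i) is immediate. The two maxima in \eqref{eq:delta-Dd} are by definition \(\mathsf{D_d}(\mathcal I^{(j)})\) and \(\mathsf{D_d}(\mathcal I^{(i)})\) (Definition~\ref{def:max-depth}, Remark~\ref{rem:receiver-ps-structure}), which is (ii). Part (iv) is Proposition~\ref{prop:structural-comparison}(i), since \(S_O^{(i)}=S_O^{(j)}\) gives identical cores and identical depth stratifications. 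Part (iii)(a) is Corollary~\ref{cor:overlap-sufficient} verbatim.

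For (iii)(b) I will first check that the hypotheses of Proposition~\ref{prop:structural-comparison}(ii) hold in the overlap language. We have \(A^{(i)}\cap S_O^{-}=A_-^{ij}=\varnothing\), and \(S_O^{+}=S_+^{ij}\subseteq\Cn(S_O^{(i)})\) by Proposition~\ref{prop:overlap-noise}(iii). Then I will invoke that proposition. It relies on the fact that \(A^{(i)}\subseteq S_O^{(j)}\) is a generating subset of \(\Cn(S_O^{(j)})\), which I will state explicitly. For the strictness remark I will exhibit the mechanism: some \(d\in S_{+,d}^{ij}\) yields an alternative derivation of some \(a\in A^{(i)}\) inside \(S_O^{(j)}\), so \(a\in\Cn(S_O^{(j)}\setminus\{a\})\). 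The scan of Definition~\ref{def:atom-so} can then discard \(a\). I will phrase this as a sufficient scenario, not a biconditional.

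For (iii)(c) I will first establish \(S_O^{(j)}=A^{(i)}\cup(J^{(i)}\setminus S_-^{ij})\). This uses \(S_+^{ij}=S_{+,n}^{ij}\cup S_{+,d}^{ij}=\varnothing\) and \(S_-^{ij}\subseteq J^{(i)}\), together with Proposition~\ref{prop:overlap-partition}(ii)--(iii). In particular \(S_O^{(j)}\subseteq S_O^{(i)}\). Next, every \(a\in A^{(i)}\) must survive irredundantization of \(S_O^{(j)}\). Every \(j'\in J^{(i)}\setminus S_-^{ij}\) lies in \(\Cn(A^{(i)})\) by Proposition~\ref{prop:atom-core-correct}(iv), so at its scan stage \(j'\) is removable as long as \(A^{(i)}\) is still present. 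The conclusion I aim for is \(A^{(j)}=A^{(i)}\), hence \(\Delta\mathsf{A}=0\).

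\textbf{Main obstacle.} The delicate point in both (b) and (c) is that \(\Atom\) is a \emph{greedy, order-dependent} irredundantization, not a minimum-cardinality basis.
\begin{itemize}
\item For (b), an irredundant subset of \(S_O^{(j)}\) produced by the scan need not have cardinality at most that of the generating set \(A^{(i)}\) without further structure. I would try to close this by showing, with the scan order shared under (CP2), that each retained element of \(S_O^{(j)}\) can be injected into \(A^{(i)}\). Failing that, I would record the Horn/Datalog ground-fact setting as the regime where it holds: there the non-derivable facts are forced into every generating set.
\item For (c), the irredundancy of \(a\) holds relative to the final \(A^{(i)}\setminus\{a\}\), not relative to larger intermediate sets. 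To argue that \(a\) is retained when the scan reaches it, I would compare the intermediate sets of the two scans (over \(S_O^{(i)}\) and over its subset \(S_O^{(j)}\)) under the common canonical order. The key monotonicity (Cn2) step to verify is that each intermediate set of the \(S_O^{(j)}\)-scan is contained in the corresponding one of the \(S_O^{(i)}\)-scan.
\end{itemize}
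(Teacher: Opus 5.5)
Your parts (i), (ii), (iii)(a), (iv) and (iii)(c) go exactly as in the paper. In particular, the paper proves (iii)(c) by the simultaneous induction you outline over the shared canonical order. Each \(a\in A^{(i)}\) survives because, at its step, the current set of the \(S_O^{(j)}\)-scan sits inside that of the \(S_O^{(i)}\)-scan, and then (Cn2) applies. Each \(j'\in J^{(i)}\setminus S_-^{ij}\) is discarded because \(A^{(i)}\) is still present and \(j'\in\Cn(A^{(i)})\).

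The gap is (iii)(b), and the obstacle you flag is fatal rather than technical: the inequality is false in general. So neither an injection into \(A^{(i)}\) nor a retreat to Horn/Datalog ground atoms can rescue it. The paper's proof simply cites Proposition~\ref{prop:structural-comparison}(ii). That argument (``\(\Atom(\tilde S_O)\) is a subset of some generating set with \(|\Atom(\tilde S_O)|\le|A|\)'') is precisely the step you doubted. The scan of Definition~\ref{def:atom-so} returns an inclusion-minimal generating subset, and such subsets need not be equicardinal.

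\emph{Counterexample.} Take the Datalog rules \(P(x)\leftarrow Q(x),R(x)\), \(Q(x)\leftarrow P(x)\), \(R(x)\leftarrow P(x)\), with \(S_O^{(i)}=\{P(a)\}\), \(S_O^{(j)}=\{P(a),Q(a),R(a)\}\), and order \(P(a)<Q(a)<R(a)\). Then \(A^{(i)}=\{P(a)\}\), \(A_-^{ij}=\varnothing\) and \(S_+^{ij}=\{Q(a),R(a)\}\subseteq\Cn(S_O^{(i)})\), so the core-preserving hypotheses hold. Yet the scan of \(S_O^{(j)}\) discards \(P(a)\in\Cn(\{Q(a),R(a)\})\) and keeps \(Q(a)\) and \(R(a)\), neither of which derives the other. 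Hence \(|A^{(j)}|=2>1=|A^{(i)}|\). Your fallback (underivable facts lie in every generating set) is true but beside the point: the trouble is a sender core element that is derivable from the surplus.

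\emph{Strictness.} The strictness mechanism fails as well: discarding some \(a\in A^{(i)}\) can raise the core size, leave it fixed, or lower it.
\begin{itemize}
\item Raise: the example above.
\item Fixed: in the paper's own Edge/Path program, \(\{\mathit{Edge}(a,b),\mathit{Edge}(b,a),\mathit{Path}(a,c)\}\) is irredundant. Adding the derivable \(\mathit{Path}(b,c)\) makes the scan trade \(\mathit{Path}(a,c)\) for \(\mathit{Path}(b,c)\), so \(\Delta\mathsf{A}=0\).
\item Lower: rules \(R(x)\leftarrow P(x),Q(x)\), \(P(x)\leftarrow R(x)\), \(Q(x)\leftarrow R(x)\), sender \(\{P(a),Q(a)\}\), surplus \(R(a)\).
\end{itemize}

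\emph{What is provable.} Your tools do give a dichotomy. If the scan of \(S_O^{(j)}\) discards no element of \(A^{(i)}\), then it discards everything else, since each other element lies in \(\Cn(A^{(i)})\) while \(A^{(i)}\) is present. In that case \(A^{(j)}=A^{(i)}\); this is exactly what makes (iii)(c) work. Otherwise \(\Delta\mathsf{A}\) has no determined sign. Statement (iii)(b) therefore needs an extra hypothesis, or must be replaced by this dichotomy. One such hypothesis is an exchange property for \(\Cn\), which makes all irredundant generating sets equicardinal and in fact forces \(\Delta\mathsf{A}=0\).
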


\begin{proof}
Parts~\textup{(i)} and~\textup{(ii)} are direct from
Definition~\ref{def:receiver-structural} with
\(\tilde S_O=S_O^{(j)}\).

\smallskip\noindent
\textup{(iii)(a)}:
Corollary~\ref{cor:overlap-sufficient}.

\smallskip\noindent
\textup{(iii)(b)}:
By Proposition~\ref{prop:structural-comparison}(ii), applied
with \(A=A^{(i)}\), \(S_O^{-}=S_{-}^{ij}\),
\(S_O^{+}=S_{+}^{ij}\).
When \(S_{+,d}^{ij}\neq\varnothing\), some
\(d\in S_{+,d}^{ij}\subseteq\Cn(A^{(i)})\) may make a
previously irredundant \(a\in A^{(i)}\) redundant
in~\(S_O^{(j)}\) (if
\(a\in\Cn\bigl((S_O^{(j)}\setminus\{a\})\bigr)\) due to
the presence of~\(d\)).

\smallskip\noindent
\textup{(iii)(c)}:\
When \(S_{+,d}^{ij}=S_{+,n}^{ij}=\varnothing\), we have
\(S_{+}^{ij}=\varnothing\) and
\(S_O^{(j)}\subseteq S_O^{(i)}\).
Since \(A_{-}^{ij}=\varnothing\),
\(A^{(i)}\subseteq S_O^{(j)}\).
Since \(S_{-}^{ij}\subseteq J^{(i)}\),
\(S_O^{(j)}=A^{(i)}\cup(J^{(i)}\setminus S_{-}^{ij})\).

We show \(\Atom(S_O^{(j)})=A^{(i)}\) by a simultaneous induction on
the canonical order of~\(S_O^{(j)}\).
Let \(s_1<s_2<\cdots<s_m\) be the elements of~\(S_O^{(j)}\) in
canonical order, and let \(B_k\) denote the current set after
the irredundantization procedure
\textup{(Definition~\ref{def:atom-so})} has scanned
\(s_1,\ldots,s_k\).

\smallskip
\noindent\emph{Inductive claim.}\;
After scanning \(s_1,\ldots,s_k\):
\textup{(a)}~every \(a\in A^{(i)}\) with \(a\le s_k\) has
survived (remains in~\(B_k\)); and
\textup{(b)}~every \(j'\in J^{(i)}\setminus S_{-}^{ij}\) with
\(j'\le s_k\) has been removed.

The base case \(k=0\) is vacuous.
For the inductive step, suppose the claim holds through~\(s_k\)
and consider~\(s_{k+1}\).

\smallskip
\emph{Case~1: \(s_{k+1}=a\in A^{(i)}\).}\;
By the inductive hypothesis, every element of
\(J^{(i)}\setminus S_{-}^{ij}\) preceding~\(a\) has been
removed, and every element of~\(A^{(i)}\) preceding~\(a\) has
survived.
Hence the current set satisfies
\[
  B_k\setminus\{a\}
  \;=\;
  \bigl(A^{(i)}\cap\{s_1,\ldots,s_k\}\bigr)
  \;\cup\;
  \{s_{k+2},\ldots,s_m\}.
\]
In the irredundantization of~\(S_O^{(i)}\), when \(a\) was
scanned, the current set~\(B^{(i)}\) had precisely the same
structure---surviving core elements before~\(a\) plus all
elements of~\(S_O^{(i)}\) after~\(a\)---because elements of
\(J^{(i)}\) before~\(a\) were likewise removed at their own
scan steps (they are in~\(J^{(i)}\) by definition).
Since \(\{s_{k+2},\ldots,s_m\}\subseteq
\{\text{elements of }S_O^{(i)}\text{ after }a\}\)
(the former is a subset, possibly missing elements
of~\(S_{-}^{ij}\) after~\(a\)),
\(B_k\setminus\{a\}\subseteq B^{(i)}\setminus\{a\}\).
By monotonicity~\textup{(Cn2)},
\(\Cn(B_k\setminus\{a\})\subseteq\Cn(B^{(i)}\setminus\{a\})\).
Since \(a\) survived in the irredundantization of~\(S_O^{(i)}\),
\(a\notin\Cn(B^{(i)}\setminus\{a\})\), hence
\(a\notin\Cn(B_k\setminus\{a\})\), and \(a\) survives.

\smallskip
\emph{Case~2: \(s_{k+1}=j'\in J^{(i)}\setminus S_{-}^{ij}\).}\;
By the inductive hypothesis,
\(A^{(i)}\cap\{s_1,\ldots,s_k\}\subseteq B_k\), and all
elements of~\(A^{(i)}\) after~\(s_k\) are in~\(B_k\)
(not yet scanned).
Hence \(A^{(i)}\subseteq B_k\setminus\{j'\}\)
(noting \(j'\notin A^{(i)}\)).
By monotonicity,
\(\Cn(A^{(i)})\subseteq\Cn(B_k\setminus\{j'\})\).
Since
\(j'\in J^{(i)}\subseteq S_O^{(i)}\subseteq\Cn(A^{(i)})\)
\textup{(Proposition~\ref{prop:atom-core-correct}(iv))},
\(j'\in\Cn(B_k\setminus\{j'\})\), and \(j'\) is removed.

\smallskip
By induction, the output of the irredundantization
of~\(S_O^{(j)}\) retains exactly~\(A^{(i)}\).
For the reverse inclusion
\(\Atom(S_O^{(j)})\subseteq A^{(i)}\):
suppose for contradiction that
\(b\in\Atom(S_O^{(j)})\setminus A^{(i)}\).
Then \(b\in J^{(i)}\setminus S_{-}^{ij}\), but the induction
shows that every such element is removed---a contradiction.
Hence \(\Atom(S_O^{(j)})=A^{(i)}\) and
\(\Delta\mathsf{A}=0\).

\smallskip\noindent
\textup{(iv)}:
Immediate from
Proposition~\ref{prop:structural-comparison}(i).
\end{proof}


\begin{corollary}[Heterogeneous semantic Fano bound]
\label{cor:heterogeneous-fano}
Let \(P_O\in\Delta(S_O^{(i)})\) be full-support and let
\(\epsilon:=\bar d_H(\mathfrak C^{ij},P_O)\).
Then
\begin{equation}\label{eq:fano-hetero}
  I_{\mathrm{sem}}^{ij}(P_O,\mathfrak C^{ij})
  \;\ge\;
  H(\mathsf{S}_o)
  -h_b(\epsilon)
  -\epsilon\log\bigl(|S_O^{(j)}|-1\bigr),
\end{equation}
where \(h_b\) is the binary entropy.
Moreover, by
Proposition~\textup{\ref{prop:noise-pair-indices}(v)},
\begin{equation}\label{eq:eps-hetero}
  \epsilon
  \;\le\;
  1-\Phi_{\Atom}(\mathfrak C^{ij})\cdot P_O(A^{(i)}).
\end{equation}
Consequently, when \(A_{-}^{ij}=\varnothing\) and
\(\Phi_{\Atom}(\mathfrak C^{ij})\) is close to~\(1\)
\textup{(}e.g., because the carrier channel is
reliable\textup{)}, the right-hand side
of~\eqref{eq:fano-hetero} is close to
\(H(\mathsf{S}_o)\), i.e., nearly all source entropy is
transmitted.
When \(A_{-}^{ij}\neq\varnothing\),
\(\Phi_{\Atom}=0\) by~\eqref{eq:Phi-overlap} and the bound
yields only the trivial lower bound
\(I_{\mathrm{sem}}^{ij}\ge 0\).
\end{corollary}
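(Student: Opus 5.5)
This corollary is the specialization of Theorem~\ref{thm:semantic-fano} and Proposition~\ref{prop:noise-pair-indices}(v) to the heterogeneous channel $\mathfrak C^{ij}$. The plan has three steps.

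\textbf{Step 1: the Fano bound.} By Definition~\ref{def:heterogeneous-channel}, the end-to-end kernel $\kappa_{\mathrm{sem}}^{ij}$ maps $S_O^{(i)}$ into $S_O^{(j)}$. By Proposition~\ref{prop:overlap-partition}(vi), the noisy semantic base is $\tilde S_O=S_O^{(j)}$. Theorem~\ref{thm:semantic-fano}(i) then applies verbatim with $|\tilde S_O|=|S_O^{(j)}|$.

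If one prefers to argue directly, apply Fano's inequality to the joint distribution $P_O\cdot\kappa_{\mathrm{sem}}^{ij}$ on $S_O^{(i)}\times S_O^{(j)}$. The error event $\{\mathsf S_o\neq\hat{\mathsf S}_o\}$ has probability exactly $\epsilon=\bar d_H$. Given an error, $\mathsf S_o$ ranges over at most $|S_O^{(i)}|-1$ values, so the standard estimator form uses that count. The paper's form uses the output alphabet size $|S_O^{(j)}|-1$.

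This is the one point needing care. If $|S_O^{(j)}|<|S_O^{(i)}|$, the paper's version relies on the convention of Theorem~\ref{thm:semantic-fano}(i), which I will simply cite. I would add a remark that the bound with $\max(|S_O^{(i)}|,|S_O^{(j)}|)-1$ is unconditionally valid. Subtracting $H(\mathsf S_o\mid\hat{\mathsf S}_o)$ from $H(\mathsf S_o)$ then gives \eqref{eq:fano-hetero}.

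\textbf{Step 2: the error bound.} Inequality \eqref{eq:eps-hetero} is Proposition~\ref{prop:noise-pair-indices}(v) with $A=A^{(i)}$ and $\Phi_{\Atom}$ evaluated by \eqref{eq:Phi-overlap}. The two cases of that proof correspond exactly to $A_-^{ij}\neq\varnothing$ (trivial bound) and $A_-^{ij}=\varnothing$. In the latter case $A^{(i)}\subseteq S_{\cap}^{ij}=\tilde S_O^{\cap}$ by Proposition~\ref{prop:overlap-noise}(i).

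\textbf{Step 3: the two consequences.} When $A_-^{ij}=\varnothing$ and $\Phi_{\Atom}\to1$, the bound \eqref{eq:eps-hetero} gives $\epsilon\le 1-P_O(A^{(i)})+o(1)$. This is small only if $P_O(A^{(i)})$ is also near $1$; otherwise one should appeal to the Theorem~\ref{thm:semantic-fano}(ii) regime with $P_O(A^{(i)})$ bounded below. I would state the precise limiting claim as in that theorem: since $h_b$ and $\epsilon\log(\cdot)$ are continuous and vanish at $\epsilon=0$, the right-hand side of \eqref{eq:fano-hetero} tends to $H(\mathsf S_o)$ as $\epsilon\to0$.

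When $A_-^{ij}\neq\varnothing$, we have $\Phi_{\Atom}=0$ by \eqref{eq:Phi-overlap}, so \eqref{eq:eps-hetero} reads $\epsilon\le1$. The Fano right-hand side may then be negative, and only $I_{\mathrm{sem}}^{ij}\ge0$, from nonnegativity of mutual information, survives.

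The main obstacle is interpretive rather than technical. It lies in making the ``close to $H(\mathsf S_o)$'' statement honest, which requires controlling $P_O(A^{(i)})$ alongside $\Phi_{\Atom}$. I would handle it by inheriting the exact hypothesis of Theorem~\ref{thm:semantic-fano}(ii).
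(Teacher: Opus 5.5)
Your route is the paper's own proof: Theorem~\ref{thm:semantic-fano}(i) with \(\tilde S_O=S_O^{(j)}\), then Proposition~\ref{prop:noise-pair-indices}(v). You were right to distrust the count \(|S_O^{(j)}|-1\). But citing that theorem does not settle the point, and both of your fallback claims in Step~1 are false.

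Fano only gives \(H(\mathsf S_o\mid\hat{\mathsf S}_o)\le h_b(\epsilon)+\epsilon\max_{\hat s}\log\bigl|S_O^{(i)}\setminus\{\hat s\}\bigr|\). When the output \(\hat s\in S_{+}^{ij}\) is a surplus state, the error is automatic, yet \(\mathsf S_o\) still ranges over all \(|S_O^{(i)}|\) values, not \(|S_O^{(i)}|-1\).

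Concretely, let \(S_O^{(i)}=\{x_1,x_2\}\) with uniform \(P_O\), and \(S_O^{(j)}=\{x_1,y\}\) with \(y\notin S_O^{(i)}\). Let the decoder always output \(y\), which (FE2) permits. Then \(\epsilon=1\) and \(I_{\mathrm{sem}}^{ij}=0\). Yet \eqref{eq:fano-hetero} and your \(\max(|S_O^{(i)}|,|S_O^{(j)}|)-1\) variant both have right-hand side \(1-h_b(1)-\log 1=1\). The two alphabets have equal size here, so the problem is not confined to \(|S_O^{(j)}|<|S_O^{(i)}|\).

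More generally, a constant decoder violates \eqref{eq:fano-hetero} whenever \(|S_O^{(j)}|\le|S_O^{(i)}|\) and \(S_O^{(j)}\neq S_O^{(i)}\). This includes the pair \((1,3)\) of the Datalog example. So the displayed inequality has no proof in that generality, and the paper's one-line citation has the same defect.

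What is true is the source-side form \(I_{\mathrm{sem}}^{ij}\ge H(\mathsf S_o)-h_b(\epsilon)-\epsilon\log|S_O^{(i)}|\), which is Fano for an estimator with arbitrary range. It sharpens to \(\log(|S_O^{(i)}|-1)\) when \(p_{+}(s_o)=0\) for all \(s_o\). The output-side count \(|S_O^{(j)}|-1\) is legitimate only when \(|S_O^{(j)}|>|S_O^{(i)}|\) or \(S_O^{(j)}=S_O^{(i)}\).

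In Step~3 you correctly observe that, for \(A_{-}^{ij}=\varnothing\), \eqref{eq:eps-hetero} only gives \(\epsilon\le 1-P_O(A^{(i)})+o(1)\) as \(\Phi_{\Atom}\to1\). Your remedy is to inherit the hypothesis \(P_O(A^{(i)})\ge p_{\min}>0\) from Theorem~\ref{thm:semantic-fano}(ii), but that does not produce \(\epsilon\to0\). With \(\Phi_{\Atom}=1\) and \(P_O(A^{(i)})=1/2\) you get only \(\epsilon\le 1/2\). In the paper's Datalog example (uniform \(P_O\), \(P_O(A^{(1)})=1/2\), \(\Phi_{\Atom}=0.9\)) you get only \(\epsilon\le 0.55\). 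That step of Theorem~\ref{thm:semantic-fano}(ii) is itself unsound, so citing it cannot rescue the ``consequently'' clause.

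The defensible statement is the one you end with. The corrected right-hand side tends to \(H(\mathsf S_o)\) as \(\epsilon\to0\), and \eqref{eq:eps-hetero} forces \(\epsilon\to0\) only when \(\Phi_{\Atom}(\mathfrak C^{ij})\,P_O(A^{(i)})\to1\).

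Your reading of the case \(A_{-}^{ij}\neq\varnothing\) is right once the count is corrected: only \(\epsilon\le1\) is available, so nothing beyond \(I_{\mathrm{sem}}^{ij}\ge0\) follows. With the output-side count, the worst case over \(\epsilon\in[0,1]\) is \(H(\mathsf S_o)-\log|S_O^{(j)}|\), which can even be spuriously positive.
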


\begin{proof}
Equation~\eqref{eq:fano-hetero} is
Theorem~\ref{thm:semantic-fano}(i) applied with
\(\tilde S_O=S_O^{(j)}\).
Equation~\eqref{eq:eps-hetero} is
Proposition~\ref{prop:noise-pair-indices}(v).
When \(A_{-}^{ij}\neq\varnothing\),
\(\Phi_{\Atom}=0\) gives \(\epsilon\le 1\), so
\(h_b(\epsilon)+\epsilon\log(|S_O^{(j)}|-1)\le\log|S_O^{(j)}|\)
and the lower bound cannot exceed zero in a nontrivial way.
\end{proof}


\begin{proposition}[Semantic capacity from overlap]
\label{prop:overlap-capacity}
Let \(W_{ij}:S_C\rightsquigarrow\hat S_C\) be the carrier
channel kernel for the pair~\((i,j)\).
Under the full enabling assumption
\textup{(Assumption~\ref{assump:full-enabling-hetero})} and
the carrier size conditions of
Remark~\textup{\ref{rem:carrier-size}}:
\begin{enumerate}[label=\textup{(\roman*)}]
  \item \emph{Data processing chain:}\;
        \(I_{\mathrm{sem}}^{ij}
          \le C_{\mathrm{sem}}^{ij}(W_{ij})
          \le C(W_{ij})\).
  \item \emph{Capacity equality under full enabling:}\;
        \(C_{\mathrm{sem}}^{ij}(W_{ij})=C(W_{ij})\).
  \item \emph{Source entropy bound:}\;
        \(I_{\mathrm{sem}}^{ij}
          \le\min\bigl(\log|S_O^{(i)}|,\;
                       \log|S_O^{(j)}|\bigr)\).
\end{enumerate}
\end{proposition}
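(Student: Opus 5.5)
The proof is a direct specialization of the general results of Section~\ref{subsec:invariants} to the heterogeneous channel $\mathfrak C^{ij}$ of Definition~\ref{def:heterogeneous-channel}. The source space is $S_O^{(i)}$, the reconstructed space is $\hat S_O=S_O^{(j)}$, and the carrier kernel is $W_{ij}$. We take $C_{\mathrm{sem}}^{ij}(W_{ij})$ and $I_{\mathrm{sem}}^{ij}$ to be the quantities of Definitions~\ref{def:semantic-mi}--\ref{def:semantic-cap}, with the feasible encoders and decoders ranging over $\mathcal K(\mathcal I^{(i)})$ and $\mathcal K(\mathcal I_{\mathrm{dec}}^{(j)})$.

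For part (i), the first inequality $I_{\mathrm{sem}}^{ij}\le C_{\mathrm{sem}}^{ij}(W_{ij})$ holds by definition, since $C_{\mathrm{sem}}^{ij}$ is the maximum of $I_{\mathrm{sem}}^{ij}$ over all admissible $(P_O,\kappa_{\enc},D)$. The second inequality is Theorem~\ref{thm:data-processing}(i), applied to the Markov chain $\mathsf S_o\to\mathsf S_c\to\hat{\mathsf S}_c\to\hat{\mathsf S}_o$ of Remark~\ref{rem:end-to-end-dist}.

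Part (iii) follows from Proposition~\ref{prop:structural-bound}(i) and (iii). Specifically, $I(\mathsf S_o;\hat{\mathsf S}_o)\le H(\mathsf S_o)\le\log|S_O^{(i)}|$ and $I(\mathsf S_o;\hat{\mathsf S}_o)\le H(\hat{\mathsf S}_o)\le\log|S_O^{(j)}|$, because $\hat{\mathsf S}_o$ is supported on $S_O^{(j)}$.

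Part (ii) is the only step with content, and also the main obstacle. The plan is to invoke Theorem~\ref{thm:data-processing}(iii) with $S_O=S_O^{(i)}$ and $\hat S_O=S_O^{(j)}$. Conditions (a) and (b) of that theorem are exactly (FE1) and (FE2) of Assumption~\ref{assump:full-enabling-hetero}. Condition (c) requires $|S_O^{(i)}|\ge|S_C|$ and $|S_O^{(j)}|\ge|\hat S_C|$. These are the opposite of the standing size assumptions (SA4), which give $|S_C|\ge\max_i|S_O^{(i)}|$ and $|\hat S_C|\ge|S_C|$. Remark~\ref{rem:carrier-size} states the condition in this reversed direction, so the two do not match. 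Under (SA4) alone the equality can fail: by part (iii), $C_{\mathrm{sem}}^{ij}\le\log|S_O^{(i)}|$, which may be strictly smaller than $C(W_{ij})$ whenever $|S_C|>|S_O^{(i)}|$.

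I will therefore run the construction from the proof of Theorem~\ref{thm:data-processing}(iii) under condition (c) as that theorem requires it, and treat this as an explicit additional hypothesis in the statement. First, fix a capacity-achieving input distribution $P_C^*$ for $W_{ij}$. Next, choose a surjection $f:S_O^{(i)}\to S_C$; this is allowed by (FE1) together with $|S_O^{(i)}|\ge|S_C|$. Then split the mass of $P_C^*$ across the fibres of $f$ to define $P_O$. Finally, choose an injection $g:\hat S_C\to S_O^{(j)}$ as a deterministic decoder, which is allowed by (FE2) together with $|S_O^{(j)}|\ge|\hat S_C|$.

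With these choices, the two identity steps of that proof carry over unchanged. Because $\mathsf S_c=f(\mathsf S_o)$ is a deterministic function of $\mathsf S_o$, we get $I(\mathsf S_o;\hat{\mathsf S}_c)=I(\mathsf S_c;\hat{\mathsf S}_c)$. Because $g$ is injective, we get $I(\mathsf S_o;\hat{\mathsf S}_o)=I(\mathsf S_o;\hat{\mathsf S}_c)$. Together these give $I_{\mathrm{sem}}^{ij}=C(W_{ij})$, which combined with part (i) yields equality.

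As a fallback, suppose only (SA4) is available. Then the honest conclusion for (ii) is the weaker statement $C_{\mathrm{sem}}^{ij}(W_{ij})=\min\bigl(C(W_{ij}),\cdot\bigr)$, with the second argument to be determined. I would note this mismatch explicitly rather than claim equality unconditionally.
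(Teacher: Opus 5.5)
Your treatment of (i) and (iii) matches the paper's. The first inequality of (i) holds by definition, the second is Theorem~\ref{thm:data-processing}(i), and (iii) is $I(\mathsf S_o;\hat{\mathsf S}_o)\le\min\bigl(H(\mathsf S_o),H(\hat{\mathsf S}_o)\bigr)$.

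For (ii), the paper likewise just cites Theorem~\ref{thm:data-processing}(iii). It asserts that the carrier size conditions ``ensure $|S_O^{(i)}|\le|S_C|$ and $|S_O^{(j)}|\le|\hat S_C|$''. But condition~(c) of that theorem requires $|S_O^{(i)}|\ge|S_C|$ and $|S_O^{(j)}|\ge|\hat S_C|$. Your diagnosis is therefore right, and it is not a gap on your side:
\begin{itemize}
\item the paper's argument for (ii) has the inequalities reversed;
\item (ii) is false under the stated hypotheses;
\item your noiseless-carrier counterexample is valid: with $|S_C|=|\hat S_C|>|S_O^{(i)}|$ you get $C(W_{ij})=\log|S_C|>\log|S_O^{(i)}|\ge C^{ij}_{\mathrm{sem}}$.
\end{itemize}
Your repaired proof under condition~(c) is correct. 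Note, however, that (c) together with the size conditions of Remark~\ref{rem:carrier-size} forces $|S_O^{(i)}|=|S_C|=|\hat S_C|=|S_O^{(j)}|$. So (c) should replace that hypothesis, not be added to it.

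Two refinements.

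First, the failure is broader than the bound $C^{ij}_{\mathrm{sem}}\le\log|S_O^{(i)}|$ suggests. For fixed $P_O$ and $D$, the end-to-end kernel is affine in $\kappa_{\enc}$, and mutual information is convex in the channel. Hence the optimum is attained at a deterministic encoder, and $C^{ij}_{\mathrm{sem}}$ is at most the capacity of $W_{ij}$ restricted to $|S_O^{(i)}|$ input symbols. The paper's own Datalog example ($q=10$, $p=0.1$, $|S_O^{(1)}|=8$) shows this:
\begin{itemize}
\item the capacity-achieving input is uniquely uniform on all ten symbols, because the optimal output law is unique and this channel's input-to-output map is injective;
\item the best eight-input restriction gives only about $2.32$ bits;
\item so the tabulated $C^{ij}_{\mathrm{sem}}=2.536$ is wrong, even though $C(W)<\log 8$.
\end{itemize}

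Second, for the same reason, drop your fallback ``$C^{ij}_{\mathrm{sem}}=\min(C(W_{ij}),\cdot)$''. Under the stated size conditions, only the inequality $C^{ij}_{\mathrm{sem}}\le\min\bigl(C(W_{ij}),\log|S_O^{(i)}|,\log|S_O^{(j)}|\bigr)$ from (i) and (iii) survives. In that example, $C^{ij}_{\mathrm{sem}}$ lies strictly below all three terms.
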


\begin{proof}
Part~\textup{(i)} is
Theorem~\ref{thm:data-processing}(i).
Part~\textup{(ii)} follows from
Theorem~\ref{thm:data-processing}(iii): the full enabling
assumption ensures conditions~(a)--(b) of that theorem, and
the carrier size conditions ensure
\(|S_O^{(i)}|\le|S_C|\) and
\(|S_O^{(j)}|\le|\hat S_C|\).
Part~\textup{(iii)} follows from
\(I(\mathsf{S}_o;\hat{\mathsf{S}}_o)
  \le\min(H(\mathsf{S}_o),\,H(\hat{\mathsf{S}}_o))\).
\end{proof}


\begin{remark}[Summary: invariant families and overlap
  dependence]
\label{rem:invariant-overlap-summary}
Table~\ref{tab:invariant-overlap} collects the six invariant
families of Theorem~\ref{thm:invariant-summary}, their
expressions in the overlap language, and the data they depend
on beyond the knowledge-base pair.

The set-level invariants (\(\rho_{\Atom}\),
\(\mathsf{F}_{\Cn}\)) and the structural comparison indices
(\(\Delta\mathsf{A}\), \(\Delta\mathsf{D_d}\)) are
\emph{fully determined} by the knowledge-base pair
\((S_O^{(i)},S_O^{(j)})\) and the proof
system~\(\mathsf{PS}\); they do not depend on the channel
kernel.
The noise-pair indices (\(\Phi_{\Atom}\), \(\Psi_{+}\)) and
the structural quality indices (\(\mathsf{F}\),
\(\mathsf{E}\)) are constrained by the overlap structure
(e.g., \(\Phi_{\Atom}=0\) whenever
\(A_{-}^{ij}\neq\varnothing\)) but additionally depend on the
channel kernel.
The information-theoretic invariants
(\(I_{\mathrm{sem}}\), \(C_{\mathrm{sem}}\), \(C(W)\)) depend
on the kernel and the carrier channel; the overlap structure
enters only through the enabling constraints and the output
alphabet size~\(|S_O^{(j)}|\).
\end{remark}

\begin{table}[t]
\centering
\caption{Semantic channel invariants expressed in the overlap
  language of
  Definition~\textup{\ref{def:overlap-decomposition}}.
  ``KB only'' indicates dependence on the knowledge-base pair
  alone; ``+kernel'' indicates additional dependence on the
  channel kernel
  \(\kappa_{\mathrm{sem}}^{ij}\).}
\label{tab:invariant-overlap}
\renewcommand{\arraystretch}{1.25}
\begin{tabularx}{\columnwidth}{@{}l l L@{}}
\toprule
\textbf{Family} & \textbf{Invariant} &
  \textbf{Overlap expression / constraint} \\
\midrule
\multirow{2}{*}{\makecell[l]{I.\;Source\\[-2pt]struct.}}
  & \(\mathsf{A}\) &
    \(|A^{(i)}|\) (sender core size) \\
  & \(\mathsf{D_d}\) &
    \(\max_{q\in S_O^{(i)}}\Dd(q\mid A^{(i)})\) \\
\midrule
\multirow{2}{*}{\makecell[l]{II.\;Set-level\\[-2pt](KB only)}}
  & \(\rho_{\Atom}\) &
    \((|A^{(i)}|-|A_{-}^{ij}|)/|A^{(i)}|\)
    \eqref{eq:rho-overlap} \\
  & \(\mathsf{F}_{\Cn}\) &
    \(=1\) iff \textup{(F1)}+\textup{(F2)}
    (Prop.~\ref{prop:overlap-fidelity}) \\
\midrule
\multirow{2}{*}{\makecell[l]{III.\;Noise-pair\\[-2pt](+kernel)}}
  & \(\Phi_{\Atom}\) &
    \(=0\) if \(A_{-}^{ij}\neq\varnothing\); else
    \(\min_a\kappa_{\mathrm{sem}}^{ij}(a\mid a)\) \\
  & \(\Psi_{+}\) &
    \(=0\) if \(S_{+}^{ij}=\varnothing\); else
    \(\max_{s_o}\sum_{\hat s_o\in S_{+}^{ij}}
      \kappa_{\mathrm{sem}}^{ij}(\hat s_o\mid s_o)\) \\
\midrule
\multirow{2}{*}{\makecell[l]{IV.\;Quality\\[-2pt](+kernel)}}
  & \(\mathsf{F}\) &
    Under (F1\('\))+(F2):
    eq.~\eqref{eq:F-overlap}; worst case on core only \\
  & \(\mathsf{E}\) &
    \(\max_{s_o}\bar d_{\Dd}(s_o\mid\mathfrak C^{ij})\)
    (kernel-dep.) \\
\midrule
\multirow{2}{*}{\makecell[l]{V.\;Receiver\\[-2pt]comp.\;(KB only)}}
  & \(\Delta\mathsf{A}\) &
    \(|A^{(j)}|-|A^{(i)}|\); \(\le 0\) under
    (F1\('\))+(F2) \\
  & \(\Delta\mathsf{D_d}\) &
    \(\mathsf{D_d}(\mathcal I^{(j)})
      -\mathsf{D_d}(\mathcal I^{(i)})\) \\
\midrule
\multirow{3}{*}{\makecell[l]{VI.\;Info-\\[-2pt]theoretic\\[-2pt]
  (+kernel,\\\(W\))}}
  & \(C(W_{ij})\) &
    Carrier-channel property only \\
  & \(C_{\mathrm{sem}}^{ij}\) &
    \(=C(W_{ij})\) under full enabling
    (Prop.~\ref{prop:overlap-capacity}(ii)) \\
  & \(I_{\mathrm{sem}}^{ij}\) &
    Bounded by \eqref{eq:fano-hetero}; depends on kernel
    and~\(P_O\) \\
\bottomrule
\end{tabularx}
\end{table}

\begin{remark}[Diagnostic use of the overlap--invariant
  correspondence]
\label{rem:diagnostic-use}
The correspondence in
Table~\ref{tab:invariant-overlap} enables a two-stage
diagnostic workflow for heterogeneous semantic channel design:

\smallskip\noindent
\emph{Stage~1 (offline, kernel-free).}\;
Given the knowledge-base pair \((S_O^{(i)},S_O^{(j)})\),
compute the overlap decomposition
\textup{(Definition~\ref{def:overlap-decomposition})} and
evaluate the binary feasibility tests
\(A_{-}^{ij}=\varnothing\) and
\(S_{+,n}^{ij}=\varnothing\).
If both hold, set-level closure fidelity
\(\mathsf{F}_{\Cn}=1\) is guaranteed
\textup{(Corollary~\ref{cor:overlap-sufficient})}, and the
two-layer code of Section~\ref{subsec:app-results} is directly
applicable.
If \(S_{+,n}^{ij}\neq\varnothing\) \textup{(}condition
\textup{(F2)} of Proposition~\ref{prop:overlap-fidelity}
fails\textup{)}, then \(\mathsf{F}_{\Cn}<1\) is unavoidable.
If \(A_{-}^{ij}\neq\varnothing\) but
\(A^{(i)}\subseteq\Cn(S_O^{(j)})\)
\textup{(}the weak condition \textup{(F1)} holds despite
\textup{(F1\({}'\))} failing\textup{)}, then
\(\mathsf{F}_{\Cn}=1\) remains possible but the standard
two-layer code does not apply; see
Remark~\textup{\ref{rem:strong-weak-coverage}} and
Remark~\textup{\ref{rem:weak-F1-coding}} in
Section~\textup{\ref{subsec:app-results}}.
If \(A^{(i)}\not\subseteq\Cn(S_O^{(j)})\), then
\(\mathsf{F}_{\Cn}<1\) is unavoidable
\textup{(Proposition~\ref{prop:overlap-fidelity})}, and the
designer must augment the receiver's vocabulary or accept
degraded fidelity (cf.\ the minimum-vocabulary criterion of
Proposition~\ref{prop:min-vocabulary}).

\smallskip\noindent
\emph{Stage~2 (online, kernel-dependent).}\;
Given a specific channel kernel
\(\kappa_{\mathrm{sem}}^{ij}\), compute
\(\Phi_{\Atom}\), \(\Psi_{+}\), \(\mathsf{F}\), and
\(\mathsf{E}\) to quantify the probabilistic quality of the
channel.
The Fano bound
\textup{(Corollary~\ref{cor:heterogeneous-fano})} then
translates the probabilistic indices into a lower bound on
\(I_{\mathrm{sem}}^{ij}\), completing the link between
knowledge-base structure and information throughput.
\end{remark}

\subsection{Main Results: Heterogeneous Compression and Broadcast}
\label{subsec:app-results}

This subsection derives the principal new results of the paper.
Part~1 addresses the pairwise unicast scenario
(Definition~\ref{def:pairwise-scenario}):
closure-reliable achievability, a heterogeneous deductive
compression theorem, an impossibility result when core coverage
fails, and a vocabulary design criterion.
Part~2 extends the theory to the broadcast scenario
(Definition~\ref{def:broadcast-scenario}).
Throughout, the standing assumptions of
Assumption~\ref{assump:standing-hetero} are in force.


\subsubsection*{Part 1: Pairwise Heterogeneous Communication}

\begin{theorem}[Closure reliability for a heterogeneous pair]
\label{thm:heterogeneous-closure}
Let \((i,j)\) be a sender--receiver pair with heterogeneous
semantic channel \(\mathfrak C^{ij}\)
\textup{(Definition~\ref{def:heterogeneous-channel})},
carrier channel kernel
\(W_{ij}:S_C\rightsquigarrow\hat S_C\) with \(C(W_{ij})>0\),
and overlap decomposition
\textup{(Definition~\ref{def:overlap-decomposition})}.
Assume:
\begin{enumerate}[label=\textup{(H\arabic*)}]
  \item \(A_{-}^{ij}=\varnothing\)
        \textup{(}the sender's irredundant core is contained in
        the receiver's vocabulary:
        \(A^{(i)}\subseteq S_O^{(j)}\)\textup{)};
  \item \(S_{+,n}^{ij}=\varnothing\)
        \textup{(}all surplus states in the receiver's vocabulary
        are derivable from the sender's knowledge base:
        \(S_{+}^{ij}\subseteq\Cn(S_O^{(i)})\)\textup{)}.
\end{enumerate}
Then:
\begin{enumerate}[label=\textup{(\roman*)}]
  \item \emph{Set-level closure fidelity:}\;
        \(\mathsf{F}_{\Cn}(S_O^{(i)},S_O^{(j)})=1\)
        \textup{(Corollary~\ref{cor:overlap-sufficient})}.
  \item \emph{Achievability:}\;
        There exists a sequence of
        \((n,|S_O^{(i)}|)\) semantic block codes
        \textup{(Definition~\ref{def:semantic-codebook})}
        with message set \(\mathcal M=S_O^{(i)}\),
        encoding into \(S_C^n\), decoding into
        \(\hat S_O=S_O^{(j)}\), and
        \(P_{e,\Cn}^{(n)}\to 0\) as \(n\to\infty\), provided
        \begin{equation}\label{eq:hetero-rate-condition}
          \frac{\log|A^{(i)}|}{n}\;<\;C(W_{ij}).
        \end{equation}
  \item \emph{Converse
        \textup{(}under
        Assumption~\textup{\ref{assump:core-disjoint}}
        for \(S_O^{(i)}\) with output space
        \(S_O^{(j)}\)\textup{):}}\;
        Any \((n,|S_O^{(i)}|)\) code with
        \(P_{e,\Cn}^{(n)}\le\epsilon\) satisfies
        \begin{equation}\label{eq:hetero-converse}
          \log|A^{(i)}|
          \;\le\;
          \frac{nC(W_{ij})+1}{1-\epsilon}.
        \end{equation}
\end{enumerate}
\end{theorem}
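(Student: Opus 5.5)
The proof has three parts, handled in order; the second carries essentially all the work. Part~(i) needs no new argument. Hypotheses (H1) and (H2) are exactly the hypotheses of Corollary~\ref{cor:overlap-sufficient}, so $\mathsf{F}_{\Cn}(S_O^{(i)},S_O^{(j)})=1$ follows at once.

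For part~(ii) I will transplant the two-layer construction of Theorem~\ref{thm:achievability}(ii) to the output space $\hat S_O=S_O^{(j)}$. Write $A=A^{(i)}$ and $J=J^{(i)}$.

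\textbf{Layer~1.} By the classical coding theorem, condition~\eqref{eq:hetero-rate-condition} yields an $(n,|A|)$ code $(f_n^A,g_n^A)$ for $W_{ij}$ with message set $A$ and maximal error $P_e^{(n)}(A)\to 0$. The decoder output $g_n^A(y^n)\in A$ is a legal output in $S_O^{(j)}$ precisely because (H1) gives $A\subseteq S_O^{(j)}$. This is where the strong coverage condition (F1$'$) of Remark~\ref{rem:strong-weak-coverage} is used. Full enabling (FE1)--(FE2) ensures that the deterministic encoder into $S_C^n$ and the decoder into $S_O^{(j)}$ are admissible.

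\textbf{Layer~2.} Fix $a_0\in A$ and extend the encoder by $f_n(j'):=f_n^A(a_0)$ for every $j'\in J$. Take $g_n:=g_n^A$, so the decoder always outputs an element of $A$.

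\textbf{Closure analysis.} For $m\in A$, correct core decoding gives $\hat s_o=m$, hence $d_{\Cn}(m,m\mid S_O^{(i)})=0$, and a failure occurs with probability at most $P_e^{(n)}(A)$. For $m=j'\in J$, the output $\hat a$ lies in $A\subseteq S_O^{(i)}\subseteq\Cn(S_O^{(i)})$. Since $j'$ is redundant, $\Cn(S_O^{(i)}\setminus\{j'\})=\Cn(S_O^{(i)})$. The monotonicity/idempotence argument used in Proposition~\ref{prop:rd-properties}(v) then gives $d_{\Cn}(j',\hat a\mid S_O^{(i)})=0$ deterministically. Hence $P_{e,\Cn}^{(n)}\le P_e^{(n)}(A)\to 0$.

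Hypothesis (H2) is not needed for this operational bound, since the decoder never emits surplus states. I will say so explicitly and note that (H2) enters only through part~(i).

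For part~(iii) I will repeat the argument of Theorem~\ref{thm:converse}(iii) with $\hat S_O=S_O^{(j)}$. Under Assumption~\ref{assump:core-disjoint}, now stated for output space $S_O^{(j)}$, the acceptable sets $R_{\Cn}(a)\subseteq S_O^{(j)}$, $a\in A$, are pairwise disjoint. The steps are:
\begin{enumerate}
  \item Define the preimages $B_a:=g_n^{-1}(R_{\Cn}(a))$; these are disjoint subsets of $\hat S_C^n$.
  \item Closure reliability gives $W_{ij}^{\otimes n}(B_a\mid f_n(a))\ge 1-\epsilon$ for each $a\in A$.
  \item Let $\hat a$ be the induced core decoder. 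Restricted to the codewords $\{f_n(a):a\in A\}$, it is an ordinary code with $|A|$ messages and maximal error at most $\epsilon$.
  \item Fano's inequality with a uniform message on $A$, together with $I\le nC(W_{ij})$ (data processing plus memorylessness), gives $\log|A|\le nC(W_{ij})+1+\epsilon\log|A|$, which rearranges to~\eqref{eq:hetero-converse}.
\end{enumerate}

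The main obstacle is bookkeeping in part~(iii): I must check that restricting attention to the sub-message set $A\subseteq\mathcal M$ is legitimate. It is, because the maximal-error criterion bounds each $a\in A$ individually. I also need to confirm that disjointness of the sets $R_{\Cn}(a)$ is the only place the receiver's vocabulary enters. A secondary check in part~(ii) is that $d_{\Cn}$ is measured relative to the sender's base $S_O^{(i)}$ while outputs live in $S_O^{(j)}$; the zero-distortion argument needs only $\hat a\in\Cn(S_O^{(i)})$, which holds because $\hat a\in A$.
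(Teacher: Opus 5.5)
Your proposal is correct and follows the paper's proof essentially step for step: part~(i) comes from Corollary~\ref{cor:overlap-sufficient}, and part~(ii) uses the same two-layer code. That code is a core code on $A^{(i)}$ with redundant messages sent as $f_n^A(a_0)$; (H1) makes the core outputs legal in $S_O^{(j)}$, redundant messages incur zero closure distortion, and part~(iii) reruns the disjoint-preimage Fano argument of Theorem~\ref{thm:converse}(iii) on the output space $S_O^{(j)}$. Your observation that (H2) enters only through part~(i) likewise matches Remark~\ref{rem:H1-H2-roles}.
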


\begin{proof}
\textup{(i)}:\
Immediate from Corollary~\ref{cor:overlap-sufficient}.

\smallskip\noindent
\textup{(ii)}:\
The code is constructed in two layers, adapting
Theorem~\ref{thm:achievability}(ii) to the heterogeneous
output alphabet \(\hat S_O=S_O^{(j)}\).

\emph{Layer~1 (core code).}\;
Since \(\log|A^{(i)}|/n<C(W_{ij})\), the classical channel
coding theorem
\cite{shannon1948mathematical,cover2006elements} yields an
\((n,|A^{(i)}|)\) block code
\((f_n^A,g_n^A)\) for~\(W_{ij}\) with message set
\(A^{(i)}\) and
\(P_e^{(n)}(A^{(i)})\to 0\).
By~\textup{(H1)},
\(A^{(i)}\subseteq S_O^{(j)}=\hat S_O\), so the decoder
can output elements of~\(A^{(i)}\).

\emph{Layer~2 (redundant extension).}\;
Fix an arbitrary \(a_0\in A^{(i)}\).
For each redundant state
\(j\in J^{(i)}=S_O^{(i)}\setminus A^{(i)}\), set
\(f_n(j):=f_n^A(a_0)\).
The decoder first applies~\(g_n^A\) to recover
\(\hat a\in A^{(i)}\) (or an incorrect element in the error
event), and outputs~\(\hat a\).

\emph{Closure analysis.}\;
For \(m\in A^{(i)}\): if the core code decodes correctly
(\(\hat a=m\)), then
\(d_{\Cn}(m,m\mid S_O^{(i)})=0\).
Error probability:
\(P_e^{(n)}(A^{(i)})\to 0\).

For \(m=j\in J^{(i)}\): the decoder outputs some
\(\hat a\in A^{(i)}\subseteq S_O^{(i)}\subseteq
\Cn(S_O^{(i)})\).
Since \(j\) is redundant in~\(S_O^{(i)}\),
\(\Cn(S_O^{(i)}\setminus\{j\})=\Cn(S_O^{(i)})\).
Because
\(\hat a\in\Cn(S_O^{(i)})=\Cn(S_O^{(i)}\setminus\{j\})\),
monotonicity and idempotence of~\(\Cn\) give
\(\Cn\bigl((S_O^{(i)}\setminus\{j\})\cup\{\hat a\}\bigr)
  =\Cn(S_O^{(i)}\setminus\{j\})=\Cn(S_O^{(i)})\),
so \(d_{\Cn}(j,\hat a\mid S_O^{(i)})=0\).
The closure error probability for redundant messages
is~\emph{zero} for all~\(n\).

Combining:
\(P_{e,\Cn}^{(n)}\le P_e^{(n)}(A^{(i)})\to 0\).

\smallskip\noindent
\textup{(iii)}:\
The argument is identical to the proof of
Theorem~\ref{thm:converse}(iii), with
\(S_O\) replaced by \(S_O^{(i)}\),
\(\hat S_O\) replaced by \(S_O^{(j)}\), and
\(A=\Atom(S_O^{(i)})=A^{(i)}\).
Under Assumption~\ref{assump:core-disjoint} (applied to
\(A^{(i)}\) with acceptable sets in~\(S_O^{(j)}\)),
the pairwise-disjoint decoding regions and the Fano argument
yield~\eqref{eq:hetero-converse}.
\end{proof}

\begin{remark}[Role of conditions \textup{(H1)} and
  \textup{(H2)}]
\label{rem:H1-H2-roles}
Condition~\textup{(H1)} is used only in the achievability
proof to ensure that the decoder can output core elements
(\(A^{(i)}\subseteq S_O^{(j)}\)).
Condition~\textup{(H2)} is used only to establish
set-level closure fidelity
\(\mathsf{F}_{\Cn}=1\) in part~\textup{(i)};
it does \emph{not} enter the achievability or converse
proofs, which depend only on the closure distortion
\(d_{\Cn}(\cdot,\cdot\mid S_O^{(i)})\) measured relative
to the \emph{sender's} knowledge base.
Hence the coding-theoretic conclusions~\textup{(ii)}
and~\textup{(iii)} hold under~\textup{(H1)} alone.
Condition~\textup{(H2)} provides the additional guarantee
that the receiver's overall knowledge base generates the
same deductive closure as the sender's.
\end{remark}

\begin{remark}[Weak core coverage and alternative decoding]
\label{rem:weak-F1-coding}
Theorem~\ref{thm:heterogeneous-closure} uses the strong core
coverage condition
\textup{(H1)}: \(A^{(i)}\subseteq S_O^{(j)}\).
As noted in Remark~\ref{rem:strong-weak-coverage}, set-level
closure fidelity \(\mathsf{F}_{\Cn}=1\) requires only the
weaker condition~\textup{(F1)}:
\(A^{(i)}\subseteq\Cn(S_O^{(j)})\).
When \textup{(F1)} holds but \textup{(H1)} fails
(i.e., some core element \(a\in A_{-}^{ij}\) is derivable from
\(S_O^{(j)}\) but not literally present), the two-layer code
cannot directly output~\(a\).
A modified decoder could instead output a
\emph{proxy element}
\(\hat a\in S_O^{(j)}\) satisfying
\(d_{\Cn}(a,\hat a\mid S_O^{(i)})=0\)---i.e., a state in
the receiver's vocabulary whose substitution for~\(a\) preserves
the sender's deductive closure.
Such a proxy exists whenever
\(a\in\Cn(S_O^{(j)})\), but identifying it requires
knowledge of the sender's closure structure at the decoder,
making the code design more involved.
A complete treatment of proxy-based decoding is deferred to
future work; the results of this section focus on the
operationally simpler setting where \textup{(H1)} holds.
The proxy-based decoding strategy shares conceptual
affinity with the inverse contextual reasoning of
Seo et~al.~\cite{seo2023bayesian}, who address the problem
of inferring a sender's communication context from noisy
observations using Bayesian methods.
\end{remark}

\begin{theorem}[Heterogeneous deductive compression]
\label{thm:heterogeneous-compression}
Under the hypotheses of
Theorem~\textup{\ref{thm:heterogeneous-closure}}, the
minimum blocklength for closure-reliable communication
of the full knowledge base \(S_O^{(i)}\) to agent~\(j\)
satisfies, for sufficiently small \(\epsilon>0\):
\begin{enumerate}[label=\textup{(\roman*)}]
  \item \emph{Closure blocklength:}\;
        \begin{align}\label{eq:hetero-blocklength-closure}
          \frac{(1-\epsilon)\log|A^{(i)}|-1}{C(W_{ij})}
          \;\le\;
          &n^*\bigl(S_O^{(i)},W_{ij},P_{e,\Cn},\epsilon\bigr) \nonumber \\
          \;\le\;
          &\left\lceil
            \frac{\log|A^{(i)}|}{C(W_{ij})-\delta(\epsilon)}
          \right\rceil,
        \end{align}
        where \(\delta(\epsilon)\to 0\) as \(\epsilon\to 0\).
  \item \emph{Hamming baseline:}\;
        Under the additional hypothesis
        \(S_O^{(i)}\subseteq S_O^{(j)}\)
        \textup{(}i.e., \(S_{-}^{ij}=\varnothing\),
        which strengthens \textup{(H1)} to full vocabulary
        containment\textup{)},
        \begin{equation}\label{eq:hetero-blocklength-Hamming}
          n^*\bigl(S_O^{(i)},W_{ij},P_e,\epsilon\bigr)
          \;\ge\;
          \frac{(1-\epsilon)\log|S_O^{(i)}|-1}{C(W_{ij})}.
        \end{equation}
  \item \emph{Deductive compression ratio:}\;
        When both bounds apply,
        \begin{equation}\label{eq:hetero-compression-ratio}
          \frac{n^*(P_{e,\Cn})}{n^*(P_e)}
          \;\approx\;
          \frac{\log|A^{(i)}|}{\log|S_O^{(i)}|}\,,
        \end{equation}
        identical to the homogeneous ratio of
        Corollary~\textup{\ref{cor:min-blocklength}}.
\end{enumerate}
\end{theorem}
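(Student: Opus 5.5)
The plan is to derive the three parts of the theorem by transporting the homogeneous blocklength results (Corollary~\ref{cor:min-blocklength}) through the heterogeneous coding theorem, Theorem~\ref{thm:heterogeneous-closure}. For the lower bound in \eqref{eq:hetero-blocklength-closure}, I will take any $(n,|S_O^{(i)}|)$ code with $P_{e,\Cn}^{(n)}\le\epsilon$. By the converse \eqref{eq:hetero-converse}, which holds under (SA5), this gives $(1-\epsilon)\log|A^{(i)}|\le nC(W_{ij})+1$. Solving for $n$ and minimizing over admissible codes yields the stated lower bound on $n^*$.

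For the upper bound, I will use the two-layer construction from the proof of Theorem~\ref{thm:heterogeneous-closure}(ii), but with finite-$n$ bookkeeping. Fix $\epsilon>0$. The classical finite-blocklength form of the channel coding theorem for a fixed message-set size $M=|A^{(i)}|$ provides a $\delta(\epsilon)\to0$ with the following property: every $n$ with $\log M/n\le C(W_{ij})-\delta(\epsilon)$ admits a core code with maximal error at most $\epsilon$. One can take $\delta$ from a random-coding error-exponent bound, and expurgate to pass from average to maximal error. Since redundant messages contribute zero closure error, exactly as in Layer~2 of that proof, which uses (H1), we get $P_{e,\Cn}^{(n)}\le\epsilon$. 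Taking $n=\lceil\log|A^{(i)}|/(C(W_{ij})-\delta(\epsilon))\rceil$ then gives the upper bound. This step is the main obstacle. The difficulty is that $\delta(\epsilon)$ must be uniform in $n$ while $M$ stays fixed, so at this exact point I will cite the standard non-asymptotic achievability bound (e.g.\ Feinstein's lemma) rather than the asymptotic statement, and I will track how $\delta$ depends on $\epsilon$ and $M$.

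For part (ii), the additional hypothesis $S_O^{(i)}\subseteq S_O^{(j)}=\hat S_O$ makes Hamming decoding meaningful for every message. I will then apply the Hamming converse, Theorem~\ref{thm:converse}(i), with $M=|S_O^{(i)}|$ and $C(W_{ij})$. It holds for any decoder into $\hat S_O$, since its Fano argument uses only data processing through $W_{ij}^{\otimes n}$. Rearranging gives \eqref{eq:hetero-blocklength-Hamming}.

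For part (iii), I will divide the two estimates. Both are of order $\log(\cdot)/C(W_{ij})$ up to factors $1\pm O(\epsilon)$, additive $O(1/C)$ terms, and the $\delta(\epsilon)$ correction. The upper bound on $n^*(P_{e,\Cn})$ over the lower bound on $n^*(P_e)$ gives the ratio up to factors tending to $1$ as $\epsilon\to0$, and the same holds in the other direction if Hamming achievability (Theorem~\ref{thm:achievability}(i)) is used. This yields \eqref{eq:hetero-compression-ratio} in the same approximate sense as Corollary~\ref{cor:min-blocklength}. Because the right-hand side depends only on $A^{(i)}$ and $S_O^{(i)}$, it coincides with the homogeneous ratio.
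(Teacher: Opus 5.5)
Your outline is the paper's own: the lower bound comes from \eqref{eq:hetero-converse}, part~(ii) from Theorem~\ref{thm:converse}(i), and part~(iii) by dividing the bounds. The converse-based pieces are sound. The gap is the step you flag as the main obstacle, and it cannot be closed.

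For a fixed message-set size $M=|A^{(i)}|\ge 2$, no $\delta(\epsilon)\to 0$ makes blocklength $\lceil\log M/(C(W_{ij})-\delta(\epsilon))\rceil$ sufficient for maximal closure error $\epsilon$. That blocklength stays bounded as $\epsilon\to 0$ (it tends to $\lceil\log M/C(W_{ij})\rceil$). The required blocklength, however, diverges whenever all transition probabilities are positive. To see this, let $w_{\min}:=\min_{x,y}W_{ij}(y\mid x)>0$ and take core messages $a_1\neq a_2$. Under Assumption~\ref{assump:core-disjoint} their acceptance regions $B_{a_1},B_{a_2}\subseteq\hat S_C^n$ are disjoint. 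So either $B_{a_2}=\varnothing$, and $a_2$ fails with probability~$1$, or $a_1$ fails with probability at least $W_{ij}^{\otimes n}(B_{a_2}\mid f_n(a_1))\ge w_{\min}^n$. Hence $n^*\ge\log(1/\epsilon)/\log(1/w_{\min})\to\infty$.

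The paper's own example already violates the bound. There $q=10$, $p=0.1$, $C\approx 2.536$ and $|A^{(1)}|=4$, so the right-hand side of \eqref{eq:hetero-blocklength-closure} equals $1$ once $\delta(\epsilon)\le 0.536$. Yet no one-shot code separates the four edge facts with maximal error below about $0.089$. Feinstein's lemma or a random-coding exponent cannot rescue the step. For fixed $M$ they give blocklengths of order $(\log M+\log(1/\epsilon))/C$ plus a dispersion term, i.e.\ an effective $\delta(\epsilon)$ tending to $C(W_{ij})$, not to $0$. The paper's one-line proof has the same hole: Theorem~\ref{thm:heterogeneous-closure}(ii) only asserts $P_{e,\Cn}^{(n)}\to 0$ as $n\to\infty$ with $M$ fixed, and yields no bound of this form.

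Part~(iii) inherits the problem. With the knowledge bases fixed and $\epsilon\to 0$, both $n^*(P_{e,\Cn})$ and $n^*(P_e)$ grow like $\log(1/\epsilon)$ divided by fixed-$M$ error exponents, not like $\log(\cdot)/C$. On the $10$-ary symmetric channel every pair of distinct inputs has the same Bhattacharyya distance, so repetition over distinct symbols is exponent-optimal for every $M\le 10$. Consequently, for the pair $(1,2')$ the ratio tends to $1$, not $2/3$.

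A correct version must change the regime. One option is to keep $\epsilon$ fixed and let $\log|A^{(i)}|$ and $\log|S_O^{(i)}|$ grow. There the weak converse and the Feinstein--Chebyshev bound $\log M^*(n,\epsilon)\ge nC-\sqrt{2nV/\epsilon}-\log(2/\epsilon)$ (with $V$ the variance of the information density under a capacity-achieving input) together give $(1-\epsilon)(1-o(1))\le n^*C/\log|A^{(i)}|\le 1+o(1)$. The $\delta$ then depends on $|A^{(i)}|$ and vanishes as $|A^{(i)}|\to\infty$. The other option is a finite-$n$ upper bound that carries an explicit additive $\log(1/\epsilon)$ term.
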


\begin{proof}
Part~\textup{(i)} combines
Theorem~\ref{thm:heterogeneous-closure}(ii)
(upper bound) and~(iii) (lower bound).
Part~\textup{(ii)} is
Theorem~\ref{thm:converse}(i) applied with
\(M=|S_O^{(i)}|\); the condition
\(S_O^{(i)}\subseteq S_O^{(j)}\) ensures that the
Hamming criterion is meaningful (each sent state has a valid
identity reconstruction in the receiver's vocabulary).
Part~\textup{(iii)} follows by dividing the bounds.
\end{proof}

\begin{remark}[Heterogeneity does not degrade the compression
  ratio]
\label{rem:compression-invariance}
The deductive compression
ratio~\eqref{eq:hetero-compression-ratio} depends only on
the sender's knowledge-base structure
(\(|A^{(i)}|\) vs.\ \(|S_O^{(i)}|\)) and not on the
receiver's vocabulary~\(S_O^{(j)}\), provided the core
coverage condition~\textup{(H1)} holds.
This invariance is a consequence of the two-layer code
structure: the core sub-code operates identically regardless
of the receiver's surplus states, and the redundant extension
incurs zero closure distortion by the algebraic properties
of~\(\Cn\).

The Hamming baseline~\eqref{eq:hetero-blocklength-Hamming}
does, however, depend on the receiver's vocabulary: it
requires \(S_{-}^{ij}=\varnothing\) (full vocabulary
containment), a strictly stronger condition than~\textup{(H1)}.
When \(S_{-}^{ij}\neq\varnothing\), perfect Hamming
reconstruction is impossible (some sent states have no
counterpart in the receiver's vocabulary), while closure
reliability may still be achievable under~\textup{(H1)}.
This gap illustrates the advantage of semantic fidelity
criteria over symbol-level criteria in heterogeneous settings.
\end{remark}

\begin{corollary}[Impossibility under core loss]
\label{cor:heterogeneous-impossibility}
Let \((i,j)\) be a sender--receiver pair.
\begin{enumerate}[label=\textup{(\roman*)}]
  \item \emph{Set-level impossibility:}\;
        If condition~\textup{(F1)} of
        Proposition~\textup{\ref{prop:overlap-fidelity}} fails,
        i.e.,
        \(A^{(i)}\not\subseteq\Cn\bigl(S_O^{(j)}\bigr)\),
        then
        \(\mathsf{F}_{\Cn}(S_O^{(i)},S_O^{(j)})<1\).
        This is a property of the knowledge-base pair,
        independent of the channel, the blocklength, and the
        coding strategy.
  \item \emph{Quantitative bound:}\;
        Under the hypothesis of~\textup{(i)}, the closure
        fidelity satisfies
        \begin{equation}\label{eq:Fcn-bound}
          \mathsf{F}_{\Cn}\bigl(S_O^{(i)},S_O^{(j)}\bigr)
          \;=\;
          \frac{|\Cn(S_O^{(i)})\cap\Cn(S_O^{(j)})|}
               {|\Cn(S_O^{(i)})\cup\Cn(S_O^{(j)})|}\;<\;1.
        \end{equation}
  \item \emph{Core preservation ratio:}\;
        If \(A_{-}^{ij}\neq\varnothing\), then
        \(\rho_{\Atom}(S_O^{(i)},S_O^{(j)})
          =1-|A_{-}^{ij}|/|A^{(i)}|<1\)
        \textup{(Proposition~\ref{prop:overlap-noise}(ii))}.
\end{enumerate}
Similarly, if condition~\textup{(F2)} fails
\textup{(}\(S_{+,n}^{ij}\neq\varnothing\)\textup{)}, then
\(\mathsf{F}_{\Cn}<1\) regardless of any coding strategy.
\end{corollary}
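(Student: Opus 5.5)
The proof reduces almost entirely to Proposition~\ref{prop:overlap-fidelity} and Proposition~\ref{prop:overlap-noise}. I will treat the three parts and the closing (F2) sentence as consequences of the biconditional ``\(\mathsf{F}_{\Cn}=1\iff\)(F1)+(F2)'' together with the elementary behaviour of the Jaccard index.

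\emph{Part (i).} Assume (F1) fails, so some \(a\in A^{(i)}\) satisfies \(a\notin\Cn(S_O^{(j)})\). By the necessity direction of Proposition~\ref{prop:overlap-fidelity}, \(\mathsf{F}_{\Cn}(S_O^{(i)},S_O^{(j)})=1\) would force (F1). Hence \(\mathsf{F}_{\Cn}\neq 1\), and since \(\mathsf{F}_{\Cn}\in[0,1]\) by Definition~\ref{def:closure-fidelity}, we get \(\mathsf{F}_{\Cn}<1\). For independence of the coding strategy, I will simply note that \(\mathsf{F}_{\Cn}\) is a function of the pair \((S_O^{(i)},S_O^{(j)})\) and the fixed operator \(\Cn\) only. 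No kernel, blocklength or code enters Definition~\ref{def:closure-fidelity}, and by Assumption~\ref{assump:common-ps} the closure operator is shared by all agents.

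\emph{Part (ii).} This is the definition of \(\mathsf{F}_{\Cn}\) written out, with strictness inherited from (i). I will also check that the ratio is a genuine (non-\(0/0\)) quotient: \(a\in A^{(i)}\subseteq\Cn(S_O^{(i)})\) by (Cn1), so the union is nonempty, and finiteness follows from Remark~\ref{rem:T-operator-context}(a). As a direct check independent of (i), \(a\) lies in the union but not in the intersection, so the numerator is strictly smaller than the denominator.

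\emph{Part (iii).} This is verbatim Proposition~\ref{prop:overlap-noise}(ii): \(|A_-^{ij}|\ge1\) and \(|A^{(i)}|\ge|A_-^{ij}|\ge1\) (no \(0/0\) convention is needed) give \(\rho_{\Atom}=1-|A_-^{ij}|/|A^{(i)}|<1\).

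\emph{Final sentence.} If \(s\in S_{+,n}^{ij}\), then \(s\in S_O^{(j)}\subseteq\Cn(S_O^{(j)})\) by (Cn1), while \(s\notin\Cn(S_O^{(i)})\) by definition of \(S_{+,n}^{ij}\). So \(s\) lies in the union of the closures but not in their intersection, and \(\mathsf{F}_{\Cn}<1\); alternatively one can cite the necessity half of Proposition~\ref{prop:overlap-fidelity} for (F2).

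There is no real obstacle here. The only point needing care is the interpretive claim ``regardless of any coding strategy.'' I will argue it by stressing that set-level closure fidelity is evaluated on the receiver's whole vocabulary \(S_O^{(j)}\), so any decoder's outputs lie in \(S_O^{(j)}\) and cannot enlarge \(\Cn(S_O^{(j)})\).
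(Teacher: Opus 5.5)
Your proposal is correct and follows the paper's proof essentially step for step. Part (i) and the closing (F2) clause are the contrapositive of the necessity half of Proposition~\ref{prop:overlap-fidelity}, part (ii) is the definition of $\mathsf{F}_{\Cn}$ with strictness inherited from (i), and part (iii) is Proposition~\ref{prop:overlap-noise}(ii). Your extra checks are valid, self-contained refinements that the paper leaves implicit: the witness element lying in $\Cn(S_O^{(i)})\cup\Cn(S_O^{(j)})$ but not in the intersection, and the verification that no $0/0$ convention is triggered.
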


\begin{proof}
Part~\textup{(i)} is the contrapositive of
Proposition~\ref{prop:overlap-fidelity}.
Part~\textup{(ii)} is the definition of
\(\mathsf{F}_{\Cn}\)
(Definition~\ref{def:closure-fidelity}); the strict
inequality follows from~\textup{(i)}.
Part~\textup{(iii)} is
Proposition~\ref{prop:overlap-noise}(ii).
The final claim follows from
Proposition~\ref{prop:overlap-fidelity} (necessity
of~\textup{(F2)}).
\end{proof}

\begin{proposition}[Minimum receiver vocabulary for
  closure-reliable communication]
\label{prop:min-vocabulary}
Given a sender knowledge base \(S_O^{(i)}\), the
minimum-cardinality subset
\(V\subseteq S_O^{(i)}\) serving as a receiver vocabulary
(\(\hat S_O=V\)) that simultaneously achieves:
\begin{enumerate}[label=\textup{(\alph*)}]
  \item the two-layer code of
        Theorem~\textup{\ref{thm:heterogeneous-closure}(ii)}
        achieves \(P_{e,\Cn}^{(n)}\to 0\), and
  \item \(\mathsf{F}_{\Cn}(S_O^{(i)},V)=1\),
\end{enumerate}
is \(V^*=\Atom(S_O^{(i)})=A^{(i)}\), with
\(|V^*|=\mathsf{A}(\mathcal I^{(i)})\).
\end{proposition}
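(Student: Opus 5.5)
The plan is to prove two things in turn: that \(V^*=A^{(i)}\) satisfies (a) and (b), and that every admissible \(V\subseteq S_O^{(i)}\) must contain \(A^{(i)}\). Minimality of cardinality then follows, and in fact \(A^{(i)}\) is the unique minimum. Throughout, \(V\) plays the role of \(S_O^{(j)}\) in the overlap decomposition of Definition~\ref{def:overlap-decomposition}, so \(S_{+}^{ij}=V\setminus S_O^{(i)}=\varnothing\) for every candidate \(V\subseteq S_O^{(i)}\).

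\emph{Feasibility of \(V^*\).} With \(V=A^{(i)}\) we have \(A_{-}^{ij}=A^{(i)}\setminus V=\varnothing\), so (H1) holds. Also \(S_{+,n}^{ij}\subseteq S_{+}^{ij}=\varnothing\), so (H2) holds. Theorem~\ref{thm:heterogeneous-closure}(ii) then gives (a): the two-layer code with decoder output in \(A^{(i)}=\hat S_O\) has \(P_{e,\Cn}^{(n)}\to 0\) whenever \(\log|A^{(i)}|/n<C(W_{ij})\), and \(C(W_{ij})>0\) by (SA4). Theorem~\ref{thm:heterogeneous-closure}(i), equivalently Corollary~\ref{cor:overlap-sufficient}, gives (b). As a direct check, \(\Cn(V^*)=\Cn(A^{(i)})=\Cn(S_O^{(i)})\) by Proposition~\ref{prop:atom-core-correct}(i).

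\emph{Necessity of \(A^{(i)}\subseteq V\).} I would argue this through requirement (a). The two-layer code of Theorem~\ref{thm:heterogeneous-closure}(ii) requires the decoder to output every core element, which is exactly \(A^{(i)}\subseteq\hat S_O=V\), i.e.\ condition (H1)/(F1\('\)); see Remark~\ref{rem:H1-H2-roles} and Remark~\ref{rem:strong-weak-coverage}. Suppose instead some \(a\in A^{(i)}\setminus V\). Under the standing Assumption~\ref{assump:core-disjoint} (SA5), and taking \(R_{\Cn}(a)\) to be the acceptable set within \(V\), the decoder for message \(a\) must land in \(R_{\Cn}(a)\). Since \(a\) itself is unavailable, that set contains no copy of \(a\), and the two-layer decoder, which outputs core elements verbatim, fails on message \(a\) with probability bounded away from zero. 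So (a) fails.

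A second route, via (b) alone, also deserves a try. Since \(V\subseteq S_O^{(i)}\) and \(a\notin V\), we get \(V\subseteq S_O^{(i)}\setminus\{a\}\). If \(a\in\Cn(V)\), then \(a\in\Cn(S_O^{(i)}\setminus\{a\})\). I would then like to contradict the irredundancy of \(a\), but Proposition~\ref{prop:atom-core-correct}(ii) only gives \(a\notin\Cn(A^{(i)}\setminus\{a\})\), not irredundancy relative to all of \(S_O^{(i)}\setminus\{a\}\). Closing this gap is the main obstacle. I would first attempt it by appealing to the scan in Definition~\ref{def:atom-so}: when \(a\) was scanned, the current set was \(B\) with \(S_O^{(i)}\setminus\{a\}\supseteq B\setminus\{a\}\), and this inclusion points the wrong way for a contradiction. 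So I expect this route to fail in general, and I will rest the necessity argument on (a), i.e.\ on the literal-containment requirement (H1), which makes the claim clean.

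Combining both parts: every feasible \(V\) satisfies \(V\supseteq A^{(i)}\), and \(A^{(i)}\) is itself feasible. Hence \(V^*=A^{(i)}\) is the unique minimum-cardinality vocabulary, with \(|V^*|=|\Atom(S_O^{(i)})|=\mathsf{A}(\mathcal I^{(i)})\) by Definition~\ref{def:atomicity-measure}.
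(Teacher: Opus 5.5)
Your proposal is correct and follows the paper's proof. Feasibility of \(V^*=A^{(i)}\) comes from (H1), (H2) and Theorem~\ref{thm:heterogeneous-closure}, and minimality comes from the fact that the two-layer decoder outputs core elements verbatim, which forces \(A^{(i)}\subseteq\hat S_O=V\). The appeal to Assumption~\ref{assump:core-disjoint} in your necessity paragraph is unnecessary, since the paper relies only on this literal-containment requirement; your observation that condition (b) alone cannot deliver minimality (because \(\Atom\) is irredundant only relative to itself) is accurate and is exactly why the argument must rest on (a).
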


\begin{proof}
\emph{Sufficiency.}\;
Set \(V=A^{(i)}\).
Since \(A^{(i)}\subseteq S_O^{(i)}\), the overlap
with sender \(i\) and ``receiver'' \(V\) gives
\(S_{+}^{ij}=V\setminus S_O^{(i)}=\varnothing\) and
\(A_{-}^{ij}=A^{(i)}\setminus V=\varnothing\)
(condition~\textup{(H1)}).
Condition~\textup{(H2)} holds trivially since
\(S_{+}^{ij}=\varnothing\).
Closure fidelity:
\(\Cn(A^{(i)})=\Cn(S_O^{(i)})\)
\textup{(Proposition~\ref{prop:atom-core-correct}(i))},
giving \(\mathsf{F}_{\Cn}(S_O^{(i)},A^{(i)})=1\).
The two-layer code of
Theorem~\ref{thm:heterogeneous-closure}(ii) applies
with \(\hat S_O=V=A^{(i)}\).

\smallskip\noindent
\emph{Minimality.}\;
Let \(V\subseteq S_O^{(i)}\) satisfy both
conditions~\textup{(a)} and~\textup{(b)}.
Condition~\textup{(a)} requires the two-layer code
of Theorem~\ref{thm:heterogeneous-closure}(ii) to
succeed.
That code's decoder outputs elements of~\(A^{(i)}\),
so the output alphabet \(\hat S_O=V\) must contain
every core element: \(A^{(i)}\subseteq V\).
Hence \(|V|\ge|A^{(i)}|\).
Since \(V=A^{(i)}\) achieves this bound, it is
minimal.
\end{proof}

\begin{remark}[Vocabulary design rule]
\label{rem:vocabulary-design}
Proposition~\ref{prop:min-vocabulary} yields a principled
vocabulary-selection rule for receiver design:
\emph{the receiver need store only the sender's irredundant
core}.
All remaining semantic states (the sender's stored
shortcuts~\(J^{(i)}\)) can be reconstructed by the receiver's
inference engine via \(\Cn(A^{(i)})\).
The channel-use cost of this strategy is
\(n^*\approx\log|A^{(i)}|/C(W_{ij})\), the minimum
achievable under closure reliability.

When the receiver already maintains a richer vocabulary
\(S_O^{(j)}\supsetneq A^{(i)}\), the additional states are
harmless provided \(S_{+,n}^{ij}=\varnothing\)
(condition~\textup{(H2)}); they do not increase the
blocklength.
When some surplus states are non-derivable
(\(S_{+,n}^{ij}\neq\varnothing\)), set-level closure fidelity
drops below~\(1\)
(Corollary~\ref{cor:heterogeneous-impossibility}), but the
coding-theoretic closure reliability may still hold if the
two-layer code is used (since it ignores the surplus entirely;
see Remark~\ref{rem:H1-H2-roles}).
\end{remark}


\subsubsection*{Part 2: Broadcast Extension}

We now extend the pairwise results to the broadcast scenario
of Definition~\ref{def:broadcast-scenario}.
Agent~\(0\) (the sender) communicates its knowledge base
\(S_O^{(0)}\) to \(K\) receivers over a common carrier
channel \(W:S_C\rightsquigarrow\hat S_C\) with
\(C(W)>0\).

\begin{theorem}[Broadcast deductive compression]
\label{thm:broadcast-compression}
Suppose that for every receiver
\(j\in\{1,\ldots,K\}\), the overlap conditions hold:
\begin{enumerate}[label=\textup{(BH\arabic*)}]
  \item \(A_{-}^{0j}=\varnothing\)
        \textup{(}the sender's core is contained in every
        receiver's vocabulary:
        \(A^{(0)}\subseteq S_O^{(j)}\)\textup{)};
  \item \(S_{+,n}^{0j}=\varnothing\)
        \textup{(}every receiver's surplus is derivable from
        the sender:
        \(S_{+}^{0j}\subseteq\Cn(S_O^{(0)})\)\textup{)}.
\end{enumerate}
Then:
\begin{enumerate}[label=\textup{(\roman*)}]
  \item \emph{Simultaneous closure fidelity:}\;
        \(\mathsf{F}_{\Cn}(S_O^{(0)},S_O^{(j)})=1\) for
        every \(j\in\{1,\ldots,K\}\).
  \item \emph{Broadcast achievability:}\;
        There exists a \emph{single} sequence of
        \((n,|S_O^{(0)}|)\) semantic block codes
        (with a common encoding function~\(f_n\)) such that
        \(P_{e,\Cn}^{(n,j)}\to 0\) simultaneously for all
        receivers \(j\in\{1,\ldots,K\}\), provided
        \begin{equation}\label{eq:broadcast-rate}
          \frac{\log|A^{(0)}|}{n}\;<\;C(W).
        \end{equation}
  \item \emph{Blocklength independence from \(K\):}\;
        The minimum blocklength for broadcast closure
        reliability is
        \begin{equation}\label{eq:broadcast-blocklength}
          n^*_{\mathrm{bc}}
          \;\approx\;
          \frac{\log|A^{(0)}|}{C(W)},
        \end{equation}
        independent of the number of receivers~\(K\).
  \item \emph{Broadcast converse:}\;
        Under
        Assumption~\textup{\ref{assump:core-disjoint}} for
        \(S_O^{(0)}\) with output space \(S_O^{(j)}\) for
        each~\(j\), any code achieving
        \(\max_j P_{e,\Cn}^{(n,j)}\le\epsilon\) satisfies
        \(\log|A^{(0)}|\le(nC(W)+1)/(1-\epsilon)\).
\end{enumerate}
\end{theorem}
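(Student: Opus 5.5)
The plan is to reduce each part to the pairwise results of Theorem~\ref{thm:heterogeneous-closure}, exploiting that in the broadcast scenario (Definition~\ref{def:broadcast-scenario}) all receivers see the same carrier output and the two-layer code's encoder does not depend on the receiver.

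\emph{Part (i).} Fix $j\in\{1,\ldots,K\}$. Conditions (BH1) and (BH2) are (H1) and (H2) for the pair $(0,j)$. Corollary~\ref{cor:overlap-sufficient} then gives $\mathsf{F}_{\Cn}(S_O^{(0)},S_O^{(j)})=1$ receiver by receiver.

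\emph{Part (ii).} This is the substantive step. Under \eqref{eq:broadcast-rate}, the classical coding theorem supplies one core code $(f_n^A,g_n^A)$ for $W$ with message set $A^{(0)}$ and maximal error $P_e^{(n)}(A^{(0)})\to0$. Extend the encoder to $S_O^{(0)}$ by sending every $j'\in J^{(0)}$ to $f_n^A(a_0)$; this $f_n$ is common to all receivers. Every receiver uses the same decoder $g_n:=g_n^A$.

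\begin{itemize}
\item \emph{Admissibility.} The decoder maps into $A^{(0)}$, which lies in each $S_O^{(j)}=\hat S_O^{(j)}$ by (BH1). So it is an admissible decoding function for every receiver.
\item \emph{Core messages.} For $m\in A^{(0)}$, closure success holds whenever the core decoding is correct. The probability of this event is the same for all $j$, because the channel output is shared.
\item \emph{Redundant messages.} For $m\in J^{(0)}$, the closure-analysis step of Theorem~\ref{thm:heterogeneous-closure}(ii) gives $d_{\Cn}(m,\hat a\mid S_O^{(0)})=0$ deterministically.
\end{itemize}

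Hence $P_{e,\Cn}^{(n,j)}\le P_e^{(n)}(A^{(0)})$ for every $j$. Since this bound does not depend on $j$, the convergence is uniform in $j$ and in $K$.

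The main thing to check is that no union bound over receivers is needed. Because the output sequence is identical at every receiver and the decoders coincide, all receivers succeed or fail on the same event, so the bound is uniform rather than growing with $K$.

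\emph{Part (iv).} Pick any single receiver $j$. A code with $\max_j P_{e,\Cn}^{(n,j)}\le\epsilon$ is in particular a pairwise code for $(0,j)$ with $P_{e,\Cn}^{(n)}\le\epsilon$. Under Assumption~\ref{assump:core-disjoint}, Theorem~\ref{thm:heterogeneous-closure}(iii) (equivalently the Fano argument of Theorem~\ref{thm:converse}(iii)) gives $\log|A^{(0)}|\le(nC(W)+1)/(1-\epsilon)$.

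\emph{Part (iii).} Combine (ii) and (iv) exactly as in Theorem~\ref{thm:heterogeneous-compression}(i). This sandwiches $n^*_{\mathrm{bc}}$ between $\bigl((1-\epsilon)\log|A^{(0)}|-1\bigr)/C(W)$ and $\lceil\log|A^{(0)}|/(C(W)-\delta(\epsilon))\rceil$. Neither bound involves $K$, which yields \eqref{eq:broadcast-blocklength}.
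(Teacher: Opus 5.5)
Your proposal is correct and is essentially the paper's proof. Parts (i) and (iv) reduce to Corollary~\ref{cor:overlap-sufficient} and Theorem~\ref{thm:heterogeneous-closure}(iii) for each pair $(0,j)$, and (ii) uses the same shared two-layer code, whose common core decoder is admissible at every receiver by (BH1), giving $P_{e,\Cn}^{(n,j)}\le P_e^{(n)}(A^{(0)})$ uniformly in $j$ with no union bound.

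The only difference is in (iii): the paper merely notes that the core code's rate $\log|A^{(0)}|/n$ does not involve $K$, while you sandwich $n^*_{\mathrm{bc}}$ via (ii) and (iv) as in Theorem~\ref{thm:heterogeneous-compression}(i). That suffices for $K$-independence, but the lower half needs Assumption~\ref{assump:core-disjoint}. Also, the rigorous $K$-free quantity is the minimum blocklength of a Hamming-reliable $(n,|A^{(0)}|)$ code over $W$. The closed form $\lceil\log|A^{(0)}|/(C(W)-\delta(\epsilon))\rceil$ is only a first-order heuristic when $|A^{(0)}|$ is fixed.
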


\begin{proof}
\textup{(i)}:\
For each~\(j\), conditions \textup{(BH1)}--\textup{(BH2)}
instantiate \textup{(H1)}--\textup{(H2)} of
Theorem~\ref{thm:heterogeneous-closure}, giving
\(\mathsf{F}_{\Cn}(S_O^{(0)},S_O^{(j)})=1\) by
Corollary~\ref{cor:overlap-sufficient}.

\smallskip\noindent
\textup{(ii)}:\
Construct a single two-layer code as in the proof of
Theorem~\ref{thm:heterogeneous-closure}(ii), with the
\emph{common} core code \((f_n^A,g_n^A)\) for message
set~\(A^{(0)}\).
All \(K\) receivers observe the same channel output
\(\hat S_C^n\) and each independently applies the
same core decoder~\(g_n^A\).
Since \(A^{(0)}\subseteq S_O^{(j)}\) for every~\(j\)
(by~\textup{(BH1)}), the decoded core element
\(\hat a\in A^{(0)}\) is a valid output for every receiver.
The Layer~2 redundant extension and closure analysis are
identical to the pairwise case (using the sender's
closure structure only), so
\(P_{e,\Cn}^{(n,j)}\le P_e^{(n)}(A^{(0)})\to 0\)
simultaneously for all~\(j\).

\smallskip\noindent
\textup{(iii)}:\
The blocklength is determined by the core code,
which has rate \(\log|A^{(0)}|/n\), independent of~\(K\).

\smallskip\noindent
\textup{(iv)}:\
Fix any receiver~\(j\).
Theorem~\ref{thm:heterogeneous-closure}(iii) applied to the
pair \((0,j)\) gives
\(\log|A^{(0)}|\le(nC(W)+1)/(1-\epsilon)\) whenever
\(P_{e,\Cn}^{(n,j)}\le\epsilon\).
Since this must hold for every~\(j\), the bound holds under
\(\max_j P_{e,\Cn}^{(n,j)}\le\epsilon\).
\end{proof}

\begin{proposition}[Broadcast semantic bottleneck]
\label{prop:broadcast-bottleneck}
In the broadcast scenario of
Definition~\textup{\ref{def:broadcast-scenario}}, suppose
there exists a receiver \(j^*\in\{1,\ldots,K\}\) such that
condition~\textup{(F1)} of
Proposition~\textup{\ref{prop:overlap-fidelity}} fails for
the pair \((0,j^*)\):
\[
  A^{(0)}\not\subseteq\Cn\bigl(S_O^{(j^*)}\bigr).
\]
Then:
\begin{enumerate}[label=\textup{(\roman*)}]
  \item \(\mathsf{F}_{\Cn}(S_O^{(0)},S_O^{(j^*)})<1\),
        regardless of the carrier channel~\(W\), the
        blocklength~\(n\), and the encoding/decoding strategy.
  \item Even if the carrier channel is noiseless
        (\(W=\mathrm{id}_{S_C}\)), the closure fidelity at
        receiver~\(j^*\) is bounded by
        \begin{equation}\label{eq:bottleneck-bound}
          \mathsf{F}_{\Cn}\bigl(S_O^{(0)},S_O^{(j^*)}\bigr)
          \;=\;
          \frac{|\Cn(S_O^{(0)})\cap\Cn(S_O^{(j^*)})\!|}
               {|\Cn(S_O^{(0)})\cup\Cn(S_O^{(j^*)})\!|}
          \;<\;1.
        \end{equation}
  \item Receiver~\(j^*\) is a \emph{semantic bottleneck}:
        its performance limitation arises from vocabulary
        mismatch, not from the physical channel.
        The other receivers \(j\neq j^*\) satisfying
        \textup{(BH1)}--\textup{(BH2)} achieve
        \(\mathsf{F}_{\Cn}=1\) and closure reliability
        simultaneously, unaffected by~\(j^*\).
\end{enumerate}
\end{proposition}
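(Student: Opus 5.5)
The plan is to reduce all three parts to Proposition~\ref{prop:overlap-fidelity} and its contrapositive, Corollary~\ref{cor:heterogeneous-impossibility}. The key observation is that \(\mathsf{F}_{\Cn}(S_O^{(0)},S_O^{(j^*)})\) is a set-level quantity (Definition~\ref{def:closure-fidelity}). It depends only on the pair of knowledge bases and the shared operator \(\Cn\) (Assumption~\ref{assump:common-ps}). In particular, it does not depend on \(W\), \(n\), \(\kappa_{\enc}\), or \(D\).

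For part~(i), apply Corollary~\ref{cor:heterogeneous-impossibility}(i) to the pair \((0,j^*)\). The failure of (F1) means \(\Cn(S_O^{(0)})\neq\Cn(S_O^{(j^*)})\) by the necessity direction of Proposition~\ref{prop:overlap-fidelity}. Definition~\ref{def:closure-fidelity} then gives \(\mathsf{F}_{\Cn}<1\). Independence from the channel and the coding strategy is immediate, since none of these objects enters the formula.

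For part~(ii), specialize to \(W=\mathrm{id}_{S_C}\) (Remark~\ref{rem:carrier-channel-cases}(ii)) and write out the Jaccard expression as in Corollary~\ref{cor:heterogeneous-impossibility}(ii). The point to make explicit is that the value is unchanged by the choice of \(W\). Hence even a perfect carrier leaves the same strict gap.

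For part~(iii), argue as follows.
\begin{itemize}
\item The limitation at \(j^*\) is purely a vocabulary effect, which follows from (i)--(ii).
\item For every \(j\neq j^*\) satisfying (BH1)--(BH2), rerun the argument of Theorem~\ref{thm:broadcast-compression} restricted to the index set of good receivers. Corollary~\ref{cor:overlap-sufficient} gives \(\mathsf{F}_{\Cn}=1\). The common two-layer code gives \(P_{e,\Cn}^{(n,j)}\le P_e^{(n)}(A^{(0)})\to 0\).
\item Both facts hold simultaneously for all good receivers once \(\log|A^{(0)}|/n<C(W)\).
\item \(j^*\) does not affect them, because each receiver decodes the same channel output independently. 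The encoder and core code are defined from \(A^{(0)}\) alone, so the presence of \(j^*\) does not change the construction.
\end{itemize}

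The main obstacle is the phrase ``regardless of the encoding/decoding strategy.'' One might worry that some clever decoder at \(j^*\) could raise fidelity. I would address this by stressing that \(\mathsf{F}_{\Cn}\) as defined compares the fixed sets \(S_O^{(0)}\) and \(S_O^{(j^*)}\), and that any decoder's outputs lie in \(S_O^{(j^*)}\). I would add a remark that no receiver output set \(\hat S\subseteq S_O^{(j^*)}\) can have \(\Cn(\hat S)\supseteq A^{(0)}\), by monotonicity (Cn2). This shows that even the reconstructed-set closure misses some core element.
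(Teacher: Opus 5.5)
Your proposal is correct and follows the paper's proof. Parts (i)--(ii) come from Corollary~\ref{cor:heterogeneous-impossibility}(i)--(ii) applied to the pair \((0,j^*)\), and part (iii) comes from the common two-layer code of Theorem~\ref{thm:broadcast-compression}; rerunning that argument only over the receivers satisfying (BH1)--(BH2) is slightly more careful than the paper's direct citation, because that theorem as stated assumes (BH1)--(BH2) for every receiver, and \(j^*\) violates them (failure of (F1) forces \(A_{-}^{0j^*}\neq\varnothing\)).
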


\begin{proof}
Parts~\textup{(i)} and~\textup{(ii)} follow from
Corollary~\ref{cor:heterogeneous-impossibility}(i)--(ii)
applied to the pair \((0,j^*)\).
Part~\textup{(iii)}: the common encoding and core code are
shared by all receivers; the failure at~\(j^*\) is due
solely to the mismatch
\(A^{(0)}\not\subseteq\Cn(S_O^{(j^*)})\), which is
independent of the channel.
Receivers satisfying
\textup{(BH1)}--\textup{(BH2)} are handled by
Theorem~\ref{thm:broadcast-compression}.
\end{proof}

\begin{remark}[Semantic bottleneck vs.\ classical channel
  degradation]
\label{rem:bottleneck-vs-classical}
In classical broadcast channel theory
\cite{cover2006elements,csiszar2011information}, the
performance at each receiver is
determined by the physical channel degradation order: the
``weakest'' receiver is the one with the noisiest channel.
The rate region is a function of the channel transition
probabilities alone.

In the semantic broadcast setting, a second axis of
``weakness'' emerges that is invisible to classical theory:
\emph{vocabulary overlap}.
Proposition~\ref{prop:broadcast-bottleneck} shows that even
over a \emph{noiseless} carrier, a receiver with insufficient
vocabulary overlap (\(A^{(0)}\not\subseteq\Cn(S_O^{(j^*)})\))
cannot achieve perfect closure fidelity.
No amount of coding, no increase in blocklength, and no
improvement in the physical channel can overcome this
semantic bottleneck.
This constitutes the sharpest qualitative distinction between
the semantic channel framework and classical Shannon theory:
the former recognizes an irreducible \emph{structural}
limitation that the latter cannot express.

To resolve a semantic bottleneck, the system designer must
augment the vocabulary of receiver~\(j^*\) to satisfy the
core coverage condition---e.g., by pre-loading the sender's
irredundant core \(A^{(0)}\) into~\(S_O^{(j^*)}\)
(Proposition~\ref{prop:min-vocabulary}).
This is a \emph{design-time} action on the knowledge-base
structure, fundamentally different from the \emph{coding-time}
actions (encoder/decoder optimization) that suffice in the
classical setting.
\end{remark}

\begin{remark}[Classical recovery]
\label{rem:broadcast-classical-recovery}
When all receivers share the sender's knowledge base
(\(S_O^{(j)}=S_O^{(0)}\) for all~\(j\)), every noise pair
is trivial, conditions \textup{(BH1)}--\textup{(BH2)} hold
vacuously, and
Theorem~\ref{thm:broadcast-compression} reduces to
\(K\)~independent applications of the homogeneous
achievability result
(Theorem~\ref{thm:achievability}(ii)).
If additionally \(S_O^{(0)}\) is irredundant
(\(A^{(0)}=S_O^{(0)}\)), the deductive compression gain
vanishes and the broadcast blocklength becomes
\(n^*\approx\log|S_O^{(0)}|/C(W)\), recovering the classical
channel coding theorem for a broadcast DMC
\cite{cover2006elements}.
\end{remark}

\begin{remark}[Summary of answers to Q1--Q4]
\label{rem:Q1-Q4-summary}
The results of this subsection answer the four key questions
posed in Section~\ref{subsec:app-problem}:

\textbf{Q1} (closure reliability from overlap):
Proposition~\ref{prop:overlap-fidelity} and
Corollary~\ref{cor:overlap-sufficient} provide the
necessary and sufficient conditions; the operational
achievability under the strong condition~\textup{(H1)} is
Theorem~\ref{thm:heterogeneous-closure}.

\textbf{Q2} (heterogeneous compression):
Theorem~\ref{thm:heterogeneous-compression} establishes
that the deductive compression ratio is invariant under
vocabulary heterogeneity.
Corollary~\ref{cor:heterogeneous-impossibility} characterizes
the impossibility regime.

\textbf{Q3} (invariant diagnosis):
Section~\ref{subsec:app-theory}
(Propositions~\ref{prop:overlap-noise}--\ref{prop:overlap-structural}
and Table~\ref{tab:invariant-overlap})
expresses every invariant family in terms of the overlap
decomposition.

\textbf{Q4} (broadcast bottleneck):
Theorem~\ref{thm:broadcast-compression} shows
blocklength independence from~\(K\) under core coverage;
Proposition~\ref{prop:broadcast-bottleneck} identifies
the semantic bottleneck phenomenon.
\end{remark}

\subsection{Numerical Example: Datalog Knowledge Bases}
\label{subsec:app-example}

This subsection verifies the theoretical results of
Sections~\ref{subsec:app-theory}--\ref{subsec:app-results} on
an explicit Datalog knowledge-base instance.
Every invariant is computed in closed form and cross-checked
numerically; the example also illustrates the vocabulary design
criterion of Proposition~\ref{prop:min-vocabulary} and the
broadcast bottleneck of
Proposition~\ref{prop:broadcast-bottleneck}.

\subsubsection*{Instance Specification}

\begin{example}[Datalog path-reachability knowledge bases]
\label{ex:datalog-instance}
Fix a domain of four entities
\(\mathcal D=\{a,b,c,d\}\) and two binary relation symbols
\(\mathit{Edge}\) and \(\mathit{Path}\) in
\(\Sigma_{\mathrm{sem}}\).
The inference fragment
\(\mathcal L_{\mathrm{kb}}\)~\cite{ceri1989you,abiteboul1995foundations}
consists of ground atoms of the
form \(\mathit{Edge}(x,y)\) and \(\mathit{Path}(x,y)\) with
\(x,y\in\mathcal D\), and the proof system \(\mathsf{PS}\)
consists of two Datalog rules:
\begin{align}
  &\mathit{Path}(x,y)\;\leftarrow\;\mathit{Edge}(x,y),
  \label{eq:rule1}\\
  &\mathit{Path}(x,z)\;\leftarrow\;\mathit{Edge}(x,y),\,
    \mathit{Path}(y,z).
  \label{eq:rule2}
\end{align}
The immediate consequence operator \(T_{\mathsf{PS}}\) applies
both rules to all matching ground instantiations in a single
step.
\end{example}

\noindent\textbf{Agent knowledge bases.}\;
We define one sender and three receivers.

\smallskip\noindent
\emph{Sender (agent~1):}
\begin{align*}
  S_O^{(1)} = \{
  &\mathit{Edge}(a,b),
  \mathit{Edge}(a,c),
  \mathit{Edge}(b,c),
  \mathit{Edge}(c,d),\\
  &\mathit{Path}(a,b),
  \mathit{Path}(b,c),
  \mathit{Path}(c,d),
  \mathit{Path}(a,d)
  \}.
\end{align*}
Thus \(|S_O^{(1)}|=8\).

\smallskip\noindent
\emph{Receiver~2 (core loss + non-derivable surplus):}
\begin{align*}
  S_O^{(2)} = \{
  &\mathit{Edge}(a,b),
  \mathit{Edge}(b,c),
  \mathit{Edge}(c,d),\\
  &\mathit{Path}(a,b),
  \mathit{Path}(b,c),
  \mathit{Path}(c,d),
  \mathit{Path}(a,d),\\
  &\mathit{Path}(b,d),
  \mathit{Edge}(d,a)
  \}.
\end{align*}
Thus \(|S_O^{(2)}|=9\).
Compared with agent~1, receiver~2 loses the core element
\(\mathit{Edge}(a,c)\) and gains \(\mathit{Path}(b,d)\)
(derivable from the sender) and \(\mathit{Edge}(d,a)\) (not
derivable).

\smallskip\noindent
\emph{Receiver~2\('\) (augmented---vocabulary design):}
\begin{align*}
  S_O^{(2')} = \{
  &\mathit{Edge}(a,b),
  \mathit{Edge}(a,c),
  \mathit{Edge}(b,c),\\
  &\mathit{Edge}(c,d),
  \mathit{Path}(a,b),
  \mathit{Path}(b,c),\\
  &\mathit{Path}(c,d),
  \mathit{Path}(a,d),
  \mathit{Path}(b,d)
  \}.
\end{align*}
Thus \(|S_O^{(2')}|=9\).
This is receiver~2 augmented with the lost core element
\(\mathit{Edge}(a,c)\) and with the non-derivable surplus
\(\mathit{Edge}(d,a)\) removed.

\smallskip\noindent
\emph{Receiver~3 (broadcast, no core loss):}
\begin{align*}
  S_O^{(3)} = \{
  &\mathit{Edge}(a,b),\;
  \mathit{Edge}(a,c),\;
  \mathit{Edge}(b,c),\;\\
  &\mathit{Edge}(c,d),
  \mathit{Path}(a,c),\;
  \mathit{Path}(b,d)
  \}.
\end{align*}
Thus \(|S_O^{(3)}|=6\).
All surplus states are derivable from the sender.

\smallskip\noindent
\textbf{Carrier channel.}\;
We use a \(q\)-ary symmetric channel with alphabet size
\(q=10\) (\(|S_C|=|\hat S_C|=10\ge\max_i|S_O^{(i)}|\))
and crossover probability \(p=0.1\):
\[
  W(y\mid x)=
  \begin{cases}
    1-p & \text{if } y=x,\\
    p/(q-1) & \text{if } y\neq x.
  \end{cases}
\]
Its Shannon capacity is
\(C(W)=\log q+(1{-}p)\log(1{-}p)+p\log\bigl(p/(q{-}1)\bigr)\)
bits.

\subsubsection*{Source-Side Structural Invariants}

Applying the irredundantization procedure
(Definition~\ref{def:atom-so}) to \(S_O^{(1)}\) in
alphabetical canonical order yields
\begin{align*}
  A^{(1)}=\bigl\{
    &\mathit{Edge}(a,b),\;
    \mathit{Edge}(a,c),\;
    \mathit{Edge}(b,c),\; \\
    &\mathit{Edge}(c,d)
  \bigr\},
\end{align*}
with \(|A^{(1)}|=4\) and
\(J^{(1)}=S_O^{(1)}\setminus A^{(1)}
  =\{\mathit{Path}(a,b),\,\mathit{Path}(b,c),\,
     \mathit{Path}(c,d),\,\mathit{Path}(a,d)\}\).
All four \(\mathit{Path}\) facts are derivable from the edge
facts via rules~\eqref{eq:rule1}--\eqref{eq:rule2}.

The \(T_{\mathsf{PS}}\)-iteration from \(A^{(1)}\) stabilizes
in two steps:

\smallskip
\begin{table}[ht]
\centering
\caption{Stratum decomposition of
  \(T_{\mathsf{PS}}\)-iteration from \(A^{(1)}\).}
\label{tab:stratum}
\renewcommand{\arraystretch}{1.2}
\begin{tabular}{@{} l p{0.65\columnwidth} @{}}
\toprule
\textbf{Stratum} & \textbf{New elements} \\
\midrule
$T^0$ &
  $\mathit{Edge}(a,b)$, $\mathit{Edge}(a,c)$,
  $\mathit{Edge}(b,c)$, $\mathit{Edge}(c,d)$ \\
$T^1\!\setminus\! T^0$ &
  $\mathit{Path}(a,b)$, $\mathit{Path}(a,c)$,
  $\mathit{Path}(b,c)$, $\mathit{Path}(c,d)$ \\
$T^2\!\setminus\! T^1$ &
  $\mathit{Path}(a,d)$, $\mathit{Path}(b,d)$ \\
\bottomrule
\end{tabular}
\end{table}

\smallskip\noindent
Hence the derivation depths of the stored states are:
\(\Dd=0\) for all edges, \(\Dd=1\) for
\(\mathit{Path}(a,b),\mathit{Path}(b,c),\mathit{Path}(c,d)\),
and \(\Dd=2\) for \(\mathit{Path}(a,d)\).
The source-side invariants are
\(\mathsf{A}(\mathcal I^{(1)})=4\) and
\(\mathsf{D_d}(\mathcal I^{(1)})=2\).

\subsubsection*{Overlap Decomposition and Set-Level Invariants}

Table~\ref{tab:overlap-values} lists the overlap decomposition
(Definition~\ref{def:overlap-decomposition}) and the resulting
set-level invariants for each sender--receiver pair.

\begin{table}[t]
\centering
\caption{Overlap decomposition and set-level invariants for
  sender agent~1 paired with each receiver.}
\label{tab:overlap-values}
\renewcommand{\arraystretch}{1.2}
\begin{tabular}{@{} l c c c @{}}
\toprule
\textbf{Quantity} &
  \textbf{Recv.\,2} &
  \textbf{Recv.\,2$'$} &
  \textbf{Recv.\,3} \\
\midrule
$|S_{\cap}^{ij}|$ & 7 & 8 & 4 \\
$|S_{-}^{ij}|$    & 1 & 0 & 4 \\
$|S_{+}^{ij}|$    & 2 & 1 & 2 \\
$|A_{\cap}^{ij}|$ & 3 & 4 & 4 \\
$|A_{-}^{ij}|$    & 1 & 0 & 0 \\
$|S_{+,d}^{ij}|$  & 1 & 1 & 2 \\
$|S_{+,n}^{ij}|$  & 1 & 0 & 0 \\
\midrule
$\rho_{\Atom}$      & 3/4          & 1   & 1   \\
$\mathsf{F}_{\Cn}$  & $3/7\,^{*}$  & 1   & 1   \\
(H1) holds?          & No           & Yes & Yes \\
(H2) holds?          & No           & Yes & Yes \\
\bottomrule
\end{tabular}

\vspace{4pt}
\parbox{\columnwidth}{\footnotesize
  $^{*}$\,$\mathsf{F}_{\Cn}
    =\lvert\Cn(S_O^{(1)})\cap\Cn(S_O^{(2)})\rvert
    \,/\,
    \lvert\Cn(S_O^{(1)})\cup\Cn(S_O^{(2)})\rvert
    =9/21=3/7\approx 0.429$;\;
  see Table~\ref{tab:full-invariants}.}
\end{table}

\subsubsection*{Full Invariant Computation}

Table~\ref{tab:full-invariants} reports all six families of
semantic channel invariants
(Theorem~\ref{thm:invariant-summary}) for the pair \((1,2)\)
(core loss), the pair \((1,2')\) (augmented, no core loss),
and the pair \((1,3)\) (broadcast receiver).
The carrier channel is the \(q\)-ary symmetric channel with
\(q=10\) and \(p=0.1\); the encoding is a fixed deterministic
injection \(f:S_O^{(1)}\to\{0,\ldots,9\}\), and the decoding
is a deterministic nearest-element rule
(see the accompanying Python script for exact definitions).

\begin{table*}[ht]
\centering
\caption{Semantic channel invariants for the Datalog instance
  of Example~\ref{ex:datalog-instance}.
  Carrier channel: \(q\)-ary symmetric, \(q=10\), \(p=0.1\).
  Source distribution: \(P_O\) uniform on \(S_O^{(1)}\).
  Encoding: fixed deterministic injection
  \(f:S_O^{(1)}\to\{0,\ldots,9\}\);
  decoding: deterministic nearest-element rule (see text).}
\label{tab:full-invariants}
\renewcommand{\arraystretch}{1.25}
\begin{tabularx}{\textwidth}{@{}l l *{3}{>{\centering\arraybackslash}X}@{}}
\toprule
\textbf{Family} & \textbf{Invariant} &
  \textbf{Pair \((1,2)\)} &
  \textbf{Pair \((1,2')\)} &
  \textbf{Pair \((1,3)\)} \\
\midrule
\multirow{2}{*}{I.\;Source}
  & \(\mathsf{A}(\mathcal I^{(1)})\) &
    4 & 4 & 4 \\
  & \(\mathsf{D_d}(\mathcal I^{(1)})\) &
    2 & 2 & 2 \\
\midrule
\multirow{2}{*}{II.\;Set-level}
  & \(\rho_{\Atom}\) &
    0.750 & 1.000 & 1.000 \\
  & \(\mathsf{F}_{\Cn}\) &
    0.429 & 1.000 & 1.000 \\
\midrule
\multirow{2}{*}{III.\;Noise-pair}
  & \(\Phi_{\Atom}\) &
    0 & 0.900 & 0.900 \\
  & \(\Psi_{+}\) &
    0 & 0 & 0.911 \\
\midrule
\multirow{2}{*}{IV.\;Quality}
  & \(\mathsf{F}\) &
    0.900 & 0.980 & 0.981 \\
  & \(\mathsf{E}\) &
    0.078 & 0.078 & 0.494 \\
\midrule
\multirow{2}{*}{V.\;Receiver}
  & \(\Delta\mathsf{A}\) &
    0 & 0 & 0 \\
  & \(\Delta\mathsf{D_d}\) &
    \(+1\) & 0 & 0 \\
\midrule
\multirow{3}{*}{VI.\;Info-th.}
  & \(C(W)\) &
    2.536 & 2.536 & 2.536 \\
  & \(C_{\mathrm{sem}}^{ij}\) &
    2.536 & 2.536 & 2.536 \\
  & \(I_{\mathrm{sem}}^{ij}\)
    \,(\(P_O\) uniform) &
    2.067 & 2.273 & 1.808 \\
\bottomrule
\end{tabularx}
\end{table*}

\subsubsection*{Minimum Blocklength Comparison}

Table~\ref{tab:blocklength} compares the minimum blocklength
under Hamming and closure reliability for each pair.

\begin{table}[ht]
\centering
\caption{Minimum blocklength estimates
  (\(q=10\), \(p=0.1\), \(\epsilon\to 0\)).
  All values are asymptotic: \(n^*_H\approx\log|S_O^{(1)}|/C(W)\),
  \(n^*_{\Cn}\approx\log|A^{(1)}|/C(W)\).}
\label{tab:blocklength}
\renewcommand{\arraystretch}{1.2}
\begin{tabularx}{\columnwidth}{l *{3}{>{\centering\arraybackslash}X}}
\toprule
\textbf{Criterion} &
  \textbf{Pair \((1,2)\)} &
  \textbf{Pair \((1,2')\)} &
  \textbf{Pair \((1,3)\)} \\
\midrule
Hamming \(n^*_H\) &
  N/A\(^{\dagger}\) &
  1.183 &
  N/A\(^{\dagger}\) \\
Closure \(n^*_{\Cn}\) &
  \(\nexists\)\(^{\ddagger}\) &
  0.789 &
  0.789 \\
Ratio \(n^*_{\Cn}/n^*_H\) &
  --- &
  0.667 &
  --- \\
\bottomrule
\multicolumn{4}{@{}l@{}}{\parbox{\linewidth}{\footnotesize
  \(^{\dagger}\)Hamming reconstruction impossible:
  \(S_{-}^{ij}\neq\varnothing\).\\
  \(^{\ddagger}\)\(\mathsf{F}_{\Cn}<1\): closure reliability
  unachievable (Corollary~\ref{cor:heterogeneous-impossibility}).}}
\end{tabularx}
\end{table}


\subsubsection*{Vocabulary Augmentation (Receiver~2 \(\to\) 2\('\))}

Comparing receiver~2 with its augmented version~2\('\)
demonstrates the vocabulary design criterion of
Proposition~\ref{prop:min-vocabulary}:
adding the single lost core element
\(\mathit{Edge}(a,c)\) and removing the non-derivable surplus
\(\mathit{Edge}(d,a)\) causes \(\mathsf{F}_{\Cn}\) to jump
from \(3/7\approx 0.429\) to~\(1\),
\(\Phi_{\Atom}\) to rise from~\(0\) to~\(0.900\), and the minimum closure
blocklength to drop to
\(\lceil\log_2|A^{(1)}|/C(W)\rceil
  =\lceil 2/2.536\rceil=1\) channel use
(where \(\log_2 4=2\) bits).
This confirms that the receiver need only store the sender's
irredundant core to achieve full closure reliability.

\subsubsection*{Broadcast Sub-Example}

Agent~1 broadcasts to receivers~2 and~3 simultaneously over
the same carrier channel~\(W\).
Since \(A_{-}^{13}=\varnothing\) and
\(S_{+,n}^{13}=\varnothing\), receiver~3 satisfies
conditions \textup{(BH1)}--\textup{(BH2)} of
Theorem~\ref{thm:broadcast-compression}.
Receiver~2 violates (BH1) (\(A_{-}^{12}\neq\varnothing\))
and is therefore a \emph{semantic bottleneck}
(Proposition~\ref{prop:broadcast-bottleneck}):
\(\mathsf{F}_{\Cn}(S_O^{(1)},S_O^{(2)})<1\) regardless of
the carrier channel quality.
At receiver~3, the broadcast closure blocklength is
\(n^*_{\mathrm{bc}}
  =\lceil\log|A^{(1)}|/C(W)\rceil\),
identical to the pairwise value for~\((1,3)\)
and independent of whether receiver~2 is present.
This illustrates the blocklength-independence result of
Theorem~\ref{thm:broadcast-compression}(iii) and the semantic
bottleneck phenomenon of
Proposition~\ref{prop:broadcast-bottleneck}.

\subsubsection*{Summary of Numerical Findings}

The numerical example confirms all theoretical predictions:
\textup{(i)}~the overlap decomposition fully determines the
set-level invariants and constrains the probabilistic indices
(e.g., \(\Phi_{\Atom}(\mathfrak C^{12})=0\) because
\(A_{-}^{12}\neq\varnothing\), whereas
\(\Phi_{\Atom}(\mathfrak C^{12'})=\Phi_{\Atom}(\mathfrak C^{13})=0.900\));
\textup{(ii)}~core loss (\(A_{-}^{12}=\{\mathit{Edge}(a,c)\}\neq\varnothing\)) is an
irreducible barrier to closure fidelity
(\(\mathsf{F}_{\Cn}(S_O^{(1)},S_O^{(2)})=3/7<1\)),
independent of channel quality;
\textup{(iii)}~vocabulary augmentation by the sender's core
restores full closure reliability
(\(\mathsf{F}_{\Cn}(S_O^{(1)},S_O^{(2')})=1\));
\textup{(iv)}~the deductive compression ratio
\(\log|A^{(1)}|/\log|S_O^{(1)}|=\log 4/\log 8=2/3\) is
invariant across all receiver pairs satisfying (H1),
yielding closure blocklength
\(n^*_{\Cn}\approx 0.789\) versus Hamming blocklength
\(n^*_H\approx 1.183\);
\textup{(v)}~the broadcast blocklength depends only on the
sender's core (\(n^*_{\mathrm{bc}}\approx 0.789\)),
not on the number of receivers.

\section{Conclusion}
\label{sec:conclusion}

This paper has developed a logical-information-theoretic framework
for semantic communication that integrates formal proof systems
with Shannon-theoretic tools.
The framework rests on three pillars: an axiomatic
\emph{information model} that links semantic and carrier state
spaces through computable enabling maps
(Section~\ref{sec:model}); a \emph{semantic channel} constructed
as a composition of enabling kernels whose probabilistic
structure respects the underlying logical constraints
(Section~\ref{sec:channel}); and an \emph{overlap-based}
heterogeneous multi-agent theory that translates knowledge-base
structure into coding-theoretic performance guarantees
(Section~\ref{sec:application}).

The central quantitative result is the \emph{deductive
compression gain}: under a closure-based fidelity criterion that
deems a reconstruction acceptable whenever it preserves the
deductive closure of the sender's knowledge base, the minimum
number of channel uses drops from
\(n_H^*\approx\log|S_O|/C(W)\) (Hamming reliable) to
\(n_{\Cn}^*\approx\log|\Atom(S_O)|/C(W)\) (closure reliable),
yielding a compression ratio
\(\log|\Atom(S_O)|/\log|S_O|\) that depends only on the
sender's knowledge-base redundancy and is invariant under receiver
vocabulary heterogeneity
(Theorems~\ref{thm:achievability}--\ref{thm:heterogeneous-compression}).
This gain arises because the receiver's inference engine can
reconstruct all redundant states from a faithful copy of the
irredundant core at zero additional channel cost---a mechanism
that has no counterpart in classical Shannon theory, where every
source symbol must be individually protected against channel
errors.

Alongside the coding-theoretic results, the framework contributes
six families of computable semantic channel invariants
(Theorem~\ref{thm:invariant-summary}) that provide a multi-scale
fingerprint of channel quality: from set-level fidelity metrics
(\(\rho_{\Atom}\), \(\mathsf{F}_{\Cn}\)) that depend only on the
knowledge-base pair, through noise-pair probabilistic indices
(\(\Phi_{\Atom}\), \(\Psi_{+}\)) that capture core preservation
and hallucination probabilities, to information-theoretic
quantities (\(I_{\mathrm{sem}}\), \(C_{\mathrm{sem}}\)) that
bound achievable throughput.
The semantic Fano bound
(Theorem~\ref{thm:semantic-fano}) closes the loop by showing that
high probabilistic core preservation forces high mutual
information, quantifying the intuition that faithful transmission
of the irredundant core is both necessary and approximately
sufficient for information-theoretic performance.

In the heterogeneous multi-agent setting, the overlap
decomposition
(Definition~\ref{def:overlap-decomposition}) translates
knowledge-base structure into two binary feasibility tests---no
core loss (\(A_{-}^{ij}=\varnothing\)) and no non-derivable
surplus (\(S_{+,n}^{ij}=\varnothing\))---that fully determine
whether perfect closure fidelity is achievable
(Proposition~\ref{prop:overlap-fidelity}).
The broadcast extension reveals a \emph{semantic bottleneck}
phenomenon
(Proposition~\ref{prop:broadcast-bottleneck}): a receiver whose
vocabulary does not cover the sender's irredundant core cannot
achieve perfect closure fidelity regardless of the carrier
channel quality, the blocklength, or the coding strategy.
This structural limitation is invisible to classical broadcast
channel theory and constitutes the sharpest qualitative
distinction between the semantic and classical frameworks.
Resolving a semantic bottleneck requires a \emph{design-time}
action on the knowledge-base structure---augmenting the receiver's
vocabulary---rather than the \emph{coding-time} optimizations
(encoder/decoder design) that suffice in classical settings.

\subsection*{Connections to Prior Work}

The framework developed here may be viewed as a formal bridge
between several previously disjoint lines of research.
It gives operational, coding-theoretic content to the
philosophical notion of semantic information pioneered by
Carnap and Bar-Hillel~\cite{carnap1952outline} and
Floridi~\cite{floridi2004outline}, by replacing the
possible-worlds measure of content with computable,
proof-system-relative invariants (\(\Atom\), \(\Dd\),
\(\Cn\)) that enter directly into rate and distortion
expressions.
It complements the deep-learning-based semantic communication
literature~\cite{xie2021deep,qin2021semantic,luo2022semantic}
by providing a formal guarantee layer: the invariants and coding
theorems of Sections~\ref{subsec:invariants}--\ref{subsec:coding}
can serve as benchmarks against which the performance of learned
encoders and decoders is evaluated.
It extends Shannon's theory in a strict sense: when the knowledge
base is irredundant and sender and receiver share the same
vocabulary, every semantic result reduces to its classical
counterpart (Corollary~\ref{cor:irredundant-classical}),
confirming that the generalization is conservative.

\subsection*{Connections to Recent Work on Semantic
Communication Theory}

The recently established mathematical theory of semantic
communication by Niu and
Zhang~\cite{niu2024mathematical,zhang2024modern} provides the
closest point of comparison for the information-theoretic
aspects of our framework.
Their approach partitions the source alphabet into synonymous
equivalence classes and defines semantic entropy and capacity
relative to this partition, yielding a semantic capacity
\(C_s\ge C\) that exceeds the Shannon limit by a factor related
to the average synonymous length.
Our framework achieves an analogous---but mechanistically
distinct---compression advantage: the deductive compression
ratio \(\log|\Atom(S_O)|/\log|S_O|\)
(Corollary~\ref{cor:min-blocklength}) reduces the effective
source size not through a pre-defined equivalence partition, but
through the receiver's ability to re-derive redundant states via
logical inference.
The two approaches are complementary rather than competing: the
synonymous-mapping framework captures compression gains from
\emph{source-side} semantic equivalence, while our framework
captures gains from \emph{receiver-side} deductive
reconstruction.
A unified theory that combines both mechanisms---synonymous
collapsing of the irredundant core followed by deductive
expansion at the receiver---is an attractive direction for
future work.

The semantic channel coding theorem of Ma
et~al.~\cite{ma2025theory} for many-to-one sources and the
companion computational tools of
Han et~al.~\cite{han2025extended} and
Liang et~al.~\cite{liang2025semantic} address the algorithmic
realization of semantic coding gains within the
synonymous-mapping paradigm.
Our coding theorems
(Theorems~\ref{thm:converse}--\ref{thm:achievability} and
their heterogeneous extensions in
Section~\ref{sec:application}) differ in two respects.
First, the fidelity criterion is \emph{closure-based}
(preservation of the deductive closure~\(\Cn(S_O)\)) rather
than based on a pre-defined synonymous partition; this allows
the framework to handle structured knowledge bases where the
notion of ``same meaning'' is defined by logical entailment
rather than by an exogenous equivalence relation.
Second, the achievability mechanism is a \emph{two-layer code}
in which only the irredundant core is channel-coded, while
redundant states are reconstructed by the decoder's inference
engine; this is structurally different from the
random-coding arguments
over synonymous sets used
in~\cite{ma2025theory,liang2025semantic}, and it naturally
extends to the heterogeneous multi-agent setting where sender
and receiver maintain different vocabularies.

On the multi-agent front, Seo
et~al.~\cite{seo2023bayesian} and Alshammari and
Bennis~\cite{alshammari2026logic} address heterogeneous
semantic communication from, respectively, a Bayesian
estimation and a modal-logic resilience perspective.
Our overlap decomposition
(Definition~\ref{def:overlap-decomposition}) and the semantic
bottleneck phenomenon
(Proposition~\ref{prop:broadcast-bottleneck}) provide a
coding-theoretic complement to these approaches: where
Seo et~al.\ quantify the inference cost of context mismatch
and Alshammari and Bennis formalize resilience conditions, our
framework quantifies the \emph{minimum number of channel uses}
needed to overcome (or the impossibility of overcoming)
vocabulary mismatch, and identifies the irredundant core as
the minimal vocabulary that must be shared for closure-reliable
communication
(Proposition~\ref{prop:min-vocabulary}).

The goal-oriented metrics proposed by Li
et~al.~\cite{li2024toward} and the interpretable semantic
communication guidelines of Wu
et~al.~\cite{wu2024toward} address the question of \emph{what
to measure} in semantic communication.
Our composite distortion function~\(d_{\mathrm{sem}}\)
(Definition~\ref{def:composite-distortion}) provides a
principled answer grounded in proof-system structure: the
three components---Hamming, closure, and depth
distortion---capture symbol fidelity, deductive completeness,
and inferential complexity, respectively, and their relative
weights can be tuned to match specific task requirements.
The semantic Fano bound
(Theorem~\ref{thm:semantic-fano}) then connects these
structural distortion measures to mutual information, closing
the loop between task-oriented fidelity and
information-theoretic throughput.

Finally, the logical-complexity results of
Marx~\cite{marx2013tractable} and Abo~Khamis and
Chen~\cite{abokhamis2025jaguar} on conjunctive query
evaluation suggest a deeper connection between database query
complexity and semantic communication complexity that merits
further exploration.
The irredundant core~\(\Atom(S_O)\) can be viewed as a
form of \emph{query-aware} source compression: only the facts
that cannot be re-derived (analogous to base relations that
cannot be materialized from views) need to be transmitted.
Investigating whether the submodular-width hierarchy
of~\cite{marx2013tractable} imposes additional structure on
the derivation-depth stratification~\(\Dd(\cdot\mid\Atom(S_O))\),
and whether this structure can be exploited for more efficient
code constructions, is a promising avenue for future work.

\subsection*{Limitations and Future Directions}

Several limitations of the present work point to directions for
future research.

First, the framework assumes that all agents share a common proof
system~\(\mathsf{PS}\)
(Assumption~\ref{assump:common-ps}).
In practice, agents may employ different inference engines with
different rule sets or computational capabilities.
Extending the theory to \emph{heterogeneous proof systems}---where
each agent~\(i\) uses a sub-system
\(\mathsf{PS}^{(i)}\subseteq\mathsf{PS}\) or even an
incompatible system---would require a generalized notion of
closure compatibility and could yield richer compression/fidelity
trade-offs.

Second, the coding theorems of
Section~\ref{subsec:coding} are stated for single-letter (or
fixed-blocklength) semantics: each channel use transmits (a block
encoding of) a single semantic state.
A \emph{multi-letter} extension, in which the source emits a
\emph{sequence} of semantically correlated states (e.g., a
temporal stream of knowledge-base updates), would connect the
framework to ergodic source theory and network information theory,
and is expected to yield tighter achievability bounds via joint
source--channel semantic coding.

Third, the current treatment of the enabling map~\(\mathcal E\)
is set-valued and non-parametric.
In many practical systems, the enabling structure has additional
parametric regularity (e.g., the set of admissible carrier states
depends smoothly on the semantic state).
Incorporating such regularity could yield sharper capacity
characterizations and more efficient code constructions.

Fourth, the relay scenario sketched in
Remark~\ref{rem:relay-scenario}---where an intermediate agent
performs inference before re-encoding---raises the fundamental
question of whether semantic relaying can increase end-to-end
capacity beyond what the physical channel alone permits.
The information-model composition machinery of
Definition~\ref{def:model-composition} provides the formal
substrate for this analysis, but a complete treatment requires
multi-hop coding theorems that remain to be developed.

Fifth, the Datalog instance of Section~\ref{subsec:app-example}
serves as a proof of concept; scaling the framework to
large real-world knowledge bases (e.g., knowledge graphs with
millions of entities) will require efficient algorithms for
core extraction, closure computation, and overlap analysis, as
well as empirical validation on practical communication tasks.

Finally, the connection between derivation depth~\(\Dd\) and
Bennett's logical depth~\cite{bennett1988logical} merits further
investigation.
If the two notions can be formally linked under explicit
simulation assumptions, the semantic channel invariants
would inherit a resource-theoretic interpretation: the
``intrinsic computational cost'' of a semantic state would
directly enter the distortion and capacity expressions,
unifying information-theoretic and computational-complexity
perspectives on semantic communication.

\subsection*{Summary}

We have introduced the semantic channel as a formal object that
integrates deductive logic with Shannon information theory,
defined computable invariants that characterize its
information-theoretic and structural properties, derived coding
theorems that quantify the deductive compression gain achievable
under semantic fidelity criteria, and demonstrated the
framework's explanatory power in heterogeneous multi-agent
settings through both analytical results and a complete numerical
example.
The framework provides a principled foundation for the
theoretical analysis of semantic communication systems and opens
several avenues for further research at the intersection of
information theory, formal logic, and multi-agent systems.


\section*{Acknowledgment}

During the writing and revision of this paper, I received many insightful comments from Associate Professor Rui Wang of the School of Computer Science at Shanghai Jiao Tong University and also gained much inspiration and assistance from regular academic discussions with doctoral students Yiming Wang, Chun Li, Hu Xu, Siyuan Qiu, Zeyan Li, Jiashuo Zhang, Junxuan He, and Xiao Wang. I hereby express my sincere gratitude to them.

\bibliographystyle{IEEEtran}
\bibliography{ref}

@article{xu2024research,
  title={Research and Application of General Information Measures Based on a Unified Model},
  author={J.Xu},
  journal={IEEE Transactions on Computers},
  year={2024},
  publisher={IEEE Computer Society},
  doi={10.1109/TC.2024.3349650}
}

@article{xu2025general,
  title={General information metrics for improving AI model training efficiency},
  author={Xu, Jianfeng and Liu, Congcong and Tan, Xiaoying and Zhu, Xiaojie and Wu, Anpeng and Wan, Huan and Kong, Weijun and Li, Chun and Xu, Hu and Kuang, Kun and Wu, Fei},
  journal={Artificial Intelligence Review},
  volume={58},
  pages={289},
  year={2025},
  publisher={Springer},
  doi={10.1007/s10462-025-11281-z}
}

@article{lamb2020graph,
  title={Graph neural networks meet neural-symbolic computing: A survey and perspective},
  author={Lamb, Lu{\'\i}s C and Garcez, Artur and Gori, Marco and Prates, Marcelo and Avelar, Pedro and Vardi, Moshe},
  journal={arXiv preprint arXiv:2003.00330},
  year={2020}
}

@article{dantsin2001complexity,
  title={Complexity and expressive power of logic programming},
  author={Dantsin, Evgeny and Eiter, Thomas and Gottlob, Georg and Voronkov, Andrei},
  journal={ACM Computing Surveys (CSUR)},
  volume={33},
  number={3},
  pages={374--425},
  year={2001},
  publisher={ACM New York, NY, USA}
}

@inproceedings{xu2014objective,
  title={Objective information theory: A Sextuple model and 9 kinds of metrics},
  author={Xu, Jianfeng and Tang, Jun and Ma, Xuefeng and Xu, Bin and Shen Yanli and Qiao Yongjie},
  booktitle={2014 Science and information conference},
  pages={793--802},
  year={2014},
  organization={IEEE},
  doi={10.1109/SAI.2014.6918277}
}

@article{garcez2023neurosymbolic,
  title={Neurosymbolic ai: The 3rd wave},
  author={Garcez, Artur d'Avila and Lamb, Luis C},
  journal={Artificial Intelligence Review},
  volume={56},
  number={11},
  pages={12387--12406},
  year={2023},
  publisher={Springer}
}

@article{qiu2025research,
  title={Research on a General State Formalization Method from the Perspective of Logic},
  author={Qiu, Siyuan and Xu, Jianfeng},
  journal={Mathematics},
  volume={13},
  number={20},
  pages={3324},
  year={2025},
  publisher={MDPI},
  doi={10.3390/math13203324}
}

@article{shannon1948mathematical,
  title={A mathematical theory of communication},
  author={Shannon, Claude E},
  journal={Bell System Technical Journal},
  volume={27},
  number={3},
  pages={379--423},
  year={1948}
}

@book{cover2006elements,
  author={Cover, Thomas M and Thomas, Joy A},
  title={Elements of Information Theory},
  edition={2nd},
  year={2006},
  publisher={John Wiley \& Sons}
}

@article{bennett1988logical,
  title={Logical depth and physical complexity},
  author={Bennett, Charles H},
  journal={The Universal Turing Machine: A Half-Century Survey},
  pages={227--257},
  year={1988}
}

@book{immerman1999descriptive,
  author    = {Neil Immerman},
  title     = {Descriptive Complexity},  
  year      = {1999},
  publisher = {Springer},
  series    = {Graduate Texts in Computer Science},
  doi       = {10.1007/978-1-4612-0539-5},
  isbn      = {978-0-387-98629-5}
}

@inproceedings{immerman1982relational,
  title={Relational queries computable in polynomial time},
  author={{N. Immerman}},
  booktitle={Proceedings of the fourteenth annual ACM symposium on Theory of computing},
  pages={147--152},
  year={1982}
}

@inproceedings{vardi1982complexity,
  title={The complexity of relational query languages},
  author={Vardi, Moshe Y},
  booktitle={Proceedings of the fourteenth annual ACM symposium on Theory of computing},
  pages={137--146},
  year={1982}
}

@article{lipkus1999proof,
  title={A proof of the triangle inequality for the Tanimoto distance},
  author={Lipkus, Alan H},
  journal={Journal of Mathematical Chemistry},
  volume={26},
  number={1},
  pages={263--265},
  year={1999},
  publisher={Springer}
}

@book{tarski1983logic,
  title={Logic, semantics, metamathematics: papers from 1923 to 1938},
  author={Tarski, Alfred},
  year={1983},
  publisher={Hackett Publishing}
}

@article{ceri1989you,
  author={Ceri, Stefano and Gottlob, Georg and Tanca, Letizia},
  title={What you always wanted to know about {D}atalog (and never dared to ask)},
  journal={IEEE Transactions on Knowledge and Data Engineering},
  volume={1},
  number={1},
  pages={146--166},
  year={1989},
  publisher={IEEE}
}

@book{abiteboul1995foundations,
  author    = {Abiteboul, Serge and Hull, Richard and Vianu, Victor},
  title     = {Foundations of Databases},
  publisher = {Addison-Wesley},
  year      = {1995}
}

@article{shannon1959coding,
  title={Coding theorems for a discrete source with a fidelity criterion},
  author={Shannon, Claude E and others},
  journal={IRE Nat. Conv. Rec},
  volume={4},
  number={142-163},
  pages={1},
  year={1959}
}

@book{csiszar2011information,
  title={Information theory: coding theorems for discrete memoryless systems},
  author={Csisz{\'a}r, Imre and K{\"o}rner, J{\'a}nos},
  year={2011},
  publisher={Cambridge University Press}
}

@book{polyanskiy2025information,
  title={Information theory: From coding to learning},
  author={Polyanskiy, Yury and Wu, Yihong},
  year={2025},
  publisher={Cambridge university press}
}

@techreport{carnap1952outline,
  author      = {Carnap, Rudolf and Bar-Hillel, Yehoshua},
  title       = {An outline of a theory of semantic information},
  institution = {Research Laboratory of Electronics, MIT},
  number      = {Technical Report 247},
  year        = {1952}
}

@inproceedings{bao2011towards,
  title={Towards a theory of semantic communication},
  author={Bao, Jie and Basu, Prithwish and Dean, Mike and Partridge, Craig and Swami, Ananthram and Leland, Will and Hendler, James A},
  booktitle={2011 IEEE Network Science Workshop},
  pages={110--117},
  year={2011},
  organization={IEEE}
}

@article{luo2022semantic,
  title={Semantic communications: Overview, open issues, and future research directions},
  author={Luo, Xuewen and Chen, Hsiao-Hwa and Guo, Qing},
  journal={IEEE Wireless communications},
  volume={29},
  number={1},
  pages={210--219},
  year={2022},
  publisher={IEEE}
}

@article{shi2021semantic,
  title={From semantic communication to semantic-aware networking: Model, architecture, and open problems},
  author={Shi, Guangming and Xiao, Yong and Li, Yingyu and Xie, Xuemei},
  journal={IEEE Communications Magazine},
  volume={59},
  number={8},
  pages={44--50},
  year={2021},
  publisher={IEEE}
}

@article{weaver2017recent,
  title={Recent contributions to the mathematical theory of communication},
  author={Weaver, Warren},
  journal={ETC: a review of general semantics},
  volume={74},
  number={1/2},
  pages={136--157},
  year={2017},
  publisher={JSTOR}
}

@article{floridi2004outline,
  title={Outline of a theory of strongly semantic information},
  author={Floridi, Luciano},
  journal={Minds and machines},
  volume={14},
  number={2},
  pages={197--221},
  year={2004},
  publisher={Springer}
}

@article{kolchinsky2018semantic,
  title={Semantic information, autonomous agency and non-equilibrium statistical physics},
  author={Kolchinsky, Artemy and Wolpert, David H},
  journal={Interface focus},
  volume={8},
  number={6},
  pages={20180041},
  year={2018},
  publisher={The Royal Society}
}

@article{xie2021deep,
  title={Deep learning enabled semantic communication systems},
  author={Xie, Huiqiang and Qin, Zhijin and Li, Geoffrey Ye and Juang, Biing-Hwang},
  journal={IEEE transactions on signal processing},
  volume={69},
  pages={2663--2675},
  year={2021},
  publisher={IEEE}
}

@article{qin2021semantic,
  title={Semantic communications: Principles and challenges},
  author={Qin, Zhijin and Tao, Xiaoming and Lu, Jianhua and Tong, Wen and Li, Geoffrey Ye},
  journal={arXiv preprint arXiv:2201.01389},
  year={2021}
}

@article{gunduz2022beyond,
  title={Beyond transmitting bits: Context, semantics, and task-oriented communications},
  author={G{\"u}nd{\"u}z, Deniz and Qin, Zhijin and Aguerri, Inaki Estella and Dhillon, Harpreet S and Yang, Zhaohui and Yener, Aylin and Wong, Kai Kit and Chae, Chan-Byoung},
  journal={IEEE Journal on Selected Areas in Communications},
  volume={41},
  number={1},
  pages={5--41},
  year={2022},
  publisher={IEEE}
}

@article{kountouris2021semantics,
  title={Semantics-empowered communication for networked intelligent systems},
  author={Kountouris, Marios and Pappas, Nikolaos},
  journal={IEEE Communications Magazine},
  volume={59},
  number={6},
  pages={96--102},
  year={2021},
  publisher={IEEE}
}

@article{strinati20216g,
  title={6G networks: Beyond Shannon towards semantic and goal-oriented communications},
  author={Strinati, Emilio Calvanese and Barbarossa, Sergio},
  journal={Computer Networks},
  volume={190},
  pages={107930},
  year={2021},
  publisher={Elsevier}
}

@article{niu2024mathematical,
  title={A mathematical theory of semantic communication},
  author={Niu, Kai and Zhang, Ping},
  journal={Journal on Communications},
  volume={45},
  number={6},
  pages={7--59},
  year={2024}
}

@article{zhang2024modern,
  title={Modern semantic communication and 6G intellicise network theory and technology system},
  author={Zhang, Ping and Xu, Xiaodong and Niu, Kai and Xu, Wenjun and Han, Shujun and Sun, Mengying and Dong, Chen and Ma, Nan and Zhang, Zhi},
  journal={Journal of Beijing University of Posts and Telecommunications},
  year={2025}
}

@article{ma2025theory,
  title={A theory for semantic channel coding with many-to-one source},
  author={Ma, Shuai and Zhang, Chuanhui and Qi, Huayan and Li, Hang and Bi, Yue and Shi, Guangming and Al-Dhahir, Naofal},
  journal={IEEE Transactions on Cognitive Communications and Networking},
  year={2025},
  publisher={IEEE}
}

@article{han2025extended,
  title={Extended Blahut-Arimoto algorithm for semantic rate-distortion function},
  author={Han, Y. and Liu, Y. and Sun, Y. and Niu, K. and Ma, N. and Cui, S. and Zhang, P.},
  journal={Entropy},
  volume={27},
  number={6},
  pages={651},
  year={2025}
}

@article{liang2025semantic,
  title={Semantic arithmetic coding using synonymous mappings},
  author={Liang, Z. and Xu, J. and Niu, K. and Zhang, P.},
  journal={Entropy},
  volume={27},
  number={4},
  pages={429},
  year={2025}
}

@article{seo2023bayesian,
  title={Bayesian inverse contextual reasoning for heterogeneous semantics-native communication},
  author={Seo, Hyowoon and Kang, Yoonseong and Bennis, Mehdi and Choi, Wan},
  journal={IEEE Transactions on Communications},
  volume={72},
  number={2},
  pages={1092--1107},
  year={2023},
  publisher={IEEE}
}

@article{alshammari2026logic,
  title={Logic-driven semantic communication for resilient multi-agent systems},
  author={Alshammari, Tamara and Bennis, Mehdi},
  journal={IEEE Open Journal of the Communications Society},
  volume={7},
  pages={620--644},
  year={2026},
  publisher={IEEE}
}

@article{li2024toward,
  title={Toward goal-oriented semantic communications: New metrics, framework, and open challenges},
  author={Li, Aimin and Wu, Shaohua and Meng, Siqi and Lu, Rongxing and Sun, Sumei and Zhang, Qinyu},
  journal={IEEE Wireless Communications},
  year={2024},
  publisher={IEEE}
}

@article{wu2024toward,
  title={Toward effective and interpretable semantic communications},
  author={Wu, Youlong and Shi, Yuanming and Ma, Shuai and Jiang, Chunxiao and Zhang, Wei and Letaief, Khaled B.},
  journal={IEEE Communications Magazine},
  year={2024},
  publisher={IEEE}
}

@article{marx2013tractable,
  title={Tractable hypergraph properties for constraint satisfaction and conjunctive queries},
  author={Marx, Daniel},
  journal={Journal of the ACM (JACM)},
  volume={60},
  number={6},
  pages={1--51},
  year={2013},
  publisher={ACM New York, NY, USA}
}

@article{abokhamis2025jaguar,
  title={Jaguar: A primal algorithm for conjunctive query evaluation in submodular-width time},
  author={Abo Khamis, Mahmoud and Chen, Hubie},
  journal={Proceedings of the ACM on Management of Data},
  volume={3},
  number={2},
  pages={1--21},
  year={2025}
}

@article{mu2024identifying,
  title={Identifying roles of formulas in inconsistency under Priest's minimally inconsistent logic of paradox},
  author={Mu, Kedian},
  journal={Artificial Intelligence},
  volume={335},
  pages={104199},
  year={2024},
  publisher={Elsevier}
}

\vfill
\end{document}